\newtheorem{theorem}{Theorem}
\newtheorem{lemma}[theorem]{Lemma}
 \newtheorem*{pb*}{Problem}
\newtheorem*{rem*}{Remark}
\newtheorem{definition}[theorem]{Definition}
\renewcommand{\arraystretch}{1.5}
\newcommand\numeq[1]%
\newcommand\numleq[1]%
\DeclareMathOperator*{\argmin}{arg\,min}
\DeclareMathOperator*{\argmax}{arg\,max}
\newcommand{\G}{\mathcal{G}}
\renewcommand{\H}{\mathcal{H}}
\newcommand{\B}{\mathcal{B}}
\renewcommand{\S}{\mathcal{S}}
\renewcommand{\P}{\mathbb{P}}
\newcommand{\A}{\mathcal{A}}
\newcommand{\hp}{n^{-\Omega(\log(n))}}
\newcommand{\E}{\mathbb{E}}
\newcommand{\io}{d_1\log(n)}
\newcommand{\reg}{\text{R is regular}}
\newcommand{\xlr}{x_{(\ell,m)}}
\newcommand{\ylr}{y_{(\ell,m)}}
\newcommand{\excess}{4\log(n)^2\sqrt{n}}
\newcommand{\indep}{\perp \!\!\! \perp}
\newcommand{\OPT}{\texttt{OPT}}
\title{The Power of Greedy for Online Minimum Cost Matching \\ on the Line}
\author[]{Eric Balkanski}
\author[]{Yuri Faenza}
\author[]{Noemie Perivier}
\affil[]{Columbia University}
\date{}
\begin{document}
\pagenumbering{gobble}

\maketitle

In the online minimum cost matching problem, there are $n$ servers and, at each of $n$ time steps, a request arrives and  must be irrevocably matched to a server that has not yet been matched, with the goal of minimizing the sum of the distances between the matched pairs. Online minimum cost matching is a central problem in applications such as ride-hailing platforms and food delivery services. Despite achieving a worst-case competitive ratio that is exponential in $n$ even on the line, the simple greedy algorithm, which matches each request to its nearest available server, performs  very well in practice. A major question is thus to explain greedy’s strong empirical performance. In this paper, we aim to understand the performance of greedy on the line over instances that are at least partially random.

When both the requests and the servers are drawn uniformly and independently from  $[0,1]$, we obtain a constant competitive ratio for greedy, which improves over the previously best-known  bound of $O(\sqrt{n})$ for greedy in this setting, and  also show that this constant competitive ratio holds in the excess supply setting where there is a linear excess of servers. In the semi-random model where the requests are still drawn uniformly and independently but where the servers are chosen adversarially, we show that greedy achieves an $\Theta(\log{n})$ competitive ratio.
These results invite further investigation about how much randomness is necessary and sufficient to obtain guarantees for the greedy algorithm, on the line and beyond.

\newpage

\tableofcontents
\newpage
\pagenumbering{arabic}
 


\section{Introduction}

Matching problems are a core area of discrete optimization. In the 90s, a seminal paper by~\citet{karp1990optimal} introduced online bipartite maximum matching problems and showed that, in the worst-case scenario, no deterministic algorithm can beat a simple greedy procedure, and no randomized algorithm can beat \emph{ranking}, which is a greedy procedure preceded by a random shuffling of the order of the nodes. These elegant results and their natural application to online advertising spurred much research, especially from the late 2000s on (see, e.g.,~\cite{mehta2013online} and the references therein for a survey). While more complex algorithms have been devised for models other than  worst-case analysis, greedy techniques are often used as a competitive benchmark for comparisons, see, e.g.,~\cite{feldman2010online,li2020two,xu2019unified}.

In the last few years, motivated by the surge of ride-sharing platforms, a second online matching paradigm has received  much attention: online (bipartite) minimum cost matching. In this class of problems, one side of the market is composed of \emph{servers} (sometimes called \emph{drivers}) and is fully known at time $0$. Nodes from the other side, often called \emph{requests} or \emph{customers}, arrive one at a time. When request $i$ arrives, we must match it to one of the servers $j$, and incur a cost $c_{ij}$. Server $j$ is then removed from the list of available servers, and the procedure continues. The goal is to minimize the total cost of the matching. 

Given the motivating application to ride-sharing, it is natural to impose the condition that both servers and requests belong to some metric space (e.g.,~\cite{KalyanasundaramP93,Kanoria21,Raghvendra16,TsaiTC94}).  Many  algorithms in this area involve  non-trivial, in some cases computationally expensive, procedures like randomized tree embeddings~\citep{Myerson, bansal}, iterative segmentation of the space~\citep{Kanoria21} or primal-dual arguments based on the computation of  offline optimal matchings at each time step~\citep{raghvendra2018optimal}. Other algorithms use randomization to bypass worst-case scenarios for deterministic algorithms~\citep{GuptaL12}.

The predominant objective of this line work has been to design algorithms that achieve the strongest possible performance guarantees in terms of quality of the solution found, which is measured by an algorithm's competitive ratio. However, there are other important considerations  when deploying systems that match individuals in real time, such as simplicity, strategyproofness, running time,  and explainability. An extremely simple algorithm that is highly desirable with respect to all these factors is the greedy algorithm, also called \emph{nearest neighbor}, that matches each incoming request to the closest available server. But does it perform well?

Somehow surprisingly, this algorithm often works very well in practice: experiments have shown that greedy was more effective than other existing algorithms in most tests and has outstanding scalability~\citep{TongSDCWX16}. This performance substantiates the choice of many ride-sharing platforms to actually implement greedy procedures, in combination with other techniques
~\citep{Lyft,UberEats}. 
However, current theory exhibits a mismatch with such strong computational results: if we assume that $n$ servers and $n$ requests are adversarially placed on a line, the greedy algorithm only achieves a  $2^n - 1$ competitive ratio~\citep{KalyanasundaramP93,khuller1994line, TsaiTC94}. It is therefore important to develop a theory that closes the gap with practice and gives solid ground to  the use of the greedy algorithm. This motivates the first guiding question of the paper.

\begin{quote}\emph{Can we find a theoretical justification for the strong practical performance of the greedy algorithm for online minimum cost matching problems?} \end{quote}

A standard approach to the question above is to make a distributional assumption on the input. Obviously, stronger assumptions may lead to stronger positive results -- but such assumptions may not be verified in practice. Ideally, we would like to identify the hypotheses that are necessary to guarantee a strong performance for the greedy procedure. These results can provide important guidance to practitioners: depending on whether or not they believe such hypothesis to be verified by their data, they can choose to either apply the greedy algorithm or to resort to a more refined procedure. This discussion motivates the second guiding question of the paper. 

\begin{quote}  
\emph{What are necessary and sufficient assumptions to guarantee that the greedy procedure outputs a solution whose quality is asymptotically optimal?}
\end{quote}

These questions have  important implications since they aim to characterize the scenarios where a simple greedy algorithm can be used instead of significantly more complex algorithms for a problem central to multiple large modern markets such as ride-sharing and food delivery.

Understanding the strong practical performance of simple algorithms  has motivated a lot of work on beyond the worst-case analysis of algorithms. Some examples include  using properties such as curvature, stability, sharpness, and smoothness to obtain improved guarantees for greedy for submodular maximization \citep{conforti1984submodular,chatziafratis2017stability,pokutta2020unreasonable,rubinstein2022budget} and different semi-random models for analyzing   $k$-means for clustering \citep{arthur2009k,manthey2013worst}, local search for the traveling salesman problem \citep{englert2014worst,kunnemann2015towards,englert2016smoothed,balkanski2022simultaneous}, and greedy for online \emph{maximum} matching \citep{goel2008online,devanur2011near,mastin2013greedy,arnosti2022greedy}. In the context of online \emph{minimum} cost matching, our understanding of the performance of greedy is very limited. Despite its simplicity, greedy is hard to analyze because a greedy match at some time step can have complex consequences on the available servers in a different region at a much later time step. In other words,  ``the state of the system under the standard greedy algorithm is
hard to keep track of analytically" \citep{Kan_arxiv}.

As a step towards understanding the power and limits of greedy, we focus on a fundamental, deceptively simple setting, which is in fact one of the most studied in the area: online minimum cost matching on the line. Despite much work~\citep{Akbarpour21,GuptaL12,peserico2021matching,Gupta2020PERMUTATIONSB,Megow,raghvendra2018optimal, Koutsoupias,Nayyar, Fuchs2003}, the performance of simple algorithms for this model are far from being understood.

 We first consider the \emph{fully random} model, where  the $n$ servers and $n$ requests are all drawn uniformly and independently from $[0,1]$. In this model, the best known bound on the competitive ratio of greedy is a trivial $O(\sqrt{n})$ bound,\footnote{There is a known $\Omega(\sqrt{n})$ bound on the optimal cost  (see, e.g., \cite{TsaiTC94}) and  the cost of any algorithm is trivially upper bounded by $n$.} and  there are more sophisticated algorithms such as hierarchical greedy~\citep{Kanoria21} and fair-bias~\citep{GuptaGPW19} that  are constant competitive in Euclidean spaces  and on the line, respectively.\footnote{Note that hierarchical greedy is constant competitive  only for $d=1$ or $d\geq 3$. For $d=2$, \citet{Kanoria21} shows that an adapted version of the gravitational matching algorithm by \citet{Holden21} is constant competitive.} Our first main result settles  the asymptotic performance of greedy  for matching on the line  in the fully random model by showing that greedy achieves a constant competitive ratio.

\begin{restatable}{rThm}{thmrandombalanced}
\label{thm:random_balanced}
For online matching on the line in the fully random model, the greedy algorithm achieves a constant competitive ratio.
\end{restatable}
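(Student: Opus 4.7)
The plan is to prove that $\E[\text{cost of greedy}] = O(\sqrt{n})$, which combined with the classical fact $\E[\text{OPT}] = \Theta(\sqrt n)$ on the line in the fully random model yields the desired constant competitive ratio. The bound on OPT itself follows from the identity that expresses the cost of the monotone (and optimal) matching as $\int_0^1 |D(x)|\, dx$, where $D(x) = |\{r_i \le x\}| - |\{s_i \le x\}|$ is the cumulative discrepancy; $D$ behaves like a rescaled symmetric random walk of length $n$, so $\E\int_0^1 |D(x)|\,dx = \Theta(\sqrt n)$.

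I would begin by isolating a high-probability regularity event for the instance. Consider dyadic partitions of $[0,1]$ at all scales $\ell = 2^{-k}$ for $k \le \log_2 n$. By Chernoff bounds and a union bound, with probability at least $1-\hp$, every dyadic interval $I$ of length $\ell$ contains $\ell n \pm O(\sqrt{\ell n}\log n)$ requests, and likewise for servers. Instances outside this event contribute at most $n \cdot \hp = o(1)$ to the expected greedy cost, so it suffices to bound greedy's cost under the event $\{\reg\}$.

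On regular instances, I would bound greedy's contribution scale by scale. If greedy matches request $r_t$ at distance at least $\ell$, then every server in $[r_t-\ell,r_t+\ell]$ must already have been matched by time $t$; combined with regularity of the initial placement, this forces an excess of $\Omega(\sqrt{\ell n})$ requests arriving in that window by time $t$ over the servers initially placed in it. Counting such ``long at scale $\ell$'' events across positions and times, multiplying by the per-event cost $\ell$, and summing over the $O(\log n)$ dyadic scales should, with appropriate bookkeeping and use of the random arrival order, yield a total of $O(\sqrt n)$. A first pass might only give a bound of the order $\excess$, which can then be refined by concentrating attention on the scales that contribute most and using sharper tail bounds on the discrepancy process restricted to those scales.

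The main obstacle is the propagation of imbalance: a single long greedy match removes a server from a possibly balanced region, creating an imbalance that may trigger further long matches downstream. To control this, I would aim for a ``preservation of regularity'' invariant asserting that, at every time $t$ and every dyadic interval $I$, the number of available servers in $I$ stays within $O(\sqrt{|I|n}\log n)$ of the number of future requests destined to land in $I$, provided the initial instance is regular. Maintaining such an invariant simultaneously over all $n$ time steps and all dyadic scales, likely via a coupling with the monotone matching or an inductive argument over time, will be the crux of the proof.
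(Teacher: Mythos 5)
Your high-level strategy is correct: show $\E[\text{cost}(\G)] = O(\sqrt n)$ and combine with $\E[\OPT] = \Theta(\sqrt n)$. The OPT bound is indeed what the paper uses. But your proposal for the greedy bound has a genuine gap precisely at the step you yourself flag as ``the crux'': the preservation-of-regularity invariant. You assert that at every time $t$ and every dyadic interval $I$, the number of free servers in $I$ stays within $O(\sqrt{|I|n}\log n)$ of the number of future requests destined for $I$. This is not established, and there is no reason offered to believe it survives the cascading effect you correctly identify --- a single long match removes a server from a balanced region, which can push the next request in that region even farther, compounding the imbalance without any obvious self-correction. Worse, you concede a first pass would only give $\excess$, and the proposed repair (``coupling with the monotone matching or an inductive argument over time'') is a direction, not an argument. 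The difficulty of directly tracking greedy's state is exactly the obstruction that prior work ran into.

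The paper sidesteps this entirely by \emph{not} analyzing greedy's state in isolation. It introduces the hybrid algorithm $\H_\A^m$ (run hierarchical greedy for $m$ steps, greedy thereafter) and proves the Hybrid Lemma (Lemma~\ref{lem:hybrid}): the symmetric difference $S_t \triangle S_t'$ between the free-server sets of $\H^{m-1}$ and $\H^m$ never exceeds one server on each side, and the ``gap'' $\delta_t$ between those two servers evolves as a tractable process whose maximum has a $\P(\max_t \delta_t \ge y)\le \delta_m/y$ tail (Lemma~\ref{lem:worst_case}). This reduces the whole butterfly-effect problem to tracking a single scalar, at the price of needing a well-behaved comparator $\A$. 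Hierarchical greedy plays that role because its per-step costs decompose by level (Lemmas~\ref{lem:excess_servers} and~\ref{fact:2}), and the telescoping sum over $m$ gives $\E[\text{cost}(\G) - \text{cost}(\A^H)] = O(\sqrt n)$ in Lemma~\ref{lem:mainrandom}. If you want to pursue your direct approach, the honest status is that the regularity-propagation invariant is an open technical problem; the paper's contribution is a way to avoid having to prove it.
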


A main benefit of greedy is that it is customer-strategyproof, meaning that the customers arriving online have no incentive to misreport the location of their requests. We note that this result improves the best-known competitive ratio of any  mechanism that is customer-strategyproof from $O(\sqrt{n})$ to constant for this setting (in fact, we are not aware of any non-trivial customer-strategyproof mechanism besides greedy). We refer to Appendix~\ref{sec:appstrategyproof} for a discussion and a formal definition of customer-strategyproofness.

We  show that this constant competitiveness of greedy also holds in the \emph{fully random $\epsilon-$excess model}, for every constant $\epsilon  \in [0,1]$. This is a modification of the fully random model where there is a linear excess of servers, i.e., $(1+\epsilon) n$ servers. This results improves over the previously best-known  competitive ratio for greedy of $O(\log^3 n)$ in this setting, which was a byproduct of a result by~\cite{Akbarpour21}. 

\begin{restatable}{rThm}{thmrandomunbalanced}
\label{thm:random_unbalanced}
For  any constant $\epsilon \in [0,1]$,  greedy is constant competitive in the fully random $\epsilon$-excess model.
\end{restatable}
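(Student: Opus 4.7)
The plan is to build on the proof of Theorem~\ref{thm:random_balanced} and argue that the linear excess of $\epsilon n$ servers forces greedy's matches to be strongly localized, so that the total greedy cost is $O(1)$ in expectation. Since Theorem~\ref{thm:random_balanced} already yields a constant competitive ratio in the balanced setting (where both $\E[\OPT]$ and the greedy cost are of order $\sqrt{n}$), the central challenge is to exploit the excess supply to push the \emph{greedy} cost from $\Theta(\sqrt{n})$ down to $\Theta(1)$, matching the much smaller $\E[\OPT]$ in the excess setting.

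First, I would partition $[0,1]$ into windows of length $\ell = \Theta(\log n / (\epsilon n))$. For each such window $W$, a Chernoff bound gives that, with probability at least $1 - n^{-c}$, the number of servers in $W$ exceeds the number of requests in $W$ by $\Theta(\epsilon n \ell) = \Theta(\log n)$. A union bound then yields this local excess in every window simultaneously, with high probability.

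Next, I would use the windowed excess to localize greedy's matches. The key structural claim is that, with high probability throughout the run, every request is matched within distance $O(\ell)$: when a request $r$ arrives in a window $W$, the local excess in $W$ and its neighbors guarantees an unmatched server within $O(\ell)$ of $r$. To formalize this, I would track a per-window surplus process that accounts for requests from neighboring windows consuming servers inside $W$, and show via a martingale-style argument that all surpluses stay nonnegative throughout. The cross-window spillover estimates developed for Theorem~\ref{thm:random_balanced} can be adapted here, and the extra excess only strengthens them. Once matches are confined to windows, the total greedy cost is at most $n \cdot O(\ell) = O(\log n / \epsilon)$. To improve this to $O(1)$, I would refine the argument through a dyadic multi-scale decomposition: for each $k$, estimate the expected number of requests whose greedy match distance lies in the shell $[2^{k-1}/(\epsilon n), 2^{k}/(\epsilon n)]$, and show via the windowed excess property at scale $2^k \ell$ that this count decays geometrically in $k$. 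Summing over scales yields a convergent series of value $O(1/\epsilon) = O(1)$. Combined with the easy lower bound $\E[\OPT] = \Omega(1)$ in the $\epsilon$-excess model, this gives the claimed competitive ratio.

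The main obstacle is the locality claim itself: greedy can, in principle, generate long chains of matches that ripple across many windows and erode the nominal local excess. Controlling such cascades requires a uniform-in-time, uniform-in-space surplus bound, which is strictly more delicate than the estimates needed in the balanced proof of Theorem~\ref{thm:random_balanced}, because here we must be tight enough to extract the additional $\sqrt{n}$ factor of savings provided by the excess.
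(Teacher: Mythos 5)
Your windowed-surplus plan is genuinely different from the paper's argument, but it has gaps at exactly the places you flag, and it also misattributes machinery to the balanced proof. The paper's proof of Theorem~\ref{thm:random_unbalanced} does \emph{not} build on Theorem~\ref{thm:random_balanced} at all: the paper explicitly states that the hybrid-lemma approach with hierarchical greedy ``does not give a constant competitive ratio'' in the excess model. In particular, there are no ``cross-window spillover estimates'' in the balanced proof to adapt --- that proof compares greedy to hierarchical greedy via the Hybrid Lemma and never reasons about windows or surpluses.

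The paper's actual route is much shorter and sidesteps the cascade problem you worry about. It has two ingredients. First, Lemma~\ref{lem:e-greedy-n-small} bounds the cost of the \emph{last} request only: if $\text{cost}_n(\G)\geq z$, then the open interval between the nearest free servers to the left and right of $r_n$ has no free servers, and because greedy never carries a request across a persistently free server, the numbers of earlier requests and of servers in that interval must be exactly equal. With a linear excess of servers, the event $x_{(\ell,m)}=y_{(\ell,m)}$ over a window of width $\geq z$ containing $r_n$ has probability $e^{-\Omega(\epsilon n z)}$ (Lemma~\ref{lem:discretization-main}), giving $\E[\text{cost}_n(\G)]=O(1/n)$. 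Second, Lemma~\ref{lem:greedy-cost-increases} shows $\E[\text{cost}_i(\G)]$ is nondecreasing in $i$ (removing a server can only increase the expected distance to the nearest free server). Combining, $\E[\text{cost}(\G)]\leq n\cdot\E[\text{cost}_n(\G)]=O(1)$, and $\E[\OPT]=\Theta(1/\epsilon)$ finishes the theorem. This monotonicity-plus-last-step argument is precisely what lets the paper avoid the uniform-in-time, uniform-in-space surplus control that is the unresolved core of your proposal: you need the full trajectory of per-step costs, whereas the paper only needs the worst (last) one. As written, your sketch leaves the cascade control as an open problem and the dyadic refinement unproved, so it does not yet constitute a proof, and the starting claim that you can lift estimates from Theorem~\ref{thm:random_balanced} is incorrect.
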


It is widely acknowledged (see, e.g.,~\cite{feige2021introduction})  that i.i.d.~instances often do not resemble ``real'' instances. We next therefore consider whether strong guarantees for greedy can also be obtained in a semi-random model. 
In particular, we consider a model that we call the \emph{random requests model} where the $n$ servers are adversarially chosen and the requests are, as in the fully random model, drawn uniformly and independently. Our next result shows that greedy is logarithmic competitive in the random requests model.

\begin{restatable}{rThm}{thmsemirandomup}
\label{thm:semi_random_up}
For online matching on the line in the random requests model, the greedy algorithm achieves an $O(\log n)$-competitive ratio.
\end{restatable}

In the model where the servers and requests are chosen adversarially but where the arrival order is random,  $O(n)$ and  $\Omega(n^{0.26})$ upper and lower bounds are known for the competitive ratio of greedy~\citep{caragiannis2016truthful}.
Combined with this  $\Omega(n^{0.26})$ lower bound, our result shows that the performance of greedy improves exponentially when the locations of the requests are also random. Interestingly, hierarchical greedy only achieves a polynomial competitive ratio in the random requests model (see Appendix~\ref{app_HG}). Our last main result shows that this competitive ratio of greedy in the random requests model is tight. 

\begin{restatable}{rThm}{thmsemirandomlower}
\label{thm:semi_random_lower}
For online  matching on the line in the random requests model, the greedy algorithm achieves an $\Omega(\log n)$-competitive ratio.
\end{restatable}

Combined with Theorem~\ref{thm:semi_random_up}, we obtain that greedy is $\Theta(\log n)$-competitive in the random requests model. The combination of our four results give a first partial characterization of the scenarios in which greedy is guaranteed to perform well for online minimum cost matching.
However, there remain multiple intriguing and well-motivated extensions of the two models we consider where the performance of greedy is poorly understood and where strong competitive ratio guarantees might be achievable. These extensions and their challenges are discussed in Appendix~\ref{app_extension} and include 
\begin{itemize}
\item more general metric spaces beyond the line, such as the unit hypercube of arbitrary dimension $d$ for which we provide numerical simulations where greedy achieve a competitive ratio that is always at most $1.4$ (Appendix~\ref{app_dimensions}),
\item a relaxation of the uniform assumption where the requests are instead drawn i.i.d. from an arbitrary distribution (Appendix~\ref{app_iid}),
\item the random servers model where the servers are adversarially chosen and the requests are drawn uniformly and independently (Appendix~\ref{app_randomservers}), and
\item the sublinear excess supply setting where there is a sublinear excess number of servers compared to the number of requests (Appendix~\ref{app_sublinear}). 
\end{itemize}
 Since these extensions better capture the ride-hailing  and food-delivery applications, bounds on the performance of greedy under such extensions would provide stronger justifications for such platforms to use simple greedy algorithms.

\subsection{Technical overview}
\label{sec:techoverview}

The main difficulty in analyzing the greedy algorithm is that there can be complex dependencies between a greedy match that occurred at some time step in some region of the line  and the set of remaining  servers that are available at a later time step in a completely different region of the line. In other words, a single greedy match at some time step can have a butterfly effect on the servers that will be available in the future in different regions. Algorithms such as hierarchical greedy that partition the interval in different regions have been designed to prevent matching decisions in one region from impacting the future available servers in another region. This does not necessarily lead to algorithms that are better than greedy, but does give algorithms that are simpler to analyze.

A high-level contribution of our paper is to develop a general framework for analyzing the greedy algorithm, for both upper and lower bounds, that, we believe, also provides foundations for analyzing greedy in higher dimensions and other partially random models. The starting point of our analysis is to consider a \emph{hybrid algorithm} $\H_\A^m$ that matches the first $m$ requests according to an  algorithm $\A$ and then greedily  matches each of the remaining requests to the closest available server. The algorithm $\A$ is different for each of our results. To derive our upper bound results, we first show a \emph{hybrid lemma} that upper bounds, for any algorithm $\A$ that satisfies some fairly general properties, the difference $\mathbb{E}[cost(\H_\A^{m-1}) - cost(\H_\A^{m})]$  (i.e., between the expected total costs incurred by $\H_\A^{m-1}$ and $\H_\A^{m}$) as a function of the cost incurred by $\A$ to match the $m^{th}$ request. 
 This hybrid algorithm idea  was also used in~\cite{GuptaL12} to show a $O(\log(n))$  upper bound on the competitive ratio of a randomized greedy algorithm for online matching, but with three main differences. The first is that their hybrid algorithm is used to analyze a randomized algorithm on a deterministic instance (instead of a deterministic algorithm on a randomized instance). The second is that their hybrid algorithm uses an optimal offline algorithm $\A$, which we cannot use because we need to exploit the randomness of the instance, so we instead use existing online algorithms. The third is that our bound on $\mathbb{E}[cost(\H_\A^{m-1}) - cost(\H_\A^{m})]$ is tighter, which was a necessary improvement to obtain a constant competitive ratio in the fully random model.

The second  part of the analysis of the upper bounds leverages the hybrid lemma. For the fully random model, we consider  the hybrid algorithm $\H_\A^m$ where $\A$ is the constant-competitive hierarchical greedy algorithm by \citet{Kanoria21}. We note that a direct application of the hybrid lemma with this hybrid algorithm would only give an $O(\log n)$ competitive ratio  for greedy. Instead, we also show that   the total cost of the hierarchical greedy algorithm $\A$ is dominated by the cost of requests that are matched to servers at a constant distance away, which is needed to show that the difference between the expected costs of  greedy and hierarchical greedy is $O(\sqrt{n})$.
Since the expected optimal total cost is known to be $\Theta(\sqrt{n})$ and hierarchical greedy is constant competitive, we get that greedy is also constant competitive.   For the random requests model, we again use the hybrid lemma but with a different algorithm $\A$, which is  a simple modification of the fair-bias algorithm by \citet{GuptaGPW19}, to show that greedy achieves an $O(\log n)$ competitive ratio.

For the $\Omega(\log n)$ lower bound in the random requests model, we consider an instance where there is a large number of servers at location $0$, no servers in  $(0, n^{-1/5}]$, and the remaining $1-o(1)$ servers uniformly spread in  $(n^{-1/5}, 1].$ We again analyze the difference $\mathbb{E}[cost(\H_\A^{m-1}) - cost(\H_\A^{m})]$, but where $\A$ is the tailored algorithm that matches any request in $[0, n^{-1/5}]$ to a server at $0$ and greedily matches any other request to the closest available server. We show that at any time step $t$, the set of available servers for $\H_\A^{m-1}$ and $\H_\A^{m}$ differ in at most one server. We then consider the distance $\delta_t$ at time $t$ between these two different servers that are available to only one of the algorithms and we show that $\mathbb{E}[cost(\H_\A^{m-1}) - cost(\H_\A^{m})]$ can be lower bounded as a function of $\max_{t\geq m} \delta_t$. Due to the randomness of the requests, the main difficulty is to lower bound  $\max_{t\geq m} \delta_t$ (e.g., the gap $\delta_t$ can either shrink or expand at each time step), which we do by giving a careful partial characterization of the remaining  servers $(S_0,\ldots, S_n)$ for $\H_\A^{m}$ at each time $t$ that allows to analyze  $(S_0,\ldots, S_n)$ and  $(\delta_0,\ldots, \delta_n)$ separately.

\subsection{Additional related work}

In general metric spaces with adversarial requests and servers, \citet{KalyanasundaramP93} and \citet{khuller1994line}   gave a $2n - 1$ deterministic competitive algorithm and proved that this competitive ratio is optimal for deterministic algorithms. On the line, \citet{KalyanasundaramP93} and \citet{khuller1994line}  showed that the competitive ratio of greedy is at least $2^n - 1$. A deterministic algorithm with a sublinear competitive ratio  was presented in~\cite{Antoniadis}. A few years later, \citet{Nayyar} gave a $O(\log^2 n)$ competitive deterministic algorithm, which was then shown to be $O(\log n)$-competitive in \cite{raghvendra2018optimal}.  Regarding lower bounds, \citet{Fuchs2003} showed that no deterministic algorithm can achieve a competitive ratio strictly less than $9.001$ on the line. 

For randomized algorithms, still for adversarial requests and servers,  \citet{Myerson} and \citet{Csaba} obtained a $O(\log^3 n)$ competitive ratio in general metric spaces using randomized tree embeddings, which was later improved to $O(\log^2n)$  by \citet{bansal}.   
On the line, and for doubling metrics, \citet{GuptaL12} showed that a randomized greedy algorithm is $O(\log n)$ competitive.  Recently, \citet{peserico2021matching} improved the lower bound from \cite{Fuchs2003} to obtain an $\Omega(\sqrt{\log n})$ lower bound for the line that also holds for randomized algorithms. For general metrics, it was previously known that no randomized algorithm can achieve a  competitive ratio better than $\Omega(\log{n})$ \cite{Myerson}.

\begin{table}[]
\centering
\resizebox{\columnwidth}{!}{%
\begin{tabular}{|l|l||l|l|l|l|l|l|}
\cline{3-8}
\multicolumn{1}{c}{} & & \multicolumn{2}{ c |}{Greedy algorithm} & \multicolumn{2}{ c |}{Deterministic algorithms} & \multicolumn{2}{ c |}{Randomized algorithms}    \\ \hline
   \multicolumn{2}{|c||}{Arrival order}    & random   & adversarial& random     & adversarial& random& adversarial \\\hline \hline
 \multirow{2}{*}{Line }  & UB  &     $n$    & $2^n-1$   &         $O(\log n)$       &   $O(\log n)$ \cite{raghvendra2018optimal} &      $O(\log n)$       &   $O(\log n)$ \cite{GuptaL12} \\ \cline{2-8}
     & LB  &   $n^{0.26}$      &  $2^n-1$  &               & $\Omega(\sqrt{\log n})$   &            &   $\Omega(\sqrt{\log n})$ \cite{peserico2021matching} \\ \hline
 \multirow{2}{*}{\makecell{General \\ metric \\ space} }  & UB &    $n$  \cite{caragiannis2016truthful}   &   $2^n-1$ \cite{KalyanasundaramP93, khuller1994line}  &    $O(\log n)$       \cite{Raghvendra16}    &  $2n - 1$ \cite{KalyanasundaramP93, khuller1994line}  &   $O(\log n)$         &   $O(\log^2 n)$ \cite{bansal} \\ \cline{2-8}
     & LB &     $n^{0.26}$  \cite{caragiannis2016truthful}  &  $2^n-1$ \cite{KalyanasundaramP93, khuller1994line}   &       $\Omega(\log n)$         &  $2n - 1$  \cite{KalyanasundaramP93, khuller1994line}   &     $\Omega(\log n)$  \cite{Raghvendra16}      &  $\Omega(\log n)$ \cite{Myerson} \\ \hline 
\end{tabular}
}
\vspace{.15cm}
\caption{Summary of known competitive ratios for online minimum cost  matching.}
\label{tab:summary}
\end{table}

When the arrival order of the requests is random,  \citet{caragiannis2016truthful} showed that greedy is $O(n)$ and $\Omega(n^{0.26})$ competitive. \citet{Raghvendra16} gave a deterministic algorithm that achieves a $O(\log n)$ competitive ratio, which is optimal even for randomized algorithms. When the requests are drawn i.i.d.~from any distribution over the set of servers, \citet{GuptaGPW19}  gave a $O((\log\log\log{n})^2)$ competitive algorithm in general metric spaces that is also constant competitive on the line and for tree metrics. When the servers and requests are uniformly and independently distributed, \citet{TsaiTC94} showed that greedy achieves an $2.3\sqrt{n}$ competitive ratio on the unit disk and \citet{Kanoria21} showed that an algorithm called hierarchical greedy is constant competitive  on the unit hypercube (and also analyzed the more challenging fully dynamic setting where the servers also arrive online). A summary of the best-known bounds for the competitive ratio in different settings is provided in Table~\ref{tab:summary}.

Empirical evaluations of different algorithms on real spatial data have shown that greedy performs  well in practice~\citep{TongSDCWX16}. The excess supply setting  was studied by \citet{Akbarpour21}, who showed that the expected total optimal cost is constant  and the total cost of greedy is $O(\log^3 n)$ when the number of excess servers is linear and when the requests and servers are random (but the arrival order can be adversarial). The results for hierarchical greedy from \cite{Kanoria21} also extends to the excess supply setting. \citet{KalyanasundaramP00} showed a $O(\min(m, \log(n)))$ bound on the “double-competitive ratio” of greedy in an adversarial model with resource augmentation where there are $m$ possible server locations and the adversary has only half as many servers at each location as greedy. Recourse, i.e. allowing matching decisions to be revoked to some extent, has been considered in \cite{Megow, Gupta2020PERMUTATIONSB}. In the offline non-bipartite version of the problem with $2n$ point drawn uniformly from $[0,1]$, \citet{Frieze} showed that  greedy   achieves a $\Theta(\log{n})$ approximation.

\section{Preliminaries}
\label{sec:preliminaries}

In the online matching on the line problem, there are $n_s$ \emph{servers} $S = \{s_1, \ldots, s_{n_s}\}$ and $n=n_r$ \emph{requests} $R = (r_1, \ldots, r_{n})$ such that $s_i, r_i \in [0,1]$ for all $i$. 
Hence, an instance is given by a pair $(S,R)$. The servers are known to the algorithm at time $t = 0$. For all $t \in [n]$, the algorithm  observes request $r_t$ and must irrevocably  match it to a server that has not yet been matched. We denote by $s_\A(r_t)$ the server that gets matched to request $r_t$ by (the current execution of) algorithm $\A$ and by $S_{\A,0}\supseteq  \cdots\supseteq S_{\A,n}$  the  sets of free servers obtained through the execution of $\A$, where $S_{\A,0}$ is the initial set of servers, and for all $t\in [n]$,  $S_{\A,t}$ is the set of remaining free servers just after matching $r_t$. The cost incurred from matching $r_t$ to $s_{\mathcal{A}}(r_t)$ is  $\text{cost}_t(\mathcal{A}, r_t) = |r_t-s_{\mathcal{A}}(r_t)|$ and  the total cost of the matching produced by $\mathcal{A}$ on instance $I$ is $\text{cost}(\mathcal{A}, I) = \sum_{t=1}^n \text{cost}_t(\mathcal{A}, r_t)$. We often abuse notation and write $\text{cost}_t(\mathcal{A}), \text{cost}(\mathcal{A}),$ and   $S_{t}$ instead of $\text{cost}_t(\mathcal{A}, r_t), \text{cost}(\mathcal{A}, I)$, and   $S_{\A,t}$. Unless specified otherwise, ``time step $t$" refers to the time just after matching $r_t$.

All models studied in the paper can be represented by a triple $(n^u, n^d, n)$. Here, $n^u$ (resp.~$n$) is the cardinality of the set $S^u$ of servers  (resp.~of the set $R$ of requests) sampled independently from the uniform distribution ${\mathcal{U}}_{[0,1]}$. $n^d$ is the number of adversarily placed servers (hence, $n^u + n^d =n_s$). The performance of an algorithm $\A$ is measured by its \emph{competitive ratio}:

\[\max_{S^d \in [0,1]^{n^d}} \frac{\E_{S^u,R \sim {\mathcal{U}}_{[0,1]},  \A}[\text{cost}(\mathcal{A}, (S^d \cup S^u, R))]}{ \E_{S^u,R \sim {\mathcal{U}}_{[0,1]}} [\text{cost}(OPT, (S^d \cup S^u , R))]}.\]

where $OPT$ is the offline optimal matching when the requests are known at time $t = 0$. We say that an algorithm is $\alpha$-competitive if its competitive ratio is upper bounded by $\alpha$. Although some papers in online optimization use a different notion of competitive ratio  (see, e.g., the survey \cite{Mehta}), in the context of online matching on the line, most literature we are aware of use the same definition as ours. This is true, in particular, for papers over which we build \citep{GuptaGPW19,Kanoria21} or whose results we improve \citep{Akbarpour21, TsaiTC94}.

The three models investigated in this paper can then be formalized as follows.

\begin{itemize}
    \item  In the \emph{fully random model},  $(n^u, n^d, n) = (n,0,n)$, i.e., all servers $S$ and requests $R$ are drawn uniformly and independently from $[0,1]$ and there is an equal number of servers and requests.
    \item For a constant $\epsilon>0$, we define the \emph{fully random $\epsilon-$excess model}, in which $(n^u, n^d, n) = ((1+\epsilon)n , 0, n)$, i.e., all servers $S$ and requests $R$ are drawn uniformly and independently from $[0,1]$ and there is a linear excess of $\epsilon n$ servers.
    \item In the \emph{random requests model}, $(n^u, n^d, n) =(0, n, n)$, i.e., the requests $R$ are still drawn uniformly and independently from $[0,1]$ but the servers are now chosen adversarially over all potential sequence of $n$ requests in $[0,1]$. 
\end{itemize}

 The greedy algorithm, denoted by $\G$, is the algorithm that matches each request $r_t$ to the closest available server, i.e., $s_\G(r_t) = \argmin_{s \in S_{\G, t-1}} |s - r_t|$. We assume that greedy breaks ties arbitrarily but consistently. We say that an algorithm $\A$ \emph{makes neighboring matches} if it matches every request $r_t$ either to the closest available server to its left or  to its right. 
 For any algorithm $\A$ (possibly randomized) and  $m\in \{0,\ldots, n\}$, we define the hybrid algorithm $\H_\A^m$  that matches the first $m$ requests according to $\A$ and then greedily matches the remaining requests to the closest available server. The following key lemma (proved in Appendix~\ref{app_hybrid}) bounds 
$\mathbb{E}\big[cost(\H_\A^{m-1}) - cost(\H_\A^{m})]$ as a function of $\mathbb{E}[\text{cost}_m(\A)]$ -- that is, the expected cost for algorithm $\A$ to match the $m^{th}$ request.

 \begin{restatable}{rLem}{lemhybrid}
\label{lem:hybrid}  \textbf{\textit{(The Hybrid Lemma)}.} There exists a constant $C>0$ such that for any online algorithm $\A$ that makes neighboring matches, for any instance with $n$ arbitrary servers $S = \{s_1,\ldots,s_n\}$, $n$ requests $R = (r_1,\ldots,r_n)$ uniformly and independently drawn from $[0,1]$, for any $m \in [n]$, we have
\begin{equation*}
     \mathbb{E}\big[cost(\H_\A^{m-1}) - cost(\H_\A^{m})|S_{m-1}, r_m]\leq C \cdot \mathbb{E}\left[\big(1+\log\big(\tfrac{1}{\text{cost}_m(\A)}\big)\big)\text{cost}_m(\A)\Big| S_{m-1}, r_m\right].
\end{equation*}
\end{restatable}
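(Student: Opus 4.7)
The plan is to couple the executions of $\H_\A^{m-1}$ and $\H_\A^m$ under the same request stream $r_{m+1},\ldots,r_n$. Since the first $m-1$ matches agree, the two algorithms reach identical states at time $m-1$. At step $m$, writing $s_1 := s_\G(r_m)$ and $s_2 := s_\A(r_m)$, $\H_\A^{m-1}$ matches $r_m$ to $s_1$ while $\H_\A^m$ matches to $s_2$. If $s_1 = s_2$ the two executions are identical and the lemma holds trivially; otherwise, since $\A$ makes neighboring matches, $s_1$ and $s_2$ are the immediate left and right free neighbors of $r_m$ in $S_{m-1}$, which gives $|s_1 - s_2| \le 2\,\text{cost}_m(\A)$. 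Because $E_m$ depends only on $S_{m-1}$ and $r_m$, the requests $r_{m+1},\ldots,r_n$ remain i.i.d.\ uniform on $[0,1]$ conditional on $E_m$, so I may take expectations freely over them.

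From step $m+1$ on, both algorithms run greedy on the same request stream; writing $S^1_t$ and $S^2_t$ for the free-server sets of $\H_\A^{m-1}$ and $\H_\A^m$ at time $t$, these sets differ in exactly one element. I would inductively maintain an \emph{exchange pair} $(a_t,b_t)$ with $a_t \in S^1_t\setminus S^2_t$ and $b_t \in S^2_t\setminus S^1_t$, initialized at $(a_m,b_m) = (s_2,s_1)$, and write $S^*_t := S^1_t\setminus\{a_t\} = S^2_t\setminus\{b_t\}$ for their common part. At each $t>m$, letting $s^*(r_t) := \argmin_{s\in S^*_{t-1}}|s-r_t|$, the two matches fall into one of four cases: (A) both algorithms pick $s^*(r_t)$, the pair is unchanged, and no cost difference is incurred; (B) only $\H_\A^m$ picks $b_{t-1}$ while $\H_\A^{m-1}$ picks $s^*(r_t)$, updating the pair to $(a_{t-1},s^*(r_t))$ and producing a positive excess $|r_t-s^*(r_t)|-|r_t-b_{t-1}|$ for $\H_\A^{m-1}$; (C) symmetric to (B), with a negative excess for $\H_\A^{m-1}$; (D) both algorithms pick their own extras, and the executions couple thereafter. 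For each case I would describe explicitly the subinterval of $[0,1]$ in which $r_t$ must lie for that case to fire.

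Summing over $t$ yields a pathwise expression for $\text{cost}(\H_\A^{m-1})-\text{cost}(\H_\A^m)$ controlled by the gap process $\delta_t := |a_t-b_t|$: in cases (B) and (C) the per-step cost difference is bounded (up to sign) by the change in $\delta_t$ induced by the update, producing a telescoping structure, while in case (D) the cost difference is bounded in absolute value by $\delta_{t-1}$. Since $r_t$ is uniform in $[0,1]$, the probability that a given case fires at step $t$ equals the length of its triggering subinterval, so the pathwise bound becomes $\sum_{t>m}\E[(\text{interval length})\cdot(\text{per-event cost})]$. The main technical obstacle, and the source of the logarithmic factor, is that in cases (B) and (C) the gap $\delta_t$ can \emph{grow}: the new extra is a farther neighbor in $S^*_{t-1}$, so no monotonicity of $\delta_t$ can be exploited. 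I plan to set up a stochastic recursion on $\delta_t$ and to argue, via a potential/martingale-style analysis, that starting from $\delta_m \le 2\,\text{cost}_m(\A)$ the expected cumulative cost before coupling is $O\big((1+\log(1/\text{cost}_m(\A)))\,\text{cost}_m(\A)\big)$; the log factor arises because the three events (B), (C), and (D) together fire with probability of order $\delta_t$ per step, with coupling taking a constant fraction of that mass, so on average only on the order of $\log(1/\text{cost}_m(\A))$ asymmetric doublings of $\delta_t$ can occur before the coupling event terminates the process. Combined with the nonpositive contribution of step $m$ (where $\H_\A^{m-1}$ is cheaper by $|r_m-s_2|-|r_m-s_1|$), this yields the claimed bound on $\E[\text{cost}(\H_\A^{m-1})-\text{cost}(\H_\A^m)\mid E_m]$.
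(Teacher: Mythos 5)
Your plan mirrors the paper's proof in broad strokes: couple $\H^{m-1}_\A$ and $\H^m_\A$, track the one-element symmetric difference between the two free-server sets (the paper's $g^L_t, g^R_t$) and the gap $\delta_t$ between them, derive the pathwise telescoping bound $\text{cost}(\H^{m-1}) - \text{cost}(\H^m) \leq 2\max_{t\geq m}\delta_t$, and then control $\E[\max_t\delta_t]$ given $\delta_m \leq 2\,\text{cost}_m(\A)$. Two steps need firming up. First, the case analysis you set up implicitly requires the invariant (Lemma~\ref{lem:structural_hybrid}, point~2 in the paper) that no server of $S^*_t$ lies strictly between $a_t$ and $b_t$; it holds at $t=m$ because the two initial extras are the immediate left and right neighbors of $r_m$ in $S_{m-1}$, and it is preserved by each transition, but it does require a short induction and without it the updates to $(a_t,b_t,\delta_t)$ are not as you describe.

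Second, and more important, the intuition you offer for the $\log$ factor does not survive scrutiny and would mislead the quantitative analysis. Writing out the five sub-intervals triggering your cases (with $d^L_t, d^R_t$ the distances from the extras to the nearest common servers on either side), the gap grows to $\delta_t + d^L_t$ or $\delta_t + d^R_t$ with probability $\delta_t/2$ each, while coupling occurs with probability $(d^L_t + d^R_t)/2$ — \emph{not} a constant fraction of the gap-changing mass, and not of order $\delta_t$. When $d^L_t, d^R_t \ll \delta_t$, coupling is rare and $\delta_t$ grows by tiny increments, so there is no ``doubling'' structure to count. What actually closes the argument is an exact cancellation in these probabilities: $\E[\delta_{t+1} \mid \delta_t, S_t] = \delta_t$ when both flanking common servers exist, and $\leq \delta_t$ otherwise, so $\delta_t$ is a supermartingale. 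This gives $\P\bigl(\max_{t\geq m}\delta_t \geq y \mid \delta_m\bigr) \leq \delta_m/y$ for all $y \geq \delta_m$ (proved by backward induction in the paper's Lemma~\ref{lem:worst_case}), hence $\E[\max_t\delta_t \mid \delta_m] \leq \delta_m + \int_{\delta_m}^1 (\delta_m/y)\,dy = \delta_m\bigl(1 + \log(1/\delta_m)\bigr)$. Combined with $\delta_m \leq 2\,\text{cost}_m(\A)$, the telescoping bound (Lemma~\ref{lem:diff_cost_delta}), and the fact that conditional on $(\delta_m, S_m)$ the post-$m$ process is independent of $E_m$, this yields the stated inequality.
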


Note that the expectation is taken over the randomness in the requests sequence as well as any possible source of randomization in the algorithm $\mathcal{A}$. The idea of using  hybrid algorithms for analyzing online matching algorithms was used in \cite{GuptaL12}, who also introduce a hybrid lemma (see Section~\ref{sec:techoverview} for additional discussion). A key component of the proof of Lemma \ref{lem:hybrid} relies on Lemma \ref{lem:structural_hybrid_main} given below, that describes, for a fixed $m \in [n]$, the difference between the executions of $\H_{\A}^m$ and $\H_{\A}^{m-1}$ on the same sequence $R$. Lemma~\ref{lem:structural_hybrid_main} in fact shows that, at every step $t$ (i.e., just after matching request $r_t$); the free servers for both algorithms coincide, with the exception of at most one pair of servers, that we denote by $g^L_t < g^R_t$ (see Figure~\ref{fig:set_of_servers_main}) ; there is no other free server in between $g^L_t$ and $g^R_t$ ; and that strong bounds can be obtained on $\delta_t:=g^L_t-g^R_t$. These properties, in turns, will allow us to control the difference in the costs incurred by the two algorithms, eventually leading to the bound from Lemma~\ref{lem:hybrid}.

To ease the exposition, we drop the reference to the algorithms in the indices and  write $S_{t}$ and $s(r_{t})$ instead of $S_{\H_{\A}^m, t}$ and $s_{\H_{\A}^m}(r_t)$ to denote, respectively, the set of free servers for $\H_{\A}^m$ just after matching $r_t$ and the server to which $\H_{\A}^m$ matches $r_t$. Similarly, we write $S_t'$ and  $s'(r_t)$ instead of $S_{\H_{\A}^{m-1}, t}$ and $s_{\H_{\A}^{m-1}}(r_t)$ for the equivalent objects for $\H_{\A}^{m-1}$.

\begin{figure}
    \centering
    \includegraphics[scale=0.25]{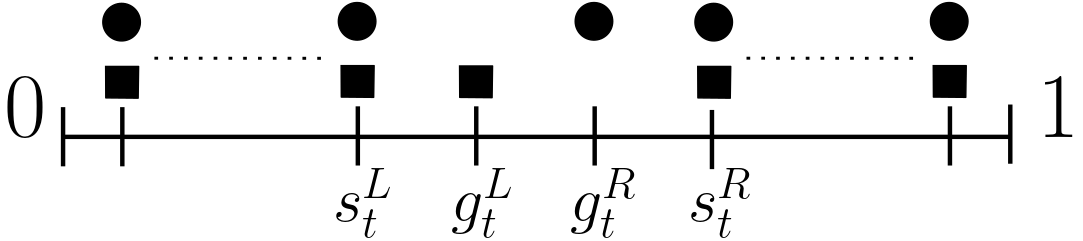}
    \caption{Set of servers $S_t$ (free servers at time $t$ for $\H_A^m$) and $S_t'$ (free servers at time $t$ for $\H_A^{m-1}$) in the case where $S_t\neq S_t'$, where the squares are the servers in  $S_t$ and the circles the servers in $S_t'$.}
    \label{fig:set_of_servers_main}
\end{figure}

 If $S_t = S_t'$, then we write $g^L_t =g^R_t = \emptyset$ and $\delta_t = 0$. We also define
       $s_t^L = \max\{s\in S_t\cup S_t'\setminus\{g_t^L, g_t^R\}: s\leq g_t^L\}$ and $s_t^R = \min\{s\in S_t\cup S_t'\setminus\{g_t^L, g_t^R\}: s\geq g_t^R\}$ (with the convention that $s_t^L=\emptyset$ if $\{S_t\cup S_t'\setminus\{g_t^L, g_t^R\} : s \leq g_t^L\}= \emptyset$ or if $g_t^L=\emptyset$, and similarly for $s_t^R$), which are the nearest servers of $S_t$ (or equivalently, of $S_t'$) on the left of $g_t^L$ and on the right of $g_t^R$.  \

\begin{restatable}{rLem}{structuralhybrid}
\label{lem:structural_hybrid_main}
Let $\A$ be any online algorithm that makes neighboring matches, $S_0$ be $n$ arbitrary servers and $R$ be $n$ arbitrary requests. Let $(S_0,\ldots, S_n)$ and $(S_0',\ldots, S_n')$ denote the set of free servers for $\H_{\A}^{m}$  and $\H_{\A}^{m-1}$ at each time steps. Then, the following propositions hold for all $t\in \{m, \ldots, n\}$:
\begin{enumerate}

    \item  \textbf{Difference in at most one server.} $|S_t\setminus S_t'|=|S_t'\setminus S_t|\leq 1$. 
        \item \textbf{Consecutiveness of the different servers.} If $g^L_t,g^R_t\neq \emptyset$, there is no server $s \in S_t \cup S_t'$ such that $g^L_t < s < g^R_t$.
        \item \textbf{Gap remains zero after disappearing.} If $\delta_t = 0$, then $\delta_{t'}=0$ for all $t'\geq t$.
        \end{enumerate}
\end{restatable}

The proof is given in Appendix~\ref{app_hybrid}. For all $t<n$ and $S_t\neq S_t'$, we also characterize the values of $s(r_{t+1}), s'(r_{t+1}), \delta_{t+1}, g^L_{t+1}, g^R_{t+1}$, and give an upper bound on $\Delta\text{cost}_{t+1} := |\text{cost}_{t+1}(\H^{m-1})- \text{cost}_{t+1}(\H^m)|$, which is key in the proof of the Hybrid Lemma (see Appendix~\ref{app_hybrid}).

\section{Greedy is Constant Competitive in the Fully Random Model}
\label{section:upper_bound_iid}

In this section, we show that  greedy achieves a constant competitive ratio in the fully random model where both the servers and requests are drawn uniformly and independently from $[0,1]$. In addition, we show that this result also holds when  there is a linear excess supply of servers.

\paragraph{The setting with $n$ servers.} We recall that in this setting, the competitive ratio of any algorithm $\A$ is given by:

\[
 \frac{\E_{(R,S)\sim \mathcal{U}(0,1)^n\times \mathcal{U}(0,1)^n, \A}[\text{cost}(\mathcal{A},(S,R))]}{ \E_{(R,S)\sim \mathcal{U}(0,1)^n\times \mathcal{U}(0,1)^n}[\text{cost}(OPT, (S, R))]}.
\]

\vspace{0.2cm}

The main idea of the analysis is to apply Lemma~\ref{lem:structural_hybrid_main} with $\A=\A^H$ being the hierarchical greedy algorithm from \citet{Kanoria21}. We first present the hierarchical greedy algorithm (note that \cite{Kanoria21} considers two models: a \textit{semi-dynamic} model similar to ours, and a \textit{fully-dynamic} model where the servers also arrive online. We only present here the algorithm corresponding to the \textit{semi-dynamic} model).
To describe it, we need to define the sequence $\mathcal{I}_{\ell_0}, ..., \mathcal{I}_0$, where $\ell_0 = \log(n)$, which are increasingly refined partitions of $[0,1]$.  More precisely, $\mathcal{I}_{\ell_0} = \{[0,1]\}$ and for each $\ell\leq \ell_0-1$, $\mathcal{I}_{\ell}$ is the partition obtained by dividing each interval in $\mathcal{I}_{\ell+1}$ into two intervals of equal length, i.e., $\mathcal{I}_{\ell} =\left( \cup_{[0,y] \in \mathcal{I}_{\ell+1}} \{[0, y/2], ]y/2, y]\}\right)\cup\left( \cup_{]x,y] \in \mathcal{I}_{\ell+1}} \{]x, (x+y)/2], ](x+y)/2, y]\}\right)$. The partitions obtained through this process can be organized in a binary tree, where the nodes at level $\ell$  are the intervals of $\mathcal{I}_{\ell}$  and  the leafs are the intervals of $\mathcal{I}_0$.

Given a request $r_t$, let $I(r_t)$ be the  leaf interval to which $r_t$ belongs and $J(r_t)$ be the lowest-level ancestor interval of $I(r_t)$ in the tree such that $J(r_t)\cap S_{t-1}\neq \emptyset$, i.e., such that $J(r_t)$ contains some free servers when request $r_t$ arrives. The hierarchical greedy algorithm matches $r_t$ to any free server in $J(r_t)$. For our purposes, we assume that it matches $r_t$ to the closest free server in $J(r_t)$.  A request $r_t$ is said to be \textit{matched at level $\ell$} if $J(r_t)\in \mathcal{I}_{\ell}$. There are two known results about hierarchical greedy that are important for our analysis. The first one upper bounds the number of requests matched at each level.

\begin{lemma}[\citet{Kanoria21}]
\label{lem:excess_servers} There is a constant $C'>0$ such that, for all $\ell \in \{0,\ldots, \ell_0\}$,  we have $\mathbb{E}[|\{r_t : J(r_t) \in \mathcal{I}_\ell\}|]\leq C'\sqrt{n2^{\ell-\ell_0}}2^{\ell_0-\ell}$.
\end{lemma}

The second important result about hierarchical greedy is its constant competitiveness.

\begin{theorem}[\citet{Kanoria21}]
\label{lem:HGCR}
In the fully random model, we have that $\E[\text{cost}(\A^H)] = O(\sqrt{n})$.
\end{theorem}

Next, we show the following bound on the cost incurred by hierarchical greedy when matching a request at level $\ell$.

\begin{restatable}{rLem}{factlevell}
\label{fact:2} 
For all $t\in [n]$, if $r_t$ is matched at level $\ell$, then we have $$\text{cost}_t(\A^H)\log(1/\text{cost}_t(\A^H))\leq 2^{\ell-\ell_0}(\log(2)(\ell_0-\ell) + 1).$$
\end{restatable}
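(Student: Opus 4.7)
The plan is to reduce the claim to an elementary one-variable inequality for $f(x):=x\log(1/x)$ on $(0,1]$, with the only input from hierarchical greedy being the structural observation that if $r_t$ is matched at level $\ell$, then both $r_t$ and $s_{\A^H}(r_t)$ lie in the same interval of $\mathcal{I}_\ell$. Since the partition is built by recursively halving interval lengths starting from $\mathcal{I}_{\ell_0}=\{[0,1]\}$, each interval of $\mathcal{I}_\ell$ has length $2^{\ell-\ell_0}$, and hence $\text{cost}_t(\A^H)\leq 2^{\ell-\ell_0}$.

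Writing $a:=2^{\ell-\ell_0}$ and $c:=\text{cost}_t(\A^H)$, we have $c\in[0,a]$ with $a\in(0,1]$. Since $\log(1/a)=(\ell_0-\ell)\log(2)$, the right-hand side of the claim rewrites as $a(\log(1/a)+1)=f(a)+a$. It therefore suffices to show $f(c)\leq f(a)+a$ whenever $0\leq c\leq a\leq 1$, an inequality that no longer involves the matching problem at all.

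The minor care needed is that $f$ is not monotone on $(0,1]$: it increases on $(0,1/e]$, decreases on $[1/e,1]$, and attains its global maximum $f(1/e)=1/e$. I would split into two cases. If $a\leq 1/e$, then $c\leq a\leq 1/e$ lies in the increasing regime, so $f(c)\leq f(a)\leq f(a)+a$. If instead $a>1/e$, then $f(c)\leq 1/e<a\leq f(a)+a$, using $f\geq 0$ on $(0,1]$ (the edge case $c=0$ is trivial via the convention $0\log(\infty)=0$). This second case is the only place the additive $+1$ in the stated bound is actually used: it absorbs the peak of $f$ in the regime where $a>1/e$ but $c$ hovers near $1/e$, so that $f(c)$ may be close to $1/e$ while $f(a)$ can be arbitrarily small (e.g.\ $f(1)=0$). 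There is no real technical obstacle beyond this simple case split.
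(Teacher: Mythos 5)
Your proof is correct and is essentially the same argument the paper gives: both observe that $\text{cost}_t(\A^H)\leq 2^{\ell-\ell_0}$ because $r_t$ and its assigned server both lie in the same interval of $\mathcal{I}_\ell$, and both then exploit the shape of $x\mapsto x\log(1/x)$ (increasing on $(0,1/e]$, peak value $1/e$) with the identical case split on whether $2^{\ell-\ell_0}$ is above or below $1/e$, using the additive term to absorb the peak in the second case. Your phrasing of the key step as the abstract inequality $f(c)\leq f(a)+a$ for $0\leq c\leq a\leq 1$ is a slightly cleaner packaging of the same idea, but there is no substantive difference.
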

\begin{proof}
Let $\ell\in \{0,\ldots,\ell_0\}$. First note that  the cost incurred by $\A^H$ when matching a request $r_t$ at level $\ell$ satisfies $\text{cost}_t(\A^H)\leq 2^{\ell-\ell_0}$ since the intervals of $\mathcal{I}_{\ell}$ have length at most $2^{\ell-\ell_0}$ by definition of $\mathcal{I}_{\ell}$. Next,  if $2^{\ell-\ell_0}\in (0,1/e]$, then  $$\text{cost}_t(\A^H)\log(1/\text{cost}_t(\A^H)) \leq 2^{\ell-\ell_0}\log(1/2^{\ell-\ell_0})$$ since $x \log(1/x)$ is non-decreasing on $(0, 1/e]$ and $\text{cost}_t(\A^H)\leq 2^{\ell-\ell_0}$. If $2^{\ell-\ell_0}\in [1/e,1]$, then $$\text{cost}_t(\A^H)\log(1/\text{cost}_t(\A^H)) \leq 1/e \leq 2^{\ell-\ell_0}$$ since $\argmax_{x \in (0,1]} x\log(1/x)  = 1/e$ and 
$\frac{1}{e}\log(\frac{1}{1/e}) = \frac{1}{e}$.  We conclude that if $r_t$ is matched at level $\ell$,
\begin{align*} \text{cost}_t(\A^H)\log(1/\text{cost}_t(\A^H))
    \leq  2^{\ell-\ell_0}\log(1/2^{\ell-\ell_0})+ 2^{\ell-\ell_0} \leq 2^{\ell-\ell_0}(\log(2)(\ell_0-\ell) + 1). & \qedhere
\end{align*}
\end{proof}

 The next lemma is the main lemma of this section and shows that the difference between the total cost of greedy and hierarchical greedy is $O(\sqrt{n})$.

\begin{lemma}
\label{lem:mainrandom}
In the fully random model, we have that $$\mathbb{E}[\text{cost}(\G) - \text{cost}(\A^H)] = O(\sqrt{n}).$$
\end{lemma}
\begin{proof} We first note that since the hierarchical greedy algorithm matches every request $r_t$  to the closest free server in $J(r_t)$, and since $r_t \in J(r_t)$ by definition of $J(r_t)$, hierarchical greedy makes neighboring matches, which is the condition needed to apply the hybrid lemma to the hybrid algorithm $\H^m$. We get that 
\begingroup
\allowdisplaybreaks
\begin{align*}
    & \mathbb{E}[cost(\mathcal{G}) - cost(\A^H)] &  \\
     = \hspace{.1cm} & \sum_{m=1}^n \mathbb{E}[cost(\H^{m-1}) - cost(\H^{m})] & \text{$\H^n = \A^H, \H^0 = \G$} \\
     \leq \hspace{.1cm} & C\sum_{m=1}^n\mathbb{E}[\big(1+\log\big(\tfrac{1}{cost_m(\A^H)}\big)\big)\text{cost}_m(\A^H)] & \text{Hybrid lemma} \\
      \leq \hspace{.1cm} &  C\sum_{m=1}^n\mathbb{E}[\log\big(\tfrac{1}{cost_m(\A^H)}\big)\text{cost}_m(\A^H)] + C \mathbb{E}[cost(\A^{H})]\\
      \leq \hspace{.1cm} &  C\sum_{m=1}^n\mathbb{E}[\log\big(\tfrac{1}{cost_m(\A^H)}\big)\text{cost}_m(\A^H)] + O(\sqrt{n}) & \text{Theorem \ref{lem:HGCR}}\\
     = \hspace{.1cm}& C\sum_{m=1}^n\sum_{\ell=0}^{\ell_0}\mathbb{P}(J(r_m)\in \mathcal{I}_{\ell})\mathbb{E}[\log\big(\tfrac{1}{cost_m(\A^H)}\big)\text{cost}_m(\A^H) | J(r_m)\in \mathcal{I}_{\ell}] + O(\sqrt{n})& \\
     \leq \hspace{.1cm} & C \sum_{m=1}^n\sum_{\ell=0}^{\ell_0}\mathbb{P}(J(r_m)\in \mathcal{I}_{\ell})\cdot 2^{\ell-\ell_0}(\log(2)(\ell_0-\ell)+1) + O(\sqrt{n})& \text{Lemma~\ref{fact:2}}  \\
     = \hspace{.1cm} & C \sum_{\ell=0}^{\ell_0} 2^{\ell-\ell_0}(\log(2)(\ell_0-\ell)+1) \cdot \sum_{m=1}^n \mathbb{P}(J(r_m)\in \mathcal{I}_{\ell})+ O(\sqrt{n}) &   \\
 = \hspace{.1cm} & C \sum_{\ell=0}^{\ell_0}2^{\ell-\ell_0}(\log(2)(\ell_0-\ell)+1) \cdot \mathbb{E}[|\{r_t : J(r_t) \in I_\ell\}|] + O(\sqrt{n})& \\
   \leq \hspace{.1cm} & CC'\sqrt{n}\sum_{\ell=0}^{\ell_0}2^{(\ell-\ell_0)/2}(\log(2)(\ell_0-\ell)+1)+ O(\sqrt{n}) & \text{Lemma~\ref{lem:excess_servers}}  \\
   = \hspace{.1cm} & CC'\sqrt{n}\sum_{j=0}^{\ell_0}2^{-j/2}(\log(2)j+1)+ O(\sqrt{n})& \\
   = \hspace{.1cm} & CC'\sqrt{n}\left(\log(2)\sum_{j=0}^{\ell_0} j \left(\frac{1}{\sqrt{2}}\right)^j + \sum_{j=0}^{\ell_0}  \left(\frac{1}{\sqrt{2}}\right)^j\right) + O(\sqrt{n})& \\
   = \hspace{.1cm} & O(\sqrt{n}). & \qedhere
\end{align*}
\endgroup
\end{proof}

 The last result needed  is that the optimal cost in the fully random model is known to be $\Theta(\sqrt{n})$.

\begin{lemma}[\cite{Kanoria21}]
\label{lem:offline}
In the fully random model, we have that $\E[\OPT] = \Theta(\sqrt{n})$.
\end{lemma}

By combining Theorem~\ref{lem:HGCR}, Lemma~\ref{lem:mainrandom}, and Lemma~\ref{lem:offline}, we obtain the main result of this section.

\thmrandombalanced*

\paragraph{The excess supply setting.} We consider here an extension of the previous model where there is a linear excess of servers. For any constant $\epsilon \in [0,1]$, we define the \textit{fully random $\epsilon$-excess model}, where an instance consist of $n$ requests and $n(1+\epsilon)$ servers all drawn uniformly and independently from $[0,1]$. The competitive ratio of any algorithm $\A$ is given by:

\[
 \frac{\E_{(R,S)\sim \mathcal{U}(0,1)^n\times \mathcal{U}(0,1)^{n(1+\epsilon)}, \A}[\text{cost}(\mathcal{A}, (S,R))]}{ \E_{(R,S)\sim \mathcal{U}(0,1)^n\times \mathcal{U}(0,1)^{n(1+\epsilon)}}[\text{cost}(OPT, (S, R))]}.
\]

\vspace{0.2cm}

In this setting, the  hybrid approach with  hierarchical greedy   used above does not give a constant competitive ratio. However, we are still able to prove that greedy is constant competitive with a different argument. Unlike the model with $n$ servers, the analysis for the excess supply setting does not rely on the hybrid lemma but on concentration arguments.  Missing proofs can be found in Appendix~\ref{app_random_model}.

The main technical contribution here lies in showing that, thanks to the excess of servers, there is an exponentially small probability that there is a large area around the $n$-th request  that contains no available servers. More formally, for $\ell, m\in [0,1]$, we let $\xlr = |\{t\in [n-1]: r_t\in (\ell,m)\}|$ be the number of requests out of the first $n-1$ that arrived in the interval $(\ell,m)$, and we let $\ylr = |\{t\in [n(1+\epsilon)]: s_t\in (\ell,m)\}|$ be the total number of servers that lie in the interval $(\ell,m)$. Then, the following lemma holds.

\begin{restatable}{rLem}{lemdiscretization}
\label{lem:discretization-main}
Let $\epsilon \in [0,1]$ be a constant. There are constants $C_\epsilon, C'_\epsilon$ such that, in the \textit{fully random $\epsilon$-excess} model, we have that  for all $z\in [ \tfrac{4+\epsilon}{\epsilon n},1]$,     
$$\P(\exists \ell, m\in [0,1] : x_{(\ell,m)} = y_{(\ell,m)}, ( r_n- \ell\geq z \text{ or  } \ell=0), (m - r_n\geq z\text{ or } m=1 )\;|\;r_n)\leq C_{\epsilon}' e^{- nzC_\epsilon}.$$
\end{restatable}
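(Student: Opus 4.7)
The plan is to reformulate the event as a statement about a random walk and decompose it into a bulk concentration term plus two ``excursion'' tail terms, each bounded by $e^{-\Omega_\epsilon(nz)}$.

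I would first define the integer-valued step function $W(u):=y_{(0,u)}-x_{(0,u)}$ on $[0,1]$, with $W(0)=0$, $W(1)=n\epsilon+1$, and unit-size jumps (up at each server, down at each of the first $n-1$ requests). Since $y_{(\ell,m)}-x_{(\ell,m)}=W(m)-W(\ell)$, the event is $\{\exists\,\ell\in L,\,m\in R:W(\ell)=W(m)\}$ with $L:=\{\ell\in[0,1]:\ell=0 \text{ or } \ell\le r_n-z\}$ and $R:=\{m\in[0,1]:m=1 \text{ or } m\ge r_n+z\}$. Because $W$ has $\pm 1$ steps, the value sets $W(L)$ and $W(R)$ are each contiguous integer intervals; combined with $0\in W(L)$ and $n\epsilon+1\in W(R)$, the event is equivalent to $\max_{u\in L}W(u)\ge\min_{u\in R}W(u)$.

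In the generic case $r_n\in[z,1-z]$, I would set $a:=r_n-z$, $b:=r_n+z$ and decompose $\max_{u\in L}W(u)=W(a)+m_L$, $\min_{u\in R}W(u)=W(b)-m_R$, where $m_L:=\max_{u\le a}(W(u)-W(a))\ge 0$ and $m_R:=W(b)-\min_{u\ge b}W(u)\ge 0$. Letting $\Delta:=W(b)-W(a)$, the event becomes $\{m_L+m_R\ge\Delta\}$, so with the threshold $h:=(n\epsilon+1)z$ it suffices to bound each of $\P(\Delta\le h\mid r_n)$, $\P(m_L\ge h/2\mid r_n)$, and $\P(m_R\ge h/2\mid r_n)$ by $e^{-C_\epsilon nz}$. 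For the bulk term, $\Delta$ is the difference of two independent binomials $\mathrm{Bin}(n(1+\epsilon),2z)-\mathrm{Bin}(n-1,2z)$ with mean $2(n\epsilon+1)z\ge 2n\epsilon z$, so Bernstein's inequality gives $\P(\Delta\le h\mid r_n)\le e^{-c_1(\epsilon)nz}$. For the excursion terms, I would condition on the hypergeometric counts $(N_L,M_L)$ of servers and requests in $[0,a]$: given these, the walk on $[0,a]$ is a uniform arrangement of $N_L$ up- and $M_L$ down-steps, and the reflection principle for lattice paths yields
\[
\P(m_L\ge k\mid N_L,M_L)=\binom{N_L+M_L}{M_L-k}\Big/\binom{N_L+M_L}{M_L}\le\big(M_L/N_L\big)^k.
\]
By Chernoff, $(N_L,M_L)$ are concentrated near $(Na,(n-1)a)$, so $M_L/N_L\le\rho_\epsilon<1$ with probability $1-e^{-\Omega_\epsilon(na)}$; integrating gives $\P(m_L\ge h/2\mid r_n)\le \rho_\epsilon^{h/2}+e^{-\Omega_\epsilon(nz)}\le e^{-c_2(\epsilon)nz}$, and $m_R$ is handled symmetrically. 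The boundary cases $r_n<z$ (where $L=\{0\}$, so the event collapses to $\min_R\le 0$) and $r_n>1-z$ (symmetric) are handled by the same two-term split, using Bernstein on $W(b)$ (resp.\ $n\epsilon+1-W(a)$) and the surviving excursion term.

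The main technical obstacle is the excursion step, where the reflection identity must be combined with concentration for the six counts $(N_L,M_L,N_C,M_C,N_R,M_R)$ across the three regions; these are not independent because the totals of servers and requests are fixed. A cleaner route is to Poissonize the two point processes so that counts in disjoint regions become fully independent, carry out the same decomposition with independent Poisson walks, and then de-Poissonize at the cost of a factor $O(\sqrt{n})$ absorbed into the constant $C'_\epsilon$ (together with the trivial regime $nz=O_\epsilon(1)$, which makes the bound vacuous).
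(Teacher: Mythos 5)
Your route is genuinely different from the paper's. The paper anchors a $\tfrac{1}{n}$-grid at $r_n-z$ and $r_n+z$, observes that $\ell\mapsto x_{(\ell,m)}$ and $\ell\mapsto y_{(\ell,m)}$ are monotone so that the continuum event $\{\exists\,\ell,m:\;x_{(\ell,m)}=y_{(\ell,m)}\}$ is dominated by the corresponding grid events $\{x_{(\ell_{j+1},m_{k+1})}\ge y_{(\ell_j,m_k)}\}$, bounds each grid event by Chernoff on two independent binomials (the excess of servers creates a deterministic separation once $nz$ is large enough), and finally sums the resulting doubly geometric series over $(j,k)$. Your reformulation via the walk $W(u)=y_{(0,u)}-x_{(0,u)}$, the observation that $W(L)$ and $W(R)$ are integer intervals containing $0$ and $n\epsilon+1$ respectively so that the event collapses to $\max_L W\ge\min_R W$, and then to $\{m_L+m_R\ge\Delta\}$, is a cleaner structural decomposition, and the reflection identity gives a sharper handle on the excursions than a grid union bound could. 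The trade-off is that the paper's argument needs nothing beyond Chernoff and a union bound over a discretization, while yours uses the ballot/reflection lemma and a concentration argument for the ratio $M_L/N_L$.

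Two points in your writeup need fixing, though both are local. First, the tail you derive for the excursion is $\rho_\epsilon^{h/2}+e^{-\Omega_\epsilon(na)}$ with $a=r_n-z$, not $e^{-\Omega_\epsilon(nz)}$ as you wrote: when $r_n$ is only slightly above $z$, $a$ can be arbitrarily small compared to $z$, and the Chernoff estimate that keeps $M_L/N_L\le\rho_\epsilon<1$ then gives essentially nothing. The clean fix is to split on whether $a\ge c_\epsilon z$ for a small constant $c_\epsilon$. If $a\ge c_\epsilon z$, then $na\gtrsim_\epsilon nz$ and your argument goes through as stated. If $a<c_\epsilon z$, use the deterministic bound $m_L\le M_L$ (the excess of $W$ above $W(a)$ on $[0,a]$ is at most the number of down-steps there, since $W(u)-W(a)$ is the number of requests minus servers in $(u,a)$), and then $M_L\sim\mathrm{Bin}(n-1,a)$ has mean $(n-1)a\ll h/2$, so Chernoff's upper tail already gives $\P(M_L\ge h/2\mid r_n)\le e^{-\Omega_\epsilon(h)}=e^{-\Omega_\epsilon(nz)}$. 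Second, the closing paragraph about Poissonization is addressing a non-problem: the bound $\P(m_L+m_R\ge\Delta)\le\P(\Delta\le h)+\P(m_L\ge h/2)+\P(m_R\ge h/2)$ is a three-term union bound, so you only need the three marginal laws of $(N_L,M_L)$, $(N_C,M_C)=(N_C,M_C)$, and $(N_R,M_R)$, each of which is a pair of binomials conditional on $r_n$; the cross-region dependence you worry about never enters, and no Poissonization or $O(\sqrt{n})$ loss is needed.
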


The proof is deferred to Appendix~\ref{app_random_model}. Using Lemma~\ref{lem:discretization-main}, we then upper bound the expected cost incurred by greedy at the last step. At a high level, we use in the proof that the free servers at each time step act as ``natural barriers" between different areas of the interval $[0,1]$ (in the sense that if there is a free server at location $x\in [0,1]$, no request arriving in $[0,x]$ can be matched to a server in $(x,1]$, and vice-versa). This allows to quantify precisely the total number of remaining servers in each of those areas. Note that in \citep{Kanoria21}, the analysis also relies on a division of space into distinct  regions, and on a quantification of remaining servers and requests in each region. However, in \citep{Kanoria21}, the division is fixed at the beginning of the time horizon (through the partition $\mathcal{I}_{\ell_0}, ..., \mathcal{I}_0$). The additional difficulty in our setting is that the ``barriers" we consider depend on all previously arrived requests and are thus random.

\begin{restatable}{rLem}{lemegreedynsmall}
\label{lem:e-greedy-n-small}
Let $\epsilon \in [0,1]$ be a constant. There is a constant $ C''_\epsilon$ such that, in the \textit{fully random $\epsilon$-excess} model, we have 
$\E[cost_{n}(\G)] \leq \frac{C_{\epsilon}''}{n}$.
\end{restatable}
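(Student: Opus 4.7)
The plan is to use Lemma~\ref{lem:discretization-main} to control the tail $\P(\text{cost}_n(\G)\geq z\mid r_n)$ and then integrate via the layer-cake formula. The crux is a deterministic structural claim: conditional on $r_n$ and the greedy execution on $r_1,\ldots,r_{n-1}$, if $\text{cost}_n(\G)\geq z$ then there exist $\ell,m$ satisfying the conditions in Lemma~\ref{lem:discretization-main} with $x_{(\ell,m)}=y_{(\ell,m)}$. Concretely, I would let $\ell$ be the largest element of $S_{\G,n-1}$ strictly less than $r_n$, or $\ell=0$ if no such element exists, and define $m$ symmetrically on the right. Since the closest free server is at distance at least $z$ from $r_n$, this automatically gives $r_n-\ell\geq z$ or $\ell=0$, and $m-r_n\geq z$ or $m=1$.

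The key work is showing $x_{(\ell,m)}=y_{(\ell,m)}$. By construction every server in $(\ell,m)$ has been matched by greedy by time $n-1$, while $\ell$ and $m$, when they are actual servers, were free at every earlier time. I would then rule out any greedy match with exactly one endpoint in $(\ell,m)$. Suppose for contradiction some request $r\leq\ell$ was matched to a server $s\in(\ell,m)$; because $\ell\in S_{\G,n-1}$ was free at the moment $r$ was processed and $\ell-r<s-r$, greedy would have strictly preferred $\ell$. The symmetric argument rules out matches with $r\in(\ell,m)$ and $s\leq\ell$ (using that $\ell$ is unmatched, so $s\neq\ell$), and analogous arguments handle the right boundary $m$. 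The corner cases $\ell=0$ or $m=1$ are handled by noting that, almost surely, no server or request lies exactly at $0$ or $1$. Hence every match with an endpoint in $(\ell,m)$ has both endpoints there, yielding $x_{(\ell,m)}=y_{(\ell,m)}$.

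Given this structural claim, Lemma~\ref{lem:discretization-main} yields, for every $z\in[(4+\epsilon)/(\epsilon n),1]$, $\P(\text{cost}_n(\G)\geq z\mid r_n)\leq C'_\epsilon e^{-nzC_\epsilon}$; for $z<(4+\epsilon)/(\epsilon n)$ I use the trivial bound $\P(\cdot)\leq 1$, and for $z>1$ the probability is zero. Integrating,
\[
\E[\text{cost}_n(\G)\mid r_n]=\int_0^1\P(\text{cost}_n(\G)\geq z\mid r_n)\,dz\leq\frac{4+\epsilon}{\epsilon n}+\int_{(4+\epsilon)/(\epsilon n)}^{1}C'_\epsilon e^{-nzC_\epsilon}\,dz\leq\frac{C''_\epsilon}{n},
\]
and taking expectation over $r_n$ yields the claim. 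I expect the main obstacle to be the structural claim itself: writing a clean exchange argument that rules out boundary-crossing matches and verifying the corner cases at $0$ and $1$ without ambiguity. Once this is in place the tail integration is routine.
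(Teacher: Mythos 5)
Your proposal is correct and follows essentially the same route as the paper: identify the nearest free servers $s_n^L, s_n^R$ flanking $r_n$ (with the same $\ell=0$, $m=1$ conventions and the same measure-zero caveat), use the no-boundary-crossing property of greedy to establish $x_{(s_n^L,s_n^R)}=y_{(s_n^L,s_n^R)}$, then invoke Lemma~\ref{lem:discretization-main} and integrate the tail. The exchange argument you sketch for ruling out crossing matches is exactly the paper's argument.
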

\begin{proof}
To exclude any ambiguity, we  condition on the event that all servers are distinct and that no server or requests are at positions $0$ and $1$, which occurs almost surely. In the remainder of the proof, we condition on the variable $r_n$ and let $s_n^L =\max\{s\in S_{n-1}: s\leq r_n\}$ and $s^R_n = \min\{s\in S_{n-1}: s\geq r_n\}$ denote the nearest available servers on the left and on the right of $r_n$ when $r_n$ arrives; with the convention that $s_n^L=0$ and $s_n^L= 1$ if there are no such servers.

Now, let $z\in [ \tfrac{4(1+\epsilon/4)}{\epsilon n},1]$ and assume that $\text{cost}_n(\mathcal{G})\geq z$. Since $\mathcal{G}$ matches $r_n$ to the closest available server, we must have $r_n- s_n^L\geq z$ or $s_n^L=0$, and $s_n^R - r_n\geq z$ or $s_n^R=1$. In addition, by definition of $s_n^L$ and $s_n^R$, we have that $(s_n^L, s_n^R)\cap S_{n-1} = \emptyset$. Now, recall that all requests $r_1, \ldots, r_{n-1}$ have been matched each time to the closest available server. Moreover, for all $j\in [n-1]$, $s_n^L$ was either  available when $r_j$ arrives, but $r_j$ was not matched to it, or $s_n^L=0$; similarly for $s_n^R$.    
Hence, if $r_j\notin (s_n^L, s_n^R)$, then $s_{\G}(r_j)\notin (s_n^L, s_n^R)$. Similarly, if $r_j\in (s_n^L, s_n^R)$, then $s_{\G}(r_j)\in (s_n^L, s_n^R)$. Therefore,
\begin{equation*}
|\{ j\in [n-1] : s_{\G}(r_j)\in (s_n^L, s_n^R)\}| = |\{ j\in [n-1] : r_j\in (s_n^L, s_n^R)\}|.
\end{equation*}
In addition, since $(s_n^L, s_n^R)\cap S_{n-1} = \emptyset$, all servers in $(s_n^L, s_n^R)\cap S_0$ must have been matched to some request before time $n-1$, hence
\[
|\{ j\in [n-1] : s_{\G}(r_j)\in (s_n^L, s_n^R)\}| = |\{j\in [n(1+\epsilon)]: s_j\in (s_n^L, s_n^R)\}|.
\]

By combining the two previous equalities and by definition of $x_{(s_n^L, s_n^R)}$, and $y_{(s_n^L, s_n^R)}$, we get that 
\[x_{(s_n^L, s_n^R)} = |\{ j\in [n-1] : r_j\in (s_n^L, s_n^R)\}| = |\{j\in [n(1+\epsilon)]: s_j\in (s_n^L, s_n^R)\}| = y_{(s_n^L, s_n^R)}.\]

Since we have that $r_n- s_n^L\geq z$ or $s_n^L=0$, and $s_n^R - r_n\geq z$ or $s_n^R=1$, we thus have that
\begin{align}
\label{eq:cgcost_main}
   \P(\text{cost}_n&(\mathcal{G})\geq z\;|\;r_n)\nonumber\\
   &\leq \P(\exists \ell, m\in [0,1] : x_{(\ell,m)} = y_{(\ell,m)}, ( r_n- \ell\geq z \text{ or  } \ell=0), (m - r_n\geq z\text{ or } m=1 )\;|\;r_n). 
\end{align}
Now, by Lemma \ref{lem:discretization-main}, we have that for some constants $C_{\epsilon}, C_{\epsilon}'>0$:
\[
\P(\exists \ell, m\in [0,1] : x_{(\ell,m)} = y_{(\ell,m)}, ( r_n- \ell\geq z \text{ or  } \ell=0), (m - r_n\geq z\text{ or } m=1 )\;|\;r_n)\leq C_{\epsilon}' e^{- nzC_\epsilon}.
\]
Combining this with (\ref{eq:cgcost_main}) and by the law of total probability, we get $
\P(\text{cost}_n(\mathcal{G})\geq z)\leq C_{\epsilon}' e^{- nzC_\epsilon}
$. Hence, we obtain
\begin{align*}
    \label{eq:cost_exp}
        \E[cost_{n}(\G)] &\leq \tfrac{4(1+\epsilon/4)}{\epsilon n} +  \int_{z=\tfrac{4(1+\epsilon/4)}{\epsilon n}}^{1} \mathbb{P}(cost_{n}(\G)\geq z)\mathrm{d}z  & &\\
        &\leq \tfrac{4(1+\epsilon/4)}{\epsilon n} +  \int_{z=\tfrac{4(1+\epsilon/4)}{\epsilon n}}^{1} C_{\epsilon}' e^{- nzC_\epsilon}\mathrm{d}z   & &\\
        &\leq \tfrac{4(1+\epsilon/4)}{\epsilon n} +  \frac{C_{\epsilon}'}{C_{\epsilon}n}. &\text{for some $C_{\epsilon}>0$}  & \\
        &= \frac{C_{\epsilon}''}{n}&\text{for some $C_{\epsilon}''>0$.} & \qedhere
    \end{align*}
\end{proof}

 We underscore that a simple application of Chernoff bounds between all initial pairs of servers locations would only lead to a weaker version of the above lemma, involving poly-logarithmic terms. Since our objective was to present a sharp analysis of greedy, we introduced the refined analysis above.
 
 Last, we observe that, because of servers getting less and less dense as requests arrive, the expected cost of the greedy algorithm increases at each step. 

\begin{restatable}{rLem}{lemgreedycostincreases}
\label{lem:greedy-cost-increases}
Let $\epsilon \in [0,1]$ be a constant. Then, in the \textit{fully random $\epsilon$-excess} model, we have that for all $i \in [n-1]$,  $\E[cost_{i}(\G)]\leq \E[cost_{i+1}(\G)]$.
\end{restatable}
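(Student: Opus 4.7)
The plan is to establish this monotonicity via a direct coupling/monotonicity argument that exploits only the i.i.d.\ structure of the requests and the one-server-per-step nature of greedy; it does not require any concentration bounds like those in Lemma~\ref{lem:discretization-main}. Define the functional $f:2^{[0,1]}\to\mathbb{R}_{\ge 0}$ by $f(S):=\E_{r\sim\mathcal{U}_{[0,1]}}\big[\min_{s\in S}|r-s|\big]$. The key property of $f$ is monotonicity with respect to set inclusion: if $S'\subseteq S$, then $\min_{s\in S'}|r-s|\ge\min_{s\in S}|r-s|$ for every $r\in[0,1]$, and integrating over $r$ preserves this inequality, so $f(S')\ge f(S)$.

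I would then use independence to rewrite each side of the desired inequality as the expectation of $f$ evaluated on the current free-server set. The random set $S_{\G,i-1}$ is a deterministic function of $(S_0,r_1,\ldots,r_{i-1})$, and in the fully random $\epsilon$-excess model $r_i$ is independent of all of these, hence independent of $S_{\G,i-1}$. By the tower property,
\[
\E[\text{cost}_i(\G)] \;=\; \E\!\left[\,\E\!\left[\min_{s\in S_{\G,i-1}}|r_i-s|\,\Big|\,S_{\G,i-1}\right]\right] \;=\; \E\!\left[f(S_{\G,i-1})\right],
\]
and the identical computation applied one step later yields $\E[\text{cost}_{i+1}(\G)] = \E[f(S_{\G,i})]$.

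To finish, note that greedy removes exactly one free server per step, so $S_{\G,i} = S_{\G,i-1}\setminus\{s_\G(r_i)\}\subseteq S_{\G,i-1}$ almost surely. Combined with the monotonicity of $f$, this yields $f(S_{\G,i})\ge f(S_{\G,i-1})$ pointwise, and taking expectations gives the desired inequality $\E[\text{cost}_{i+1}(\G)]\ge\E[\text{cost}_{i}(\G)]$. There is essentially no technical obstacle here: the statement reduces to the intuitive fact that removing a point from a set can only increase the expected nearest-neighbor distance from a uniformly random query point. The only subtlety worth double-checking is the independence of $r_i$ from the history $S_{\G,i-1}$, which is immediate from the i.i.d.\ structure of the requests; note that the argument actually works for any distribution over the initial server set, not just the uniform one used here.
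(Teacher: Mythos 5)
Your proof is correct and follows essentially the same route as the paper's: both condition on the current set of free servers, exploit independence of the next request from that set together with the tower property, and invoke the pointwise monotonicity of $S \mapsto \E_r[\min_{s\in S}|r-s|]$ under set inclusion applied to $S_{\G,i}\subseteq S_{\G,i-1}$. Your write-up via the functional $f$ is cleaner, but there is no substantive difference in the argument.
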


Using Lemma~\ref{lem:e-greedy-n-small} and Lemma~\ref{lem:greedy-cost-increases}, we conclude that $\E[cost(\G)] =\sum_{i=1}^n \E[cost_{i}(\G)] \leq n\cdot\E[cost_{n}(\G)] \leq C_{\epsilon}''.$ We have thus shown the following.

\begin{restatable}{rLem}{lemgreedyunbalanced}
\label{thm:greedy_unbalanced} 
Let $\epsilon \in [0,1]$ be a constant. There exists a constant $C''_{\epsilon}>0$ such that in the \textit{fully random $\epsilon$-excess model}, we have
$\mathbb{E}[\text{cost}(\G)] \leq C''_{\epsilon}$.
\end{restatable}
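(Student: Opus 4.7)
The plan is to assemble the two preceding lemmas into the stated bound in essentially one line. Since Lemma~\ref{lem:greedy-cost-increases} gives that $\E[\text{cost}_i(\G)]$ is non-decreasing in $i$, every per-step cost is dominated by the expected cost at the final step. Hence, summing over all $n$ steps yields
\begin{equation*}
\E[\text{cost}(\G)] = \sum_{i=1}^{n}\E[\text{cost}_i(\G)] \le n\cdot\E[\text{cost}_n(\G)],
\end{equation*}
and then Lemma~\ref{lem:e-greedy-n-small} bounds $\E[\text{cost}_n(\G)]$ by $C_\epsilon''/n$, giving the constant bound.

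Concretely, I would proceed in three short steps. First, invoke Lemma~\ref{lem:greedy-cost-increases} iteratively to conclude that $\E[\text{cost}_i(\G)]\le\E[\text{cost}_n(\G)]$ for every $i\in[n]$. Second, sum the per-step costs (using linearity of expectation and the definition $\text{cost}(\G)=\sum_{i=1}^n\text{cost}_i(\G)$) to obtain the telescoped inequality $\E[\text{cost}(\G)]\le n\cdot\E[\text{cost}_n(\G)]$. Third, apply Lemma~\ref{lem:e-greedy-n-small} to get $n\cdot\E[\text{cost}_n(\G)]\le n\cdot(C_\epsilon''/n)=C_\epsilon''$, which is exactly the stated conclusion.

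There is no real obstacle at this step; all of the difficulty has been absorbed into the two preceding lemmas. In particular, the hard work sits in Lemma~\ref{lem:discretization-main}, which feeds Lemma~\ref{lem:e-greedy-n-small} and exploits the linear excess supply to give a $1/n$ bound on the expected final-step greedy cost, and in the monotonicity argument of Lemma~\ref{lem:greedy-cost-increases}, which formalizes the intuition that the density of remaining servers is non-increasing over time. Given these two inputs, the proof of Lemma~\ref{thm:greedy_unbalanced} is purely an aggregation argument, and I would present it in at most two or three lines, being careful only to relabel the constant (e.g.\ writing the final bound as $C_\epsilon''$ to match the statement).
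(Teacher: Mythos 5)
Your proposal is correct and takes essentially the same approach as the paper: both use the monotonicity of per-step expected cost (Lemma~\ref{lem:greedy-cost-increases}) to bound the total by $n\cdot\E[\text{cost}_n(\G)]$, and then invoke Lemma~\ref{lem:e-greedy-n-small} to obtain the constant bound.
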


In order to conclude the proof of Theorem~\ref{thm:random_unbalanced}, it suffices to lower bound the cost of the optimal solution in the \textit{fully random $\epsilon$-excess model}.

\begin{lemma}[\cite{Kanoria21}]
\label{lem:offline_excess}
For  any constant $\epsilon \ in [0,1]$, we have that in the \textit{fully random $\epsilon$-excess model}, $\E[\OPT] = \Theta(\frac{1}{\epsilon})$.
\end{lemma}

We can then conclude the following result on the performance of the greedy algorithm. 

\thmrandomunbalanced*




\section{Greedy is $O(\log n)$-competitive in the Random Requests Model}
\label{section:sim_random}

In Section~\ref{section:sim_random} and Section~\ref{sec_semi_lower}, we show that greedy achieves an $\Theta(\log n)$ competitive ratio in the random requests model where  the servers are chosen adversarially  and the requests are drawn uniformly and independently from $[0,1]$. Thus, unlike in the fully random model,  servers and requests can be distributed in a significantly different manner in this model. 

In this section, we first show the $O(\log n)$ upper bound. We note that, even though hierarchical greedy and greedy are both constant-competitive in the fully random model, hierarchical greedy is only $\Omega(n^{1/4})$-competitive in the random requests model (see Appendix~\ref{app_HG}). 
The main lemma (Lemma~\ref{lem:bound_hybrid}) shows that greedy is at most a logarithmic factor away from any online algorithm that makes neighboring matches. To prove Lemma~\ref{lem:bound_hybrid}, we first need to lower bound the probability that the cost incurred by any online algorithm at any time step is  small. First, for all $t\in [n]$, we let $\mathcal{G}(S_{t-1}) = \{r\in [0,1]: \exists s\in S_{t-1}, |r-s|<\frac{1}{n^4}\}$ be the set of points in $[0,1]$ that are close to servers in $S_{t-1}$.



\begin{restatable}{rLem}{lemunifcost}
\label{lem:unif_cost}
In the random requests model, for any online algorithm $\mathcal{A}$ and for all $t\in [n]$, we have that
$\E[\text{cost}_t(\mathcal{A})] \geq \frac{1}{2(n+1)}$ and that $\P(r_t\in \mathcal{G}(S_{t-1}))\leq \frac{2}{n^3}$.
\end{restatable}

The proof is in Appendix~\ref{app_semi_random_UB}. Next, to show that Lemma~\ref{lem:bound_hybrid} holds for any online algorithm $\A$ that makes neighboring matches, we use the hybrid lemma on the hybrid algorithm $\H_\A^m$ (and we abuse notation by writing $\H^m$).

\begin{lemma}
\label{lem:bound_hybrid}   In the random requests model, there exists a constant $C>0$ such that for any online algorithm $\mathcal{A}$ that makes neighboring matches,  $$\mathbb{E}[cost(\mathcal{G})] \leq C\log(n)\mathbb{E}[cost(\mathcal{A})].$$
\end{lemma}
\begin{proof} We write 
\begin{align*}
    & \E[cost(\H^{m-1}) - cost(\H^m)]& \\
    = \ & \E[cost(\H^{m-1}) - cost(\H^m)\;|\;r_m \notin \mathcal{G}(S_{m-1})]\cdot \P(r_m \notin \mathcal{G}(S_{m-1})) & \\
    & \qquad + \E[cost(\H^{m-1}) - cost(\H^m)\;|\;r_m \in \mathcal{G}(S_{m-1})]\cdot \P(r_m \in \mathcal{G}(S_{m-1})) \\
    \leq \ &  \E[cost(\H^{m-1}) - cost(\H^m)\;|\;r_m \notin \mathcal{G}(S_{m-1})] \cdot \P(r_m \notin \mathcal{G}(S_{m-1}))
    + n\cdot 2/n^3  \\
    \leq \ &  C\cdot \mathbb{E}[\big(1+\log\big(\tfrac{1}{cost_m(\mathcal{A})}\big)\big)\text{cost}_m(\mathcal{A})\;|\;r_m \notin \mathcal{G}(S_{m-1})] \cdot \P(r_m \notin \mathcal{G}(S_{m-1}))
    + 2n^{-2} \\
    \leq \ & C (1+4
\log(n))\cdot \mathbb{E}[\text{cost}_m(\mathcal{A})\;|\;r_m \notin \mathcal{G}(S_{m-1})] \cdot \P(r_m \notin \mathcal{G}(S_{m-1}))
    + 2n^{-2} & \\
    \leq \ &  C(1+4\log(n)) \cdot \E[\text{cost}_m(\mathcal{A})] + 2n^{-2} & \\
    = \ &  C' \log(n)  \cdot \E[\text{cost}_m(\mathcal{A})], &
\end{align*}
where the first inequality is by Lemma~\ref{lem:unif_cost}, the second one by the Hybrid Lemma (Lemma~\ref{lem:hybrid} ; noting that $\{r_m \notin \mathcal{G}(S_{m-1})\}$ is an event that depends only on $S_{m-1}$ and $r_m$ and that $\A$ makes neighboring matches) and the third one is since for any algorithm $\A$, $\text{cost}_m(\mathcal{A})\geq 1/n^4$ when $r_m \notin \mathcal{G}(S_{m-1})$. The last equality is by Lemma~\ref{lem:unif_cost}.

Since $\H^n = \mathcal{A}$ and $\H^0 = \G$, we conclude that 
\begin{align*}
  \mathbb{E}[cost(\mathcal{G}) - cost(\mathcal{A})] & =  
    \sum_{m=1}^n \mathbb{E}[cost(\H^{m-1}) - cost(\H^m)] \\
    & \leq C'\log(n)\sum_{m=1}^n\mathbb{E}[cost_m(\mathcal{A})] \\
    & = C'\log(n)\cdot \mathbb{E}[cost(\mathcal{A})].  
\end{align*}
\end{proof}

It remains to show the existence of a constant competitive online algorithm that makes neighboring matches in the random requests model, which is the case for  a simple modification of the algorithm fair-bias from \cite{GuptaGPW19}. The proof is deferred to Appendix~\ref{app_semi_random_UB}.

\begin{restatable}{rLem}{corggp}
\label{cor:GGP}
In the random requests model, there exists a constant competitive algorithm that makes neighboring matches.
\end{restatable}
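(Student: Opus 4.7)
The plan is to describe an $O(1)$-competitive algorithm $\A$ in the random requests model that is a small modification of the fair-bias procedure of \cite{GuptaGPW19} and that, crucially, makes only neighboring matches. I would proceed in three steps.

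First, I would reduce the random requests instance to the framework handled by fair-bias, which assumes requests are drawn i.i.d.~from a distribution supported on the server set. The adversarial servers $s_1 < \cdots < s_n$ induce a Voronoi partition $\{V_i\}_{i=1}^n$ of $[0,1]$; since requests are uniform, $r_t$ lands in $V_i$ with probability $p_i = |V_i|$, so the projected stream $\sigma(r_1), \ldots, \sigma(r_n)$, with $\sigma(r) := \arg\min_i |r - s_i|$, is i.i.d.~from $p$ over servers. Applied to this projected stream, fair-bias is $O(1)$-competitive on the line (which is a tree metric). The projection error $\sum_t |r_t - \sigma(r_t)|$ is upper bounded by a constant multiple of $\E[\OPT]$ via the triangle inequality, since any offline matching of the original requests must bridge the gap $|r_t - \sigma(r_t)|$ up to the adjacent servers it uses; hence an $O(1)$-competitive bound against the projected stream transfers to an $O(1)$-competitive bound against the original.

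Second, I would define $\A$ as fair-bias implemented in a \emph{directional} manner: at each arrival $r_t$, aggregate the fair-bias LP biases by the side of $r_t$ on which each candidate server lies to obtain a pair $(p_t^L, p_t^R)$ with $p_t^L + p_t^R = 1$, sample a side accordingly, and then match $r_t$ to the closest currently-available server on that side (with the obvious fallback if the chosen side is empty). By construction $\A$ makes neighboring matches. Moreover, conditioned on the sampled side, the server that $\A$ matches to is at least as close to $r_t$ as the one fair-bias would have chosen on the same side, so the per-step cost is pointwise dominated: $\text{cost}_t(\A) \leq \text{cost}_t(\text{FB})$ on any coupled realization in which the two algorithms see the same pool of free servers.

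The main obstacle I anticipate is controlling the drift between $\A$'s and fair-bias's pools of free servers, caused by the directional substitutions above: even though per-step costs are dominated when pools agree, the pools themselves diverge over time. I would handle this inductively in the spirit of Lemma \ref{lem:structural_hybrid}, showing that the symmetric difference of the two pools remains a short sequence of adjacent, ``neighboring'' servers on the line. Because the request at each step is uniformly distributed on $[0,1]$ and both algorithms match on the same side, the expected cost inflation caused by a one-server shift in the available pool is only a constant fraction of fair-bias's expected cost at that step. Summing over the $n$ rounds then gives $\E[\text{cost}(\A)] = O(\E[\text{cost}(\text{FB})]) = O(\E[\OPT])$, producing the desired $O(1)$-competitive algorithm that makes neighboring matches.
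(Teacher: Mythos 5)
Your first step (the Voronoi projection reduction to the i.i.d.-over-servers setting and the two-way cost comparison between $R$ and the projected stream $\tilde{R}$) matches the paper's proof of Lemma~\ref{cor:GGP} almost exactly. The divergence is entirely in how you force the resulting algorithm to make neighboring matches.

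The paper uses a clean exchange argument (Lemma~\ref{lem:nearest_neighbours}): given \emph{any} online algorithm $\mathcal{A}$, if the first non-neighboring match occurs at time $t_0+1$ where $r_{t_0+1}$ is sent to a far server $s$, locate the later request $r_j$ that $\mathcal{A}$ matches to the neighboring server $s'$, and swap the two assignments. A case analysis of $r_j$'s position shows the sum of the two matched distances never increases, and the transformed algorithm agrees with $\mathcal{A}$ on a strictly longer prefix. Iterating gives a neighboring-match algorithm $\mathcal{A}'$ with $\text{cost}(\mathcal{A}')\le \text{cost}(\mathcal{A})$ \emph{realization by realization}. This is applied to the adapted Fair-Bias, and the whole ``pool drift'' issue never arises because you are comparing two matchings of the same instance, not running two algorithms in parallel.

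Your approach instead tries to make Fair-Bias itself directional and then bound the drift between $\A$'s and Fair-Bias's pools of free servers. There is a genuine gap here. First, Lemma~\ref{lem:structural_hybrid} is tailored to hybrid algorithms $\H^m$ vs.~$\H^{m-1}$ that disagree only at a single step $m$ and both act greedily afterwards; your $\A$ and Fair-Bias can disagree at every step, so the symmetric difference of their pools is not controlled by that lemma and can grow well beyond a constant. Second, your key claim---that ``the expected cost inflation caused by a one-server shift in the available pool is only a constant fraction of fair-bias's expected cost at that step''---is exactly the kind of statement the hybrid lemma (Lemma~\ref{lem:hybrid}) is about, and that lemma only delivers a $1+\log(1/\text{cost}_m)$ factor, which aggregates to $\Theta(\log n)$ in the random requests model (this is precisely the mechanism behind Theorem~\ref{thm:semi_random_lower}). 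So a constant-factor drift bound would need a substantially new argument; as stated it is unsubstantiated, and the analogy you invoke actually suggests a $\log n$ loss. The paper's Lemma~\ref{lem:nearest_neighbours} sidesteps all of this and is the piece your proof is missing.
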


We are now ready to prove the main result of Section~\ref{section:sim_random}.

\thmsemirandomup*

\begin{proof} By Lemma \ref{cor:GGP}, there exists an algorithm $\mathcal{A}$ that is constant competitive algorithm in the random requests model and makes neighboring matches. We have, by Lemma \ref{lem:bound_hybrid}, that $\mathbb{E}[cost(\mathcal{G})] \leq C\log(n)\mathbb{E}[cost(\mathcal{A})]$. We conclude that greedy is $O(\log{n})$-competitive.
\end{proof}

\begin{figure}
    \centering
    \includegraphics[scale =0.25]{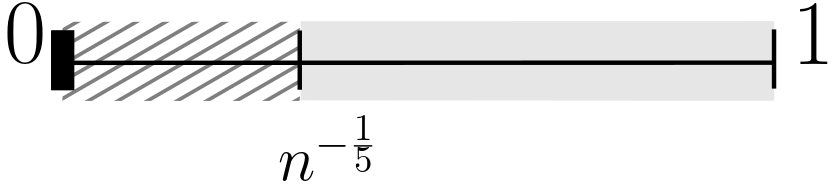}
    \caption{The lower bound instance. There are $n^{4/5}+4 \log^2(n) \sqrt{n}$ servers at $0$, no server in the dashed area, and $n-(n^{\frac{4}{5}}+4\log^2(n)\sqrt{n})$ servers uniformly distributed  in the gray area.}
    \label{fig:lb_instance}
\end{figure}

\section{Greedy is $\Omega(\log{n})$-competitive in the Random request Model: Overview of the Proof}
\label{sec_semi_lower}

The $\Omega(\log n)$ lower bound is the main technical  contribution of this paper.  The main steps of our proofs are as follows:
\begin{itemize}
\item We create an instance with a mass of servers at $0$, no server in $(0,n^{-1/5}]$ (dashed area), and uniformly distributed servers in $(n^{-1/5}]$ (gray area). See Section~\ref{sec:lb-instance-description} and Figure~\ref{fig:lb_instance}.
\item When applied to this instance, with high probability the greedy algorithm matches some demands in $(0,n^{-1/5}]$ (dashed area) to servers in the gray area. This is suboptimal, since the expected total number of requests in $[0,n^{-1/5}]$ is $n^{-1/5}\cdot n = n^{4/5}$, which is less than the number of servers at position $0$. Thus, we define algorithm $\mathcal{A}$ that, for all $t\in [n]$, matches $r_t$ to a free server at location $0$ if $r_t\in [0,n^{-1/5}]$ and $S_{\mathcal{A},t-1}\cap\{0\}\neq \emptyset$, and, otherwise, matches $r_t$ greedily. See Section~\ref{sec:lb-analysis-instance}.
\item Recall that a main building block for the upper bound on the competitive ratio of greedy is the Hybrid Lemma, Lemma~\ref{lem:hybrid}, giving an \emph{upper bound} on the expected value of the difference between the cost of the hybrid algorithms $\H_{\A}^{m-1}$ and  $\H_{\A}^m$ (recall that $\H_{\A}^m$ is defined as the algorithm that  matches the first $m$
requests according to ${\cal A}$ and then greedily matches the remaining requests to the closest available
server). Here, we want to \emph{lower bound}  the same quantity, since our final goal is to lower bound the competitive ratio of greedy. We start by lower bounding the difference of the related quantities $\sum_{t= m+1}^n cost_t(\H^{m-1}_{\cal A})$ and $\sum_{t= m+1}^n cost_t(\H^m_{\cal A})$, see Lemma~\ref{cor:costLdelta_gamma}. A detailed lower bound on $\mathbb{E}[cost(\H^{m-1}) - cost(\H^m)]$ is then given in Lemma~\ref{lem:m_cn2}, which can therefore be interpreted as a counterpart (for the specific algorithm ${\cal A}$ described above) of the Hybrid Lemma, Lemma~\ref{lem:hybrid}. See again Section~\ref{sec:lb-analysis-instance}.
\item Lemma~\ref{lem:m_cn2}, together with an upper bound on the expected cost of the offline optimum (Lemma~\ref{lem:m_cn2}) implies the claimed $\Omega(\log n)$ bound on the competitive ratio of greedy. See Section~\ref{sec:main_lb_result}.
\end{itemize}

Below we give some details on each of the steps above. The complete analysis and proofs of all lemmas can be found in Appendix~\ref{appendix:full_LB}.

\subsection{Description of the instance}\label{sec:lb-instance-description}

  We formalize here the description of the instance from Figure~\ref{fig:lb_instance}. We define the set of $n$ servers $S_0$ as follows: for all $j\in [n^{4/5} + \excess]$, we set $s_j= 0$. Then, we let $\tilde{n}:= n - \excess/(1-n^{-1/5})$, and for all $j\in [ n-(n^{4/5}-\excess)]$, we set  $s_{(n^{4/5} + \excess)+j} = n^{-1/5} + \frac{j}{\tilde{n}}$.

 We note that, interestingly, the  servers are almost uniform since a $1-o(1)$ fraction of the servers are uniformly spread in an interval $(o(1), 1]$.

\subsection{Analysis of the instance}\label{sec:lb-analysis-instance}

 Recall that for all $t\in [n]$, algorithm ${\cal A}$ matches  $r_t$ to a free server at location $0$ if $r_t\in [0,n^{-1/5}]$ and $S_{\mathcal{A},t-1}\cap\{0\}\neq \emptyset$, and, otherwise, matches $r_t$ greedily.  The main part of the proof is to lower bound $\mathbb{E}[cost(\H_{\A}^{m-1}) - cost(\H_{\A}^m)]$, i.e., the increase in cost from switching from algorithm $\A$ to the greedy algorithm $\G$ one step earlier in hybrid algorithm $\H_\A^{m-1}$ compared to $\H_\A^m$. As we will show, matching a request in $[0,n^{-1/5}]$ greedily at time $t = m$ instead of matching it to a server at location $0$ causes a cascading increase in costs at future time steps for $\H_{\A}^{m-1}$ compared to $\H_{\A}^m$ due to the different available servers, even though these two algorithms both match requests greedily at time steps $t > m$.

\paragraph{Structural properties.} The first lemma shows that at every time step $t$, there are at most two servers in the  symmetric difference between the sets of free servers $S_{\H_{\A}^m, t}$ and  $S_{\H_{\A}^{m-1},t}$, and that the potential extra free server in $S_{\H_{\A}^{m-1},t}$ is always located at $0$ whereas the potential extra free server in $S_{\H_{\A}^m, t}$ is the leftmost free server that is not at location $0$ (see Figure~\ref{fig:config_LB}). To ease  notation, we  write $\H^{m}$ and $\H^{m-1}$ instead of $\H_{\A}^{m}$ and $\H_{\A}^{m-1}$ and $S_t$ and $S_t'$ instead of $S_{\H^m,t}$ and $S_{\H^{m-1},t}$.

\begin{restatable}{rLem}{lemconfigSS}
\label{lem:configSS'} 
For any arbitrary sequence $R$ of $n$ requests, we have that
for all $t\in \{0,\ldots, m-1\}$, $S_t = S_t'$, and that for all $t\geq m$, either $S_t =  S_t'$ or   $ S_t' = S_t\cup \{0\}\setminus\{\min\{s\in S_t: s>0\}\}$.
\end{restatable}

\begin{figure}
    \centering
    \includegraphics[scale=0.33]{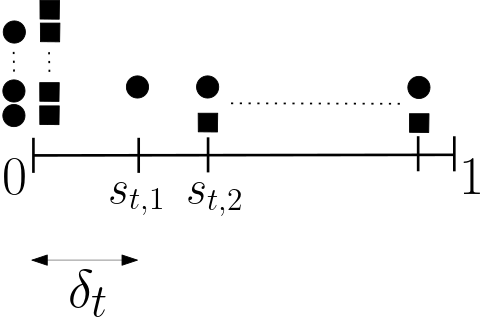}
    \caption{Sets of free servers for $\H^m$ and $\H^{m-1}$ at all time steps (with the circles denoting servers in $S_t$ and the squares denoting servers in  $S_t'$).}
    \label{fig:config_LB}
\end{figure}

\paragraph{Lower bounding the cost by the maximum gap $\delta_t$.} To bound $\mathbb{E}[cost(\H^{m-1}) - cost(\H^m)]$, we analyze the gap $\delta_t := \min\{s \in S_t : s > 0\}$ between the unique available server in $S'_t \setminus S_t \subseteq \{0\} $ and the unique available server in $S_t \setminus S_t' \subseteq \{\min\{s \in S_t : s > 0\}\}$. If $S_t = S_t'$, then there is no gap and we define $\delta_t=0$. The next lemma formally bounds $\mathbb{E}[ \sum_{t= m+1}^n (cost_t(\H^{m-1}) - cost_t(\H^m)|\delta_m,S_m]$ as a function of the gap $\delta_t$.

\begin{restatable}{rLem}{lemcostdeltagamma}
\label{cor:costLdelta_gamma} For all $m \in [n]$, we have that
\begin{align*}
    \mathbb{E}\Big[ \sum_{t= m+1}^n (cost_t(\H^{m-1}) - cost_t(\H^m)|\delta_m,S_m\Big]\geq \frac{1}{2}\mathbb{E}&\Big[\max_{t \in \{0,\ldots, \min(t_{\{0\}}, t_w)-m\}}\delta_{t+m}- \delta_{m} |\delta_m,S_m\Big]\\
    & - \P(t^d>t_{\{0\}}|\delta_m,S_m), 
\end{align*}
where $s_{t,1} := \min\{s>0:s\in S_{t}\}$ and $s_{t,2} := \min\{s>s_{t,1}:s\in S_{t}\}$; $t_w := \min\{t\geq m: s_{t,2} - s_{t,1}> s_{t,1}, \text{ or } s_{t,2}= \emptyset\}$, $t^d = \min \{t\geq m: \delta_t = 0\}$ and $t_{\{0\}}:= \min \{t\geq m|\;S_{t}\cap \{0\} = \emptyset\}$.
\end{restatable}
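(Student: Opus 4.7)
The plan is to prove a per-step lower bound $\mathbb{E}[\Delta\text{cost}_{t+1} \mid S_t] \geq \tfrac{1}{2}\,\mathbb{E}[M_{t+1} - M_t \mid S_t]$, where $\Delta\text{cost}_{t+1} := \text{cost}_{t+1}(\H^{m-1}) - \text{cost}_{t+1}(\H^m)$ and $M_t := \max_{t' \in [m,t]} \delta_{t'}$, and then telescope across $t$. By Lemma~\ref{lem:configSS'}, at each $t \geq m$ either $S_t = S_t'$ (so the two hybrid executions coincide from $t$ onwards and all subsequent cost differences vanish) or $S_t' = S_t \cup \{0\} \setminus \{s_{t,1}\}$ with $\delta_t = s_{t,1} > 0$; only the latter case can contribute. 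Moreover, since Lemma~\ref{lem:structural_hybrid} yields that $\delta$ is weakly increasing until it drops to $0$, we have $M_t = \delta_t$ whenever $\delta_t > 0$.

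Fix $t \in \{m,\ldots,\min(t_{\{0\}}, t_w) - 1\}$ with $\delta_t = s_{t,1} > 0$ and condition on $S_t$. Since $t < t_{\{0\}}$, both $S_t$ and $S_t'$ contain a copy of $0$, and since $t < t_w$, $s_{t,2}$ exists and $s_{t,2} \leq 2 s_{t,1}$. A case analysis on $r_{t+1}$ gives:
\begin{itemize}
  \item $r_{t+1} \in [0, s_{t,1}/2]$: both match to $0$, so $\Delta\text{cost}_{t+1} = 0$ and $\delta_{t+1} = \delta_t$.
  \item $r_{t+1} \in (s_{t,1}/2, s_{t,2}/2]$ (gap closes): $\H^m$ picks $s_{t,1}$ and $\H^{m-1}$ picks $0$; since $r_{t+1} \leq s_{t,2}/2 \leq s_{t,1}$, $\Delta\text{cost}_{t+1} = 2r_{t+1} - s_{t,1} \geq 0$ and $\delta_{t+1} = 0$.
  \item $r_{t+1} \in (s_{t,2}/2, (s_{t,1}+s_{t,2})/2]$ (gap grows): $\H^m$ picks $s_{t,1}$ and $\H^{m-1}$ picks $s_{t,2}$; $\Delta\text{cost}_{t+1} \geq 0$ and $\delta_{t+1} = s_{t,2}$.
  \item $r_{t+1} > (s_{t,1}+s_{t,2})/2$: both match the same server, so $\Delta\text{cost}_{t+1} = 0$ and $\delta_{t+1} = \delta_t$.
\end{itemize}
Integrating these piecewise cost differences against $r_{t+1} \sim U[0,1]$ (splitting the gap-growth case at $r = s_{t,1}$ to account for the two sub-formulas $s_{t,2} - s_{t,1}$ and $s_{t,1} + s_{t,2} - 2r$) yields the identity $\mathbb{E}[\Delta\text{cost}_{t+1} \mid S_t] = \tfrac{1}{2}\, s_{t,1}(s_{t,2} - s_{t,1})$. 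Since the running maximum grows only on the gap-growth event, which has probability $s_{t,1}/2$ and produces an increment of $s_{t,2} - s_{t,1}$, we also have $\mathbb{E}[M_{t+1} - M_t \mid S_t] = \tfrac{1}{2}\, s_{t,1}(s_{t,2} - s_{t,1})$, establishing the per-step bound (with equality, in fact).

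Taking expectations, using the tower property, and telescoping from $t = m$ to $\min(t_{\{0\}}, t_w) - 1$ then yields
\begin{equation*}
  \mathbb{E}\!\left[\sum_{t=m+1}^{\min(t_{\{0\}},t_w)}\!\Delta\text{cost}_t \;\middle|\; \delta_m, S_m\right] \;\geq\; \tfrac{1}{2}\, \mathbb{E}\!\left[\max_{t \in [m,\min(t_{\{0\}},t_w)]}\delta_t - \delta_m \;\middle|\; \delta_m, S_m\right],
\end{equation*}
which is the principal term in the lemma.

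It remains to control the tail $\sum_{t > \min(t_{\{0\}},t_w)}^n \Delta\text{cost}_t$. For any $t > t^d$, Lemma~\ref{lem:configSS'} gives $S_t = S_t'$ and hence $\Delta\text{cost}_t = 0$. Moreover, if $t_w \leq t_{\{0\}}$, then an extension of the case analysis above (accounting for the plateau region where $s_{t,2} > 2 s_{t,1}$ so that the cost-difference formula acquires constant segments but remains non-negative) shows that per-step non-negativity persists past $t_w$ so long as we stay below $t_{\{0\}}$. The only source of potentially negative tail contribution is thus the event $\{t^d > t_{\{0\}}\}$, on which $\H^{m-1}$ exploits an extra $0$-server that $\H^m$ no longer has. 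On this event, however, Lemma~\ref{lem:configSS'} still guarantees that $S_t$ and $S_t'$ differ in at most a single pair of servers in $[0,1]$, so that the cumulative signed deficit over subsequent steps is absolutely bounded by $1$; taking expectation yields an additive error of at most $\P(t^d > t_{\{0\}} \mid \delta_m, S_m)$, completing the bound. The hard part will be precisely this tail analysis: rigorously verifying that once $0$-servers run out in $\H^m$ with the gap still live, the accumulated signed cost difference across the remaining steps is uniformly bounded by $1$ in absolute value.
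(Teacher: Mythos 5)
Your proposal follows essentially the same route as the paper: a per-step comparison of $\Delta\text{cost}_{t+1}$ to the increment of $\delta_t$ over the regime $t < \min(t_{\{0\}}, t_w, t^d)$, non-negativity of $\Delta\text{cost}_{t+1}$ while servers at $0$ remain, and a one-shot bound on the deficit once they run out; your case analysis on $r_{t+1}$ is exactly the content of Lemma~\ref{lem:MC_def} and Table~\ref{tab:MC_gamma_S}, and your telescoping matches the paper's use of $\delta_{t+1}-\delta_t$. The tail step you flag as the hard part is in fact the easy one: once $S_t\cap\{0\}=\emptyset$, the per-step difference is non-negative unless the gap closes at that step (i.e., $\delta_{t+1}=0$), an event that can occur at most once and costs at most $1$ — this is Point 5 of Lemma~\ref{lem:MC_def}, which yields $\E[\sum_{t\geq t_{\{0\}}}\Delta\text{cost}_{t+1}\mid\delta_m,S_m]\geq -\P(t^d>t_{\{0\}}\mid\delta_m,S_m)$ directly, with no need for a cumulative absolute bound on the signed deficit.
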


\paragraph{Lower bounding the maximum gap $\delta_t$.} By Lemma~\ref{cor:costLdelta_gamma}, it remains to lower bound the maximum gap $\delta_t$, for $t \geq m$. To analyze this gap, we first need to introduce some additional notation and terminology.  We consider a partition $I_0, I_1, \ldots$ of $(0,1]$ into intervals of geometrically increasing size, where $I_i = (y_{i-1}, y_i]$ and $y_i = (3/2)^{i}n^{-1/5}$ (with the convention $y_{-1}=0$).
     In addition, we say that a sequence of requests is \textit{regular} if, for any $i \in [n]$, the number of requests between any time steps $t$ and $t'$ that are in the interval $[(i-1)/n, i/n]$ "sufficiently concentrates". More formally, we start by discretizing the interval $[0,1]$ as $\mathcal{D} = \{\tfrac{i}{n}: i\in \{0,\ldots, n\}\}$.

 \begin{definition}
\label{def:reg_main}
We say that a realization $R$ of the sequence of requests is regular if for all $d,d'\in \mathcal{D}$ such that $d<d'$, and for all $t,t' \in [n]$ such that $t<t'$,
\begin{enumerate}
    \item $|\{j\in \{t,\ldots, t'\}|\;r_j \in [d,d']\}|\geq (d'-d)(t'-t) - \log(n)^2\sqrt{(d'-d)(t'-t)}$,
    \item and if $(d'-d)(t'-t) = \Omega(1)$, then \[|\{j\in \{t,\ldots, t'\}|\;r_j \in [d,d']\}|\leq (d'-d)(t'-t) + \log(n)^2\sqrt{(d'-d)(t'-t)}.\]
\end{enumerate}
\end{definition}
     
     By standard concentration bounds, a sequence of requests is regular with high probability.

\begin{restatable}{rLem}{lemreg}
\label{lem:reg}
With probability at least $1-\hp$, the  sequence of requests is regular.
\end{restatable}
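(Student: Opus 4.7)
The plan is to prove regularity via a Chernoff concentration bound on the count of requests falling into each of the $n$ dyadic intervals $[(i-1)/n, i/n]$ during each time window $[t, t']$, followed by a union bound over all triples $(i,t,t')$. Since there are at most $n$ intervals and $O(n^2)$ time windows, the union bound has $O(n^3)$ terms, and a bound of $n^{-\Omega(\log n)}$ on each individual event is more than enough to absorb this polynomial factor and obtain the advertised $1 - \hp = 1 - n^{-\Omega(\log n)}$ probability.

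First, I would fix $i \in [n]$ and $0 \le t < t' \le n$ and define the random variable
\[
X_{i,t,t'} \;=\; \bigl|\{s \in \{t+1,\ldots,t'\} : r_s \in [(i-1)/n, i/n]\}\bigr|.
\]
Since the requests $r_1,\dots,r_n$ are i.i.d.\ uniform on $[0,1]$ and the interval has length $1/n$, $X_{i,t,t'}$ is the sum of $k := t' - t$ independent Bernoulli$(1/n)$ trials, with mean $\mu := k/n$. Then a standard Chernoff bound of the form
\[
\P\bigl(|X_{i,t,t'} - \mu| \ge \lambda\bigr) \;\le\; 2\exp\!\left(-\frac{\lambda^2}{2\mu + 2\lambda/3}\right)
\]
applied with the deviation $\lambda$ chosen as prescribed by Definition~\ref{def:reg} (concretely, the smallest deviation of the form $c_1 \log(n)\sqrt{\mu} + c_2 \log^2(n)$ for appropriate constants) yields $\P(|X_{i,t,t'} - \mu| \ge \lambda) \le n^{-c\log n}$ for any desired constant $c > 0$. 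The additive $\log^2(n)$ term handles the regime where $\mu$ is tiny, in which the square-root term is too small to make the Chernoff exponent scale like $\log^2(n)$.

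Having established concentration for a single triple $(i,t,t')$, I would take a union bound over all $i \in [n]$ and all pairs $0 \le t < t' \le n$. There are at most $n \cdot \binom{n+1}{2} = O(n^3)$ such triples, so the total failure probability is at most $O(n^3) \cdot n^{-c\log n} = n^{-\Omega(\log n)} = \hp$ upon choosing $c$ large enough. On the complementary event — which has probability at least $1 - \hp$ — every triple satisfies the desired concentration, which is exactly the condition for the request sequence to be regular.

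The main (mild) obstacle is picking the Chernoff deviation $\lambda$ so that two requirements are met simultaneously: (i) the tail probability is at most $n^{-\Omega(\log n)}$ so that the union bound over $O(n^3)$ triples is absorbed, and (ii) $\lambda$ matches the tolerance imposed by the regularity definition used in the lower-bound proof (the $4\log(n)^2 \sqrt{n}$ ``excess'' in the lower-bound instance suggests a deviation calibrated in terms of $\log(n)\sqrt{\mu}$ plus an additive $\log^2(n)$ slack). Beyond this calibration, the argument is essentially a direct Chernoff-plus-union-bound computation and involves no structural insight about the dynamics of the hybrid algorithms.
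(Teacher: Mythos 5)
There is a genuine gap: Definition~\ref{def:reg} quantifies over \emph{all} pairs $d, d' \in \mathcal{D}$, not only over the $n$ adjacent unit intervals $[(i-1)/n, i/n]$. Your argument establishes concentration only for those $n$ unit intervals, yet claims this ``is exactly the condition for the request sequence to be regular'' --- it is not, and the distinction matters. Summing deviations over unit intervals does not recover the required bound for longer intervals: if each of the $k = (d'-d)n$ unit intervals making up $[d,d']$ can deviate from its mean by roughly $\log^2(n)\sqrt{(t'-t)/n}$, then the count in $[d,d']$ can deviate by roughly $k\log^2(n)\sqrt{(t'-t)/n}$, whereas Definition~\ref{def:reg} tolerates only $\log^2(n)\sqrt{(d'-d)(t'-t)} = \log^2(n)\sqrt{k(t'-t)/n}$. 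The ratio is $\sqrt{k}$, which is a real loss when $d'-d = \Omega(1)$.

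The fix is exactly the paper's argument, which keeps your Chernoff-plus-union-bound template but applies it to the correct family of events: for each quadruple $(d, d', t, t')$ with $d, d' \in \mathcal{D}$ and $t < t'$ in $[n]$, the count $|\{j \in \{t,\ldots,t'\} : r_j \in [d,d']\}|$ is itself binomial $\mathcal{B}(t'-t, d'-d)$, and Chernoff (Lemma~\ref{cor:CB}) directly gives deviation $\log^2(n)\sqrt{(d'-d)(t'-t)}$ with failure probability $n^{-\Omega(\log n)}$. A union bound over the $O(n^4)$ quadruples --- not $O(n^3)$ triples --- still yields overall failure probability $n^{-\Omega(\log n)} = \hp$. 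So the methodology is right; the mistake is restricting the union bound to a too-small family of intervals and assuming the remaining cases follow for free.
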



Once the requests of sequence is assumed regular, all  events that can be derived by successive applications of simple Chernoff bounds become deterministic events. In particular, when a sequence of requests is regular, we can bound, for algorithm $\H^m$, the gap $s_{t,j+1} - s_{t,j}$ between the $j^{th}$ and $j+1^{th}$ free servers $s_{t,j}$ and $s_{t,j+1}$ with positive location at time $t \in [(1-o(1)) n]$. 

The main technical lemma of the proof of the $\Omega(\log(n))$-competitive ratio is to lower bound the maximum gap $\delta_t$ over all $t \geq m$, which we do in the next lemma, where $c_1,d_1,c_3$ are positive constants.

\begin{figure}
    \centering
    \includegraphics[scale=0.21]{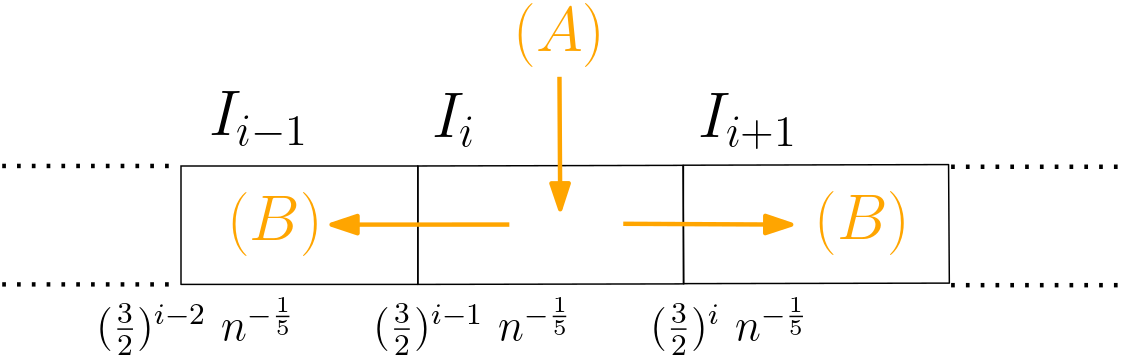}\\
    \includegraphics[scale=0.25]{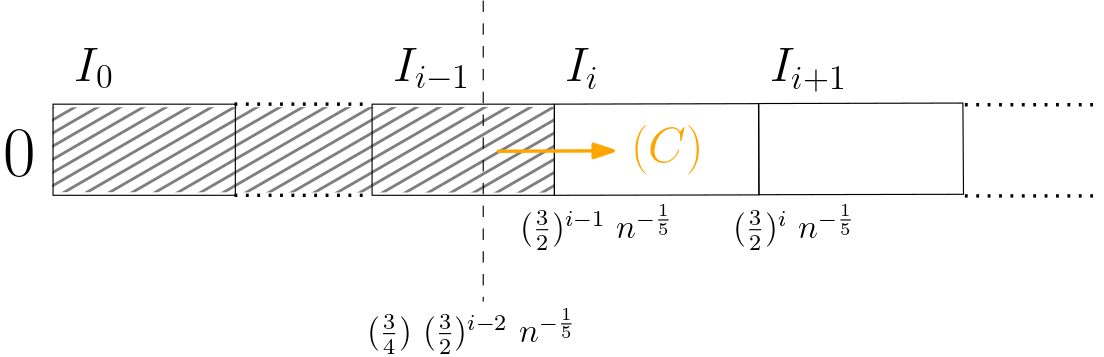}
    \caption{Requests in and out of $I_i$ up to time $\overline{t}_i:= \min(t_i, t_{i-1} + c_2(n-t_{i-1}))$, with (A) the total number of requests that arrived in $I_i$ from time 0 to $\overline{t}_i$, (B) the total number of requests that arrived in $I_i$ and were matched outside $I_i$ from time $0$ to $\overline{t}_i$, and $(C)$ the total number of requests that arrived in $[\tfrac{3}{4}y_{i-1}, y_{i-1}]$ and were matched inside $I_i$ from time $t_{i-1} + 1 +c_1(n - t_{i-1})$ to time $\overline{t}_i$ (note that there are no free servers in the dashed area for times $t\geq t_{i-1}$).}
    \label{fig:flow_main}
\end{figure}

\begin{restatable}{rLem}{lemPdeltamax}
\label{lem:P_delta_max} 
For all $i\in [d_1\log(n)]$ and $m\leq c_1n$,
\begin{equation*}
    \P\Big(\max_{t\in \{m,\ldots,  \min(n-n^{c_3}, t_{\{0\}})\}} \delta_{t} \geq y_{i-1}|\reg, \delta_m, S_m\Big)\geq \frac{\delta_m}{y_{i}}-n^{-\Omega(\log(n))}.
\end{equation*}
\end{restatable}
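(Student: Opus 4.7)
The plan is to reduce Lemma~\ref{lem:P_delta_max} to a gambler's-ruin / optional-stopping argument on the gap process $\{\delta_t\}_{t \geq m}$. Define
\[
\tau \;:=\; \min\bigl\{ t \geq m : \delta_t \geq y_{i-1} \text{ or } \delta_t = 0 \bigr\} \wedge \min(n - n^{c_3},\, t_{\{0\}}),
\]
so that $\{\max_{t} \delta_t \geq y_{i-1}\}$ is implied by $\{\delta_\tau \geq y_{i-1}\}$. Conditional on regularity it then suffices to establish: (a) $\{\delta_t\}$ is a martingale on $t \leq \min(n - n^{c_3}, t_{\{0\}})$; (b) an overshoot bound $\delta_\tau \leq y_i$ whenever $\delta_\tau \geq y_{i-1}$; and (c) the event that $\tau$ is attained at the time boundary has probability at most $\hp$. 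Combining these via optional stopping gives
\[
\delta_m \;=\; \E[\delta_\tau \mid \delta_m, S_m, \reg] \;\leq\; y_i\cdot \P\bigl(\delta_\tau \geq y_{i-1}\,\bigm|\, \delta_m, S_m, \reg\bigr) + \hp,
\]
and the lemma follows after rearranging and absorbing $\P(\reg^c) \leq \hp$ via Lemma~\ref{lem:reg}.

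For (a), I would use Lemma~\ref{lem:MC_def} together with a direct case analysis. By Lemma~\ref{lem:configSS'} the differing servers at time $t$ are an extra $0 \in S_t'\setminus S_t$ and $s_{t,1} \in S_t \setminus S_t'$, so for a uniform request $r_{t+1}$ exactly three outcomes arise: (i) both hybrids agree and $\delta_{t+1} = \delta_t$; (ii) a \emph{gap collapse} when $r_{t+1} \in (s_{t,1}/2, s_{t,2}/2)$ (interval of length $(s_{t,2}-s_{t,1})/2$), yielding $\delta_{t+1} = 0$; or (iii) a \emph{gap jump} when $r_{t+1} \in (s_{t,2}/2, (s_{t,1}+s_{t,2})/2)$ (interval of length $s_{t,1}/2 = \delta_t/2$), yielding $\delta_{t+1} = s_{t,2}$. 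A one-line computation then gives
\[
\E[\delta_{t+1} - \delta_t \mid S_t] \;=\; \tfrac{s_{t,1}}{2}(s_{t,2} - s_{t,1}) \;-\; \tfrac{s_{t,2}-s_{t,1}}{2}\, s_{t,1} \;=\; 0,
\]
which establishes the martingale property; note that this requires $0 \in S_t$, which is precisely ensured by $t \leq t_{\{0\}}$.

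For (b), on the jump event we have $\delta_\tau = s_{\tau-1, 2} = \delta_{\tau-1} + (s_{\tau-1,2} - s_{\tau-1,1})$. Under the regular event and for $t \leq n - n^{c_3}$, Lemma~\ref{prop:l_nbeta} bounds consecutive positive-server gaps by a quantity that is $o(y_i - y_{i-1})$, since $y_i - y_{i-1} = y_{i-1}/2 \geq n^{-1/5}/2$ while typical spacings are polynomial in $1/n$. Thus $\delta_\tau \leq y_{i-1} + (y_i - y_{i-1}) = y_i$. For (c), the same calculation yielding the martingale property shows that in the regime where $\delta_t \in (0, y_{i-1})$ the one-step variance of $\delta_t$ is bounded below by a polynomial quantity (using the regularity-guaranteed spacings again), so over the $\Omega(n)$ remaining steps guaranteed by $m \leq c_1 n$, a standard hitting-time estimate for martingales with non-degenerate variance gives that the probability of never leaving $[0, y_{i-1}]$ is $\hp$.

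The main obstacle I expect is carefully isolating the martingale structure: $\delta_t$ alone is not Markovian—its transitions depend on $(s_{t,1}, s_{t,2})$ and on the availability of $0$—so both the martingale property and the overshoot bound demand a delicate case analysis that must be consistent with the global evolution of $S_t$ characterized by Lemma~\ref{lem:MC_def}. A secondary challenge is turning the hitting-time control in (c) into a quantitative $\hp$ bound; this will require combining the $\Omega(n)$ horizon, the regularity-guaranteed lower bound on one-step variance, and a Doob-type maximal inequality applied to $\{\delta_t\}$ stopped at $\tau$.
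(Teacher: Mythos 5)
Your optional-stopping framework is genuinely in the spirit of the paper's proof: the paper's Lemma~\ref{lem:simple_version_process} is precisely a gambler's-ruin bound for the martingale $\delta_t$, proved by downward induction that is morally equivalent to optional stopping. Your step (a) correctly identifies the martingale structure (the one-step drift calculation $\frac{\delta_t}{2} w_t - \frac{w_t}{2}\delta_t = 0$ is the same one implicit in Table~\ref{tab:MC_gamma_S}), and your overshoot bound (b) via Lemma~\ref{prop:l_nbeta} is also correct — the maximum jump $w_t\le 2\log^4(n)\,n^{1-2c_3}=o(y_{i-1})$ is dwarfed by $y_i-y_{i-1}=y_{i-1}/2\ge n^{-1/5}/2$.

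The gap is step (c). You assert that the one-step variance of $\delta_t$ is bounded below by a polynomial via ``regularity-guaranteed spacings'' and that a standard hitting-time estimate then gives probability $n^{-\Omega(\log n)}$ of never leaving $[0,y_{i-1}]$. Two problems. First, Lemma~\ref{prop:l_nbeta} is an \emph{upper} bound on consecutive spacings, not a lower bound, so it does not supply the variance lower bound you need. The only a priori lower bound on $w_t$ is the instance's initial spacing $\tilde{n}^{-1}\approx 1/n$. Second, even granting $w_t\ge 1/n$ and $\delta_t\ge n^{-1/5}$, the one-step variance is $\frac{\delta_t w_t}{2}(\delta_t+w_t)\gtrsim n^{-7/5}$, so over $\Theta(n)$ steps the cumulative variance is only $\Theta(n^{-2/5})$. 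That is \emph{comparable} to $y_0^2=n^{-2/5}$ (for $i=1$) and strictly \emph{smaller} than $y_{i-1}^2=(3/2)^{2(i-1)}n^{-2/5}$ for larger $i$ (recall $i$ can be as large as $d_1\log n$). A martingale with cumulative variance at most comparable to the squared width of the interval has constant probability of not exiting, not probability $n^{-\Omega(\log n)}$. The variance argument therefore cannot close (c), even qualitatively.

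The correct mechanism is not stochastic at all: it is the structural, \emph{deterministic-given-regularity} fact that the intervals $I_0,\dots,I_{d_1\log n}$ are depleted in increasing order within the time horizon, which is the content of Lemma~\ref{lem:lb1} (and relies on the global analysis of server depletion in Lemmas~\ref{lem:Ii_Ii}, \ref{lem:I_i_depleted_t}, \ref{prop:mt_1}, \ref{lem:mt0}), together with Lemma~\ref{lem:depleted_before} which rules out $t_{\{0\}}<t_{(0,y_i]}$. Together these imply that $(0,y_{i-1}]$ is already empty at some time $t_{i-1}\le n-n^{c_3}\le t_{\{0\}}$ whenever $R$ is regular, so $\delta_{t_{i-1}}\in\{0\}\cup[y_{i-1},1]$ deterministically, i.e.\ $\tau$ cannot hit the time boundary on the event $\reg$; the residual $\hp$ comes solely from $\P(\text{not regular})$ via Lemma~\ref{lem:reg}. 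Your proposal, as written, would need to invoke Lemma~\ref{lem:lb1} (which it never references) in place of the variance estimate; without it, step (c) fails.
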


\paragraph{Challenges to prove Lemma~\ref{lem:P_delta_max}.} The main difficulty in proving Lemma \ref{lem:P_delta_max} is that the value of $\delta_t$ at each time step $t$ is  dependent on the value of $S_t$. However, $S_t$ lies in an exponentially-sized state space and it is difficult to compute the exact distribution of $S_t$ at all time steps. The key idea is to  separate the analysis of $(\delta_1,\ldots, \delta_n)$ and $(S_1,\ldots, S_n)$. We first show that with high probability, the servers in $(S_1,\ldots, S_n)$ become globally unavailable from left to right (see below an overview of the proof for a more precise statement). Then, we lower bound the probability that for any $y$ and any arbitrary sequence of sets $(S_1\supseteq\ldots \supseteq S_n)$,  $\delta = 0$ before all servers in the interval $(0,y]$ have become unavailable. Combining these two properties leads to the desired result.  

\paragraph{Overview of the proof of Lemma~\ref{lem:P_delta_max}.} The proof consists of three main parts. The first one analyzes the sets of free servers $S_0\supseteq \ldots\supseteq S_n$ obtained with algorithm $\H^m$ at each time step, the second one partially characterizes the values of $(\delta_t, S_t)$ and studies the first time $t\geq m$ such that $\delta_t = 0$. The last ones combines the first two parts.

\paragraph{Part 1 of the proof of Lemma~\ref{lem:P_delta_max}.} We say that an interval $I$  is \textit{depleted} at time $t$ if $S_t \cap I = \emptyset.$ We let $t_{I} := \min \{t\geq 0|  S_t \cap I= \emptyset\}$, i.e., $t_I$ is the time at which $I$ is depleted. For simplicity, we write $t_i$ instead of $t_{I_i}$.   We first show that (A) there exists a constant $c_2 \in (1/2,1)$ such that if $t_{i-1}\leq n - (1-c_2)^{i-1}n$, then, $t_{i-1}< t_{i}$. 
Then, we show that (B) if $t_0<\ldots< t_{i-1}\leq n- (1-c_2)^{i-1}n$ and  $t_{i-1}<t_{i}$, then, $t_i\leq n- (1-c_2)^in$. To show this last result, we lower bound the number of requests matched in $I_i$ until time $\overline{t_i} = \min (t_i,t_{i-1} + c_2(n-t_{i-1}))$. We first show (see Figure~\ref{fig:flow_main}) that 
\begin{align*}
|\{j\in [\overline{t_i}]\;| s_{\H^m}(r_j) \in I_i\}| & \geq \Big[|\{j\in [\overline{t_i}]:r_j\in I_i\}| - |\{j\in [\overline{t_i}]:r_j\in I_i, s_{\H^m}(r_j)\notin I_i\}|\Big]\\
    & +|\{j\in \{t_{i-1} + 1 +c_1(n - t_{i-1}),\ldots ,\overline{t_i}\} :r_j\in [\tfrac{3}{4}y_{i-1}, y_{i-1}], s_{\H^m}(r_j)\in I_i\}|.
\end{align*}

We then lower bound each of these terms separately, using in particular the regularity of the requests sequence. We deduce from this lower bound that if $t_i> n- (1-c_2)^in$, then the number of requests matched in $I_i$ exceeds the initial number of free servers in $I_i$, which is a contradiction. Hence the bound $t_i\leq n- (1-c_2)^in$. Finally, by combining properties (A) and (B), we show inductively that there is a constant $d_1>0$ such that the intervals $\{I_i\}_{i\in [d_1\log(n)]}$ are depleted in increasing order, i.e. that $m< t_{1}<\ldots< t_{d_1\log(n)}\leq n-n^{c_3}$ and that $m<t_{\{0\}}$, which is the main result of this first part.

\paragraph{Part 2 of the proof of Lemma~\ref{lem:P_delta_max}.} We start by a partial characterization of the value of $(\delta_t, S_t)$ and of the difference of cost $\Delta \text{cost}_{t+1}:= \text{cost}_{t+1}(\H^{m-1}) - \text{cost}_{t+1}(\H^m)$  between the costs incurred by $\H^{m-1}$ and $\H^{m}$ at time step $t$ as a function of $\delta_t$ and  $S_t$.

\begin{restatable}{rLem}{lemMCdef}
\label{lem:MC_def}
All the following properties hold at any time $t\in \{m,\ldots, n-1\}$:

\begin{enumerate}
    \item if $\delta_t= 0$, then for all $t'\geq t$, we have $\delta_{t'} = 0$ and $\Delta\text{cost}_{t+1} = 0$,
    \item if $S_t\cap \{0\}\neq \emptyset$, then $\Delta\text{cost}_{t+1}\geq 0$.
    \item if $S_t\cap \{0\}\neq \emptyset$, $\delta_t\neq 0$ and $|S_t\cap (\delta_t,1]|\geq 1$, then the values of $(\delta_{t+1},S_{t+1})$ and the expected value of  $\Delta\text{cost}_{t+1}$ conditioning on $(\delta_t, S_t)$ and on $r_{t+1}$ are as given in Table \ref{tab:MC_gamma_S_main}, where $w_t:=s_{t,2}-s_{t,1}$ and where we write $\E[\Delta \text{cost}_{t+1}|...]$ instead of $\E[\Delta \text{cost}_{t+1}|(\delta_t,S_t), S_t\cap \{0\}\neq \emptyset$, $\delta_t\neq 0$, $|S_t\cap (\delta_t,1]|\geq 1, r_{t+1}\in \ldots]$.
    \item if $\delta_{t+1}\neq \delta_t$, then $S_{t+1} = S_t\setminus\{\delta_t\}$.
    \item $\E[\mathbf{1}_{S_t\cap \{0\}=\emptyset, \delta_t\neq 0}\cdot\Delta\text{cost}_{t+1}|(\delta_t, S_t)] \geq-\mathbf{1}_{S_t\cap \{0\}=\emptyset, \delta_t\neq 0}\cdot\P(\delta_{t+1}=0|(\delta_t, S_t))$.
\end{enumerate}
\end{restatable}

\begin{center}
\begin{table}[]
    \centering
    \begin{tabular}{c||c| c| c| c| c} 

$r_{t+1}\in \ldots$ & $[0,\tfrac{\delta_t}{2}]$ &  $[\tfrac{\delta_t}{2}, \tfrac{\delta_t+w_t}{2}]$ &   \small{$[\tfrac{\delta_t+w_t}{2}, \delta_t +\tfrac{w_t}{2}]$} &\small{$[\delta_t +\tfrac{w_t}{2}, \delta_t+w_t]$} & \small{$[\delta_t+w_t,1]$}\\
 \hline
 $S_{t+1}$ & $S_t\setminus \{0\}$  &   $S_t\setminus \{\delta_t\}$ &$S_t\setminus \{\delta_t\}$ & $S_t\setminus \{\delta_t+w_t\}$ & \small{$\exists s\in [\delta_t+w_t,$ }\\
  &   &  & & &\small{$1]\cap S_t: S_t\setminus \{s\}$}\\ 
 \hline
 $\delta_{t+1}$ &  $\delta_{t}$ & $0$  &  $\delta_t+w_t$ & $\delta_t$ & $\delta_t$\\
 \hline
 $\mathbb{E}[\Delta\text{cost}_{t+1}|\ldots]$ &  $\geq 0$ & $\geq 0$ & $\geq
\begin{cases}
\frac{w_t}{2} &\small{\text{ if }  w_t\leq \delta_t}\\
 0 &\small{\text{otherwise.}}
\end{cases}
$ & $\geq 0$ & $\geq 0$\\
\end{tabular}
    \caption{Values of $(\delta_{t+1},S_{t+1})$ and expected value of  $\Delta\text{cost}_{t+1}$ conditioning on $(\delta_t, S_t)$ and on $r_{t+1}$, assuming that $S_t\cap \{0\}\neq \emptyset$, $\delta_t\neq 0$ and $|S_t\cap (\delta_t,1]|\geq 1$, and where $w_t:=s_{t,2}-s_{t,1}$.}
    \label{tab:MC_gamma_S_main}
\end{table}
\end{center}

 We recall that for any interval $I\subseteq [0,1]$, $t_{I}:= \min \{t\geq m|\;S_{t}\cap I = \emptyset\}$ is the time at which $I$ is depleted, and that $t^d := \min\{t\geq m: \delta_t = 0\}$ is the time at which the gap disappears. Using the properties given in Lemma \ref{lem:MC_def},  we  next show the following lemma.

\begin{restatable}{rLem}{lemsimpleversionprocess}
\label{lem:simple_version_process}
Conditioning on the gap $\delta_m$ and available servers $S_m$, and for all $y\in [\delta_m,1]$, we have
    \begin{equation*}
\mathbb{P}\Big(\min(t_{(0,y]}, t_{\{0\}}) \leq \min(t^d, t_{\{0\}})\Big|\delta_m, S_m\Big)\geq \frac{\delta_m}{y}.
\end{equation*}
\end{restatable}

 In other words, starting from a gap $\delta_m$, the probability that the gap has not yet disappeared at the time all the servers in $(0,y]$ have been depleted, or that all the servers at location $0$ are depleted before either of these events occurs, is lower bounded by $\frac{\delta_m}{y}$.

\paragraph{Part 3 of the proof of Lemma~\ref{lem:P_delta_max}.} Since we have shown in the first part that the intervals $\{I_j\}$ are depleted in increasing order of $j$, we have that just before the time $t_{y_i}$ where $(0,y_i] =\cup_{j\leq i} I_j$ is depleted, none of the intervals $I_j$ for $j< i$ have free servers left, hence $\min\{s>0: s\in S_{t_{y_i}-1}\}\in I_i$. Hence, if $\delta_{t_{y_i}-1}\neq 0$, we have by the definition of $\delta_t$ that $\delta_{t_{y_i}-1} = \min\{s>0: s\in S_{t_{y_i}-1}\}\in I_i = (y_{i-1}, y_i]$, which, in particular, implies $\delta_{t_{y_i}-1}\geq y_{i-1}$. Thus, to prove the desired result, it suffices to lower bound the probability that $\delta_{t_{y_i}-1}\neq 0$ and that $t_{y_i}\leq t_{\{0\}}$ and $t_{y_i}\leq  n-n^{c_3}$. By using the second part, we show that it is lower bounded by $\frac{\delta_m}{y_i}-\hp$.

\subsection{The main lower bound result} 
\label{sec:main_lb_result}

By combining the main lemma (Lemma \ref{lem:P_delta_max}) with Lemma~\ref{cor:costLdelta_gamma}, we can show the following bounds on $\mathbb{E}[cost(\H^{m-1}) - cost(\H^m)]$.
\begin{restatable}{rLem}{lemmnc}
\label{lem:m_cn2}
\hfill
\begin{enumerate}
    \item For any $m> c_1n$, we have: $
     \mathbb{E}[cost(\H^{m-1}) - cost(\H^m)|r_m \in [0,y_0]] = -O(n^{-1/5}).$
     \item For any $m\leq c_1n$, we have: $
     \mathbb{E}[cost(\H^{m-1}) - cost(\H^m)|r_m \in [0,y_0]] = \Omega(\log(n)n^{-1/5}).
    $
     \item For any $m\in [n]$, we have: $        \mathbb{E}[cost(\H^{m-1}) - cost(\H^m)|r_m \in (y_0,1]] = 0.$
\end{enumerate}
\end{restatable}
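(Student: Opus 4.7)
The plan is to treat the three cases separately, with Part~3 immediate from Lemma~\ref{lem:configSS'} and the gap-remains-zero property, Part~2 the main content via Lemmas~\ref{cor:costLdelta_gamma} and~\ref{lem:P_delta_max}, and Part~1 a crude upper bound exploiting that the ``bad'' case becomes rare once $m$ is large.

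For Part~3, when $r_m \in (y_0, 1]$, the definition of $\A$ forces it to match $r_m$ greedily, identically to $\G$. Combined with $S_{m-1} = S_{m-1}'$ (Lemma~\ref{lem:configSS'}), both $\H^m$ and $\H^{m-1}$ match $r_m$ to the same server, so $\delta_m = 0$ and hence $\delta_t = 0$ for all $t \geq m$ (Lemma~\ref{lem:structural_hybrid}, Part~4). The per-step costs agree, giving the claimed zero expectation.

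For Parts~1 and~2, condition on $r_m \in [0, y_0]$ and split based on whether $\G$ matches $r_m$ to a $0$-server (``Case~A'') or to the leftmost positive free server $s^\star := \min\{s \in S_{m-1} : s > 0\}$ (``Case~B''). Case~A mirrors Part~3 and contributes zero. Case~B occurs precisely when $r_m \geq s^\star/2$; in it, $\H^{m-1}$ matches $r_m$ to $s^\star$ and $\H^m$ to $0$, producing $\delta_m = s^\star$ and step-$m$ cost difference $s^\star - 2 r_m$, whose conditional expectation over $r_m$ lies in $[-y_0, 0]$. For Part~2 ($m \leq c_1 n$), only a constant fraction of requests has been processed, so $s^\star = y_0 + O(m/\tilde n)$, which gives $\delta_m = \Theta(n^{-1/5})$ and $\P(\text{Case~B} \mid r_m \in [0, y_0]) = \Theta(1)$. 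Applying Lemma~\ref{cor:costLdelta_gamma} reduces the task to lower bounding $\mathbb{E}[\max_t \delta_{t+m}]$, for which I use Lemma~\ref{lem:P_delta_max}'s tail bound $\P(\max \delta \geq y_{i-1}) \geq \delta_m / y_i - \hp$ on the geometric grid $y_i = (3/2)^i y_0$ together with $\mathbb{E}[\max \delta] = \int_0^1 \P(\max \delta \geq x)\, dx$ to obtain
\begin{equation*}
\mathbb{E}[\max_t \delta_{t+m}] \;\geq\; \sum_{i=1}^{d_1 \log n - 1} (y_i - y_{i-1}) \Bigl(\tfrac{\delta_m}{y_{i+1}} - \hp\Bigr) \;=\; \Omega(\log n)\, \delta_m,
\end{equation*}
since $y_i - y_{i-1} = y_i/3$ and $y_i/y_{i+1} = 2/3$ give a constant per-bin contribution. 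Combined with $\P(t^d > t_{\{0\}}) = \hp$ (Lemma~\ref{lem:lb1}) and the $\Theta(1)$ probability of Case~B, this yields the $\Omega(\log(n)\, n^{-1/5})$ bound.

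For Part~1 ($m > c_1 n$), Lemma~\ref{lem:P_delta_max} is inapplicable. However, once $\Theta(n)$ requests have been processed, the leftmost positive free server $s^\star$ has drifted to position $\Theta(1) \gg y_0$ (greedy rapidly depletes the nearly-uniform positive servers), so $s^\star / 2 > y_0$ with high probability and Case~B becomes negligible. The residual contribution, bounded crudely by $|cost_m(\H^{m-1}) - cost_m(\H^m)| \leq y_0$ at step $m$ and by the non-negative first term in Lemma~\ref{cor:costLdelta_gamma} together with a $\P(t^d > t_{\{0\}})$ term absorbed by the excess supply of $4\log^2(n) \sqrt{n}$ extra $0$-servers, gives the claimed $-O(n^{-1/5})$ lower bound. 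The main obstacle is the tail-sum computation in Part~2, which requires careful bookkeeping of the $\hp$ errors across $\Theta(\log n)$ bins so that they do not absorb the $\log n$ factor; a secondary challenge in Part~1 is quantifying the drift of $s^\star$ over the first $c_1 n$ steps via a concentration argument on the near-uniform positive side of the instance.
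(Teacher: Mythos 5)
Your Part 3 and the skeleton of Part 2 match the paper's argument closely, but Part 1 contains a genuine gap, and both parts miscite where the $\P(t^d > t_{\{0\}})$ control comes from.

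For Part 1, you argue that ``once $\Theta(n)$ requests have been processed, the leftmost positive free server $s^\star$ has drifted to $\Theta(1) \gg y_0$,'' so that Case~B (where a gap forms) becomes negligible. This claim is false for much of the range $m \in (c_1 n, n]$. Lemma~\ref{lem:lb1} shows precisely that $t_1 > c_1 n$, so at time $m$ just above $c_1 n$ the interval $I_1 = (y_0, \tfrac{3}{2}y_0]$ is still not depleted and $s^\star = O(n^{-1/5})$, not $\Theta(1)$. Moreover $y_{d_1\log n} = n^{-1/5 + d_1 \log(3/2)} = o(1)$ with the paper's constants, so even at time $n-n^{c_3}$ the first positive free server is only known to exceed $o(1)$. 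Your argument is also unnecessary: the paper handles Part 1 by noting that the first term of Lemma~\ref{cor:costLdelta_gamma} is nonnegative (because $\max_t \delta_{t+m} \geq \delta_m$), so the lower bound is simply $\geq -\P(t^d > t_{\{0\}}\mid r_m\in[0,y_0]) - n^{-1/5}$, and the first of these is $O(n^{-1/5})$ by Lemma~\ref{lem:depleted_before} (Part 2). No vanishing of Case~B is needed.

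Separately, you twice claim that $\P(t^d > t_{\{0\}})$ is ``absorbed'' into $\hp$ (once attributing it to Lemma~\ref{lem:lb1}, once to the excess $0$-servers). This is not right. The excess $0$-servers and regularity only control one half of the decomposition in Lemma~\ref{lem:depleted_before}; the dominant contribution is $\P(\max_t \delta_t \geq 1/2 \mid r_m \in [0,y_0]) \leq 4n^{-1/5}$, coming from Lemma~\ref{lem:worst_case} and $\delta_m \leq 2n^{-1/5}$. So $\P(t^d > t_{\{0\}}\mid r_m \in [0,y_0]) = O(n^{-1/5})$, not $\hp$. In Part 2 this happens to be harmless because the positive term is $\Omega(\log n \cdot n^{-1/5})$ and dominates $O(n^{-1/5})$, but the cited source and claimed rate are wrong. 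Finally, your Part 2 is missing the argument (given in the paper via Lemma~\ref{prop:l_nbeta} and regularity) that $t^w \geq n - n^{c_3}$, which is what lets you replace $\min(t_{\{0\}}, t^w)$ by $\min(t_{\{0\}}, n-n^{c_3})$ before applying Lemma~\ref{lem:P_delta_max}.
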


The last lemma needed is the following bound on $\OPT$.

\begin{restatable}{rLem}{optoffline}
\label{lem:cost_opt} For any $n\in \mathbb{N}$, the expected cost $\OPT$ of the optimal offline matching for our lower bound instance satisfies: $E[\OPT] = O(n^{3/5})$.
\end{restatable}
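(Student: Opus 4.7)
The plan is to exploit the classical fact that on the line, the minimum-cost bipartite matching between $n$ requests and $n$ servers is the sorted matching, so $\OPT = \sum_{i=1}^n |r_{(i)} - s_{(i)}|$, where $r_{(1)}\leq \cdots \leq r_{(n)}$ and $s_{(1)}\leq \cdots \leq s_{(n)}$ are the order statistics of the requests and servers. For the lower bound instance the sorted server sequence reads $s_{(1)} = \cdots = s_{(C)} = 0$ with $C := n^{4/5} + 4\log^2(n)\sqrt n$, followed by $s_{(C+j)} = n^{-1/5} + j/\tilde n$ for $j = 1, \ldots, n-C$. The analysis then splits naturally at the threshold $i = C$.

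For $i \leq C$ I would use $|r_{(i)} - s_{(i)}| = r_{(i)}$ together with the elementary identity $\E[r_{(i)}] = i/(n+1)$ for uniform order statistics, which gives $\E\sum_{i=1}^{C} r_{(i)} = C(C+1)/(2(n+1)) = O(C^2/n) = O(n^{3/5})$, since $C = \Theta(n^{4/5})$. This single block already produces the claimed $\Theta(n^{3/5})$ contribution, so everything else just needs to be shown to be of lower order.

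For $i > C$ I would bound $\E|r_{(i)} - s_{(i)}| \leq |\E r_{(i)} - s_{(i)}| + \E|r_{(i)} - \E r_{(i)}|$. Using $\E r_{(i)} = i/(n+1)$, $s_{(C+j)} = n^{-1/5} + j/\tilde n$, and the fact that $\tilde n$ is chosen in the instance precisely so that $s_{(n)} = 1$, the shift $|\E r_{(i)} - s_{(i)}|$ simplifies to $O(\log^2(n)/\sqrt n)$ per index, for a total of $O(\sqrt n \log^2 n)$. The fluctuation sum satisfies $\sum_{i>C} \E|r_{(i)} - \E r_{(i)}| \leq \sum_{i=1}^n \sqrt{\mathrm{Var}(r_{(i)})} = \sum_{i=1}^n \sqrt{i(n+1-i)/((n+1)^2(n+2))} = O(\sqrt n)$ by the standard uniform order-statistic variance formula. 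Both contributions are $o(n^{3/5})$, so summing them with the $i \leq C$ block yields $\E[\OPT] = O(n^{3/5})$.

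The main obstacle is the precise calibration of the shift term $|\E r_{(i)} - s_{(i)}|$ for $i > C$: the specific value of $\tilde n$, together with the excess $4\log^2(n)\sqrt n$ in the definition of $C$, is exactly what keeps $|\E r_{(i)} - s_{(i)}|$ bounded by $O(\log^2(n)/\sqrt n)$ per index, so that its sum over $i > C$ remains $o(n^{3/5})$; any looser alignment would inflate this block above the target order. Beyond this telescoping verification the remaining estimates are routine.
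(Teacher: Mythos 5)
Your proposal is correct, and it takes a genuinely different route from the paper's proof. You invoke the exact characterization of the offline optimum as the sorted matching, $\OPT = \sum_{i=1}^n |r_{(i)} - s_{(i)}|$, and then compute this sum directly by splitting at $i = C := n^{4/5} + 4\log^2(n)\sqrt{n}$. The paper instead exhibits an explicit \emph{feasible} matching $M$ (requests in $[0,n^{-1/5}]$ sent to servers at $0$, requests in $(n^{-1/5},1]$ matched optimally to the uniformly-spread servers there, overflows matched arbitrarily) and bounds $\E[\text{cost}(M)]$ using Lemma \ref{lem:unif_iid} for the right-hand block. The paper's route is more robust: it never needs the optimum structure globally, only locally inside $(n^{-1/5},1]$, and it handles the mismatch between the random number of requests in each region and the fixed number of servers there via a crude $O(\sqrt n)$ excess-count bound. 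Your route is tighter in spirit but requires verifying the alignment of $\E r_{(i)} = i/(n+1)$ with $s_{(i)}$ across all $i > C$; that verification does go through and gives $|\E r_{(i)} - s_{(i)}| = O(\log^2(n)/\sqrt n)$ per index, as you claim.

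One small inaccuracy in your commentary: the $4\log^2(n)\sqrt n$ excess and the particular $\tilde n$ are not what make the per-index shift small. Even with $C = n^{4/5}$ exactly and unit spacing $1/n$ in $(n^{-1/5},1]$, one gets $|\E r_{(C+j)} - s_{(C+j)}| = O(1/n)$, which is even smaller. The excess terms are there for other parts of the lower-bound argument (the high-probability structure of the server depletion order), and they actually worsen the alignment slightly, from $O(1/n)$ to $O(\log^2(n)/\sqrt n)$ per index; that degradation is of course still far below the target order. So the computation is correct, but the $\tilde n$ choice is not a delicate calibration on which your bound hinges.
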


By doing a telescoping sum over all $m\in [n]$ and using that $\H^n = \mathcal{A}$ and $\H^0 = \G$, we obtain from Lemma \ref{lem:m_cn2} and \ref{lem:cost_opt} the lower bound.

\thmsemirandomlower*

\begin{proof} Since $\A^0 = \mathcal{G}$ and $\A^n = \mathcal{A}$, we have that
\allowdisplaybreaks{
\begin{align*}
    &\mathbb{E}[cost(\mathcal{G})] - \mathbb{E}[cost(\mathcal{A})]\\
    &= \sum_{m=1}^n\mathbb{E}[cost(\H^{m-1}) - cost(\H^m)]\\ 
    &= \sum_{m=1}^n \mathbb{E}[cost(\H^{m-1}) - cost(\H^m)|r_m \in (y_0,1]]\mathbb{P}(r_m \in (y_0,1])\\
    &+ \sum_{m=1}^{c_1n} \mathbb{E}[cost(\H^{m-1}) - cost(\H^m)|r_m \in [0,y_0]]\mathbb{P}(r_m \in [0,y_0])\\
    &+ \sum_{m=c_1n+1}^n \mathbb{E}[cost(\H^{m-1}) - cost(\H^m)|r_m \in [0,y_0]]\mathbb{P}(r_m \in [0,y_0])\\
    &\geq 0 + \sum_{m=1}^{c_1n} C'\log(n)n^{-1/5} n^{-1/5} - \sum_{m=c_1n+1}^{n} Cn^{-1/5} n^{-1/5}\qquad\tag{for some constants $C,C'>0$}\\
     &= n^{3/5}\Big(C'(\log(n)(c_1 -\tfrac{1}{n}) -C(1-c_1- \tfrac{1}{n})\Big)\\
     &=\Omega(\log(n)n^{3/5}),
\end{align*}
}
where the inequality is by Lemma \ref{lem:m_cn2} and since $\P(r_m \in [0,y_0]) = \P(r_m \in [0,n^{-1/5}]) = n^{-1/5}$. Thus, $\mathbb{E}[cost(\mathcal{G})]\geq\mathbb{E}[cost(\mathcal{A})]+ \Omega(\log(n)n^{3/5}) =\Omega(\log(n)n^{3/5})$. Since by Lemma \ref{lem:cost_opt} we have $\mathbb{E}[OPT] =O(n^{3/5})$, we conclude that $\frac{\mathbb{E}[cost(\mathcal{G})]}{\mathbb{E}[OPT]} = \Omega(\log(n))$.
\end{proof}

\smallskip

{\bf Acknowledgments.} This research was supported by the National Science Foundation through the grant \emph{CAREER: An algorithmic theory of matching markets}, by a Columbia Center of AI Technology (CAIT) in collaboration with Amazon faculty research award, and by a Columbia Center of AI Technology (CAIT) PhD Fellowship.

\newpage

\bibliographystyle{plainnat}
\bibliography{bib}

\begin{thebibliography}{50}
\providecommand{\natexlab}[1]{#1}
\providecommand{\url}[1]{\texttt{#1}}
\expandafter\ifx\csname urlstyle\endcsname\relax
  \providecommand{\doi}[1]{doi: #1}\else
  \providecommand{\doi}{doi: \begingroup \urlstyle{rm}\Url}\fi

\bibitem[Akbarpour et~al.(2022)Akbarpour, Alimohammadi, Li, and
  Saberi]{Akbarpour21}
Mohammad Akbarpour, Yeganeh Alimohammadi, Shengwu Li, and Amin Saberi.
\newblock The value of excess supply in spatial matching markets.
\newblock \emph{Proceedings of the 23rd ACM Conference on Economics and
  Computation}, page~62, 2022.

\bibitem[Anari et~al.(2023)Anari, Charikar, and
  Ramakrishnan]{anari2023distortion}
Nima Anari, Moses Charikar, and Prasanna Ramakrishnan.
\newblock Distortion in metric matching with ordinal preferences.
\newblock In \emph{Proceedings of the 24th ACM Conference on Economics and
  Computation}, pages 90--110, 2023.

\bibitem[Antoniadis et~al.(2014)Antoniadis, Barcelo, Nugent, Pruhs, and
  Scquizzato]{Antoniadis}
Antonios~Foivos Antoniadis, Neal Barcelo, Michael Nugent, Kirk Pruhs, and
  Michele Scquizzato.
\newblock A o(n) -competitive deterministic algorithm for online matching on a
  line.
\newblock In \emph{Workshop on Approximation and Online Algorithms}, 2014.

\bibitem[Arnosti(2022)]{arnosti2022greedy}
Nick Arnosti.
\newblock Greedy matching in bipartite random graphs.
\newblock \emph{Stochastic Systems}, 12\penalty0 (2):\penalty0 133--150, 2022.

\bibitem[Arthur et~al.(2009)Arthur, Manthey, and R{\"o}glin]{arthur2009k}
David Arthur, Bodo Manthey, and Heiko R{\"o}glin.
\newblock K-means has polynomial smoothed complexity.
\newblock In \emph{2009 50th Annual IEEE Symposium on Foundations of Computer
  Science}, pages 405--414. IEEE, 2009.

\bibitem[Balkanski et~al.(2022)Balkanski, Faenza, and
  Kubik]{balkanski2022simultaneous}
Eric Balkanski, Yuri Faenza, and Mathieu Kubik.
\newblock The simultaneous semi-random model for {TSP}.
\newblock In \emph{International Conference on Integer Programming and
  Combinatorial Optimization}, pages 43--56. Springer, 2022.

\bibitem[Bansal et~al.(2007)Bansal, Buchbinder, Gupta, and Naor]{bansal}
Nikhil Bansal, Niv Buchbinder, Anupam Gupta, and Joseph~Seffi Naor.
\newblock An o(log2k)-competitive algorithm for metric bipartite matching.
\newblock In \emph{Proceedings of the 15th Annual European Conference on
  Algorithms}, ESA'07, page 522–533, Berlin, Heidelberg, 2007.
  Springer-Verlag.
\newblock ISBN 3540755195.

\bibitem[Brown(2016)]{Lyft}
Timothy Brown.
\newblock Matchmaking in lyft line — part 1.
\newblock \emph{Lyft Engineering}, 2016.
\newblock URL \url{https://tinyurl.com/3sdrw7yc}.

\bibitem[Caragiannis et~al.(2016)Caragiannis, Filos-Ratsikas, Frederiksen,
  Hansen, and Tan]{caragiannis2016truthful}
Ioannis Caragiannis, Aris Filos-Ratsikas, S{\o}ren Kristoffer~Stiil
  Frederiksen, Kristoffer~Arnsfelt Hansen, and Zihan Tan.
\newblock Truthful facility assignment with resource augmentation: An exact
  analysis of serial dictatorship.
\newblock In \emph{Web and Internet Economics: 12th International Conference,
  WINE 2016, Montreal, Canada, December 11-14, 2016, Proceedings 12}, pages
  236--250. Springer, 2016.

\bibitem[Chatziafratis et~al.(2017)Chatziafratis, Roughgarden, and
  Vondrak]{chatziafratis2017stability}
Vaggos Chatziafratis, Tim Roughgarden, and Jan Vondrak.
\newblock Stability and recovery for independence systems.
\newblock In \emph{25th Annual European Symposium on Algorithms (ESA 2017)}.
  Schloss Dagstuhl-Leibniz-Zentrum fuer Informatik, 2017.

\bibitem[Conforti and Cornu{\'e}jols(1984)]{conforti1984submodular}
Michele Conforti and G{\'e}rard Cornu{\'e}jols.
\newblock Submodular set functions, matroids and the greedy algorithm: tight
  worst-case bounds and some generalizations of the rado-edmonds theorem.
\newblock \emph{Discrete applied mathematics}, 7\penalty0 (3):\penalty0
  251--274, 1984.

\bibitem[Csaba and Pluhár(2007)]{Csaba}
Bela Csaba and András Pluhár.
\newblock A randomized algorithm for the on-line weighted bipartite matching
  problem.
\newblock \emph{Journal of Scheduling}, 11, 07 2007.
\newblock \doi{10.1007/s10951-007-0037-5}.

\bibitem[Devanur et~al.(2011)Devanur, Jain, Sivan, and
  Wilkens]{devanur2011near}
Nikhil~R Devanur, Kamal Jain, Balasubramanian Sivan, and Christopher~A Wilkens.
\newblock Near optimal online algorithms and fast approximation algorithms for
  resource allocation problems.
\newblock In \emph{Proceedings of the 12th ACM conference on Electronic
  commerce}, pages 29--38, 2011.

\bibitem[Englert et~al.(2014)Englert, R{\"o}glin, and
  V{\"o}cking]{englert2014worst}
Matthias Englert, Heiko R{\"o}glin, and Berthold V{\"o}cking.
\newblock Worst case and probabilistic analysis of the 2-opt algorithm for the
  {TSP}.
\newblock \emph{Algorithmica}, 68\penalty0 (1):\penalty0 190--264, 2014.

\bibitem[Englert et~al.(2016)Englert, R{\"o}glin, and
  V{\"o}cking]{englert2016smoothed}
Matthias Englert, Heiko R{\"o}glin, and Berthold V{\"o}cking.
\newblock Smoothed analysis of the 2-opt algorithm for the general {TSP}.
\newblock \emph{ACM Transactions on Algorithms (TALG)}, 13\penalty0
  (1):\penalty0 1--15, 2016.

\bibitem[Feige(2021)]{feige2021introduction}
Uriel Feige.
\newblock Introduction to semirandom models.
\newblock \emph{Beyond the Worst-Case Analysis of Algorithms}, page 189, 2021.

\bibitem[Feldman et~al.(2010)Feldman, Henzinger, Korula, Mirrokni, and
  Stein]{feldman2010online}
Jon Feldman, Monika Henzinger, Nitish Korula, Vahab~S Mirrokni, and Cliff
  Stein.
\newblock Online stochastic packing applied to display ad allocation.
\newblock In \emph{European Symposium on Algorithms}, pages 182--194. Springer,
  2010.

\bibitem[Frieze et~al.(1990)Frieze, McDiarmid, and Reed]{Frieze}
Alan Frieze, Colin McDiarmid, and Bruce Reed.
\newblock Greedy matching on the line.
\newblock \emph{SIAM Journal on Computing}, 19\penalty0 (4):\penalty0 666--672,
  1990.
\newblock \doi{10.1137/0219045}.
\newblock URL \url{https://doi.org/10.1137/0219045}.

\bibitem[Fuchs et~al.(2003)Fuchs, Hochstättler, and Kern]{Fuchs2003}
Bernhard Fuchs, Winfried Hochstättler, and Walter Kern.
\newblock Online matching on a line.
\newblock \emph{Electronic Notes in Discrete Mathematics}, 13:\penalty0
  49–51, 03 2003.
\newblock \doi{10.1016/S1571-0653(04)00436-6}.

\bibitem[Goel and Mehta(2008)]{goel2008online}
Gagan Goel and Aranyak Mehta.
\newblock Online budgeted matching in random input models with applications to
  adwords.
\newblock In \emph{SODA}, volume~8, pages 982--991, 2008.

\bibitem[Gupta and Lewi(2012)]{GuptaL12}
Anupam Gupta and Kevin Lewi.
\newblock The online metric matching problem for doubling metrics.
\newblock In Artur Czumaj, Kurt Mehlhorn, Andrew~M. Pitts, and Roger
  Wattenhofer, editors, \emph{Automata, Languages, and Programming - 39th
  International Colloquium, {ICALP} 2012, Warwick, UK, July 9-13, 2012,
  Proceedings, Part {I}}, volume 7391 of \emph{Lecture Notes in Computer
  Science}, pages 424--435. Springer, 2012.

\bibitem[Gupta et~al.(2019)Gupta, Guruganesh, Peng, and Wajc]{GuptaGPW19}
Anupam Gupta, Guru Guruganesh, Binghui Peng, and David Wajc.
\newblock Stochastic online metric matching.
\newblock In Christel Baier, Ioannis Chatzigiannakis, Paola Flocchini, and
  Stefano Leonardi, editors, \emph{46th International Colloquium on Automata,
  Languages, and Programming, {ICALP} 2019, July 9-12, 2019, Patras, Greece},
  volume 132, pages 67:1--67:14, 2019.

\bibitem[Gupta et~al.(2020)Gupta, Krishnaswamy, and
  Sandeep]{Gupta2020PERMUTATIONSB}
Varun Gupta, Ravishankar Krishnaswamy, and Sai Sandeep.
\newblock Permutation strikes back: The power of recourse in online metric
  matching.
\newblock In \emph{APPROX-RANDOM}, 2020.

\bibitem[Holden et~al.(2021)Holden, Peres, and Zhai]{Holden21}
Nina Holden, Yuval Peres, and Alex Zhai.
\newblock {Gravitational allocation for uniform points on the sphere}.
\newblock \emph{The Annals of Probability}, 49\penalty0 (1):\penalty0 287 --
  321, 2021.
\newblock \doi{10.1214/20-AOP1452}.
\newblock URL \url{https://doi.org/10.1214/20-AOP1452}.

\bibitem[Jackson(2019)]{UberEats}
Joab Jackson.
\newblock How {U}ber {E}ats uses machine learning to estimate delivery times?
\newblock \emph{The New Stack}, 2019.
\newblock URL
  \url{https://thenewstack.io/how-uber-eats-uses-machine-learning-to-estimate-delivery-times/}.

\bibitem[Kalyanasundaram and Pruhs(1993)]{KalyanasundaramP93}
Bala Kalyanasundaram and Kirk Pruhs.
\newblock Online weighted matching.
\newblock \emph{J. Algorithms}, 14\penalty0 (3):\penalty0 478--488, 1993.

\bibitem[Kalyanasundaram and Pruhs(2000)]{KalyanasundaramP00}
Bala Kalyanasundaram and Kirk Pruhs.
\newblock The online transportation problem.
\newblock \emph{{SIAM} J. Discret. Math.}, 13\penalty0 (3):\penalty0 370--383,
  2000.
\newblock \doi{10.1137/S0895480198342310}.
\newblock URL \url{https://doi.org/10.1137/S0895480198342310}.

\bibitem[Kanoria(2021)]{Kan_arxiv}
Yash Kanoria.
\newblock Dynamic spatial matching, 2021.
\newblock URL \url{https://arxiv.org/abs/2105.07329}.

\bibitem[Kanoria(2022)]{Kanoria21}
Yash Kanoria.
\newblock Dynamic spatial matching.
\newblock In \emph{Proceedings of the 23rd ACM Conference on Economics and
  Computation}, EC '22, page 63–64. Association for Computing Machinery,
  2022.
\newblock ISBN 9781450391504.

\bibitem[Karp et~al.(1990)Karp, Vazirani, and Vazirani]{karp1990optimal}
Richard~M Karp, Umesh~V Vazirani, and Vijay~V Vazirani.
\newblock An optimal algorithm for on-line bipartite matching.
\newblock In \emph{Proceedings of the twenty-second annual ACM symposium on
  Theory of computing}, pages 352--358, 1990.

\bibitem[Khuller et~al.(1994)Khuller, Mitchell, and Vazirani]{khuller1994line}
Samir Khuller, Stephen~G Mitchell, and Vijay~V Vazirani.
\newblock On-line algorithms for weighted bipartite matching and stable
  marriages.
\newblock \emph{Theoretical Computer Science}, 127\penalty0 (2):\penalty0
  255--267, 1994.

\bibitem[Koutsoupias and Nanavati(2004)]{Koutsoupias}
Elias Koutsoupias and Akash Nanavati.
\newblock The online matching problem on a line.
\newblock In Roberto Solis-Oba and Klaus Jansen, editors, \emph{Approximation
  and Online Algorithms}, pages 179--191, Berlin, Heidelberg, 2004. Springer
  Berlin Heidelberg.
\newblock ISBN 978-3-540-24592-6.

\bibitem[K{\"u}nnemann and Manthey(2015)]{kunnemann2015towards}
Marvin K{\"u}nnemann and Bodo Manthey.
\newblock Towards understanding the smoothed approximation ratio of the 2-opt
  heuristic.
\newblock In \emph{International Colloquium on Automata, Languages, and
  Programming}, pages 859--871. Springer, 2015.

\bibitem[Li et~al.(2020)Li, Fang, Zeng, Maag, Tong, and Zhang]{li2020two}
Yiming Li, Jingzhi Fang, Yuxiang Zeng, Balz Maag, Yongxin Tong, and Lingyu
  Zhang.
\newblock Two-sided online bipartite matching in spatial data: experiments and
  analysis.
\newblock \emph{GeoInformatica}, 24\penalty0 (1):\penalty0 175--198, 2020.

\bibitem[Manthey and R{\"o}glin(2013)]{manthey2013worst}
Bodo Manthey and Heiko R{\"o}glin.
\newblock Worst-case and smoothed analysis of k-means clustering with bregman
  divergences.
\newblock \emph{Journal of Computational Geometry (Old Web Site)}, 4\penalty0
  (1):\penalty0 94--132, 2013.

\bibitem[Mastin and Jaillet(2013)]{mastin2013greedy}
Andrew Mastin and Patrick Jaillet.
\newblock Greedy online bipartite matching on random graphs.
\newblock \emph{arXiv preprint arXiv:1307.2536}, 2013.

\bibitem[Megow and N{\"{o}}lke(2020)]{Megow}
Nicole Megow and Lukas N{\"{o}}lke.
\newblock Online minimum cost matching with recourse on the line.
\newblock In Jaroslaw Byrka and Raghu Meka, editors, \emph{Approximation,
  Randomization, and Combinatorial Optimization. Algorithms and Techniques,
  {APPROX/RANDOM} 2020, August 17-19, 2020, Virtual Conference}, volume 176 of
  \emph{LIPIcs}, pages 37:1--37:16, 2020.

\bibitem[Mehta(2013)]{Mehta}
Aranyak Mehta.
\newblock Online matching and ad allocation.
\newblock \emph{Foundations and Trends in Theoretical Computer Science}, 8
  (4):\penalty0 265--368, 2013.
\newblock URL \url{http://dx.doi.org/10.1561/0400000057}.

\bibitem[Mehta et~al.(2013)]{mehta2013online}
Aranyak Mehta et~al.
\newblock Online matching and ad allocation.
\newblock \emph{Foundations and Trends{\textregistered} in Theoretical Computer
  Science}, 8\penalty0 (4):\penalty0 265--368, 2013.

\bibitem[Meyerson et~al.(2006)Meyerson, Nanavati, and Poplawski]{Myerson}
Adam Meyerson, Akash Nanavati, and Laura Poplawski.
\newblock Randomized online algorithms for minimum metric bipartite matching.
\newblock In \emph{Proceedings of the Seventeenth Annual ACM-SIAM Symposium on
  Discrete Algorithm}, SODA '06, page 954–959, USA, 2006. Society for
  Industrial and Applied Mathematics.
\newblock ISBN 0898716055.

\bibitem[Nayyar and Raghvendra(2017)]{Nayyar}
Krati Nayyar and Sharath Raghvendra.
\newblock An input sensitive online algorithm for the metric bipartite matching
  problem.
\newblock In \emph{2017 IEEE 58th Annual Symposium on Foundations of Computer
  Science (FOCS)}, pages 505--515, 2017.
\newblock \doi{10.1109/FOCS.2017.53}.

\bibitem[Peserico and Scquizzato(2021)]{peserico2021matching}
Enoch Peserico and Michele Scquizzato.
\newblock Matching on the line admits no $o(\sqrt{log n})$-competitive
  algorithm.
\newblock In \emph{48th International Colloquium on Automata, Languages, and
  Programming (ICALP 2021)}, 2021.

\bibitem[Pokutta et~al.(2020)Pokutta, Singh, and
  Torrico]{pokutta2020unreasonable}
Sebastian Pokutta, Mohit Singh, and Alfredo Torrico.
\newblock On the unreasonable effectiveness of the greedy algorithm: Greedy
  adapts to sharpness.
\newblock In \emph{International Conference on Machine Learning}, pages
  7772--7782. PMLR, 2020.

\bibitem[Raghvendra(2016)]{Raghvendra16}
Sharath Raghvendra.
\newblock A robust and optimal online algorithm for minimum metric bipartite
  matching.
\newblock In Klaus Jansen, Claire Mathieu, Jos{\'{e}} D.~P. Rolim, and Chris
  Umans, editors, \emph{Approximation, Randomization, and Combinatorial
  Optimization. Algorithms and Techniques, {APPROX/RANDOM} 2016, September 7-9,
  2016, Paris, France}, 2016.

\bibitem[Raghvendra(2018)]{raghvendra2018optimal}
Sharath Raghvendra.
\newblock Optimal analysis of an online algorithm for the bipartite matching
  problem on a line.
\newblock In \emph{34th International Symposium on Computational Geometry (SoCG
  2018)}, 2018.

\bibitem[Rubinstein and Zhao(2022)]{rubinstein2022budget}
Aviad Rubinstein and Junyao Zhao.
\newblock Budget-smoothed analysis for submodular maximization.
\newblock In \emph{13th Innovations in Theoretical Computer Science Conference
  (ITCS 2022)}. Schloss Dagstuhl-Leibniz-Zentrum f{\"u}r Informatik, 2022.

\bibitem[Sinharay(2010)]{Beta_distrib}
S.~Sinharay.
\newblock Continuous probability distributions.
\newblock In Penelope Peterson, Eva Baker, and Barry McGaw, editors,
  \emph{International Encyclopedia of Education (Third Edition)}, pages
  98--102. Elsevier, Oxford, third edition edition, 2010.
\newblock ISBN 978-0-08-044894-7.
\newblock \doi{https://doi.org/10.1016/B978-0-08-044894-7.01720-6}.
\newblock URL
  \url{https://www.sciencedirect.com/science/article/pii/B9780080448947017206}.

\bibitem[Tong et~al.(2016)Tong, She, Ding, Chen, Wo, and Xu]{TongSDCWX16}
Yongxin Tong, Jieying She, Bolin Ding, Lei Chen, Tianyu Wo, and Ke~Xu.
\newblock Online minimum matching in real-time spatial data: Experiments and
  analysis.
\newblock \emph{Proc. {VLDB} Endow.}, 9\penalty0 (12):\penalty0 1053--1064,
  2016.

\bibitem[Tsai et~al.(1994)Tsai, Tang, and Chen]{TsaiTC94}
Ying~The Tsai, Chuan~Yi Tang, and Yunn~Yen Chen.
\newblock Average performance of a greedy algorithm for the on-line minimum
  matching problem on euclidean space.
\newblock \emph{Inf. Process. Lett.}, 51\penalty0 (6):\penalty0 275--282, 1994.

\bibitem[Xu et~al.(2019)Xu, Shi, Cheng, Dickerson, Sankararaman, Srinivasan,
  Tong, and Tsepenekas]{xu2019unified}
Pan Xu, Yexuan Shi, Hao Cheng, John Dickerson, Karthik~Abinav Sankararaman,
  Aravind Srinivasan, Yongxin Tong, and Leonidas Tsepenekas.
\newblock A unified approach to online matching with conflict-aware
  constraints.
\newblock In \emph{Proceedings of the AAAI Conference on Artificial
  Intelligence}, volume~33, pages 2221--2228, 2019.

\end{thebibliography}

\newpage

\appendix

\section*{Appendix}

\section{Extensions}
\label{app_extension}

\subsection{Beyond the line}
\label{app_dimensions}

An obvious natural extension  to consider is the setting where the servers and requests are located in more general metric spaces beyond the line, such as the unit hypercube of arbitrary dimension $d$. The two-dimensional setting is especially natural due to the ride-hailing platforms and food delivery services applications. In this section, we first present experimental results on the average performance of the greedy algorithm over instances that are drawn according to the fully random model in higher dimensions. We then discuss which parts of our analysis extend to more general metric spaces, which parts of our analysis do not extend to the unit square, and what the challenges are to extend our results to more general metric spaces.

\begin{figure}[t]
    \centering
    \includegraphics[width=0.9\linewidth]{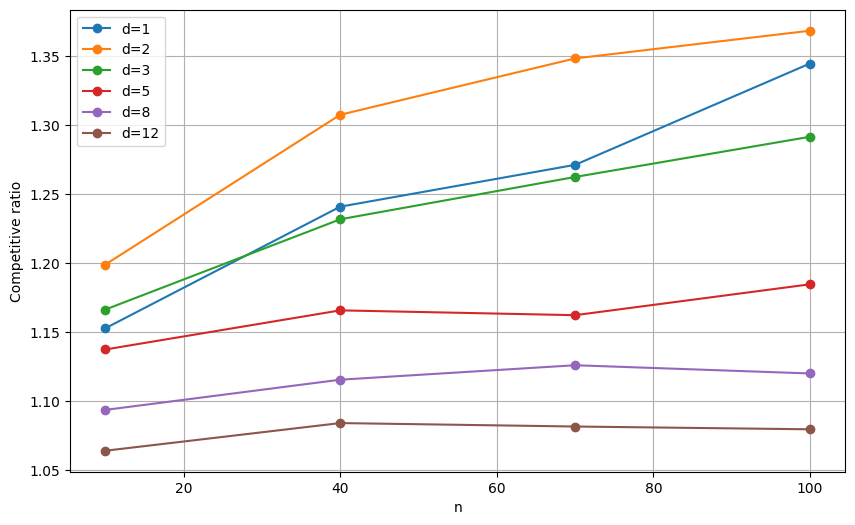}
    \caption{The competitive ratio achieved by the greedy algorithm for different dimensions $d$ and number of uniformly random servers and requests $n$.}
    \label{fig:higher_dim}
\end{figure}

\paragraph{Simulations.}  We experimentally evaluate the average performance of the greedy algorithm over instances  that are drawn according to the fully random model in higher dimensions. The goal of these simulations is to get some indications about whether the competitive ratio of the greedy algorithm might also be constant in higher dimensions and not just on the line.

 Figure~\ref{fig:higher_dim} shows the average competitive ratio achieved by the greedy algorithm for different dimensions $d$ and number $n$ of uniformly random servers and requests. For each point, we averaged the competitive ratio over $20$ random instances. The precise mean and standard deviation of the competitive ratio computed from these simulations are provided in Table~\ref{fig:higher_dim}. The optimal solution to an instance was computed by solving an integer linear programming formulation of the problem.

In all our simulations, the average competitive ratio achieved by the greedy algorithm is at most $1.4$. Interestingly, the competitive ratio is larger in two dimensions than in one dimension and then for $d>2$ the competitive ratio improves as the dimension increases. These simulations suggest that there could be some phenomena that are unique to the $d= 1$ and $d=2$ dimensions settings and that the competitive ratio then smoothly improves as a function of $d$. 
\renewcommand{\arraystretch}{0.9}
\begin{table}[t]
\centering
\begin{tabular}{|c|c|c|c|c|}
\hline
$\mathbf{n}$ & $\mathbf{d}$ & \textbf{Mean} & \textbf{Std. Dev.} \\
\hline
10  & 1 & 1.153 & 0.201 \\ \hline
40  & 1 & 1.241 & 0.134 \\ \hline
70  & 1 & 1.271 & 0.214 \\ \hline
100 & 1 & 1.345 & 0.113 \\ \hline
10  & 2 & 1.199 & 0.095 \\ \hline
40  & 2 & 1.308 & 0.102 \\ \hline
70  & 2 & 1.348 & 0.078 \\ \hline
100 & 2 & 1.369 & 0.069 \\ \hline
10  & 3 & 1.167 & 0.123 \\ \hline
40  & 3 & 1.232 & 0.072 \\ \hline
70  & 3 & 1.263 & 0.045 \\ \hline
100 & 3 & 1.292 & 0.049 \\ \hline
10  & 5 & 1.138 & 0.078 \\ \hline
40  & 5 & 1.166 & 0.042 \\ \hline
70  & 5 & 1.162 & 0.033 \\ \hline
100 & 5 & 1.185 & 0.036 \\ \hline
10  & 8 & 1.094 & 0.065 \\ \hline
40  & 8 & 1.116 & 0.031 \\ \hline
70  & 8 & 1.126 & 0.021 \\ \hline
100 & 8 & 1.120 & 0.013 \\ \hline
10  & 12 & 1.064 & 0.035 \\ \hline
40  & 12 & 1.084 & 0.026 \\ \hline
70  & 12 & 1.082 & 0.011 \\ \hline
100 & 12 & 1.080 & 0.013 \\ \hline
\end{tabular}
\label{tab:higher_dim}
\vspace{.4cm}
\caption{The mean and standard deviation of the competitive ratio achieved by the greedy algorithm over $20$ random instances for different dimensions $d$ and number of uniformly random servers and requests $n$.}
\end{table}

\renewcommand{\arraystretch}{1}

\paragraph{Analysis.} The current analysis of the hybrid lemma (Lemma~\ref{lem:hybrid}) does not extend from the line to the unit square. As we have shown, ``the difference in at most one server" and ``gap remains zero after disappearing" properties that hold for the two hybrid algorithms $\H^{m-1}$ and $\H^{m}$ (properties 1 and 4 in Lemma~\ref{lem:structural_hybrid}) hold in any metric space. However,
the analysis of Lemma~\ref{lem:diff_cost_delta}, which provides the bound $\text{cost}(\H^{m-1}) - \text{cost}(\H^{m}) \leq 2\max_{t\in \{m,\ldots, n-1\}}\delta_{t}$ on the difference in the total costs  of $\mathcal H^m$ and $\mathcal H^{m-1}$ as a function of the maximum gap $\delta _t$, only works for the line. The main reason is that the analysis crucially relies on the gap $\delta_t$ having a monotone increase until there is no more gap. However, over the unit square, it is possible to construct instances where the gap can decrease at some time step without disappearing. This monotonicity is needed to show that $$\text{cost}_t(\H^{m-1}) - \text{cost}_t(\H^{m})\leq \delta_{t} - \delta_{t-1},$$ i.e., to show that the difference $\text{cost}_t(\H^{m-1}) - \text{cost}_t(\H^{m})$ in the costs incurred at time $t$ between the two hybrid algorithms is at most the increase $\delta_t - \delta_{t-1}$ in the gap  $\delta$. Showing this inequality is a crucial step of the proof of Lemma~\ref{lem:diff_cost_delta}.

To achieve positive results in more general metric spaces beyond the line, we believe that the ``difference in at most one server" and ``gap remains zero after disappearing" properties that hold for the two hybrid algorithms $\H^{m-1}$ and $\H^{m}$ might still be important properties to show a more general version of the hybrid lemma (Lemma~\ref{lem:hybrid}). However, a different approach is needed to bound the difference $\text{cost}_t(\H^{m-1}) - \text{cost}_t(\H^{m})$ in the costs of the hybrid algorithms as a function of the increase $\delta_{t} - \delta_{t-1}$ of the gap $\delta$.

\subsection{Beyond the uniform assumption}
\label{app_iid}

Another natural and interesting extension is to relax the assumption that the requests are drawn uniformly and independently and  consider the weaker assumption where the requests are drawn i.i.d. from an arbitrary distribution. The current analysis of the hybrid lemma (Lemma~\ref{lem:hybrid}) does not extend to this setting. We note that Lemma~\ref{lem:diff_cost_delta}, which provides the bound $\text{cost}(\H^{m-1}) - \text{cost}(\H^{m}) \leq 2\max_{t\in \{m,\ldots, n-1\}}\delta_{t}$ on the difference in the total costs  of $\mathcal H^m$ and $\mathcal H^{m-1}$ as a function of the maximum gap $\delta _t$, still holds in the i.i.d. setting (in fact it even holds in the adversarial setting). We recall that $\delta _t$ is the distance between the two free servers that differ for algorithms $\mathcal H^m$ and $\mathcal H^{m-1}$ at time $t$. In addition, the structural properties provided by Lemma~\ref{lem:structural_hybrid}, and in particular the values of the gap $\delta_{t+1}$ at time $t+1$ as a function of the gap $\delta_t$ at time $t$ and the location $r_{t+1}$ of the request at time $t+1$, also still hold.

The part of the analysis of the hybrid lemma that does not extend is the proof of Lemma~\ref{lem:worst_case} that provides a bound on the maximum gap $\delta_t$. This proof bounds the expected size increase $\delta_{t+1} - \delta_t$ of the gap at any time $t$ and for any free server locations. Showing this bound  crucially relies on 1) the values of  $\delta_{t+1}$  as a function of  $\delta_t$ and $r_{t+1}$ and 2) the uniform distribution so that the probability that  request $r_{t+1}$ lies in a given interval is equal to the length of that interval. Since there are intervals $I$ such that $r_{t+1} \in I$ leads to a large size increase for the gap, it is possible to construct distributions with a large probability mass over such an interval $I$ and it is no longer possible to always obtain, for every time $t$, a sufficiently strong bound on the expected size increase $\delta_{t+1} - \delta_t$. 

Despite the current analysis not extending, it might still be possible to provide a bound on the maximum gap $\delta _t$ in the i.i.d. setting. The problematic case is when there exists an interval $I$ at time $t$ such that  a request $r_{t+1} \in I$ causes a large  gap size increase $\delta_{t+1} - \delta_t$. A promising observation is that in such a scenario, a request $r_{t+2} \in I$ in this same interval $I$ at the following time step does not lead to an increase in the gap $\delta$ between time $t+1$ and time $t+2$ and  even leads to $\delta_{t+2} = 0$ with some probability. This observation follows  by inspection of the values for $\delta_{t+1}$  that are provided in Lemma~\ref{lem:structural_hybrid} and are as a function  of $\delta_t$ and $r_{t+1}$.

In summary, the expected increase $\E[\delta_{t+1} - \delta_t]$ in the gap $\delta$ can always be bounded at any time $t$ in the uniform setting. Although this increase cannot always be bounded for every time $t$ in the i.i.d. setting, an expected increase at a time $t$ implies a potential expected decrease at time $t+1$, which could be exploited to bound the expected maximum gap.   

\subsection{The random servers model} 
\label{app_randomservers}

In the random requests model, the $n$ servers are adversarially chosen and the requests are drawn uniformly and independently. In this section, we discuss the random servers model where the $n$ requests are adversarially chosen and the servers are drawn uniformly and independently. As for the i.i.d. relaxation discussed in Section~\ref{app_iid},  the structural properties provided by Lemma~\ref{lem:structural_hybrid}, which include the values of the gap $\delta_{t+1}$  as a function of the gap $\delta_t$ and the location $r_{t+1}$ of the request at time $t+1$, and Lemma~\ref{lem:diff_cost_delta}, which provides the bound $\text{cost}(\H^{m-1}) - \text{cost}(\H^{m}) \leq 2\max_{t\in \{m,\ldots, n-1\}}\delta_{t}$, both hold in the random servers model.

However, the next step of the analysis of the hybrid lemma (Lemma~\ref{lem:hybrid}) does not extend to this model. This next step is Lemma~\ref{lem:worst_case} that provides a bound on the maximum gap $\delta _t$. In fact, if the $n$ servers are located at locations $i / (n+1)$ for $i \in \{0, \ldots, n-1\}$,   it is easy to construct adversarial requests that can cause the gap $\delta_t$ to be arbitrarily large. Note that in the random servers model, $i / (n+1)$ is the expected location of the $i^{th}$ leftmost server. 

Even though the gap $\delta_t$ can be arbitrarily large, the adversarial sequence of requests that causes such a large gap also causes a high cost for the optimal solution. This high cost for the optimal solution when trying to make the gap grow large explains why such  adversarial requests do not easily lead to a lower bound on the performance of greedy in the random servers model. We believe that an interesting direction for positive results for greedy in the random servers model is to bound the increase $\delta_{t+1} - \delta_t$, at each time step $t$, in the gap  as a function of the increase in the cost of the optimal solution.

\subsection{The sublinear excess of servers regime}
\label{app_sublinear}

In the fully random model, 
Theorem~\ref{thm:random_balanced} shows that greedy achieves a constant approximation when the number of servers is equal to the number of requests and Theorem~\ref{thm:random_unbalanced} shows that it also achieves a constant approximation when there is a linear excess supply, i.e., when the number of servers is $(1+\epsilon)n$, for some constant $\epsilon >0$, and the number of requests is $n$.

It is an interesting open question to analyze the performance of greedy in the fully random model when there is a sublinear excess of servers. There is a sharp transition in the average cost incurred by greedy for matching the last $m$ requests  from when there are $n$ servers to when there are $(1+\epsilon)n$ servers for some constant $\epsilon > 0$. This average cost is $\Omega(1/\sqrt{n})$ when there are $n$ servers and it is $O(1/n)$ when there is a linear excess supply of servers. Analyzing this sharp transition in the average cost incurred by greedy, as well as the average cost incurred by the optimal solution, between these two regimes is the main challenge.

\section{Strategyproofness in Online Minimum Cost Matching}
\label{sec:appstrategyproof}

 In the context of metric matching, strategyproofness has been considered for the problem of \emph{distortion} in metric matching. In metric matching distortion, the customers (referred to as agents in distortion problems) are all known offline but the algorithm is only given ordinal information about the preferences of the customers over the servers (referred to as items in distortion problems). The information limitation of the algorithm is thus not due to the unknown future customers but due to having only access to the ordinal preferences of customers and the locations of both customers and servers being unknown. \citet{caragiannis2016truthful} discusses the connection between online minimum cost matching and metric matching distortion. In particular, they note that the greedy algorithm in online minimum cost matching is known as the serial dictatorship mechanism in metric matching distortion and that the greedy algorithm under random order arrival correponds to the randomized serial dictatorship mechanism.  \citet{anari2023distortion} consider mechanisms for metric matching distortion that are strategyproof, i.e., where the customers can never benefit from misreporting their ordinal preferences. They show that the $2^{n} - 1$ lower bound for greedy (serial dictatorship) extends to a broad family of mechanisms called serializable mechanisms.

To formally define strategyproofness, we first define the cost of a customer to be the distance between its true location and the location of the server it is matched to. We note that the cost incurred by a customer is equal to the cost incurred by the algorithm for matching that customer.

In the mechanism design version of the online minimum cost problem, the true locations of the customers are private, and can be strategically reported to the matching mechanism. Informally, we say that a mechanism is customer-strategyproof if customers have no incentive to misreport their true location, regardless of the reports of other customers. More formally, a mechanism is customer-strategyproof if, for any instance of the problem and any reports of all customers except some customer $t$, the cost that customer $t$ with true location $r_t$ incurs by reporting a location $r'_t \neq r_t$ is at least the cost that it incurs by reporting its true location $r_t$.

Note that, by definition, the greedy procedure is customer-strategyproof since it matches each customer to the server that minimizes the cost of that customer. Except for trivial algorithms that ignore the locations of the customers, other existing algorithms for online minimum cost matching are, to the best of our knowledge, not strategyproof.  For example, the hierarchical greedy algorithm \cite{Kanoria21} introduced in Section~\ref{section:upper_bound_iid} partitions the interval into different regions and then matches a request to the closest available server in its region. This is not customer-strategyproof, since a customer who is close to a server in another region might be better off by misreporting its location in this other region, which we next show formally.

\begin{lemma}
    The hierarchical greedy algorithm is not customer-strategyproof.
\end{lemma}

\begin{proof}
    Consider an instance with two servers $s_1 = 0$ and $s_2 = 1/2+\epsilon$ (for some small constant $\epsilon >0$), and a request $r_1 = 1/2-\epsilon$. By definition of hierarchical greedy, if the customer reports its true location $r_1$, it will be matched to server $s_1$  and incur a cost $1/2-\epsilon$. However, by reporting any location $r>1/2$, it will be matched to server $s_2$ and incur a cost $2\epsilon < 1/2-\epsilon$. Hence, hierarchical greedy is not customer-strategyproof.
\end{proof}

We next show a similar result for the current-best algorithm for the i.i.d.~model in general metric spaces.

\begin{lemma}
    The algorithm fair-bias \cite{GuptaGPW19} is not customer-strategyproof.
\end{lemma}

Before presenting the proof, we recall the main steps of the algorithm fair-bias. For all $t\in [n]$, before the arrival of request $r_t$, the algorithm computes an optimal fractional  matching $x^{S_{t-1}}$ between $S_{t-1}$ and $S$. More precisely,  $x^{S_{t-1}}$ is an optimal solution of the following linear program:

\begin{align*}\label{eq:fairbias}
    {M(S_{t-1})\;:=}\qquad \text{min}& \sum_{s\in S_{t-1} ,r\in S} |r-s| \cdot x_{s,r}\\
    \text{s.t }& \quad \sum_{r\in S} x_{s,r} = \frac{1}{|S_{t-1}|} \qquad\qquad \forall s\in S_{t-1} \\
             &\quad  \sum_{s\in S_{t-1}} x_{s,r} = \frac{1}{|S|} \qquad\qquad \;\forall r\in S \\
             &\quad  x \geq 0.
\end{align*}

When request $r_t = r$ arrives, it is then assigned to a randomly sampled server $s$ from $S_{t-1}$, where each server $s\in S_{t-1}$ is chosen with probability $n\cdot x^{S_{t-1}}_{s,r}$.

We are now ready to present the proof.

\begin{proof}
Consider an instance with three servers $s_1 = 0$, $s_2 =1$ and $s_3 = 4$, and two customers $r_1=r_2=1$. 

Note that the optimal solution to the LP $M(S_0)$ is $x^{S_0}_{0,0}=x^{S_0}_{1,1}= x^{S_0}_{4,4} = 1/3$ and $x^{S_0}_{s,r} = 0$ when $s\neq r$. Hence, when $r_1=1$ arrives, it is matched to server $s_2$ with probability $3\cdot 1/3 = 1$.

Now, we have $S_1 = \{0,4\}$, and the optimal solution to the LP $M(S_1)$ is $x^{S_1}_{0,0} = 1/3$, $x^{S_1}_{0,1} = 1/6$, $x^{S_1}_{4,1} = 1/6$, $x^{S_1}_{4,4} = 1/3$ and $x^{S_1}_{s,r} = 0$ otherwise. If the customer reports its true location $r_1=1$, it will be matched to server $s_1 = 0$ with probability $3\cdot 1/6 = 1/2$ and to $s_3 = 4$ with probability $3\cdot 1/6 = 1/2$, incurring costs $1$ and $3$, respectively. However, by reporting location $r = 0$, it will be always matched to $s_1=0$ and incur cost $1$.  Hence, fair-bias is not customer-strategyproof.
\end{proof}
\section{Auxiliary Lemmas}

Throughout the paper, we will use the following version of Chernoff bounds. 

\begin{lemma}(Chernoff Bounds)
\label{prop:chernoff_bounds}
    Let $X = \sum^n_{i=1} X_i$, where $X_i = 1$ with probability $p_i$ and $X_i = 0$ with probability $1 - p_i$, and all $X_i$ are independent. Let $\mu = \mathbb{E}[X] = \sum^n_{i=1} p_i$. Then
    \begin{itemize}
        \item Upper tail: $P(X \geq (1+\delta)\mu)\leq e^{-\delta^2\mu/(2+\delta)}$ for all $\delta>0$.
        \item Lower tail: $P(X \leq (1-\delta)\mu)\leq e^{-\delta^2\mu/2}$ for all $\delta \in [0,1]$.
    \end{itemize}
\end{lemma}

In particular, we will repeatedly use the following lemma, which immediately follows from Chernoff bounds.
\begin{lemma}
\label{cor:CB}
Let $X\sim\mathcal{B}(n,p)$ be a binomially distributed random variable with parameters n and p. Then, 
\[\mathbb{P}\big(X\geq \mathbb{E}[X]-\log(n)^2\sqrt{\mathbb{E}[X]}\big)\geq 1 - \hp,\] 
and if $np = \Omega(1)$,
\begin{equation*}
    \mathbb{P}\big(X\leq \mathbb{E}[X]+\log(n)^2\sqrt{\mathbb{E}[X]}\big) \geq 1 - \hp.
\end{equation*}
\end{lemma}

\begin{proof}
This results from a direct application of Chernoff bounds as stated in Lemma \ref{prop:chernoff_bounds} with  $\delta = \log(n)^2/\sqrt{\mathbb{E}[X]}$:
\begin{align*}
    \mathbb{P}(X\leq \mathbb{E}[X](1-\log(n)^2/\sqrt{\mathbb{E}[X]}))&\leq 
     e^{-\log(n)^4/2}=n^{-\Omega(\log(n))},
\end{align*}
and if $np = \Omega(1)$, since $1/\sqrt{\mathbb{E}[X]} = 1/\sqrt{np} = O(1)$, we have
\begin{align*}
    \mathbb{P}(X\geq \mathbb{E}[X](1+\log(n)^2/\sqrt{\mathbb{E}[X]}) &\leq 
     e^{-\log(n)^4/(2+\log(n)^2/\sqrt{\mathbb{E}[X]}))}
    =n^{-\Omega(\log(n))}. \qedhere
\end{align*}
\end{proof}

Finally, we recall the following classical inequality, which follows immediately from Jensen's inequality. 
\begin{lemma}
\label{lem:jensen}
For any random variable $Y$: $
    \mathbb{E}[|Y-\mathbb{E}[Y]|] \leq std(Y)$, where $std$ denotes the standard deviation.
\end{lemma}


\section{Proof of the Hybrid Lemma (Lemma \ref{lem:hybrid})}
\label{app_hybrid}

The objective of this section is to prove Lemma \ref{lem:hybrid}, that we restate below. 
\lemhybrid*

 The general structure follows that of the proof of the hybrid lemma (Lemma 5.1) in  \cite{GuptaL12}. However, \cite{GuptaL12} considers a fixed deterministic sequence of requests and uses a coupling argument between a randomized greedy algorithm and an optimal offline matching, whereas we directly leverage the randomness of the input sequence to analyze the performance of hybrid algorithms between an online algorithm $\A$ and the standard deterministic greedy algorithm.


\noindent\textbf{Overview of the proof.} A key component of the proof of Lemma~\ref{lem:hybrid} relies on Lemma \ref{lem:structural_hybrid} given below, that describes, for a fixed $m \in [n]$, the difference between the executions of $\H_{\A}^m$ and $\H_{\A}^{m-1}$ on the same sequence $R$. Lemma~\ref{lem:structural_hybrid} in fact shows that, at every step $t$, the free servers for both algorithms coincide, with the exception of at most a pair of servers, that we denote by $g^L_t < g^R_t$ (see Figure~\ref{fig:set_of_servers_main}), that there is no other free server in between $g^L_t$ and $g^R_t$ ; and that strong bounds can be obtained on $\delta_t:=g^R_t-g^L_t$. These properties, in turns, will allow us to control the difference in the costs incurred by the two algorithms, by first upper bounding it by the value of the maximum gap $\delta_t$ (Lemma~\ref{lem:diff_cost_delta}), then by upper bounding the probability that this gap $\delta_t$ grows large (Lemma~\ref{lem:worst_case}). This eventually leads to the bound from Lemma~\ref{lem:hybrid}.

\noindent\textbf{Additional notations.} To ease the exposition, we abuse notations by writing $\H^m$ and $\H^{m-1}$ instead of $\H_{\A}^m$ and $\H_{\A}^{m-1}$. We also drop the reference to the algorithms in the indices and  write $S_{t}$ and $s(r_{t})$ instead of $S_{\H^m, t}$ and $s_{\H^m}(r_t)$ to denote, respectively, the set of free server for $\H^m$ just after matching $r_t$ and the server to which $\H^m$ matches $r_t$. Similarly, we write $S_t'$ and  $s'(r_t)$ instead of $S_{\H^{m-1}, t}$ and $s_{\H^{m-1}}(r_t)$ for the equivalent objects for $\H^{m-1}$. We also define $ \mathcal{N}(r_t)$ and $ \mathcal{N}'(r_t)$ the set of available servers neighboring $r_t$ for $\H^m$ and $\H^{m-1}$ when $r_t$ arrives, where two points neighbor each other if there is no other point between them.

 If $S_t = S_t'$, then we write $g^L_t =g^R_t = \emptyset$ and $\delta_t = 0$. We also define
       $s_t^L = \max\{s\in S_t\cup S_t'\setminus\{g_t^L, g_t^R\}: s\leq g_t^L\}$ and $s_t^R = \min\{s\in S_t\cup S_t'\setminus\{g_t^L, g_t^R\}: s\geq g_t^R\}$ (with the convention that $s_t^L=\emptyset$ if $\{S_t\cup S_t'\setminus\{g_t^L, g_t^R\} : s \geq g_t^L\}= \emptyset$ or if $g_t^L=\emptyset$, and similarly for $s_t^R$), which are the nearest servers of $S_t$ (or equivalently, of $S_t'$) on the left of $g_t^L$ and on the right of $g_t^R$. When it is clear from context, we drop the dependency on $t$. Finally, for any two consecutive sets of servers $S,S'$, i.e., sets differing by a pair of consecutive servers $g<g'$, we write $\delta(S,S'):= g'-g$.


\begin{lemma}{(Expanded version of Lemma~\ref{lem:structural_hybrid_main})}
\label{lem:structural_hybrid}
Let $\A$ be any online algorithm that makes neigbhoring matches, $S_0$ be $n$ arbitrary servers and $R$ be $n$ arbitrary requests. Let $(S_0,\ldots, S_n)$ and $(S_0',\ldots, S_n')$ denote the set of free servers for $\H_{\A}^{m}$  and $\H_{\A}^{m-1}$ at each time steps. Then, the following propositions hold for all $t\in \{m, \ldots, n\}$:
\begin{enumerate}

    \item  \textbf{Difference in at most one server.} In any metric space,  $|S_t\setminus S_t'|=|S_t'\setminus S_t|\leq 1$. 
        \item \textbf{Consecutiveness of the different servers.} On the line, if $g^L_t,g^R_t\neq \emptyset$, there is no server $s \in S_t \cup S_t'$ such that $g^L_t < s < g^R_t$.
        \item \textbf{The values.} On the line, if $t<n$ and $S_t\neq S_t'$ (and assuming without loss of generality that $S_t = S_t' \cup\{g^L_t\}\setminus \{g^R_t\}$), then the values of $s(r_{t+1}), s'(r_{t+1}), \delta_{t+1}, g^L_{t+1}, g^R_{t+1}$ and an upper bound on $\Delta\text{cost}_{t+1} := |\text{cost}_{t+1}(\H^{m-1})- \text{cost}_{t+1}(\H^m)|$ are given in Tables \ref{tab:delta_general}, \ref{tab:delta_special} and \ref{tab:delta_special2}:
        \begin{itemize}
        \item[(a)] if $s_t^L \neq \emptyset, s_t^R \neq \emptyset$, the values are given in  Table \ref{tab:delta_general}, where $d_t^L := g_t^L - s_t^L$ and $d_t^R := s_t^R - g_t^R$,
            \item[(b)] if $s_t^L = \emptyset, s_t^R \neq \emptyset$, the values are given in  Table \ref{tab:delta_special}, where $d_t^R := s_t^R - g_t^R,$
            \item[(c)] if $s_t^R = \emptyset, s_t^L \neq \emptyset$, the values are given in  Table \ref{tab:delta_special2}, where $d_t^L := g_t^L - s_t^L$,
            \item[(d)] if $s_t^L = \emptyset, s_t^R = \emptyset$ then $S_{t+1} = S_{t+1}' = \emptyset, \delta_{t+1}=0$, and $|\text{cost}_{t+1}(\H^{m-1})- \text{cost}_{t+1}(\H^m)|\leq \delta_t$.
        \end{itemize}
        \item \textbf{Gap remains zero after disappearing.} In any metric space,  if $\delta_t = 0$, then $\delta_{t'}=0$ for all $t'\geq t$.
        \end{enumerate}
\end{lemma}

\renewcommand{\arraystretch}{1.2}

\begin{center}
\begin{table}[]
    \centering
    \begin{tabular}{c||c| c| c| c| c|c} 
\scriptsize{$r_{t+1}-s_t^L\in \ldots$} & \scriptsize{$[0,\tfrac{d_t^L}{2}]$} &  \scriptsize{$[ \tfrac{d_t^L}{2}, \tfrac{d_t^L+\delta_t}{2}]$ }&   \scriptsize{$[\tfrac{d_t^L+\delta_t}{2}, d_t^L + \tfrac{d_t^R+\delta_t}{2}]$ }&\scriptsize{$[d_t^L +\tfrac{d_t^R+\delta_t}{2},$ }&\scriptsize{$[d_t^L + \delta_t + \tfrac{d_t^R}{2},$} &\scriptsize{ $[-s_t^L,0)$}\\
 &  &  &   &\scriptsize{$d_t^L + \delta_t + \tfrac{d_t^R}{2}]$}&\scriptsize{$d_t^L + \delta_t + d_t^R]$}& \scriptsize{$\cup (s_t^R-s_t^L,1-s_t^L]$} \\
 \hline
 \hline
 \scriptsize{$s(r_{t+1})$} & $s_t^L$ & $g_t^L$  &  $g_t^L$  & $s_t^R$ & $s_t^R$ & \scriptsize{$\in [0, s_t^L)\cup (s_t^R,1]$}\\
 \hline
 \scriptsize{$s'(r_{t+1})$} & $s_t^L$ & $s_t^L$  &  $g_t^R$ & $g_t^R$ & $s_t^R$ & \scriptsize{$\in [0, s_t^L)\cup (s_t^R,1]$}\\
 \hline
 \scriptsize{$g_{t+1}^L$} & $g_t^L$ & $s_t^L$  &  $\emptyset$  & $g_t^L$ & $g_t^L$ &$g_t^L$\\
 \hline
 \scriptsize{$g_{t+1}^R$} & $g_t^R$ & $g_t^R$ & $\emptyset$ & $s_t^R$ &  $g_t^R$ &$g_t^R$ \\
 \hline
 \scriptsize{$\delta_{t+1}$} &  $\delta_{t}$ & $\delta_{t}+d_t^L$  &  $0$ & $\delta_t+d_t^R$ & $\delta_t$& $\delta_t$\\
 \hline
 \scriptsize{$\Delta\text{cost}_{t+1} \leq$} & $0$  & $d_t^L$ &$\delta_t$ & $d_t^R$ & $0$ & $0$ \\
\end{tabular}
    \caption{Values of $\delta_{t+1}, g^L_{t+1}, g^R_{t+1}$, and upper bound on $\Delta\text{cost}_{t+1}$ when $s_t^L, s_t^R\neq \emptyset$.}
    \label{tab:delta_general}
\end{table}
\end{center}

\begin{center}
\begin{table}[]
    \centering
    \begin{tabular}{c||c| c| c| c} 

\scriptsize{$r_{t+1}\in \ldots$} & \scriptsize{$[0,g^L_t +\tfrac{d_t^R+\delta_t}{2}$]} &\scriptsize{$[g^L_t +\tfrac{d_t^R+\delta_t}{2}, g^L_t + \delta_t + \tfrac{d_t^R}{2}]$ }&\scriptsize{$[g^L_t + \delta_t + \tfrac{d_t^R}{2}, g^L_t + \delta_t + d_t^R]$}&\scriptsize{ $(s_t^R,1]$} \\
 \hline
  \hline
 \scriptsize{$s(r_{t+1})$} &  $g_t^L$  & $s_t^R$ & $s_t^R$ & $\in (s_t^R,1]$\\
 \hline
 \scriptsize{$s'(r_{t+1})$} &   $g_t^R$ & $g_t^R$ & $s_t^R$ & $\in (s_t^R,1]$\\
 \hline
 \scriptsize{$g_{t+1}^L$} &   $\emptyset$  & $g_t^L$ & $g_t^L$ &$g_t^L$\\
 \hline
 \scriptsize{$g_{t+1}^R$} &  $\emptyset$ & $s_t^R$ &  $g_t^R$ &$g_t^R$ \\
 \hline
 \scriptsize{$\delta_{t+1}$}  &  $0$ & $\delta_t+d_t^R$ & $\delta_t$& $\delta_t$\\
 \hline
 \scriptsize{$\Delta\text{cost}_{t+1} \leq$}  &$\delta_t$ & $d_t^R$ & $0$ & $0$ \\
\end{tabular}
    \caption{Values of $\delta_{t+1}, g^L_{t+1}, g^R_{t+1}$, and upper bound on $\Delta\text{cost}_{t+1}$ when $s_t^L= \emptyset, s_t^R \neq \emptyset$.}
    \label{tab:delta_special}
\end{table}
\end{center}

\begin{center}
\begin{table}[]
    \centering
    \begin{tabular}{c||c| c| c| c} 

\scriptsize{$r_{t+1}\in \ldots$}&\scriptsize{ $[0,s_t^L)$}  & \scriptsize{$[g_t^R - ( \delta_t+d_t^L),g_t^R - (\delta_t + \frac{d_t^L}{2})]$} &  \scriptsize{$[g_t^R - (\delta_t + \frac{d_t^L}{2}),g_t^R-\tfrac{d_t^L+\delta_t}{2}]$ }&   \scriptsize{$[g_t^R-\tfrac{d_t^L+\delta_t}{2}, 1]$ }\\
 \hline
 \hline
 \scriptsize{$s(r_{t+1})$} & $\in [0, s_t^L)$ & $s_t^L$  &  $g_t^L$  &  $g_t^L$ \\
 \hline
 \scriptsize{$s'(r_{t+1})$} & $\in [0, s_t^L)$ & $s_t^L$  &  $s_t^L$ & $g_t^R$\\
 \hline
 \scriptsize{$g_{t+1}^L$} &$g_t^L$ & $g_t^L$ & $s_t^L$  &  $\emptyset$  \\
 \hline
 \scriptsize{$g_{t+1}^R$}&$g_t^R$ & $g_t^R$ & $g_t^R$ & $\emptyset$  \\
 \hline
 \scriptsize{$\delta_{t+1}$}& $\delta_t$ &  $\delta_{t}$ & $\delta_{t}+d_t^L$  &  $0$ \\
 \hline
 \scriptsize{$\Delta\text{cost}_{t+1} \leq$} & $0$ & $0$  & $d_t^L$ &$\delta_t$  \\
\end{tabular}
    \caption{Values of $\delta_{t+1}, g^L_{t+1}, g^R_{t+1}$, and upper bound on $\Delta\text{cost}_{t+1}$ when $s_t^R= \emptyset, s_t^L \neq \emptyset$}
    \label{tab:delta_special2}
\end{table}
\end{center}

\begin{proof}
First, note that since $\H^m$ and $\H^{m-1}$ both match  $r_1, \ldots, r_{m-1}$ to exactly the same servers that $\A$ matches them to, we have that $S_t = S_t'$ for all $t\in [m-1]$.

We now show facts $1,2,3$ by induction on $t\in \{m,\ldots, n\}$. We first show fact 1 (difference in at most one server).  Since $S_{m-1} = S_{m-1}'$, we have  $|S_{m}\setminus S_{m}'|=|S_{m}'\setminus S_{m}|\leq 1$. Next, assume that $|S_{t}\setminus S_{t}'|=|S_{t}'\setminus S_{t}|\leq 1$ is satisfied at time $t \in \{m, \ldots, n-1\}$.  

Consider the case where  $\H^m$ and $\H^{m-1}$ match the request at time $t+1$ to  different servers, i.e., $s(r_{t+1}) \neq s'(r_{t+1})$. Then, since $\H^m$ and $\H^{m-1}$ both match request $r_{t+1}$ greedily and ties are broken consistently, we have that one of these two servers must not have  been available to one of the algorithms, i.e.,  $s(r_{t+1}) \not \in S_{t}'$  or $s'(r_{t+1}) \not \in S_{t}$. Consider the case $s(r_{t+1}) \not \in S_{t}'$. Then $$|S_{t+1}\setminus S_{t+1}'| =  |S_{t}\setminus S_{t+1}'| - 1 \leq  |S_{t}\setminus S_{t}'| \leq 1$$ where the equality is since $s(r_{t+1}) \not \in S_{t}'$, the first inequality is since $|S_{t+1}'| = |S_{t}'|-1$, and the second inequality is by the inductive hypothesis.
Next, we have $$|S_{t+1}'\setminus S_{t+1}|\leq |S_{t}'\setminus S_{t+1}| = |S_{t}'\setminus S_{t}| \leq 1$$ where the first inequality is since $S_{t+1}' \subseteq S'_t $, the equality is since $s(r_{t+1}) \not \in S_{t}'$, and the second inequality is by the inductive hypothesis. We have that $|S_{t+1}\setminus S_{t+1}'|=|S_{t+1}'\setminus S_{t+1}|$  since $|S_{t+1}| = |S_{t+1}'|$. Combining the three previous series of (in)equalities, we obtain the inductive claim  $|S_{t+1}\setminus S_{t+1}'|=|S_{t+1}'\setminus S_{t+1}| \leq 1$. The case $s'(r_{t+1}) \not \in S_{t}$ follows by a symmetric argument.

The remaining case is where $\H^m$ and $\H^{m-1}$ match the request at time $t+1$ to  the same server, i.e., $s(r_{t+1}) = s'(r_{t+1})$. In this case, we get $|S_{t+1}\setminus S_{t+1}'| = |S_{t}\setminus S_{t}'|$ and $|S'_{t+1}\setminus S_{t+1}| = |S'_{t}\setminus S_{t}|$, which implies the inductive claim and concludes the proof of fact 1.

We next show that fact 2  is satisfied for $t = m$. First, define $\Tilde{s}^L_{m-1} = \max\{s\in S_{m-1} : s\leq r_m\}$, or $\Tilde{s}^L_{m-1} = \emptyset$ if there is no such server, $\Tilde{s}^R_{m-1} = \min\{s\in S_{m-1} : s\geq r_m\}$, or $\Tilde{s}^R_{m-1} = \emptyset$ if there is no such server.  Now, recall that $\H^m$ matches $r_m$ to the same server as $\A$, while $\H^{m-1}$ matches $r_m$ greedily.  
Since both $\A$ and greedy make neighboring matches, we have $s(r_m), s(r_m)'\in \{\Tilde{s}^L_{m-1}, \Tilde{s}^R_{m-1}\}$.  Assume first that $\H^m$ and $\H^{m-1}$ make the same matching decision for $r_m$. Then bullet point $2$ clearly holds. 

Suppose now that one algorithm matches $r_m$ to $\Tilde{s}^L_{m-1}$ whereas the other matches it to $\Tilde{s}^R_{m-1}$. By definition of $g_{m}^L$ and $g_{m}^R$, we have $g_{m}^L = \Tilde{s}^L_{m-1}$ and $g_{m}^R = \Tilde{s}^R_{m-1}$. Now, by definition of $\Tilde{s}^L_{m-1}, \Tilde{s}^R_{m-1}$, there is no server $s\in S_m\cup S_m'$ such that $g_{m}^L< s< g_{m}^R$, which shows that bullet point $2$ is satisfied at time $m$.

Next, let $t\in \{m, \ldots, n-1\}$ and assume that $1,2$ are satisfied at time $t$. We show that $3$ is satisfied at time $t$ and that  $2$ is satisfied at time $t+1$. This concludes the proof of 1,2,3 by induction. Recall that both algorithms match $r_{t+1}$ greedily; hence, if $S_t = S_t'$, the result follows immediately. We now assume that $S_t\neq S_t'$. By the inductive hypothesis, we thus have that $|S_t\setminus S_t'| = |S_t\setminus S_t'| = 1$, with $S_t\Delta S_t' = \{g^L_t, g^R_t\}$,  and that there is no server $s \in S_t \cup S_t'$ such that $g^L_t < s < g^R_t$. We assume without loss of generality that $S_t = S_t'\cup \{g_t^L\}\setminus\{g_t^R\}$.  We consider different cases depending on whether there is a free server $s_t^L$ on the left of $g_t^L$ and a free server $s_t^R$ on the right of $g_t^R$. Recall that we defined $d_t^L = g_t^L - s_t^L$ when $s_t^L\neq \emptyset$ and $d_t^R = s_t^R - g_t^R$ when $s_t^R\neq \emptyset$.

\vspace{0.2cm}
\noindent $\mathbf{s_t^L,s_t^R\neq \emptyset:}$  We consider all possibles cases depending on the location of request $r_{t+1}$. Recall that $ \mathcal{N}(r_{t+1})$ and $ \mathcal{N}'(r_{t+1})$ are the set of available servers neighboring $r_{t+1}$ for $\H^m$ and $\H^{m-1}$ when $r_{t+1}$ arrives.

\begin{itemize}
    \item Case 1 (see Figure~\ref{fig:c1}): $r_{t+1}\in s_t^L +[0,\tfrac{d_t^L}{2}]$. In this case, we have $\mathcal{N}(r_{t+1}) = \{s_t^L, g_t^L\}$, $\mathcal{N}'(r_{t+1}) = \{s_t^L, g_t^R\}$, and it is immediate that $|r_{t+1}-s_t^L|\leq |g_t^L-r_{t+1}|$ and that $|r_{t+1}-s_t^L|\leq |g_t^R-r_{t+1}|$. Hence we get $s(r_{t+1}) = s'(r_{t+1}) = s_t^L$.
    
    Combining this with the  induction hypothesis, we get: $S_{t+1} = S_{t}\setminus\{s_t^L\} = (S_{t}'\cup \{g_t^L\}\setminus\{g_t^R\})\setminus\{s_t^L\} =(S_{t}'\setminus\{s_t^L\})\cup \{g_t^L\}\setminus\{g_t^R\} = S_{t+1}' \cup\{g_t^L\}\setminus\{g_t^R\}$, 
    which immediately implies that $g_{t+1}^L = g_t^L$, $g_{t+1}^R = g_t^R$ and $\delta_{t+1} = \delta_t$. In addition, since $s(r_{t+1}) = s'(r_{t+1})$, we have $\Delta\text{cost}_{t+1}=0$.
    
\begin{figure}
    \centering
    \includegraphics[scale = 0.32]{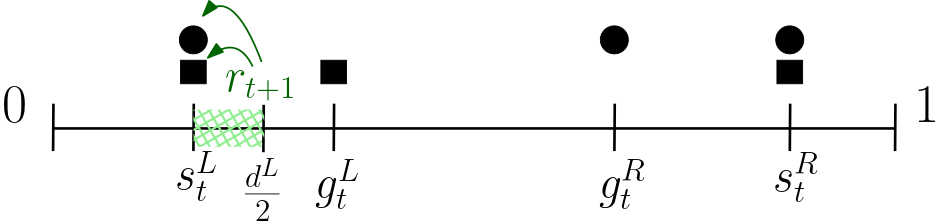}
    \caption{Illustration of Case 1 in the proof of Lemma \ref{lem:structural_hybrid}.}
    \label{fig:c1}
\end{figure}

    \item Case 2 (see Figure~\ref{fig:c2}): $r_{t+1}\in s_t^L + [ \tfrac{d_t^L}{2}, \tfrac{d_t^L+\delta_t}{2}]$. In this case, we have $\mathcal{N}(r_{t+1}) \subseteq \{s_t^L, g_t^L, g_t^R\}$, $\mathcal{N}'(r_{t+1}) = \{s_t^L, g_t^R\}$.  Since $r_{t+1}\geq s_t^L +  \tfrac{d_t^L}{2}$, we have  $|r_{t+1}- s_t^L| \geq |g_t^L - r_{t+1}|$, and since $r_{t+1}\leq s_t^L +  \tfrac{d_t^L+\delta_t}{2}$, we have  $|r_{t+1}- s_t^L| \leq |g_t^R - r_{t+1}|$, thus we get $s(r_{t+1}) = g_t^L$ and $s'(r_t) = s_t^L$. 
    
    Combining this with the  induction hypothesis, we get: $S_{t+1} = S_{t}\setminus\{g_t^L\} = (S_{t}'\cup \{g_t^L\}\setminus\{g_t^R\})\setminus\{g_t^L\} = S_{t}'\setminus\{g_t^R\} = S_{t+1}' \cup\{s_t^L\}\setminus\{g_t^R\}$, which implies that $g_{t+1}^L = s_t^L$, $g_{t+1}^R = g_t^R$, and $\delta_{t+1} = g_t^R - s_t^L = (g_t^R - g_t^L)+(g_t^L - s_t^L) = \delta_t + d_t^L$. In addition, $\Delta\text{cost}_{t+1} = ||r_{t+1}- g_t^L|-|r_{t+1}- s_t^L|| \leq  |g_t^L- s_t^L| = d_t^L$.
    
    \begin{figure}
    \centering
    \includegraphics[scale = 0.2]{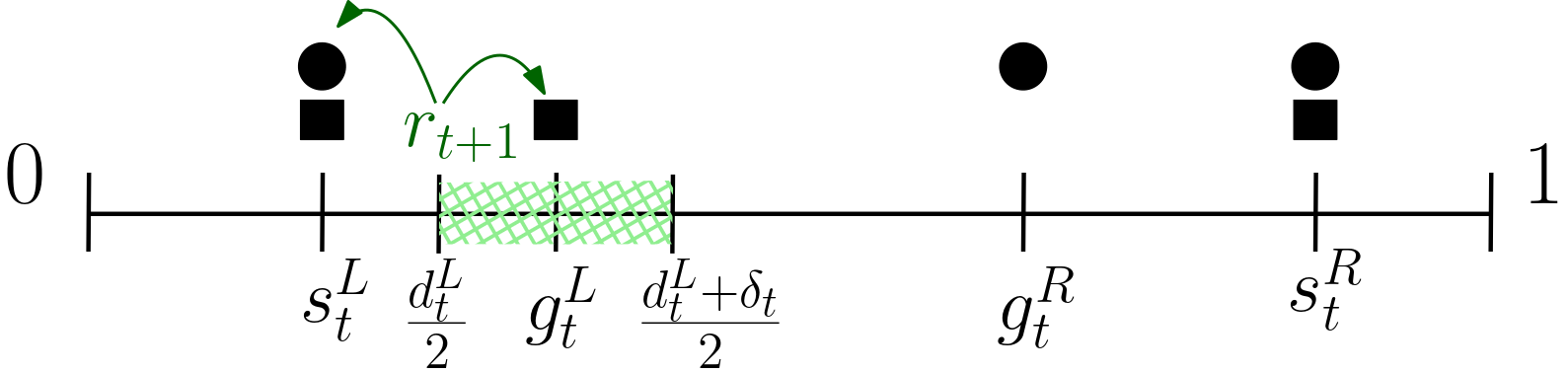}
    \caption{Illustration of Case 2 in the proof of Lemma \ref{lem:structural_hybrid}.}
    \label{fig:c2}
\end{figure}

    \item Case 3 (see Figure~\ref{fig:c3}): $r_{t+1}\in s_t^L + [\tfrac{d_t^L+\delta_t}{2}, d_t^L + \tfrac{d_t^R+\delta_t}{2}]$. In this case, we have $\mathcal{N}(r_{t+1}) \subseteq \{s_t^L, g_t^L, s_t^R\}$, $\mathcal{N}'(r_{t+1}) \subseteq \{s_t^L, g_t^R, s_t^R\}$, with $g_t^L\in \mathcal{N}(r_{t+1})$ and $g_t^R\in \mathcal{N}'(r_{t+1})$. Since $r_{t+1}\geq s_t^L+ \tfrac{d_t^L+\delta_t}{2}\geq s_t^L+\tfrac{d_t^L}{2} $, we have $|r_{t+1}- s_t^L| \geq |g_t^L - r_{t+1}|$, and since $r_{t+1}\leq s_t^L+d_t^L + \tfrac{d_t^R+\delta_t}{2}$, we have $|s_t^R - r_{t+1}| \geq |g_t^L - r_{t+1}|$, thus we get $s(r_{t+1}) = g_t^L$. Similarly, since $r_{t+1}\leq s_t^L+d_t^L + \tfrac{d_t^R+\delta_t}{2}\leq s_t^L+d_t^L + \delta_t+ \tfrac{d_t^R}{2} $, we have $ |s_t^R- r_{t+1}| \geq |g_t^R - r_{t+1}|$, and since $r_{t+1}\geq s_t^L+\tfrac{d_t^L+\delta_t}{2}$, we have $|s_t^L - r_{t+1}| \geq |g_t^R - r_{t+1}|$, thus we get $s'(r_{t+1}) = g_t^R$.   
    
    Combining this with the  induction hypothesis, we get: $S_{t+1} = S_{t}\setminus\{g_t^L\} = (S_{t}'\cup \{g_t^L\}\setminus\{g_t^R\})\setminus\{g_t^L\} = S_{t}'\setminus\{g_t^R\} = S_{t+1}'$,  which implies that $g_{t+1}^L = g_{t+1}^R = \emptyset$ and $\delta_{t+1} = 0$. In addition, $\Delta\text{cost}_{t+1}= ||r_{t+1}- g_t^L|-|r_{t+1}- g_t^R||\leq |g_t^R - g_t^L| = \delta_t$.  
    
     \begin{figure}
    \centering
    \includegraphics[scale = 0.25]{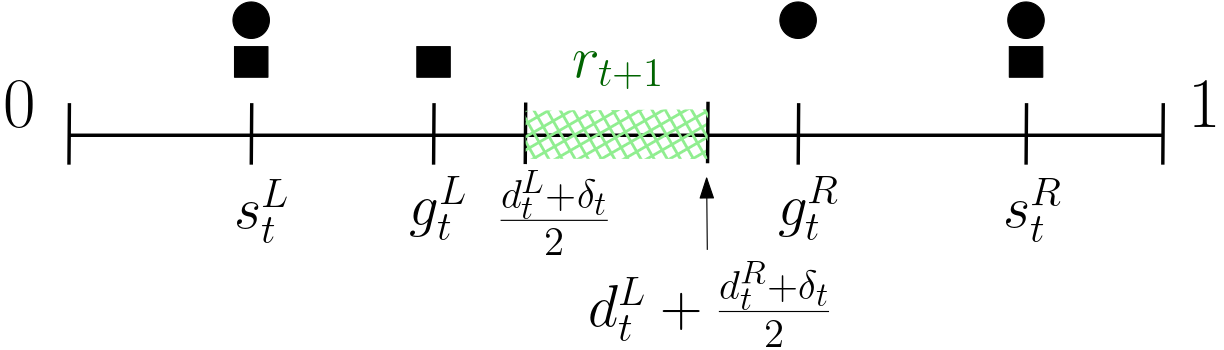}
    \caption{Illustration of Case 3 in the proof of Lemma \ref{lem:structural_hybrid}.}
    \label{fig:c3}
\end{figure}
    
    \item Case 4 (see Figure~\ref{fig:c4}): $r_{t+1}\in s_t^L + [d_t^L +\tfrac{d_t^R+\delta_t}{2}, d_t^L + \delta_t + \tfrac{d_t^R}{2}]$. This case is symmetric to Case 2 by noting the one to one correspondence between $0,d_t^L, s_t^L, g_t^L$  and $1,d_t^R, s_t^R, g_t^R$. We get that $s(r_{t+1}) =s_t^R$ $s'(r_{t+1}) = g_t^R$, which implies $g_{t+1}^L = g_t^L$,  $g_{t+1}^R = s_t^R$ and $\delta_{t+1}=s_t^R- g_t^L = (s_t^R - g_t^R) + (g_t^R - g_t^L) = d_t^R + \delta_t$. In addition, $\Delta\text{cost}_{t+1}\leq | s_t^R - g_t^R| = d_t^R$.
    
     \begin{figure}
    \centering
    \includegraphics[scale = 0.4]{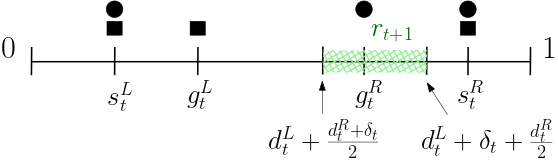}
    \caption{Illustration of Case 4 in the proof of Lemma \ref{lem:structural_hybrid}.}
    \label{fig:c4}
\end{figure}
    
    \item Case 5 (see Figure~\ref{fig:c5}): $r_{t+1}\in s_t^L + [d_t^L + \delta_t + \tfrac{d_t^R}{2}, d_t^L + \delta_t + d_t^R]$. This case is symmetric to Case 1 by noting the one to one correspondence between $0,d_t^L, s_t^L, g_t^L$  and $1,d_t^R, s_t^R, g_t^R$. We get that $s(r_{t+1}) = s'(r_{t+1}) = s_t^R$, which implies  $\Delta\text{cost}_{t+1}=0$, $g_{t+1}^L = g_t^L$, $g_{t+1}^R = g_t^R$ and $\delta_{t+1} = \delta_t$.
    
     \begin{figure}
    \centering
    \includegraphics[scale = 0.26]{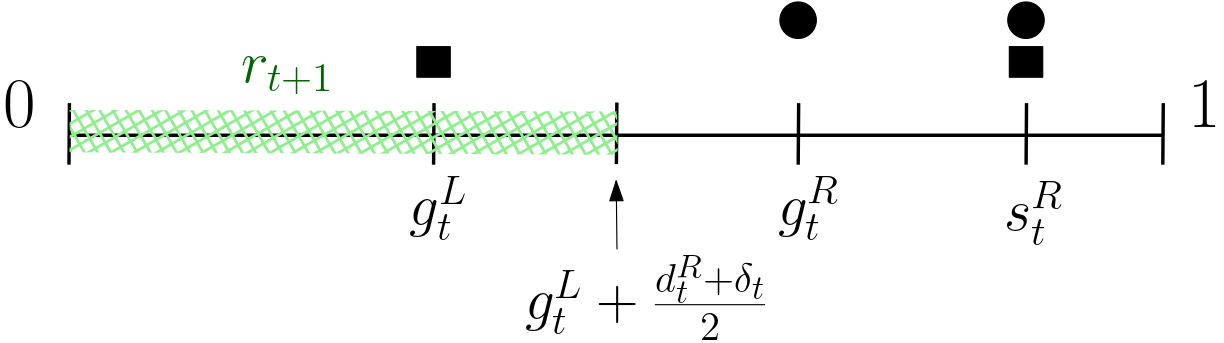}
    \caption{Illustration of Case 5 in the proof of Lemma \ref{lem:structural_hybrid}.}
    \label{fig:c5}
\end{figure}
    
    \item Case 6 (see Figure~\ref{fig:c6}): $r_{t+1}\in [0, s_t^L)\cup (s_t^R,1]$. In this case, the free servers neighboring $r_{t+1}$ are identical for $\H^m$ and $\H^{m-1}$, thus $s(r_{t+1}) = s'(r_{t+1})$. By using the assumption that $|S_t\setminus S_t'| = |S_t\setminus S_t'| =1$, we get  that $g_{t+1}^L = g_t^L$, $g_{t+1}^R = g_t^R$, $\delta_{t+1} = \delta_t$. In addition, since $s(r_{t+1}) = s'(r_{t+1})$, we have $\Delta\text{cost}_{t+1}=0$.
    
     \begin{figure}
    \centering
    \includegraphics[scale = 0.26]{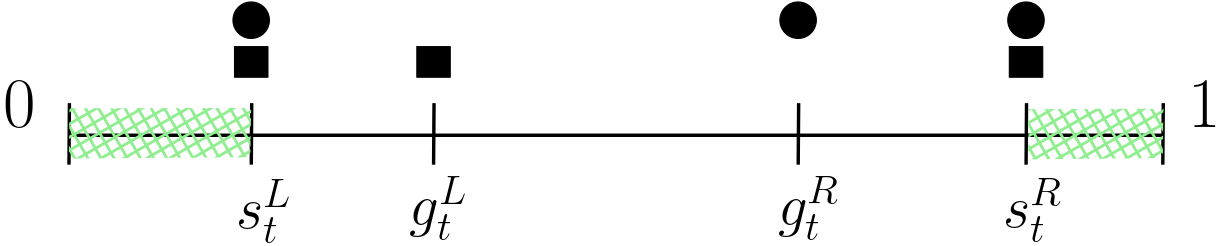}
    \caption{Illustration of Case 6 in the proof of Lemma \ref{lem:structural_hybrid}.}
    \label{fig:c6}
\end{figure}

\end{itemize}

Hence, in all cases, we have that bullet points $1,2$ hold at time $t+1$ and that the values of $g_{t+1}^L,g_{t+1}^R, \delta_{t+1}$ given in Table \ref{tab:delta_general} hold.

\vspace{0.2cm}
\noindent $\mathbf{s_t^L= \emptyset,s_t^R\neq \emptyset:}$
 We again consider all possibles cases depending on the location of request $r_{t+1}$. Note that the exact same argument as above shows that the value of $g_{t+1}^L, g_{t+1}^R, \delta_{t+1}$, and the upper bound on $\Delta \text{cost}_{t+1}$ given in the last three columns of Table \ref{tab:delta_special} are identical to those in the last three  columns of Table \ref{tab:delta_general}, and that bullet points $1,2$ hold at time $t+1$ in these cases. We thus only need to show the result in the case $r_{t+1}\in [0,g_t^L +\tfrac{d_t^R+\delta_t}{2}]$.

\begin{figure}
    \centering
    \includegraphics[scale = 0.26]{figures/Figure10.png}
    \caption{Illustration of the case $\mathbf{s_t^L= \emptyset,s_t^R\neq \emptyset:}$ and $r_{t+1}\in [0,g_t^L +\tfrac{d_t^R+\delta_t}{2}]$ in the proof of Lemma \ref{lem:structural_hybrid}.}
    \label{fig:cbis}
\end{figure}

In this case (see Figure~ \ref{fig:cbis}), we have $\mathcal{N}(r_{t+1}) \subseteq \{g_t^L, g_t^R\}$, $\mathcal{N}'(r_{t+1}) = \{g_t^R\}$, hence we immediately get $s(r_{t+1}) = g_t^L,s'(r_{t+1}) = g_t^R$. Combining this with the  induction hypothesis, we get: $S_{t+1} = S_{t}\setminus\{g_t^L\} = (S_{t}'\cup \{g_t^L\}\setminus\{g_t^R\})\setminus\{g_t^L\} = S_{t}'\setminus\{g_t^R\} = S_{t+1}'$,  which implies that $g_{t+1}^L = g_{t+1}^R = \emptyset$ and $\delta_{t+1} = 0$. In addition, $\Delta\text{cost}_{t+1}\leq |g_t^R - g_t^L| = \delta_t$. Hence, we have that bullet points $1,2$ hold at time $t+1$ and that the values of $g_{t+1}^L,g_{t+1}^R, \delta_{t+1}$ given in Table \ref{tab:delta_special} hold.

\vspace{0.2cm}
\noindent $\mathbf{s_t^R= \emptyset,s_t^L\neq \emptyset:}$
This case is symmetric to the  case $s_t^R= \emptyset,s_t^L\neq \emptyset$, by noting the one to one correspondence between $0,d_t^L, s_t^L, g_t^L$  and $1,d_t^R, s_t^R, g_t^R$.

\vspace{0.2cm}
\noindent $\mathbf{s_t^L=\emptyset, s_t^R= \emptyset}$: In this case, we have $S_t =\{g_t^L\}$ and $S_t' =\{g_t^R\}$. Hence, whatever the value of $r_{t+1}$, we get that $S_{t+1} = S_{t+1}' = \emptyset$ and $\delta_{t+1}=0$. In addition, we have $\Delta\text{cost}_{t+1}\leq |g_t^R- g_t^L|  = \delta_t$.

This concludes the proof that  bullet point $3$ is satisfied at time $t$ and that bullet points $1,2$ are satisfied at time $t+1$. Hence the three first bullet points of the lemma hold for all $t\in \{m,\ldots, n\}$. \\

Finally, we show bullet point $4$. Note that if $\delta_t = 0$ for some $t\in \{m,\ldots, n\}$, then by definition of $\delta_t$, we have $S_t =S_t'$. Since both $\H^m$ and $\H^{m-1}$ match greedily $r_{t+1}, \ldots, r_n$, and ties are broken consistently, we get $S_j= S_j'$ for all $j\in \{t,\ldots, n\}$, which shows that $\delta_j =0$ for $j\in\{t,\ldots, n\}$.
\end{proof}

In the remainder of this section, we assume that the structural properties proved in Lemma \ref{lem:structural_hybrid} hold.
By using the third and fourth bullet points of Lemma \ref{lem:structural_hybrid}, we now upper bound, for an arbitrary sequence of requests $R$, the total difference of cost incurred during the simultaneous execution of $\H^m$ and $\H^{m-1}$ by a function of the maximum gap $\delta_t$.

\begin{lemma}
\label{lem:diff_cost_delta} For any arbitrary set $S$ of $n$ servers and sequence $R$ of $n$ requests in $[0,1]$, we have \[
\text{cost}(\H^{m-1}) - \text{cost}(\H^{m}) \leq 2\max_{t\in \{m,\ldots, n-1\}}\delta_{t}.\]

\end{lemma}

\begin{proof}
Since $\H^m$ and $\H^{m-1}$ both match $r_1, \ldots, r_{m-1}$ to exactly the same servers as $\A$, we first have that $\text{cost}_t(\H^{m-1})- \text{cost}_t(\H^{m})=0$ for all $t\in [m-1]$. Then, since $\H^m$ matches $r_m$ to the same server as $\A$ while $\H^{m-1}$ matches $r_m$ greedily, we have that $|r_{m} - s'(r_m)| = \min\{|r_m-s|: s\in S_{\A,m-1}\}\leq |r_{m} - s(r_m)|$. Thus, $\text{cost}_m(\H^{m-1}) - \text{cost}_m(\H^{m})\leq 0$.

Next, we define $t_0 := \min \{t\geq m: \delta_t =0\}$, i.e., the first time step where the two sets of free servers become identical again. Note that by the fourth point of Lemma \ref{lem:structural_hybrid}, we have that $\delta_t=0$ for any $t\geq t_0$, which, by definition of $\delta$, implies that $S_t = S_t'$ for any $t\geq t_0$. We deduce that $\text{cost}_t(\H^{m-1})- cost_t(\H^{m})=0$ for any $t\in [t_0+1, \ldots, n]$. 

Finally, by a direct inspection of all possible cases enumerated in the third point of Lemma \ref{lem:structural_hybrid}, we get that for all $t\in \{m,\ldots, t_0-1\}$, $\text{cost}_t(\H^{m-1}) - \text{cost}_t(\H^{m})\leq \delta_{t} - \delta_{t-1}$, and we get that $cost_{t_0}(\H^{m-1}) - cost_{t_0}(\H^{m})\leq \delta_{t_0-1}$.

Putting everything together, we obtain
\begin{align*}
    \text{cost}(\H^{m-1}) - \text{cost} (\H^{m}) = \sum_{t=1}^{n}&(\text{cost}_t(\H^{m-1}) - \text{cost}_t(\H^{m}))\\
    &\leq 0 + \sum_{t=m+1}^{t_0-1}(\delta_t - \delta_{t-1}) + \delta_{t_0-1} + 0 = 2\cdot\delta_{t_0-1} - \delta_{m} \leq 2\max_{t\in \{m,\ldots, n-1\}}\delta_{t}.
\end{align*}

\end{proof}

In the remainder of the section, we consider the more specific case where the requests in $R$ are sampled uniformly at random in $[0,1]$. In the following lemma, we show that the probability that the distance $\delta_t$ between the (potential) extra server of the hybrid algorithm $\H^m$ and the (potential) extra server of  $\H^{m-1}$ will ever exceed $y\geq \delta_m$ at any time $t\in \{m,\ldots, n-1\}$ is upper bounded by $\tfrac{\delta_m}{y}$.
\begin{lemma}
\label{lem:worst_case}  Assume that requests $r_{m+1}, \ldots, r_n$ are sampled uniformly at random from $[0,1]$ and condition on $S_m,S'_m$. Then, we have that for any $y\in [\delta_m,1]$,
\[
\P\Big(\max_{t\in \{m,\ldots, n-1\}}\delta_t\geq y|(S_m, S'_m) \Big)\leq \frac{\delta_m}{y}.
\]
\end{lemma}

\begin{proof} We show by downward induction on $j$ that for any $j\in \{m,\ldots, n-1\}$, conditioning on $(S_m, S'_m), \ldots , (S_j, S'_j)$, we have that for any $y\in [\delta_j, 1]$,
\[
\P\Big(\max_{t\in \{j, \ldots, n-1 \}}\delta_t\geq y|(S_m, S'_m), \ldots , (S_j, S'_j)\Big)\leq \frac{\delta_j}{y}.
\]

We first show the base case, which is for $j=n-1$. It is immediate that
\[
\P\Big(\delta_{n-1}\geq y|(S_m, S'_m), \ldots , (S_{n-1}, S'_{n-1})\Big) = \begin{cases}
1 &\text{if } \delta_{n-1} = y\\
0 &\text{otherwise }
\end{cases} \leq\hspace{0.1cm} \frac{\delta_{n-1}}{y}.
\]

Next, let $j\in \{m,\ldots, n-2\}$, and assume that conditioning on $(S_m, S'_m), \ldots , (S_{j+1}, S'_{j+1})$, we have that for any $y\in [\delta_{j+1}, 1]$,
\[
\P\Big(\max_{t\in \{j+1, \ldots, n-1 \}}\delta_t\geq y|(S_m, S'_m), \ldots , (S_{j+1}, S'_{j+1})\Big)\leq \frac{\delta_{j+1}}{y}.
\]

Now, condition on $(S_m, S'_m), \ldots , (S_j, S'_j)$ and let $y\in [\delta_j, 1]$. Since the property is immediately true when $\delta_j = 0$ (by Lemma~\ref{lem:structural_hybrid}, bullet point 4) and $\delta_j=y$, we now assume that $0<\delta_j<y$.

Since $\delta_j \neq 0$, we have $S_j\neq S_j'$. Furthermore, by Lemma \ref{lem:structural_hybrid}, we have $|S_j\setminus S_j'| = |S_j'\setminus S_j| = 1$ with $S_j\Delta S_j' = \{g_j^L, g_j^R\}$. We assume without loss of generality that $S_j' = S_j \cup\{g_j^R\} \setminus\{g_j^L\}$. Recall that we defined $s_j^L = \max\{s\in S_j: s\leq g_j^L\}$,  $s_j^R = \min\{s\in S_j: s\geq g_j^R\}$, and $d_j^L = g_j^L - s_j^L$, $d_j^R = s_j^R - g_j^R$.

Next, we note the following fact: for any $r\in [0,1]$, letting $T(S_j, S'_j,r)$ and $T'(S_j, S'_j,r)$ be the value of $S'_{j+1}$ and $S_{j+1}$ conditioning on $(S_j, S'_j)$ and the event that $r_{j+1} = r$, we have that
\begin{align}
    &\P\Big(\max_{t\in \{j, \ldots, n-1 \}}\delta_t\geq y|(S_m, S'_m), \ldots , (S_j, S'_j), r_{j+1} = r\Big)\nonumber\\
    & = \P\Big(\max_{t\in \{j, \ldots, n-1 \}}\delta_t\geq y|(S_m, S'_m), \ldots , (S_j, S'_j), (S_{j+1}, S'_{j+1}) = (T(S_j, S'_j,r), T'(S_j, S'_j,r)), r_{j+1} = r\Big)\nonumber\\
    & = \P\Big(\max_{t\in \{j+1, \ldots, n-1 \}}\delta_t\geq y| (S_m, S'_m), \ldots , (S_j, S'_j), (S_{j+1}, S'_{j+1}) = (T(S_j, S'_j,r), T'(S_j, S'_j,r)), r_{j+1} = r\Big)\nonumber\\
     & = \P\Big(\max_{t\in \{j+1, \ldots, n -1\}}\delta_t\geq y| (S_m, S'_m), \ldots , (S_j, S'_j), (S_{j+1}, S'_{j+1}) = (T(S_j, S'_j,r), T'(S_j, S'_j,r))\Big)\nonumber\\
    &\leq \frac{\delta(T(S_j, S'_j,r), T'(S_j, S'_j,r))}{y},\label{eq:markov_1}
\end{align}

where the second equality is since $\delta_{j} <y$, the third equality is since conditioning on $S_{j+1}, S'_{j+1}$, we have that $\{\delta_t\}_{t\in \{j+1, \ldots, n\}}$ is independent on $r_{j+1}$, and the  inequality is by the induction hypothesis.

We now enumerate all possible cases depending on request $r_{j+1}$. We start by the case where $s_j^L,s_j^R\neq \emptyset$. By Lemma \ref{lem:structural_hybrid}, the values of $\delta(T(S_j, S'_j,r), T'(S_j, S'_j,r))$ are the ones given in Table \ref{tab:delta_general}.

\begin{itemize}
    \item Case 1: $r_{j+1}\in s_j^L +[0,\tfrac{d_j^L}{2}]$. We have, by Table \ref{tab:delta_general}, that $\delta(T(S_j, S'_j,r), T'(S_j, S'_j,r))= \delta_j $. Thus, by (\ref{eq:markov_1}), we get
    \[
    \P\Big(\max_{t\in \{j, \ldots, n \}}\delta_t\geq y|(S_m, S'_m), \ldots , (S_j, S'_j), r_{j+1}\in s_j^L +[0,\tfrac{d_j^L}{2}]\Big) \leq \frac{\delta_j}{y}.
    \]

    \item Case 2: $r_{j+1}\in s_j^L + [ \tfrac{d_j^L}{2}, \tfrac{d_j^L+\delta_j}{2}]$. We have $\delta(T(S_j, S'_j,r), T'(S_j, S'_j,r)) = \delta_j +d_j^L$. Thus, by (\ref{eq:markov_1}), we get
    \[
    \P\Big(\max_{t\in \{j, \ldots, n \}}\delta_t\geq y|(S_m, S'_m), \ldots , (S_j, S'_j), r_{j+1}\in s_j^L + [ \tfrac{d_j^L}{2}, \tfrac{d_j^L+\delta_j}{2}]\Big) \leq \frac{\delta_j + d_j^L}{y}.
    \]

    \item Case 3: $r_{j+1}\in s_j^L + [\tfrac{d_j^L+\delta_j}{2}, d_j^L + \tfrac{d_j^R+\delta_j}{2}]$. We have that $\delta(T(S_j, S'_j,r), T'(S_j, S'_j,r)) =0$. Thus, by (\ref{eq:markov_1}), we get 
    \[
    \P\Big(\max_{t\in \{j, \ldots, n \}}\delta_t\geq y|(S_m, S'_m), \ldots , (S_j, S'_j), r_{j+1}\in s_j^L + [\tfrac{d_j^L+\delta_j}{2}, d_j^L + \tfrac{d_j^R+\delta_j}{2}]\Big) \leq 0.
    \]
    
    \item Case 4: $r_{j+1}\in s_j^L + [d_j^L +\tfrac{d_j^R+\delta_j}{2}, d_j^L +\delta_j + \tfrac{d_j^R}{2}]$. We have $\delta(T(S_j, S'_j,r), T'(S_j, S'_j,r)) = \delta_j +d_j^R$. Thus, by (\ref{eq:markov_1}), we get
    \[
    \P\Big(\max_{t\in \{j, \ldots, n \}}\delta_t\geq y|(S_m, S'_m), \ldots , (S_j, S'_j), r_{j+1}\in s_j^L + [d_j^L +\tfrac{d_j^R+\delta_j}{2}, d_j^L +\delta_j + \tfrac{d_j^R}{2}]\Big) \leq \frac{\delta_j+d_j^R}{y}.
    \]
    
    \item Case 5: $r_{j+1}\in s_j^L + [d_j^L +\delta_j + \tfrac{d_j^R}{2}, d_j^L +\delta_j + d_j^R]$. We have $\delta(T(S_j, S'_j,r), T'(S_j, S'_j,r)) = \delta_j $. Thus, by (\ref{eq:markov_1}), we get
    \[
    \P\Big(\max_{t\in \{j, \ldots, n \}}\delta_t\geq y|(S_m, S'_m), \ldots , (S_j, S'_j), r_{j+1}\in s_j^L + [d_j^L +\delta_j + \tfrac{d_j^R}{2}, d_j^L +\delta_j + d_j^R]\Big) \leq \frac{\delta_j}{y}.
    \]
    
    \item Case 6: $r_{j+1}\in [0,s_j^L)\cup (s_j^R,1]$. We have $\delta(T(S_j, S'_j,r), T'(S_j, S'_j,r)) = \delta_j$. Thus, by (\ref{eq:markov_1}), we get
    \[
    \P\Big(\max_{t\in \{j, \ldots, n \}}\delta_t\geq y|(S_m, S'_m), \ldots , (S_j, S'_j), r_{j+1}\in [0,s_j^L)\cup (s_j^R,1]\Big) \leq \frac{\delta_j}{y}.
    \]
\end{itemize}

By combining the six cases above and using the fact that $r_{j+1}$ is drawn uniformly at random in $[0,1]$, we get
\begin{align*}
    &\P\Big(\max_{t\in \{j, \ldots, n-1 \}}\delta_t\geq y|(S_m, S'_m), \ldots , (S_j, S'_j)\Big) \\
    \hspace{0.1cm} &\leq \P(r_{j+1}\in s_j^L +[0,\tfrac{d_j^L}{2}])\cdot\frac{\delta_j}{y} 
    + \P(r_{j+1}\in s_j^L + [ \tfrac{d_j^L}{2}, \tfrac{d_j^L+\delta_j}{2}])\cdot\frac{\delta_j+d_j^L}{y}\\
     \hspace{0.1cm} &+  \P(r_{j+1}\in s_j^L + [\tfrac{d_j^L+\delta_j}{2}, d_j^L + \tfrac{d_j^R+\delta_j}{2}])\cdot 0 \\
    \hspace{0.1cm} &+ \P(r_{j+1}\in s_j^L + [d_j^L +\tfrac{d_j^R+\delta_j}{2}, d_j^L +\delta_j + \tfrac{d_j^R}{2}])\cdot\frac{\delta_j+d_j^R}{y}\\
    \hspace{0.1cm} &+ \P(r_{j+1}\in s_j^L + [d_j^L +\delta_j + \tfrac{d_j^R}{2}, d_j^L +\delta_j + d_j^R])\cdot\frac{\delta_j}{y}
    + \P(r_{j+1}\in [0,s_j^L)\cup (s_j^R,1])\cdot\frac{\delta_j}{y}\\
    \hspace{0.1cm} &= \frac{d_j^L}{2}\cdot\frac{\delta_j}{y} + \frac{\delta_j}{2}\cdot\frac{\delta_j+d_j^L}{y} + \frac{\delta_j}{2}\cdot\frac{\delta_j+d_j^R}{y} + \frac{d_j^R}{2}\cdot\frac{\delta_j}{y} + (1- (d_j^L+d_j^R+\delta_j))\cdot\frac{\delta_j}{y}\\
    \hspace{0.1cm} &=  \frac{\delta_j}{y}\cdot\Big(\frac{d_j^L}{2} + \frac{\delta_j+d_j^L}{2} + \frac{\delta_j+d_j^R}{2} + \frac{d_j^R}{2}+ (1- (d_j^L+d_j^R+\delta_j))\Big)\\
    \hspace{0.1cm} & = \frac{\delta_j}{y}.
\end{align*}

 We now consider the case where $s_j^L = \emptyset,s_j^R\neq \emptyset$. By Lemma \ref{lem:structural_hybrid}, the values of $\delta(T(S_j, S'_j,r), T'(S_j, S'_j,r))$ are the ones given in Table \ref{tab:delta_special}. We consider four different cases.
 \begin{itemize}
     \item Case 1: $r_{j+1}\in [0,g_j^L +\tfrac{d_j^R+\delta_j}{2}]$. We have, by Table \ref{tab:delta_special} that $\delta(T(S_j, S'_j,r), T'(S_j, S'_j,r)) =0$. Thus, by (\ref{eq:markov_1}), we get 
    \[
    \P\Big(\max_{t\in \{j, \ldots, n \}}\delta_j\geq y|(S_m, S'_m), \ldots , (S_j, S'_j), r_{j+1}\in [0,g_j^L +\tfrac{d_j^R+\delta_j}{2}]\Big) \leq 0.
    \]
     \item Cases 2,3,4: the upper bounds we get in the cases $r_{j+1}\in[g_j^L +\tfrac{d_j^R+\delta_j}{2}, g_j^L + \delta_j + \tfrac{d_j^R}{2}]$, $r_{j+1}\in[g_j^L + \delta_j + \tfrac{d_j^R}{2}, g_j^L + \delta_j + d_j^R]$ and $r_{j+1}\in(s_j^R,1]$ are identical as the ones in Cases 4,5,6 when $s_j^L,s_j^R\neq \emptyset$. 
 \end{itemize}
By combining the four cases above and using the fact that $r_{j+1}$ is drawn uniformly at random in $[0,1]$, we get
\allowdisplaybreaks{
\begin{align*}
    &\P\Big(\max_{t\in \{j, \ldots, n -1\}}\delta_j\geq y|(S_m, S'_m), \ldots , (S_j, S'_j)\Big) \\
    \hspace{0.1cm} & \leq \P(r_{j+1}\in [0,g_j^L +\tfrac{d_j^R+\delta_j}{2}])\cdot 0 \\
    \hspace{0.1cm}  &+ \P(r_{j+1}\in[g_j^L +\tfrac{d_j^R+\delta_j}{2}, g_j^L + \delta_j + \tfrac{d_j^R}{2}])\cdot\frac{\delta_j+d_j^R}{y}\\
    \hspace{0.1cm} &+ \P(r_{j+1}\in[g_j^L + \delta_j + \tfrac{d_j^R}{2}, g_j^L + \delta_j + d_j^R])\cdot\frac{\delta_j}{y}
    + \P(r_{j+1}\in(s_j^R,1])\cdot\frac{\delta_j}{y}\\
    \hspace{0.1cm} &=  \frac{\delta_j}{2}\cdot\frac{\delta_j+d_j^R}{y} + \frac{d_j^R}{2}\cdot\frac{\delta_j}{y} + (1- (d_j^L+d_j^R+\delta_j))\cdot\frac{\delta_j}{y}\\
    \hspace{0.1cm} &=  \frac{\delta_j}{y}\cdot\Big( \frac{\delta_j+d_j^R}{2} + \frac{d_j^R}{2}+ (1- (d_j^L+d_j^R+\delta_j))\Big)\\
    \hspace{0.1cm} & \leq \frac{\delta_j}{y}.
\end{align*}
}

The case $s_j^L\neq \emptyset,s_j^R= \emptyset$ is similar to the case above and we conclude in the same way.
Finally, in the case $s_j^L,s_j^R = \emptyset$, we have by Lemma \ref{lem:structural_hybrid} that for any $r_{j+1}$, $\delta(T(S_j, S'_j,r), T'(S_j, S'_j,r)) = 0$. Thus, by (\ref{eq:markov_1}), we get 
    \[
    \P\Big(\max_{t\in \{j, \ldots, n-1 \}}\delta_t\geq y|(S_m, S'_m), \ldots , (S_j, S'_j)\Big) \leq 0.
    \]
    
Hence, in all cases, we have shown that 
\[
\P\Big(\max_{t\in \{j, \ldots, n-1 \}}\delta_j\geq y|(S_m, S'_m), \ldots , (S_j, S'_j)\Big)\leq \frac{\delta_j}{y},
\]
which concludes the inductive case and the proof of the lemma.
\end{proof}

\begin{lemma}
\label{lem:upper_bound_delta_cost}
Assuming that the requests are sampled uniformly at random from $[0,1]$, we have, for some constant $C>0$:
\begin{align*}
    &\E\Big[\delta_m(1+\log(1/\delta_m))|S_{m-1}, r_m, \delta_m>0\Big]\cdot \P(\delta_m>0|S_{m-1}, r_m)\\
   & \leq C\cdot\E\Big[\big(1+\log\big(\tfrac{1}{\text{cost}_m(\A)}\big)\big)\text{cost}_m(\A)\big|S_{m-1}, r_m\Big].
\end{align*}
\end{lemma}

\begin{proof}
We first condition on the realization of variables $S_{m-1}, r_m$ and on the event $\{\delta_m>0\}$.

Recall that $\H^m$ and $\H^{m-1}$ both match $r_1, \ldots, r_{m-1}$ to the same servers as $\A$, and that $\H^m$ matches $r_m$ to the same server $s(r_m)$ as $\A$, while $\H^{m-1}$ matches $r_m$ greedily to $s'(r_m)$. Hence, we have that $|r_{m} - s(r_m)| = \text{cost}_m(\A)$ and that $|r_{m} - s'(r_m)| = \min\{|r_m-s|: s\in S_{\A,m-1}\}\leq |r_{m} - s(r_m)|$.
Thus, 
\[
\delta_m = |s(r_m) - s'(r_m)| \leq |r_{m} - s(r_m)| + |r_{m} - s'(r_m)|\leq 2|r_{m} - s(r_m)| =2 \text{cost}_m(\A).
\]

Now, note that $x\longmapsto x\log(\frac{1}{x})$ reaches its maximum value over $(0,1]$ at $x = 1/e$ (with $\frac{1}{e}\log(\frac{1}{1/e}) = \frac{1}{e}$) and is non-decreasing on $(0,1/e]$. Since $\delta_m\in (0,1]$, we thus have
\begin{align*}
    \delta_m(1+ \log(1/\delta_m))
    &\leq& 2 \text{cost}_m(\A) + 
\begin{cases}
2 \text{cost}_m(\A)\log(1/(2 \text{cost}_m(\A))) &\text{if } 2 \text{cost}_m(\A)\in (0,1/e]\\
1/e &\text{if } 2 \text{cost}_m(\A)\in [1/e,1].
\end{cases}\\
&\leq &\;2 \text{cost}_m(\A)\big(1+\log(1/(2 \text{cost}_m(\A))) + 1\big).
\end{align*}

As a result, we get that for some $C>0$,
\begin{align*}
    &\E\Big[\delta_m(1+\log(1/\delta_m))|S_{m-1}, r_m, \delta_m>0\Big]\cdot \P(\delta_m>0|S_{m-1}, r_m)\\
    &\leq C\cdot \E\Big[\big(1+\log\big(\tfrac{1}{\text{cost}_m(\A)}\big)\big)\text{cost}_m(\A)\big|S_{m-1}, r_m,\delta_m>0\Big]\cdot \P(\delta_m>0|S_{m-1}, r_m)\\ 
    &\leq C\cdot \E\Big[\big(1+\log\big(\tfrac{1}{\text{cost}_m(\A)}\big)\big)\text{cost}_m(\A)\big|S_{m-1}, r_m\Big].
\end{align*}
\end{proof}

We are now ready to present the proof of the hybrid lemma.
\begin{proof}[Proof of Lemma \ref{lem:hybrid}.]
Let $S$ be an arbitrary set of $n$ servers in $[0,1]$ and $R$ a sequence of $n$ requests drawn uniformly at random from $[0,1]$. In the remainder of the proof, we consider a simultaneous execution  of $\H^m$ and $\H^{m-1}$ with initial set of servers $S$ and requests $R$.

Conditioning on the variables $S'_m, S_m, S_{m-1}, r_m$, we have
\begin{align}
&\E[cost(\H^{m-1}) - cost(\H^{m})|S'_m, S_m, S_{m-1}, r_m]\nonumber\\
\leq\hspace{0.1cm} & 2\cdot \E\big[\max_{t\in \{m,\ldots, n-1\}}\delta_{t}|S'_m, S_m, S_{m-1}, r_m\big]&\text{Lemma \ref{lem:diff_cost_delta}}\nonumber\\
=\hspace{0.1cm} & 2\cdot \E\big[\max_{t\in \{m,\ldots, n-1\}}\delta_{t}|S'_m, S_m\big]&\text{$\max_{t\geq m}\delta_{t}\indep (S_{m-1},r_m)$ when $|(S'_m,S_m)$} \nonumber \\
=\hspace{0.1cm} & \mathbbm{1}_{\delta_m>0}\cdot 2\cdot\E\big[\max_{t\in \{m,\ldots, n-1\}}\delta_{t}|S'_m, S_m\big]&\text{Lemma \ref{lem:structural_hybrid} (bullet point 4)}\nonumber\\
\leq\hspace{0.1cm} &  \mathbbm{1}_{\delta_m>0}\cdot2\cdot\Big(\delta_m +  \int_{\delta_m}^{1} \mathbb{P}_{R}\big[\max_{t\in \{m,\ldots, n-1\}}\delta_{t}\geq y|S'_m, S_m\big]\text{dy}\Big)\nonumber\\
%
\leq\hspace{0.1cm} &\mathbbm{1}_{\delta_m>0}\cdot2\cdot\Big(\delta_m + \int_{\delta_m}^{1} \frac{\delta_m}{y}\text{dy}\Big) &\text{Lemma \ref{lem:worst_case}}\nonumber\\
 =\hspace{0.1cm} &\mathbbm{1}_{\delta_m>0}\cdot 2\cdot\delta_m(1+\log(1/\delta_m))\label{eq:lemmahybridconclusion}.
\end{align}
By the tower rule, we conclude that
\begin{align}
    & \E[cost(\H^{m-1}) - cost(\H^{m})|S_{m-1}, r_m] \nonumber\\
    =\hspace{0.1cm}&\E[\E[cost(\H^{m-1}) - cost(\H^{m})| S'_m, S_m, S_{m-1}, r_m]|S_{m-1}, r_m] \nonumber\\
    \leq\hspace{0.1cm}& \E[\mathbbm{1}_{\delta_m>0}\cdot 2\cdot\delta_m(1+\log(1/\delta_m))|S_{m-1}, r_m]&\text{by (\ref{eq:lemmahybridconclusion})}\nonumber\\
    =\hspace{0.1cm}& 2\cdot \E[\delta_m(1+\log(1/\delta_m))|S_{m-1}, r_m, \delta_m>0]\cdot \P(\delta_m>0| S_{m-1}, r_m)\nonumber\\
    \leq \hspace{0.1cm}& 2C\cdot\E\left[\big(1+\log\big(\tfrac{1}{\text{cost}_m(\A)}\big)\big)\text{cost}_m(\A)\big|S_{m-1}, r_m\right]. & \text{Lemma \ref{lem:upper_bound_delta_cost}}\nonumber \qquad \qedhere
\end{align}
\end{proof}

  \section{Missing Analysis from Section \ref{section:upper_bound_iid}}

\label{app_random_model}

\paragraph{The excess supply setting.}

We first recall some notations, that will be used throughout this section.
\begin{itemize}
    \item For any $\ell, m\in [0,1]$, we let $\xlr = |\{t\in [n-1]: r_t\in (\ell,m)\}|$ be the number of requests out of the $n-1$ first requests that arrived in the interval $(\ell,m)$.
    \item  For any $\ell, m\in [0,1]$, we let $\ylr = |\{t\in [n(1+\epsilon)]: s_t\in (\ell,m)\}|$ be the total number of servers that lie in the interval $(\ell,m)$.

\end{itemize}

We now prove a couple of  lemmas. The first one upper bounds the probability that for some given $z\in [ \tfrac{4(1+\epsilon/4)}{\epsilon n},1]$,  there exists an interval $I$ of length large enough w.r.t.~$z$ such that $r_n \in I$  and  the number of servers and of requests that arrived strictly before $r_n$ and lying in $I$ are equal.

\lemdiscretization*

\begin{proof}
Throughout the proof, we condition on the random variable $r_n$. We start by discretizing the interval $[0,1]$, and first let 
\[
j_0 = \begin{cases} \max \{j \in \mathbb{Z}_{\geq 0}: r_n - z - \tfrac{j}{n} \geq 0\} &\text{if } r_n \geq z\\
-1 &\text{otherwise.}
\end{cases} 
\]
and
\[k_0 = \begin{cases} \min\{k \in \mathbb{Z}_{\geq 0}, r_n +z + \tfrac{k}{n} \leq 1\} &\text{if } 1 - r_n \geq z \\
-1 &\text{otherwise.}
\end{cases}
\]

Consider the case $j_0, k_0\neq -1$. For all $j\in \{0,\ldots,j_0\}$,  we  let $\ell_j := r_n - z - \tfrac{j}{n}$, and for all $k\in \{0,\ldots,k_0\}$, we let $m_k := r_n + z +\tfrac{k}{n}$. We also let $\ell_{j_0+1} = 0$,  and $ m_{k_0+1} = 1$.

Now, consider any pair $(j,k)\in \{0,\ldots,j_0\}\times \{0,\ldots,k_0\}$. First, note that for any realization $R$ of the sequence of requests, and for any $\ell\in [\ell_{j+1},\ell_j], m\in [m_k, m_{k+1}]$, we have 
\begin{equation}
\label{eq:xlrylr}
    \xlr\leq x_{(\ell_{j+1},m_{k+1})} \quad \text{ and } \quad \ylr \geq y_{(\ell_{j},m_{k})}.
\end{equation}

Then, note that $x_{(\ell_{j+1},m_{k+1})}$ follows a binomial distribution $\B(n-1, m_{k+1}-\ell_{j+1})$ and that $y_{(\ell_j,m_k)}$ follows a binomial distribution $\B(n(1+\epsilon), m_k-\ell_j)$. Thus, by Chernoff bounds (Lemma~\ref{prop:chernoff_bounds}), we get that for some $C_\epsilon>0$, 
\begin{align}
\label{eq:cher1}
\P(x_{(\ell_{j+1},m_{k+1})}\geq  (n-1)(m_{k+1}- \ell_{j+1})(1+\epsilon/4))& \leq \ e^{- (n-1)(m_{k+1}- \ell_{j+1})\frac{(\epsilon/4)^2}{(2+\epsilon/4)}} \nonumber \\ & \leq \ e^{- (n-1)(m_{k}- \ell_{j})C_\epsilon},
\end{align}

and 
\begin{align}
\label{eq:cher2}
\P(y_{(\ell_j,m_k)}\leq  n(1+\epsilon)(m_{k}- \ell_{j})\big(1-\tfrac{\epsilon}{4(1+\epsilon)}\big))& \leq \  e^{- \frac{n(1+\epsilon)(m_{k}- \ell_{j}))(\frac{\epsilon}{4(1+\epsilon)})^2}{2}} \nonumber \\ & \leq \ e^{- (n-1)(m_{k}- \ell_{j})C_\epsilon}.
\end{align}

Next, since $z\geq \tfrac{4(1+\epsilon/4)}{\epsilon n}$, we have that
\begin{align}
\label{eq:diffcher}
    n(1+\epsilon)(m_{k}- \ell_{j})(1-\tfrac{\epsilon}{4(1+\epsilon)})) &- (n-1)(m_{k+1}- \ell_{j+1})(1+\epsilon/4)\nonumber\\ 
    &\geq n(1+\epsilon)(m_{k}- \ell_{j})(1-\tfrac{\epsilon}{4(1+\epsilon)})) - n(m_{k+1}- \ell_{j+1})(1+\epsilon/4)\nonumber\\
&\geq n(m_{k}- \ell_{j})(1+3\epsilon/4)) - n(m_{k}- \ell_{j} + \tfrac{2}{n})(1+\epsilon/4)\nonumber\\
 &= n(m_{k}- \ell_{j})\epsilon/2- 2(1+\epsilon/4)\nonumber\\
    &\geq nz\epsilon/2- 2(1+\epsilon/4)\nonumber\\
    &\geq 0.
\end{align}

Hence, if we have that $x_{(\ell_j,m_k)}<  (n-1)(m_{k+1}- \ell_{j+1})(1+\epsilon/4)$ and that $y_{(\ell_j,m_k)}> n(1+\epsilon)(m_k- \ell_{j})\big(1-\tfrac{\epsilon}{4(1+\epsilon)}\big)$, then, we obtain by (\ref{eq:xlrylr}) and (\ref{eq:diffcher}) that for all $\ell\in [\ell_{j+1},\ell_j], m\in [m_k, m_{k+1}]$:
\[\xlr\leq x_{(\ell_{j+1},m_{k+1})}< (n-1)(m_{k+1}- \ell_{j+1})(1+\epsilon/4) \leq n(1+\epsilon)(m_k- \ell_{j})\big(1-\tfrac{\epsilon}{4(1+\epsilon)}\big) < y_{(\ell_j,m_k)} \leq \ylr.
\]
Thus, we have:
\begin{align*}
&\quad\P(\exists \ell\in [\ell_{j+1},\ell_j], m\in [m_k, m_{k+1}] : x_{(\ell,m)} = y_{(\ell,m)}|\;r_n)\\
 &\leq \P(x_{(\ell_{j+1},m_{k+1})}\geq  (n-1)(m_{k+1}- \ell_{j+1})(1+\epsilon/4)|\;r_n) \\
 &\qquad\qquad\qquad\qquad\qquad\qquad\qquad\qquad\qquad+ \P(y_{(\ell_j,m_k)}\leq  n(1+\epsilon)(m_k- \ell_{j})(1-\tfrac{\epsilon}{4(1+\epsilon)})|\;r_n)\\
 &\leq 2e^{- (n-1)(m_k- \ell_{j})C_\epsilon},
\end{align*}
where the last inequality is by (\ref{eq:cher1}) and (\ref{eq:cher2}), and since $x_{(\ell_j,m_k)}$ and $y_{(\ell_j,m_k)}$ are independent of $r_n$.

By union bound over all $(j,k)\in \{0,\ldots,j_0\}\times \{0,\ldots,k_0\}$, we obtain that for some constant $C'_\epsilon>0$:
\begin{align}
\P(\exists \ell, m\in [0,1]& : x_{(\ell,m)} = y_{(\ell,m)}, ( r_n- \ell\geq z \text{ or  } \ell=0), (m - r_n\geq z\text{ or } m=1 )\;|\;r_n)\nonumber\\
&=
\P\Big(\bigcup_{j,k\in \{0,\ldots,j_0\}\times \{0,\ldots,k_0\}}\{\exists \ell\in [\ell_{j+1},\ell_j], m\in [m_k, m_{k+1}] : x_{(\ell,m)} = y_{(\ell,m)}\}|\;r_n\Big)\nonumber\\
&\leq \sum_{j,k\in \{0,\ldots,j_0\}\times \{0,\ldots,k_0\}}
\P(\exists \ell\in [\ell_{j+1},\ell_j], m\in [m_k, m_{k+1}] : x_{(\ell,m)} = y_{(\ell,m)}|\;r_n)\nonumber\\
& \leq \sum_{j,k\in \{0,\ldots,j_0\}\times \{0,\ldots,k_0\}} 2e^{- (n-1)(m_k- \ell_{j})C_\epsilon}\nonumber\\
 & = 2e^{- 2(n-1)zC_\epsilon}\sum_{j,k\in \{0,\ldots,j_0\}\times \{0,\ldots,k_0\}} e^{- (n-1)(\frac{j+k}{n})C_\epsilon}\nonumber\\
 &= C_{\epsilon}' e^{-2 nzC_\epsilon}\label{eq:discretization},
\end{align}
where the last inequality holds since for any $C_\epsilon>0$, $\sum_{j=1,\dots,+\infty,k=1,\dots,+\infty} e^{- (j+k)C_\epsilon}$ converges.

Next, we consider the case where $k_0= -1$ or $j_0= -1$. Now, by a similar argument as above, we have that if $j_0= -1$ and $k_0\neq -1$, then for all $k\in \{0,\ldots, k_0\}$, \[\P(\exists m\in [m_k, m_{k+1}] : x_{(0,m)} = y_{(0,m)}|\;r_n)\leq 2e^{- n(m_k- 0)C_\epsilon},\]
and if $k_0= -1$ and $j_0\neq -1$, then for all $j\in \{0,\ldots, j_0\}$,
\[\P(\exists \ell\in [\ell_{j+1}, \ell_j] : x_{(\ell,1)} = y_{(\ell,1)}|\;r_n)\leq 2e^{- n(1- \ell_j)C_\epsilon}.\]
We conclude in a similar way as in (\ref{eq:discretization}) that there are $C_{\epsilon}, C_{\epsilon}'>0$ such that
$$
\P(\exists \ell, m\in [0,1] : x_{(\ell,m)} = y_{(\ell,m)}, ( r_n- \ell\geq z \text{ or  } \ell=0), (m - r_n\geq z\text{ or } m=1 )\;|\;r_n)\leq C_{\epsilon}' e^{-nzC_\epsilon}.
$$
Finally, if $j_0= -1$ and $k_0= -1$, note that $1-z<r_n<z$. In this case, since $x_{(0,1)} = n-1$ and $y_{(0,1)} = n(1+\epsilon)$, we simply have 
\begin{align*}
   &\P(\exists \ell, m\in [0,1] : x_{(\ell,m)} = y_{(\ell,m)}, ( r_n- \ell\geq z \text{ or  } \ell=0), (m - r_n\geq z\text{ or } m=1 )\;|\;r_n) \\
   &=\P(x_{(0,1)} = y_{(0,1)}|\;r_n) = 0 \leq C_{\epsilon}'e^{-nzC_\epsilon}.
\end{align*}

\end{proof}

\lemgreedycostincreases*
\begin{proof} Recall that $S_0\supseteq\ldots\supseteq\S_n$ denote the sequence of set of free servers obtained during the execution of $\G$. 
Since for all $i\in [n-1]$, $\mathcal{S}_{i}\subseteq\mathcal{S}_{i-1}$, we have that for all $S\subseteq[0,1]$ with $|S| = n-(i-1)$ and $S'\subseteq S$ such that $|S'| = |S|-1$:
\begin{align*}
    \E[cost_{i}(\G)|S_{i-1} = S] &= \E_{r_{i}\sim \mathcal{U}[0,1]}[\min_{s\in S}|r_{i}-s|]\\
    &\leq \E_{r_{i+1}\sim \mathcal{U}[0,1]}[\min_{s\in S'}|r_{i+1}-s|] \\&= \E[cost_{i+1}(\G)|S_{i} = S']\\
    &= \E[cost_{i+1}(\G)|S_{i} = S', S_{i-1} = S],
\end{align*}

where the last equation holds since conditioning on $S_i$, the matching decision for $r_{i+1}$ is independent of $S_{i-1}$.

Hence, by applying a first time the tower rule over $S_{i-1}$, we get
$\E[cost_{i}(\G)]\leq \E[cost_{i+1}(\G)|S_{i} = S']
$, and by applying it a second time over $S_{i}$, we get
\[\E[cost_{i}(\G)]\leq \E[cost_{i+1}(\G)].\] 
\end{proof}

\section{Missing Analysis from Section \ref{section:sim_random}}

\label{app_semi_random_UB}

\lemunifcost*

\begin{proof}
We first lower bound the  probability that $\text{cost}_t(\A)\geq 1/n^4$. Conditioning on $S_{t-1}$, we have
\begin{align*}
   \P(\text{cost}_t(\A)< 1/n^4\;|\;S_{t-1} ) & \leq \P(\exists s\in S_{t-1}: |r_t-s|< 1/n^4\;|\;S_{t-1} )\\
   &= \P\Big(\bigcup_{s\in S_{t-1}}\{ r_t\in [\max(0,s-1/n^4), \min(1, s+1/n^4)]\}\;|\;S_{t-1}\Big)\\
& \leq \sum_{s\in S_{t-1}} \P(r_t\in [\max(0,s-1/n^4), \min(1, s+1/n^4)]\;|\;S_{t-1})\\
&\leq \sum_{s\in S_{t-1}} 2/n^4\\
&\leq 2/n^3.
\end{align*}
Thus,  $\P(\text{cost}_t(\A)< 1/n^4)\leq 2/n^3$.

Next, we lower bound $\E[\text{cost}_t(\A)]$. We condition on $S_{t-1}$ and let $0 \leq s_{t,1}\leq\ldots \leq s_{t,n}\leq 1 $ denote the ordered servers of $S_{t-1}$. By convention, we also write $s_{t,0}=0$, $s_{t,n+1}=1$. Then,
\begin{align*}
    \mathbb{E}[\text{cost}_t(\A)|S_{t-1}]&= \sum_{i=0}^n \mathbb{P}(r_t \in [s_{t,i}, s_{t,i+1}]) \mathbb{E}[\text{cost}_t(\A)|r_t \in [s_{t,i}, s_{t,i+1}]]\\
    &= \sum_{i=0}^n \frac{(s_{t,i+1}-s_{t,i})^2}{2}.
\end{align*} 
Since $\sum_{i=0}^n (s_{t,i+1}-s_{t,i}) = s_{t,n+1} - s_{t,0} = 1$, the above sum is minimized when $s_{t,i+1}-s_{t,i} = 1/(n+1)$ for all $i$, and the minimum  value is $\tfrac{1}{2(n+1)}$. By the tower law, we deduce that $\mathbb{E}[cost_t(\A])\geq \frac{1}{2(n+1)}$.
\end{proof}

In the remainder of this section, we demonstrate the existence of a constant competitive algorithm for the random requests model that makes neighboring matches. We first show that we can always transform any algorithm into an algorithm which satisfies this last property without increasing the total cost.

\begin{restatable}{rLem}{lemnearestneighbours}
\label{lem:nearest_neighbours}
For any online algorithm $\mathcal{A}$, there exists an algorithm $\mathcal{A}'$ that makes neighboring matches such that $\E[cost(\mathcal{A})']\leq \E[cost(\mathcal{A})]$.
\end{restatable}

\begin{proof}
We show the result by induction on $t$. Let $t_0 \geq 0$ and suppose that $\mathcal{A}$ is an  online algorithm that makes neighboring matches for all $t\leq t_0$, but does not necessarily make neighboring matches when $t>t_0$. Without loss of generality, assume that $r_{t_0+1}$ is matched by $\mathcal{A}$ to an available server $s\in S_{\A,t_0}$ such that $r_{t_0+1}\leq s$. Now, let $s' = \min\{z\in S_{\A,t_0}: z\geq r_{t_0+1}\} $ denote the closest available server on the right of $r_{t_0+1}$, and let $j\in \{t_0+1,\ldots n\}$ be such that $\A$ matches request $r_j$ to $s'$. We define the algorithm $\mathcal{A}'$ that matches all requests $r_t$ to exactly the same servers $\mathcal{A}$ matches them to for all $t\neq j, t_0+1$, matches $r_{t_0+1}$ to $s'$ and matches $r_j$ to $s$ (this is a valid construction since $s'$ is available when $r_{t_0+1}$ arrives and $s$ is available when $r_j$ arrives). Then by construction, $\mathcal{A}'$ makes neighboring matches for all  $t\leq t_0 + 1$.

We now analyse the cost of $\A'$.  Since $\mathcal{A}$ and $\mathcal{A}'$ match all requests other than $r_{t_0+1}$ and $r_j$ to the same servers, they incur the same cost for these requests. Now, we consider three cases: if $r_j\leq s'$, then
\begin{align*}
   \text{cost}_{t_0+1}(\A) + \text{cost}_{j}(\A) &= |r_{t_0+1}-s| + |r_j- s'| = (s - r_{t_0+1}) + (s' - r_j) \\
   & = (s' - r_{t_0+1}) + (s - r_j) =   |r_{t_0+1}- s'| + |r_j- s| = \text{cost}_{t_0+1}(\A') + \text{cost}_{j}(\A'), 
\end{align*}
if $s'\leq r_j\leq s$, then
\begin{align*}
   \text{cost}_{t_0+1}(\A) + \text{cost}_{j}(\A) &= |r_{t_0+1}- s| + |r_j- s'|\\
   &= ( s - r_{t_0+1}) + (r_j- s')| = ( s' - r_{t_0+1}) + (s-r_j) + 2(r_j - s')\\
   &\geq ( s' - r_{t_0+1}) + (s-r_j) =  |r_{t_0+1}- s'| + |r_j-s| = \text{cost}_{t_0+1}(\A') + \text{cost}_{j}(\A'),
\end{align*}
and if $r_j\geq s$, then
\begin{align*}
   \text{cost}_{t_0+1}(\A) + \text{cost}_{j}(\A) &= |r_{t_0+1}- s| + |r_j- s'|\\
   &= ( s - r_{t_0+1}) + (r_j-s')| = ( s' - r_{t_0+1}) + (r_j-s) + 2(s - s')\\
   &\geq ( s' - r_{t_0+1}) + (r_j-s) =  |r_{t_0+1}- s'| + |r_j-s| = \text{cost}_{t_0+1}(\A') + \text{cost}_{j}(\A').
\end{align*}
Hence, in all cases,  $\mathcal{A}'$ achieves a lower cost than $\mathcal{A}$ for requests $\{r_{t_0+1},r_j\}$.

Therefore, $\mathcal{A}'$ makes neighboring matches and is such that $\E[cost(\mathcal{A})']\leq \E[cost(\mathcal{A})]$.
\end{proof}

Next, we show that a simple adaptation of the algorithm Fair-Bias from \cite{GuptaGPW19} is a constant competitive algorithm for the random requests model. We first recall a result from \cite{GuptaGPW19}.

\begin{lemma}[Theorem 4.6. in \cite{GuptaGPW19}]\label{lem:fair_bias} Let $(S, d)$ be a tree metric with a server at each of the $n = |S| \geq 2$ points. Algorithm Fair-Bias is $9$-competitive if the requests are drawn from a known distribution $\mathcal{D}$ over the servers' locations.
\end{lemma}

Note that in the above lemma, the requests have support in the servers' locations, whereas, in the random requests model, we consider uniform requests in $[0,1]$. We show in the following lemma that we can nevertheless derive from Algorithm Fair-Bias a constant competitive algorithm $\A$ in the random requests model, and such that $\A$ makes neighboring matches.

\corggp*

\begin{proof} To ease the notations, we write $\mathcal{F}$ to denote the algorithm Fair-bias in the remainder of the proof.

Given an instance of the random requests model with set of servers $S$ and a realization of the requests sequence $R=\{r_1,\ldots, r_n\}$, we let $\Tilde{R} =\{\tilde{r_1},\ldots, \tilde{r_n}\}$ be such that for all $i\in [n]$, $\tilde{r_i}= \argmin_{s\in S}|r_i-s|$ is the closest server location to $r_i$. We consider the algorithm $\A$ that matches the requests in $R$ exactly to the same servers $\mathcal{F}$ matches the requests in $\tilde{R}$.  In order to analyze $\A$, we now define a distribution $\mathcal{D}$ over the servers locations, such that for all $s\in S$, \[\P_{r\sim\mathcal{D}}(r = s) = \P_{r\sim \mathcal{U}([0,1]) }\big(\argmin_{s'\in S}|r-s'| = s\big).\]
 Note that since for all $i \in [n]$, $r_i \sim \mathcal{U}([0,1])$, we have, by construction, that $\tilde{R}\sim \mathcal{D}$ when $R\sim \mathcal{U}([0,1])$.

We now show that $\A$ has a constant competitive ratio. Let $R$ be the realization of the requests and $\tilde{R}$ the corresponding transformed requests. We let $M_R$ be an optimal offline matching for $R$ and  $\OPT_R$ be the cost of this matching. We also let $s_{M_R}(i)$ be the server $M$ matches $i$ to. In addition, let $\OPT_{\tilde{R}}$ denote the cost of an optimal offline matching for $\tilde{R}$. Now, the cost of the matching returned by $\A$ satisfies: 
\begin{equation}
    \label{eq:cost_r_r'}
    \text{cost}(\A, (R,S)) =  \sum_{i=1}^n |r_i - s_{\A}(r_i)| \leq \sum_{i=1}^n (|r_i - \tilde{r_i}| + |\tilde{r_i} - s_{\A}(r_i)|) = \sum_{i=1}^n |r_i - \tilde{r_i}| + \text{cost}(\mathcal{F}, (\tilde{R}, S)).
\end{equation}
Since for all $i\in [n]$, we have that $|r_i - \tilde{r_i}| = \min_{s\in S} |r_i -  s|\leq |r_i - s_{M_R}(i)|$, we immediately get 
\begin{equation}
\label{eq:FB1}
    \sum_{i=1}^n |r_i - \tilde{r_i}|\leq \OPT_R.
\end{equation}

In addition, by considering the matching $\{(\tilde{r_i}, s_{M_R}(i))\}_{i\in [n]}$ for $\tilde{R}$, we get 
\[
\OPT_{\tilde{R}} \leq \sum_{i=1}^n |\tilde{r_i} - s_{M_R}(i)| \leq \sum_{i=1}^n (|\tilde{r_i} - r_i| + |r_i - s_{M_R}(i)|) \leq  \sum_{i=1}^n 2|r_i - s_{M_R}(i)| = 2\OPT_R.
\]
Hence, $\mathbb{E}_{R\sim \mathcal{U}([0,1])}[\OPT_{\tilde{R}}]\leq 2\cdot  \mathbb{E}_{R\sim \mathcal{U}([0,1])}[\OPT_{R}]$. Combining this with the fact that fair-bias is $9$-competitive by Lemma \ref{lem:fair_bias}, and using the definition of the distribution $\mathcal{D}$, we obtain
\begin{align}
    \E_{R\sim \mathcal{U}([0,1])}[\text{cost}_{\tilde{R}}(\mathcal{F})] &=  \E_{\tilde{R}\sim \mathcal{D}}[\text{cost}_{\tilde{R}}(\mathcal{F})]  \leq 9\cdot \mathbb{E}_{\tilde{R}\sim \mathcal{D}}[\OPT_{\tilde{R}}] \nonumber\\
    &= 9\cdot \mathbb{E}_{R\sim \mathcal{U}([0,1])}[\OPT_{\tilde{R}}]\leq 18\cdot  \mathbb{E}_{R\sim \mathcal{U}([0,1])}[\OPT_{R}]\label{eq:FB2}.
\end{align}

Hence, taking the expectation over $R\sim \mathcal{U}([0,1])$ on both sides of (\ref{eq:cost_r_r'}) and combining it with (\ref{eq:FB1}) and (\ref{eq:FB2}), we finally obtain
\[
\E_{R\sim \mathcal{U}([0,1])}[\text{cost}_R(\A)]\leq (1+ 18) \cdot \mathbb{E}_{R\sim \mathcal{U}([0,1])}[\OPT_{R}],
\]
which shows that $\A$ is constant competitive in the random requests model. Using Lemma \ref{lem:nearest_neighbours}, we can then transform $\A$ into a constant competitive algorithm $\A'$ that makes neighboring matches, which completes the proof of the lemma.
\end{proof}

\section{Hierarchical Greedy is $\Omega(n^{1/4})$ in the Random Requests Model}
\label{app_HG}

In this section, we show that in the random requests model, the Hierarchical Greedy algorithm proposed in \cite{Kanoria21} is $\Omega(n^{1/4})$ competitive on the line. We first introduce the instance on which this lower bound is achieved. To ease the presentation, we define an instance $\mathcal{J}_{2n}$ with $2n$ servers and $2n$ requests and where the servers and requests are in $[0,2]$. Note that by as simple scaling argument, this instance can be cast as an instance of the random requests model with $n$ servers and requests in $[0,1]$.

\paragraph{Description of the instance $\mathcal{J}_{2n}$.}
 
We define the set of servers $S_0$ as follows: there are $n-n^{3/4}$ servers uniformly spread in the interval $[0,1 - n^{-1/4}]$, there are no servers in the interval $(1 - n^{-1/4}, 1)$, there are $n^{3/4}$ servers at position $1+n^{-1/4}$, and the remaining $n$ servers are uniformly spread in the interval $[1, 2]$. More precisely, we let $s_{j} = \frac{j}{n}$ for all $j\in [n-n^{3/4}]$, $s_{j} = 1+n^{-1/4}$ for all $j\in \{n-n^{3/4}+1,\ldots, n\}$ and $s_{j} = \frac{j}{n}$ for all $j\in \{n+1,\ldots, 2n\}$. The sequence of requests $R$ contains $2n$ requests sampled uniformly at random in $[0,2]$.  We note that, interestingly, the  servers are almost uniform since a $1-o(1)$ fraction of the servers are uniformly spread in the interval $[0,2]$. In other words, the Hierarchical Greedy algorithm is not robust to a small perturbation of the servers.

\begin{lemma}
\label{lem:offline_hg}
The expected value of the optimal offline matching for the instance $\mathcal{J}_{2n}$ satisfies: $\E[\OPT] = O(\sqrt{n})$.
\end{lemma}

\begin{proof}
For a given realization $R$ of the requests sequence, we partition the requests into $R_1 = \{r\in R: r \in [0,1-n^{-1/4}]\}$, $R_2 = \{r\in R: r \in (1-n^{-1/4},1)\}$ and $R_3 = \{r\in R: r \in [1,2]\}$. We also let $\overline{R}_1$ be the first $n-n^{3/4}$ elements of $R_1$, or $\overline{R}_1 = R_1$ if $|R_1|< n-n^{3/4}$; we let $\overline{R}_2$ be the first $n^{3/4}$ elements of $R_2$, or $\overline{R}_2 = R_2$ if $|R_2|< n^{3/4}$, and we let $\overline{R}_3$ be the first $n$ elements of $R_3$, or $\overline{R}_3 = R_3$ if $|R_3|< n$.

We now define the following matching $M$, where for all $r\in R$, $s_M(r)$ denotes the server to which $r$ is matched and for all $\Tilde{R}\subseteq R$, $M|_{\Tilde{R}}$ denotes the restriction of $M$ to requests in $\Tilde{R}$:
\begin{itemize}
    \item $M|_{\overline{R}_1}$ is an optimal matching between $\overline{R}_1$ and $S_0\cap [0,1-n^{-1/4}]$.
    \item For all $r\in\overline{R}_2$, $s_M(r)=1+n^{-1/4}$.
    \item $M|_{\overline{R}_3}$ is an optimal matching between $\overline{R}_3$ and $S_0\cap [1,2]$. 
    \item The remaining requests are matched arbitrarily to the remaining free servers.
\end{itemize}
Note that $M$ is well defined since $|\overline{R}_1|\leq n-n^{3/4}= |S_0\cap[0,1-n^{-1/4}]|$, $|\overline{R}_2|\leq n^{3/4}= |S_0\cap\{1+n^{-1/4}\}|$, and $|\overline{R}_3|\leq n= |S_0\cap[1,2]|$.

Now, for all $r\in \overline{R}_2$, since $r\in (1-n^{-1/4},1)$, we have $|r-s_M(r)| = |1+n^{-1/4}-r|\leq 2n^{-1/4}$, hence, letting  $\text{cost}(M)$ denote the cost of the matching $M$, we have
\begin{equation}
    \E[\text{cost}(M|_{\overline{R}_2})] =\E[ \sum_{r\in \overline{R}_2} |s_M(r)-r|] \leq \E[|\overline{R}_2|] \cdot 2n^{-1/4} \leq   n^{3/4}\cdot 2n^{-1/4} = 2n^{1/2}.\label{eq:HGopt1}
\end{equation}
Next, note that the requests in  $\overline{R}_1$ are uniform i.i.d. in $ [0,1-n^{-1/4}]$ and the servers in $S_0\cap [0,1-n^{-1/4}]$ are uniformly spread in $ [0,1-n^{-1/4}]$. Similarly, the requests in  $\overline{R}_3$ are uniform i.i.d. in $ [1,2]$ and the servers in $S_0\cap [1,2]$ are uniformly spread in $ [1,2]$. Hence by Lemma \ref{lem:offline}, we have that
\begin{equation}
    \E[\text{cost}(M|_{\overline{R}_1})] + \E[\text{cost}(M|_{\overline{R}_3})]= O(\sqrt{n}).\label{eq:HGopt2}
\end{equation}

Now, note that $|R_1| = |\{r\in R: r \in [0,1-n^{-1/4}]\}|$ follows a binomial distribution $\mathcal{B}(2n, (1-n^{-1/4})/2)$,  $|R_2| = |\{r\in R: r \in (1-n^{-1/4},1)\}|$ follows a binomial distribution $\mathcal{B}(2n, n^{-1/4}/2)$ and $|R_3| = |\{r\in R: r \in [1,2]\}|$ follows a binomial distribution $\mathcal{B}(2n, 1/2)$. Hence, by Lemma \ref{lem:jensen}, we get
\begin{align*}
    &\E[|R_1\setminus\overline{R}_1|] = \E[\max(0, |R_1| -  n-n^{3/4})] \leq \E[||R_1| - n-n^{3/4}|]\\
    &\qquad\qquad\qquad\qquad\qquad\qquad\qquad\qquad\qquad\qquad\leq \sqrt{(n-n^{3/4})\cdot(1- (1-n^{-1/4})/2)} = O(\sqrt{n}),\\
    &\E[|R_2\setminus\overline{R}_2|] = \E[\max(0, |R_2| -  n^{3/4})] \leq \E[||R_2| -   n^{3/4}|] \leq \sqrt{n^{3/4}\cdot(1-n^{-1/4}/2)} = O(\sqrt{n}),\\
    &\E[|R_3\setminus\overline{R}_3|] = \E[\max(0, |R_3| -  n] \leq \E[||R_3| -   n|]\leq \sqrt{n(1-1/2)} = O(\sqrt{n}).\\
\end{align*}
Since for all $r\in R$, we have $|s_M(r)-r|\leq 1$, we get 
\begin{equation}
    \E[\text{cost}(M|_{(R_1\setminus\overline{R}_1)\cup (R_2\setminus\overline{R}_2))\cup (R_3\setminus\overline{R}_3)})]\leq \E[|R_1\setminus\overline{R}_1|] + \E[|R_2\setminus\overline{R}_2|] + \E[|R_3\setminus\overline{R}_3|] = O(\sqrt{n}).\label{eq:HGopt3}
\end{equation}
Combining (\ref{eq:HGopt1}), (\ref{eq:HGopt2}) and (\ref{eq:HGopt3}), we get
\begin{align*}
    \E[\OPT]\leq \E[\text{cost}(M)] &= \E[\text{cost}(M|_{(R_1\setminus\overline{R}_1)\cup (R_2\setminus\overline{R}_2)\cup (R_3\setminus\overline{R}_3)}] + \E[\text{cost}(M|_{\overline{R}_1})] \\
    &\qquad\qquad\qquad\qquad\qquad+ \E[\text{cost}(M|_{\overline{R}_2})] + \E[\text{cost}(M|_{\overline{R}_3})]= O(\sqrt{n}).
\end{align*}
\end{proof}

\begin{lemma}
\label{lem:lower_hg}
The expected cost of the matching returned by algorithm $\A^H$ on instance $\mathcal{J}_{2n}$ satisfies: $\E[cost(\A^H, \mathcal{J}_{2n})]  = \Omega(n^{3/4})$.
\end{lemma}

\begin{proof}
For a given realization $R$ of the requests sequence, we let $R_1 =\{r\in R: r\in [0,1]\}$. We also let $\overline{R}_1$ be the first $n-n^{3/4}$ elements of $R_1$, or $\overline{R}_1 = R_1$ if $|R_1|< n-n^{3/4}$.
Now, note that $|R_1| = |\{r\in R: r \in [0,1]\}|$ follows a binomial distribution $\mathcal{B}(2n, 1/2)$ with mean $n$. Hence,  
\begin{equation}
\label{eq:r1r1}
    \E[|R_1\setminus\overline{R}_1|] \geq n^{3/4}\P(|R_1\setminus\overline{R}_1|\geq n^{3/4}) = n^{3/4}\P(|R_1|\geq n) = \frac{1}{2}n^{3/4}.
\end{equation}

Next, note that the Hierarchical Greedy algorithm matches a request $r\in [0,1]$ to a server in $(1,2]$ only if $[0,1]$ has no more available servers. Hence, since $\overline{R}_1$ contains at most $n-n^{3/4}$ requests and there are initially $n-n^{3/4}$ servers in $[0,1]$, all requests in $\overline{R}_1$ will be matched to servers in $[0,1]$. Now, if $|R_1\setminus\overline{R}_1|>0$, then $|\overline{R}_1|= n-n^{3/4} = S_0\cap[0,1]$; hence, when any request $r\in |R_1\setminus\overline{R}_1|$ arrives, all the servers in $[0,1]$ have already been matched to a request in $\overline{R}_1$. Therefore, $r$ is matched by $\A^H$ to a server $s_{\A^H}(r)$ in $(1,2]$. Noting that the requests in $R_1\setminus\overline{R}_1$ are uniform in $[0,1]$, we thus have, for any $t$ such that $r_t\in |R_1\setminus\overline{R}_1|$,
\begin{align}
\label{eq:r1lb}
   \E[cost_t(\A^H, r_t)|r_t\in R_1\setminus\overline{R}_1]
   = \E[s_{\A^H}(r_t)-r_t|r_t\in R_1\setminus\overline{R}_1] \geq 1- \E[r_t|r_t\in R_1\setminus\overline{R}_1] = \frac{1}{2}.  
\end{align}

Thus, 
\begin{align*}
    \E[cost(\A^H, \mathcal{J}_{2n})]&= \sum_{t\in [n]} \E[cost_t(\A^H, r_t)]\\
    &\geq \sum_{t\in [n]} \E[cost_t(\A^H, r_t)|r_t\in R_1\setminus\overline{R}_1]\P(r_t\in R_1\setminus\overline{R}_1)\\
    &\geq \sum_{t\in [n]} \frac{1}{2}\P(r_t\in R_1\setminus\overline{R}_1)\tag{by (\ref{eq:r1lb})}\\
    &= \frac{1}{2}\E[\sum_{t\in [n]} \mathbbm{1}_{r_t\in R_1\setminus\overline{R}_1}]\\
    &=\frac{1}{2} \E[|R_1\setminus\overline{R}_1|]\\& = \frac{1}{4}n^{3/4}\tag{by (\ref{eq:r1r1})}.
\end{align*}
\end{proof}

By combining Lemmas \ref{lem:lower_hg} and \ref{lem:offline_hg}, we get the following result.

\begin{lemma}
For online matching on the line in the random requests model, the Hierarchical Greedy algorithm $\A^H$  achieves an $\Omega(n^{1/4})$-competitive ratio.
\end{lemma}

\section{Greedy is $\Omega(\log{n})$-competitive}
\label{sec:full_proof}

\label{app_lower}

In this section, we give a more detailed proof of our lower bound result. All omitted proofs can be found in Appendix \ref{appendix:full_LB}.

\subsection{Preliminaries}
\label{subsec:preliminaries}

\textbf{Description of the instance.} There are $n^{4/5}+\excess$ servers located at point $0$, there are no  servers in the interval $(0, n^{-1/5}]$ and the remaining $n-(n^{4/5}+\excess)$ servers are uniformly spread in the interval $(n^{-1/5}, 1]$. More precisely, for all $j\in [n^{4/5} + \excess]$, we set $s_j= 0$. Then, we let $\tilde{n}:= n - \excess/(1-n^{-1/5})$, and for all $j\in \{1,\ldots, n-n^{4/5}-\excess)\}$, we set  $s_{(n^{4/5} + \excess)+j} = n^{-1/5} + \frac{j}{\tilde{n}}$ (see Figure \ref{fig:lb_instance} for an illustration of the instance).  We note that, interestingly, the  servers are almost uniform since a $1-o(1)$ fraction of the servers are uniformly spread in an interval $(o(1), 1]$.

We now give bounds on the number of servers contained in each subinterval of $[n^{-1/5},1]$.

\begin{restatable}{rLem}{factnumber}
\label{fact:instance}
Let $\tilde{n}:= n - \excess/(1-n^{-1/5})$. For any $I\subseteq[n^{-1/5},1]$, we have $|S_0\cap I|\in [\tilde{n}|I|-1, \tilde{n}|I|+3]$.
\end{restatable}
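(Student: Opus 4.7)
The plan is to deduce the bound directly from the explicit construction of $S_0$, treating it as a routine counting problem about an arithmetic progression intersected with an interval. The only thing that can fail is some boundary/rounding loss, and I expect this to account for the additive slack of $3$ in the stated inequality.

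First I would unpack the construction. By definition, $S_0$ splits into two groups: the $n^{4/5}+\excess$ servers located at $0$, and the $N := n - n^{4/5} - \excess$ servers placed at the points of the arithmetic progression $\{n^{-1/5} + j/\tilde n : j \in \{1,\ldots,N\}\}$, whose common difference is $1/\tilde n$. A quick sanity check using $\tilde n = n - \excess/(1 - n^{-1/5})$ gives $\tilde n(1 - n^{-1/5}) = n - n^{4/5} - \excess = N$, so the rightmost server of the progression sits at $n^{-1/5} + N/\tilde n = 1$. Thus the progression is evenly spread over $(n^{-1/5},1]$ and no server lies outside $\{0\}\cup(n^{-1/5},1]$.

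Next, because $I \subseteq [n^{-1/5},1]$ contains none of the servers at $0$, I have $|S_0\cap I|$ equal to the number of $j\in\{1,\ldots,N\}$ such that $n^{-1/5}+j/\tilde n \in I$. Writing $I$ as a subinterval with endpoints $a\leq b$ in $[n^{-1/5},1]$, this count equals the number of integers $j\in\{1,\ldots,N\}$ in the real interval $[\tilde n(a-n^{-1/5}), \tilde n(b-n^{-1/5})]$ (with the open/closed choice inherited from $I$). That real interval has length $\tilde n(b-a) = \tilde n|I|$, and a standard estimate shows that the number of integers it contains is between $\tilde n|I| - 1$ and $\tilde n|I| + 1$.

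Finally I would impose the restriction $j\in\{1,\ldots,N\}$. Since $a\geq n^{-1/5}$ gives $\tilde n(a-n^{-1/5})\geq 0$ and $b\leq 1$ together with the identity $\tilde n(1-n^{-1/5})=N$ gives $\tilde n(b-n^{-1/5})\leq N$, the restriction can only discard integers $j\leq 0$, costing at most one point from the lower bound. Combining this with (i) the open/closed endpoint slack from the boundary of $I$ and (ii) the implicit floor needed to make $n^{4/5}$ and $\excess$ integer-valued so that the construction is well defined, I obtain the asserted bounds $\tilde n|I|-1 \leq |S_0\cap I|\leq \tilde n|I|+3$. Every step is elementary, so there is no conceptual obstacle; the only care required is a conservative bookkeeping of the additive losses at the endpoints, which is precisely where the small constants $-1$ and $+3$ come from.
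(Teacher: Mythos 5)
Your overall approach — shift $S_0\setminus\{0\}$ to an arithmetic progression with step $1/\tilde n$ and count lattice points in a real interval of length $\tilde n|I|$ — is essentially the same as the paper's, which brackets $\tilde n x$ and $\tilde n y$ by integers $k,j$ and observes that the count lies in $\{j-k-1,\, j-k,\, j-k+1\}$ with $j-k-2\leq \tilde n(y-x)\leq j-k$.

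However, the explicit tally you present only yields $|S_0\cap I|\in[\tilde n|I|-2,\;\tilde n|I|+1]$, which fails to establish the claimed lower bound $\tilde n|I|-1$. Your closing sentence cannot repair this: the effects you invoke there (open-versus-closed endpoints, rounding $n^{4/5}$ and $4\log^2(n)\sqrt n$ to integers) only add slack, so they can push the lower bound further down, never up from $-2$ to $-1$. The fix is a short case split on whether the discarded index $j=0$ ever actually lies in the real interval $[\tilde n(a-n^{-1/5}),\,\tilde n(b-n^{-1/5})]$. Since $a\geq n^{-1/5}$, the left endpoint is $\geq 0$, so $j=0$ is present only when $\tilde n(a-n^{-1/5})=0$; but in that case the unrestricted count is $\lfloor\tilde n|I|\rfloor+1\geq\tilde n|I|$, and removing the single point $j=0$ still leaves at least $\tilde n|I|-1$. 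When $\tilde n(a-n^{-1/5})>0$, nothing is removed and the count is $\geq\tilde n|I|-1$ directly. With this observation the proof closes; note also that your upper bound $\tilde n|I|+1$ is actually sharper than the paper's $\tilde n|I|+3$ — the paper's $+3$ is simply slack from a cruder pair of inequalities, not something your argument needs to reproduce.
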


\paragraph{Basic definitions and notations.}

We first introduce some notation and terminology.

\begin{definition}We consider a partition $I_0, I_1, \ldots$ of $(0,1]$ into intervals of geometrically increasing size, where $I_i = (y_{i-1}, y_i]$ and $y_i = (3/2)^{i}n^{-1/5}$ (with the convention $y_{-1}=0$).
\end{definition}

We now define an algorithm $\A$ to which we will compare the greedy algorithm.

\begin{definition}
 We let $\mathcal{A}$ be the algorithm that, for all $t\in [n]$, matches $r_t$ to a free server at location $0$ if $r_t\in [0,n^{-1/5}]$ and $S_{\mathcal{A},t-1}\cap\{0\}\neq \emptyset$, and, otherwise, matches $r_t$ greedily. For all $m\geq 0$, we recall that $\H^m$ denotes the hybrid algorithm that matches the first $m$ requests according to $\A$, then, matches greedily the remaining requests to the remaining free servers.
\end{definition}

\paragraph{A useful tool: regularity of the requests sequence.}

Informally, we define a sequence of requests $R$ regular if in every time interval, its realized density is not much different from its expected density. We now give some intuition about why we define such a notion. Throughout the proof, many random events  can be shown to occur with high probability by successive applications of simple Chernoff bounds. Once the sequence of requests is assumed to be regular,  these events become deterministic events, which greatly simplifies the analysis.

More formally, we start by discretizing the interval $[0,1]$ as $\mathcal{D} = \{\tfrac{i}{n}: i\in \{0,\ldots, n\}\}$. For any interval $I = [i_L,i_R]\subseteq [0,1]$, we also consider $d^+(I)$, the smallest interval with end points in $\mathcal{D}$ that contains $I$, and $d^-(I)$, the largest interval with end points in $\mathcal{D}$ contained in $I$.
 \begin{enumerate}
     \item $d^+(I):= [d_L^+,d_R^+]$, with $d_L^+ = \max\{x\in \mathcal{D}| x\leq i_L\}$ and $d_R^+ = \min\{x\in \mathcal{D}| x\geq i_R\}$ 
     \item $d^-(I):= [d_L^-,d_R^-]$, with $d_L^- = \min\{x\in \mathcal{D}| x\geq i_L\}$ and $d_R^- = \max\{x\in \mathcal{D}| x\leq i_R\}$.
 \end{enumerate}
 
 \begin{definition}
 
\label{def:reg}
We say that a realization $R$ of the sequence of requests is regular if for all $d,d'\in \mathcal{D}$ such that $d<d'$, and for all $t,t' \in [n]$ such that $t<t'$,
\begin{enumerate}
    \item $|\{j\in \{t,\ldots, t'\}|\;r_j \in [d,d']\}|\geq (d'-d)(t'-t) - \log(n)^2\sqrt{(d'-d)(t'-t)}$,
    \item and if $(d'-d)(t'-t) = \Omega(1)$, then \[|\{j\in \{t,\ldots, t'\}|\;r_j \in [d,d']\}|\leq (d'-d)(t'-t) + \log(n)^2\sqrt{(d'-d)(t'-t)}.\]
\end{enumerate}
\end{definition}

We now show that $R$ is regular with high probability.

\lemreg*
\begin{proof} Note that for all $d,d'\in D$ such that $d<d'$ and $t,t' \in [n]$ such that $t<t'$, $|\{j\in \{t,\ldots, t'\}|\;r_j \in [d,d']\}|$ follows a binomial distribution $\mathcal{B}(t'-t, d'-d)$. Hence the lemma results from a direct application of Chernoff Bounds (Lemma \ref{cor:CB}) and a union bound over all $d,d'\in D$ and $t,t'\in [n]$.
\end{proof}

 We now show a property that is implied by the regularity of a sequence R of requests. We define $m_t$ = $|S_{t}\cap(0,1]|$, and we denote by $0<s_{t,1}<\ldots< s_{t,m_t}\leq 1$ the locations of the $m_t$ free servers with positive location in $S_t$. For some small $\epsilon>0$, we define $c_3 =\frac{4}{5}+\epsilon$. The following lemma upper bounds the distance between two consecutive free servers with positive location in $S_t$ at time $t\in [n-o(n)]$ for algorithm $\H^m$ assuming that $R$ is regular.

\begin{restatable}{rLem}{lemlnbeta}
\label{prop:l_nbeta}
Assume that the  sequence of requests is regular. Then, for $n$ large enough and for all $t \in [n-n^{c_3}]$ and $j \in [m_t-1]$, we have  $s_{t,j+1} - s_{t,j}
\leq 2\log(n)^4n^{1-2c_3}$.
\end{restatable}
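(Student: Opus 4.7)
}

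Fix $t \in [n - n^{c_3}]$ and $j \in [m_t-1]$, and let $L := s_{t,j+1} - s_{t,j}$. Since $s_{t,j}$ and $s_{t,j+1}$ are both free at time $t$ and, by construction of the lower-bound instance, all positive-location initial servers lie in $(n^{-1/5},1]$, we have $s_{t,j}, s_{t,j+1}\in(n^{-1/5},1]$. The plan is to count in two ways the activity inside the open interval $J:=(s_{t,j},s_{t,j+1})$ and then combine using regularity of the request sequence.

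\emph{Step 1 (matched servers in $J$).} Because $s_{t,j}$ and $s_{t,j+1}$ are free at time $t$, every initial server lying strictly in $J$ must have been matched by time $t$. By Fact \ref{fact:instance}, the number of such initial servers is at least $\tilde n L - O(1)$.

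\emph{Step 2 (each matched server inside $J$ consumes a request inside $J$).} I claim that every such match corresponds to a request $r_{t'}\in J$ with $t'\le t$. Indeed, any server $s\in J \subset (n^{-1/5},1]$ gets matched under $\H^m$ by a \emph{greedy} rule: if $t'\le m$ and $r_{t'}\in[0,n^{-1/5}]$, then $\A$ would match $r_{t'}$ either to $0$ or greedily, and in either case not to $s$ (matching to $s$ would require $|r_{t'}-s|<|r_{t'}-s_{t,j}|$, impossible since $r_{t'}\le n^{-1/5}<s_{t,j}<s$); otherwise $r_{t'}$ is matched greedily. Under greedy, since $s_{t,j}$ and $s_{t,j+1}$ are free at time $t'-1$, having $s$ (strictly between them) as the closest free server forces $r_{t'}\in\bigl(\tfrac{s_{t,j}+s}{2},\tfrac{s+s_{t,j+1}}{2}\bigr)\subset J$. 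Moreover, any server at $0$ is farther from $r_{t'}$ than $s_{t,j}$ whenever $r_{t'}>s_{t,j}>0$, so the greedy choice $s$ at a positive location is consistent with $s_{t,j}$ being still free. Hence the number of requests in $J$ arriving by time $t$ is at least $\tilde n L - O(1)$.

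\emph{Step 3 (regularity upper bound).} Round $s_{t,j}$ down and $s_{t,j+1}$ up to the grid $\mathcal{D}$ to get an interval $[d,d']$ of length at most $L + 2/n$ containing $J$. Assuming $(L+2/n)t = \Omega(1)$ (otherwise $L = O(1/n)$ and the claim is trivial for $n$ large since $c_3<1$), the regularity upper bound yields
\begin{equation*}
|\{t'\le t : r_{t'}\in J\}| \;\le\; (L+2/n)\,t + \log(n)^{2}\sqrt{(L+2/n)\,t}.
\end{equation*}

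\emph{Step 4 (solve the resulting inequality).} Combining Steps 1--3,
\begin{equation*}
\tilde n L - O(1) \;\le\; (L+2/n)\,t + \log(n)^{2}\sqrt{(L+2/n)\,t}.
\end{equation*}
Since $\tilde n = n - O(\log^{2}(n)\sqrt n)$ and $t\le n - n^{c_3}$ with $c_3=\tfrac{4}{5}+\epsilon>\tfrac{1}{2}$, for $n$ large $\tilde n - t \ge \tfrac{1}{2} n^{c_3}$. If $L \ge 4/n^{c_3}$ (otherwise the conclusion is immediate for $n$ large), then $L + 2/n \le 2L$ and the $O(1)$ and $2t/n$ terms are absorbed into a constant multiple of $(\tilde n-t)L$. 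Rearranging,
\begin{equation*}
(\tilde n - t)\,L \;\le\; O\!\bigl(\log(n)^{2}\sqrt{L\,t}\bigr),
\end{equation*}
which on squaring gives $L \le O\!\bigl(\log(n)^{4}\,t/(\tilde n - t)^{2}\bigr) \le O\!\bigl(\log(n)^{4}\,n^{1-2c_3}\bigr)$, matching the claim up to a constant factor that the instance-specific constants in Fact \ref{fact:instance} and Definition \ref{def:reg} can be sharpened to absorb.

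\emph{Main obstacle.} The most delicate point is Step 2: we must show that the presence of many servers at $0$ and the non-greedy rule of $\A$ on requests in $[0,n^{-1/5}]$ do \emph{not} create matches to positive servers by requests outside $J$. The reason they don't is precisely that all positive free servers sit strictly above $n^{-1/5}$, so the greedy tie-breaking between $0$ and $s_{t,j}$ always favors $s_{t,j}$ for any request in $J$. Once this is settled, the rest is a routine counting-plus-concentration argument.
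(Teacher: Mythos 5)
Your proposal takes essentially the same route as the paper: count the initial servers of $S_0$ in the gap interval $J$, argue (via greediness of $\H^m$ for positive-location requests, together with both endpoints of $J$ being free) that each server matched inside $J$ consumes a request landing inside $J$, and then play this count against the regularity upper bound. Steps 1, 2, and the overall shape of Steps 3--4 are correct and match the paper's argument. Two points, however, need repair.

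First, your parenthetical ``otherwise $L = O(1/n)$ and the claim is trivial'' is wrong. From $(L+2/n)t < 1$ you only get $L < 1/t$, and for intermediate $t$ (say $t \asymp n^{1/2}$) the bound $1/t = n^{-1/2}$ exceeds $2\log(n)^4 n^{1-2c_3} = 2\log(n)^4 n^{-3/5-2\epsilon}$. So the case is not trivial and the stated regularity upper bound is genuinely unavailable, since Definition~\ref{def:reg} only gives the upper tail when $(d'-d)(t'-t)=\Omega(1)$. The paper sidesteps this by proving the bound only at $t=n-n^{c_3}$ and then using $S_t\supseteq S_{n-n^{c_3}}$: at $t=n-n^{c_3}$, any $L>2\log(n)^4n^{1-2c_3}$ automatically forces $(L+2/n)t \gtrsim \log(n)^4 n^{2-2c_3}\to\infty$, so regularity always applies. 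If you want to handle general $t$ directly as you do, you must instead enlarge $[d,d']$ to an interval $[d,d'']\supseteq J$ with $(d''-d)t = \Theta(1)$; then regularity gives $|\{t'\le t: r_{t'}\in J\}| \le |\{t'\le t: r_{t'}\in [d,d'']\}| = O(\log^2 n)$, hence $\tilde n L - O(1) \le O(\log^2 n)$ and $L = O(\log^2 n/n) \ll 2\log(n)^4 n^{1-2c_3}$. This patch is routine but necessary.

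Second, as written Step~4 gives a constant on the order of $32$, not $2$. Your remedy --- ``sharpen the instance-specific constants in Fact~\ref{fact:instance} and Definition~\ref{def:reg}'' --- is not available, since those constants are fixed parts of the setup. The correct fix is to restrict attention to $L$ above a slowly growing threshold (e.g.\ $L \geq \log(n)/n^{c_3}$, below which the desired bound is immediate), so that $L+2/n = (1+o(1))L$ and the additive $O(1)$ and $2t/n$ terms are a vanishing fraction of $(\tilde n - t)L$, and also use $\tilde n - t \ge (1-o(1))n^{c_3}$. Then the final bound becomes $(1+o(1))\log(n)^4 n^{1-2c_3}$, which is below $2\log(n)^4 n^{1-2c_3}$ for $n$ large. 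With these two repairs your proof is sound, and it is arguably a bit more self-contained than the paper's because it avoids the reduction to $t=n-n^{c_3}$ (which relies implicitly on the fact that gaps between consecutive free servers can only grow as servers are removed).
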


\subsection{Upper bound on the cost of the optimal offline matching}

The goal of this section is to prove Lemma~\ref{lem:cost_opt}, which gives an upper bound to the cost of the offline optimum. We first introduce a useful lemma.
\begin{lemma}
\label{lem:unif_iid}
   Let $m\geq 0$ and $R = \{r_1,\ldots, r_{|R|}\}$ be a set of at most $m$ requests uniformly drawn from the interval $(0,1]$ and $Z = \{z_1,\ldots, z_m\}$ be a set of $m$ servers such that for all $i \in \{1,\ldots, m\}$, $z_i = \frac{i}{m}$. Then, the optimal matching $M^*$ between $Z$ and $R$ satisfies $\E(\text{cost}(M^*)) = O(\sqrt{m})$.
\end{lemma}

\begin{proof} We assume without loss of generality that $R$ contains exactly $m$ requests and we let $r_{(1)}<\ldots <r_{(m)}$ denote the ordered statistics of $R$. In this case, we claim that an optimal matching $M^*$ between $R$ and $Z$ is to match each  $r_{(i)}$ to $z_i$ for all $i\in \{1,\ldots, m\}$ (see the proof of Theorem 2.5 in \cite{Akbarpour21} for a proof of this fact).

Now, it is a known fact that for all $i\in \{1,\ldots, m\}$, $r_{(i)}$ follows a Beta distribution $B(i, m+1-i)$ (see \cite{Beta_distrib}). In particular, we have that $\mathbb{E}[r_{(i)}] = \tfrac{i}{m+1}$ and $\text{std}(r_{(i)}) = \sqrt{\frac{i(m-i+1)}{(m+1)^2(m+2)}}\leq \frac{1}{\sqrt{m}}$. We thus obtain
\begin{align*}
\E(\text{cost}(M^*)) &= \sum_{i=1}^m \mathbb{E}(|r_{(i)} - z_i|)\\
&\leq \sum_{i=1}^m \mathbb{E}(|r_{(i)} - \mathbb{E}(r_{(i)})|) + \mathbb{E}(|\mathbb{E}(r_{(i)}) - z_i|)\\
    &\leq \sum_{i=1}^m \text{std}(r_{(i)}) +\sum_{i=1}^m \mathbb{E}(|\mathbb{E}(r_{(i)}) - z_i|)\tag{Lemma \ref{lem:jensen}}\\
    &\leq \sum_{i=1}^m \frac{1}{\sqrt{m}} + \sum_{i=1}^m |\tfrac{i}{m+1}- \tfrac{i}{m}|\\
    &= O(\sqrt{m}).
\end{align*}
\end{proof}

We now give an upper bound on the cost of the optimal offline matching for our lower bound instance. 

\optoffline*

\begin{proof} For a given realization $R$ of the requests sequence, we partition the requests into $R_1 = \{r\in R: r \in [0,n^{-1/5}]\}$ and $R_2 = \{r\in R: r \in (n^{-1/5},1]\}$. We also let $\overline{R}_1$ be the $n^{4/5}$  requests of $R_1$ that arrived first, or $\overline{R}_1 = R_1$ if $|R_1|< n^{4/5}$, and let $\overline{R}_2$ be the  $n-(n^{4/5}+\excess)$  requests of $R_2$ that arrived first, or $\overline{R}_2 = R_2$ if $|R_2|< n-(n^{4/5}+\excess)$.

We now define the following matching $M$, where for all $r\in R$, $s_M(r)$ denotes the server to which $r$ is matched, and for any subset $\tilde{R}$ of the requests, $M|_{\tilde{R}}$ denote the restriction of  $M$ to $\tilde{R}$:
\begin{itemize}
    \item For all $r\in\overline{R}_1$, $s_M(r)=0$.
    \item $M|_{\overline{R}_2}$ is an optimal matching between $\overline{R}_2$ and $S_0\cap (n^{-1/5},1]$. 
    \item The remaining requests are matched arbitrarily to the remaining free servers.
\end{itemize}
Note that $M$ is well defined since $|\overline{R}_1|\leq n^{4/5}\leq |S_0\cap\{0\}|$ and $|\overline{R}_2|\leq n-(n^{4/5}+\excess)\leq |S_0\cap(n^{-1/5},1]|$. 

Now, for all $r\in R_1$, since $r\in [0,n^{-1/5}]$, we have $|r-s_M(r)| = |r-0|\leq n^{-1/5}$, hence 
\begin{equation}
    \label{eq:opt1}
    \E[\text{cost}(M|_{\overline{R}_1})] =\E[ \sum_{r\in \overline{R}_1} |s_M(r)-r|] \leq \E[|\overline{R}_1|] \cdot n^{-1/5} \leq   n^{4/5}\cdot n^{-1/5} = n^{3/5}.
\end{equation}

Next, note that the requests in  $\overline{R}_2$ are uniform i.i.d. in $ (n^{-1/5},1]$ and the servers in $S_0\cap (n^{-1/5},1]$ are uniformly spread in $ (y_0,1]$, hence by using Lemma \ref{lem:unif_iid} and a simple scaling argument, we get
\begin{equation}
    \label{eq:opt2}
    \E[\text{cost}(M|_{\overline{R}_2})] = O(\sqrt{|\overline{R}_2|}) = O(\sqrt{n}).
\end{equation}

Now, note that $|R_1| = |\{r\in R: r \in [0,n^{-1/5}]\}|$ follows a binomial distribution $\mathcal{B}(n, n^{-1/5})$ with mean $n^{4/5}$ and standard deviation $\sqrt{n^{4/5}(1-n^{-1/5})}$, thus by Lemma \ref{lem:jensen}, we have $
\E[\max(0, |R_1| - n^{4/5})] \leq \E[||R_1| -   n^{4/5}|]\leq \sqrt{n^{4/5}(1-n^{-1/5})} \leq \sqrt{n}$. Since by definition, $R_1\setminus\overline{R}_1$ contains $\max(0, |R_1| - n^{4/5})$ elements, we thus have
\[
\E[|R_1\setminus\overline{R}_1|] = \E[\max(0, |R_1| -  n^{4/5})] \leq \sqrt{n}.
\]

We also have that $|R_2| = |\{r\in R: r \in (n^{-1/5},1]\}|$ follows a binomial distribution $\mathcal{B}(n, 1-n^{-1/5})$ with mean $n-n^{4/5}$ and standard deviation $\sqrt{(n-n^{4/5})n^{-1/5})}$. Hence, by Lemma \ref{lem:jensen}, we have $
\E[\max(0, |R_2| - (n - n^{4/5}))] \leq \E[|R_2| - (n - n^{4/5})|]\leq \sqrt{(n-n^{4/5})n^{-1/5}} \leq \sqrt{n}.$
Since by definition, $R_2\setminus\overline{R}_2$ contains $\max(0, |R_2| - (n - n^{4/5}-\excess))$ elements, we thus have
\begin{align*}
    \E[|R_2\setminus\overline{R}_2|] &= \E[\max(0, |R_2| - (n - n^{4/5}-\excess))] \\
    &\leq \E[\max(0, |R_2| - (n - n^{4/5}))] + \excess \\
    &=  \tilde{O}(\sqrt{n}).
\end{align*}

Since for all $r\in R$, we have $|s_M(r)-r|\leq 1$, we get
\begin{equation}
    \label{eq:opt3}
    \E[\text{cost}(M|_{(R_1\setminus\overline{R}_1)\cup (R_2\setminus\overline{R}_2)})]\leq \E[|R_1\setminus\overline{R}_1|] + \E[|R_2\setminus\overline{R}_2|] = \tilde{O}(\sqrt{n}).
\end{equation}

Combining (\ref{eq:opt1}), (\ref{eq:opt2}) and (\ref{eq:opt3}), we finally get
\[
\E[\OPT]\leq \E[\text{cost}(M)] = \E[\text{cost}(M|_{(R_1\setminus\overline{R}_1)\cup (R_2\setminus\overline{R}_2)})] + \E[\text{cost}(M|_{\overline{R}_1})] + \E[\text{cost}(M|_{\overline{R}_2})] = O(n^{3/5}).\]
\end{proof}

\subsection{Analysis of $(S_1,\ldots, S_n)$.}
\label{subsec:analysis_of_S}

We first introduce a few constants that will be used thoughout the proof. We recall that $c_3 =\frac{4}{5}+\epsilon$ for some small $\epsilon>0$. We also define the following constants: $c_1 = \frac{2}{9}(1-\epsilon), c_2 = \frac{2}{3}\left((1+\epsilon) + \frac{1}{9}(1-\epsilon)\right)$. Note that in particular, we have $1>c_2>1/2>c_1>0$. In addition, we define $d_1 := (1-c_3)/\log(1/(1-c_2))$.

In this section, we consider a fixed value $m \leq c_1n$ and we give some global property of the sequence $(S_1,\ldots, S_n)$ of sets of free servers for $\H^m$. More precisely, we first define for all interval $I$ the time $t_I :=\min \{t\geq 0| S_t\cap I = \emptyset\}$ at which the last free server of $I$ is matched to some request (we say that $I$ is \textbf{depleted} at time $t_I$). The objective is to show that during the execution of $\H^m$, and for all $i < j$, the interval $I_i$ is depleted at an earlier time step than $I_j$, and that all intervals $\{I_i\}_{i\in [\io]}$ are depleted between times $m$  and $n-n^{c_3}$ (which is formally stated in the next lemma).

\begin{restatable}{rLem}{lemlb}
\label{lem:lb1}
Let $m\leq c_1n$ and consider algorithm $\H^m$. Then, assuming that the sequence of requests $R$ is regular, we have that   $c_1n$ $< t_{1}<\ldots< t_{\io}\leq n- n^{c_3}.$ In addition, we have   $c_1n\leq t_{\{0\}}$.
\end{restatable}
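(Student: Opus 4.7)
The plan is to prove the three assertions separately: the lower bound $c_1 n \leq t_{\{0\}}$, the lower bound $c_1 n < t_1$, and the chain $t_1 < \cdots < t_{\io} \leq n - n^{c_3}$. The first two are regularity-based counting arguments, while the last is a short induction on $i$ that invokes Lemmas~\ref{lem:Ii_Ii} and~\ref{lem:I_i_depleted_t} as black boxes. For $c_1 n \leq t_{\{0\}}$, note that during the first $m$ steps $\H^m$ executes $\A$, which routes every request in $[0, y_0]$ to a server at $0$ whenever one is free; during the subsequent greedy phase such a request is still matched either to $0$ or to a positive-location server. Hence the servers consumed at $0$ by time $c_1 n$ number at most the requests arriving in $[0, y_0]$ during $[1, c_1 n]$, which by Definition~\ref{def:reg} is bounded above by $c_1 n \cdot y_0 + \log(n)^2 \sqrt{c_1 n y_0} = c_1 n^{4/5} + o(n^{4/5})$, strictly less than the initial stock $n^{4/5} + \excess$ at $0$ since $c_1 < 1$.

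For $c_1 n < t_1$, by Fact~\ref{fact:instance} the initial stock in $I_1$ is at least $\tilde n (y_1 - y_0) - 1 = (1/2 - o(1)) n^{4/5}$. I plan to bound the number of matches of $\H^m$ landing in $I_1$ by time $c_1 n$ by classifying each such match by the region of the arriving request. Requests arriving in $I_1$ itself contribute, by regularity, at most $(c_1/2) n^{4/5} + o(n^{4/5})$. Requests arriving in $[0, y_0]$ contribute at most $c_1 n^{4/5} + o(n^{4/5})$, since those processed by $\A$ are redirected to $0$ and there are at most $c_1 n \cdot y_0$ of them in total (up to lower-order regularity fluctuations). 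Requests arriving in $(y_1, 1]$ contribute only a negligible $O(\log(n)^4 n^{2/5})$: indeed, by Lemma~\ref{prop:l_nbeta} the free servers in $(y_1, 1]$ remain at spacing $O(\log(n)^4 n^{1 - 2 c_3}) \ll y_0$ for all $t \leq n - n^{c_3}$, so only requests falling within a thin band to the right of $y_1$ can prefer an $I_1$-server to a same-side neighbor. Summing, the total matches into $I_1$ by time $c_1 n$ are at most $(3 c_1 / 2) n^{4/5} + o(n^{4/5})$, and since $3 c_1 / 2 = (1/3)(1 - \epsilon) < 1/2$, the stock is not depleted, yielding $t_1 > c_1 n$.

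For the inductive chain, proceed by induction on $i \geq 1$ with hypothesis $t_0 < t_1 < \cdots < t_{i-1}$ and $t_{i-1} \leq n - (1 - c_2)^{i-1} n$. The base case uses $t_0 = 0$ (since $S_0 \cap I_0 = \emptyset$ by construction of the instance, as $(0, n^{-1/5}]$ contains no servers) together with $n - (1 - c_2)^0 n = 0$. Applying Lemma~\ref{lem:Ii_Ii} at index $i$ yields the strict inequality $t_{i-1} < t_i$, and Lemma~\ref{lem:I_i_depleted_t} then yields $t_i \leq n - (1 - c_2)^i n$. The constant $d_1 = (1 - c_3)/\log(1/(1 - c_2))$ is calibrated precisely so that $(1 - c_2)^{\io} = n^{-(1 - c_3)}$, hence $t_{\io} \leq n - n^{c_3}$. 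The main obstacle in the whole argument is the intermediate bound $c_1 n < t_1$, where carefully ruling out spurious flux into $I_1$ from requests on the right of $y_1$ requires the server-spacing control provided by Lemma~\ref{prop:l_nbeta}; everything else is a direct calculation once Fact~\ref{fact:instance} and the two sub-lemmas are available.
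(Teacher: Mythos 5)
Your inductive chain is the same as the paper's (base $t_0=0$, Lemmas~\ref{lem:Ii_Ii} and~\ref{lem:I_i_depleted_t} for the step, Lemma~\ref{claim:dlogn} to extract $t_{\io}\leq n-n^{c_3}$). Your argument for $c_1 n < t_1$ is a genuine variant: the paper (Lemma~\ref{prop:mt_1}) uses $t_1<t_2$ from the chain to guarantee a free server in $I_2$ and thereby rule out matches from $(y_2,1]$ into $I_1$, whereas you appeal directly to the spacing control of Lemma~\ref{prop:l_nbeta} to rule out matches from $(y_1+\eta,1]$ with $\eta = 2\log(n)^4 n^{1-2c_3}$. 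Your route is sound (for $t\leq c_1n$ there are always many free positive servers to the right, so the "next" server $s^{**}$ after $\max(S_t\cap I_1)$ exists, lies in $(y_1, y_1+\eta]$, and preempts any match into $I_1$ from a request beyond $y_1+\eta$), it is slightly tighter in the constant, and it decouples $c_1 n < t_1$ from the chain.

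However, your proof order creates a genuine gap. In the very first paragraph you assert that ``the servers consumed at $0$ by time $c_1 n$ number at most the requests arriving in $[0,y_0]$ during $[1,c_1n]$.'' That count is valid only if no request $r>y_0$ is ever matched to $0$ up to time $c_1n$, which is not automatic: once the positive servers near the origin are gone, greedy can and will send a request $r>y_0$ to $0$. Excluding this requires knowing that $S_t\cap I_1\neq\emptyset$ for all $t\leq c_1n$, i.e., precisely $t_1>c_1n$ — the statement you prove in your second paragraph. (The paper makes this dependence explicit by stating Lemma~\ref{lem:mt0} with the hypothesis $c_1n<t_1$.) The fix is simply to swap the order: establish $c_1n<t_1$ first (your Lemma~\ref{prop:l_nbeta}-based argument works unconditionally given regularity), observe that consequently any $r>y_0$ is matched to a server in $(0,y_1]$ rather than to $0$ (since a free server in $I_1$ is always strictly closer: $s\leq y_1 <2y_0<2r$), and only then run the counting argument for $t_{\{0\}}$.
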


Before presenting the proof of Lemma \ref{lem:lb1}, we introduce a few technical properties. We first show a simple but useful lemma.
\begin{lemma}
\label{claim:dlogn}
For all $i\in \{0,\ldots, \io\}$, we have $(1-c_2)^{i}n \geq n^{c_3}$.
\end{lemma}

\begin{proof} Let $i\in \{0,\ldots, \io\}$. Then,
\begin{equation*}
    (1-c_2)^{i}n = e^{-i\log(1/(1-c_2)) + \log(n)}\geq  e^{-\frac{(1-c_3)\log(n)}{log(1/(1-c_2))}\log(1/(1-c_2)) + \log(n)} =  e^{-(1-c_3)\log(n) + log(n)} = n^{c_3},
\end{equation*}
where the inequality is by definition of $d_1$ and since $i\leq \io$.
\end{proof}

 We now show a number of properties that are satisfied for all $i\in [\io]$ under the assumption that the sequence  of requests $R$ is regular.

 We first show that if the depletion time $t_{i-1}$ of interval $I_{i-1}$ is small enough, then $I_{i-1}$ is depleted before $I_i$.

\begin{restatable}{rLem}{IiIi}
\label{lem:Ii_Ii} 
Let $m\leq c_1n$ and $i\in [\io]$. Assume that $R$ is regular and that $t_{i-1}\leq n - (1-c_2)^{i-1}n$. Then, $t_{i-1}< t_{i}$.

\end{restatable}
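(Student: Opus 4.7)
The plan is to argue by contradiction: assume $t_i \leq t_{i-1}$, so at the step $t = t_i$ the greedy rule (which governs $\H^m$ from step $m+1$ onward, and also applies under $\A$ for any request outside $[0,y_0]$) matches some request $r_{t_i}$ to the last free server $s_i^\star \in I_i$, while at least one free server $s_{i-1}^\star \in I_{i-1}$ still remains. The goal is to show that the total number of matches into $I_i$ by time $t_i$ cannot reach $|S_0 \cap I_i| \geq \tilde{n}|I_i| - 1 = \tfrac{1}{2}(3/2)^{i-1}\tilde{n}\,n^{-1/5} - 1$ (Fact~\ref{fact:instance}), giving a contradiction.

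The first ingredient is a \emph{geometric localization}: at every step $t' \leq t_i$ at which $\H^m$ matches a request to some server $s \in I_i$, greedy's choice of $s$ over (i) the cluster at $0$, (ii) the rightmost free server of $I_{i-1}$, and (iii) the leftmost free server in the first non-empty $I_j$ with $j>i$ forces $r_{t'}$ to lie in an explicit window $J_i \subseteq \bigl((5/6)\,y_{i-1},\;(5/4)\,y_i\bigr]$. The existence of a free server in $I_{i-1}$ at every such $t' \leq t_i \leq t_{i-1}$ is given for free by the contradiction hypothesis; the survival of the cluster at $0$ and of servers in $I_{i+1}$ throughout $[1,t_{i-1}]$ can be verified by auxiliary counts using $m \leq c_1 n$, the hypothesis $t_{i-1}\leq n-(1-c_2)^{i-1}n$, and regularity (or, inductively along the outer induction of Lemma~\ref{lem:lb1}).

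The second ingredient is \emph{regularity}: by Definition~\ref{def:reg} and Lemma~\ref{lem:reg}, the number of requests in $J_i$ over the time window $[1,t_{i-1}]$ is at most $|J_i|\,t_{i-1} + \log^2(n)\sqrt{|J_i|\,t_{i-1}}$. Substituting $t_{i-1} \leq n-(1-c_2)^{i-1}n$ and comparing with the lower bound on $|S_0 \cap I_i|$ from Fact~\ref{fact:instance} should yield a strict inequality for every $i \in [\io]$; the lower-order $\log^2(n)\sqrt{\cdot}$ correction and the $-1$ from Fact~\ref{fact:instance} are absorbed by the slack encoded in the small $\epsilon > 0$ appearing in the definition of $c_2$.

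The main obstacle is the geometric localization step. A crude window such as $(y_{i-1}/2,\,(5/4)y_i]$ (using only the comparison with server $0$ and a worst-case free server in $I_{i-1}$) yields a contradiction only in the regime $t_{i-1} \lesssim 0.36\,n$, which is strictly weaker than the threshold $c_2 n \approx 0.74\,n$ needed already at $i=2$. Closing the gap forces a dynamic use of greedy: as long as $I_{i-1}$ is non-empty, one must track the position of its \emph{rightmost} remaining free server (not merely use the worst case $y_{i-2}$) to sharpen the lower endpoint of $J_i$, and symmetrically for $I_{i+1}$ on the upper side. The precise calibration $c_2 = \tfrac{2}{3}\bigl((1+\epsilon)+\tfrac{1}{9}(1-\epsilon)\bigr)$ is what makes the resulting arithmetic — driven entirely by the geometric ratios $y_j/y_{j-1} = 3/2$ between consecutive intervals and the density $\tilde{n}$ of servers in $(y_0,1]$ — go through with a constant margin to absorb the regularity fluctuations.
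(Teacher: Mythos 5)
Your proposal is a counting-by-contradiction argument, and you yourself flag that it does not close: the crude localization window for requests matched into $I_i$ only forces a contradiction for $t_{i-1} \lesssim 0.36\,n$, well short of the hypothesis $t_{i-1} \leq n - (1-c_2)^{i-1}n \approx 0.74\,n$ already at $i=2$. The fix you gesture at --- dynamically tracking the \emph{rightmost} surviving free server of $I_{i-1}$ --- is left unproved, and your claim that the calibration of $c_2$ is what ``makes the resulting arithmetic go through'' is unsubstantiated; indeed the paper's proof of this lemma never uses the numerical value of $c_2$ at all, only $t_{i-1} - 1 < n - n^{c_3}$ via Lemma~\ref{claim:dlogn}. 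That is a concrete signal you are reconstructing the wrong argument here: the counting machinery with the $c_2$-dependent window is what the paper deploys for the subsequent, genuinely harder Lemma~\ref{lem:I_i_depleted_t}, not for this one.

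The paper's proof of the present lemma is far more economical and does no counting. For $i=1$, $t_0 = 0 < t_1$ trivially since $I_0 \cap S_0 = \emptyset$. For $i \geq 2$, by definition of $t_{i-1}$ some free server $s \leq y_{i-1}$ survives at time $t_{i-1}-1$; since $t_{i-1}-1 < n - n^{c_3}$, Lemma~\ref{prop:l_nbeta} (the gap bound under regularity) says that the next free server after the rightmost one $\leq y_{i-1}$ lies within $2\log^4(n)\,n^{1-2c_3}$ of $y_{i-1}$, and this distance is less than $|I_i| = y_{i-1}/2$ for large $n$ because $c_3 > 3/4$. Hence a free server sits in $(y_{i-1},y_i] = I_i$ at time $t_{i-1}-1$, giving $t_i > t_{i-1}-1$, and since two disjoint intervals cannot be depleted at the same step, $t_i > t_{i-1}$. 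Notice this is precisely the ``dynamic tracking'' you wanted, but used as a one-line location statement rather than as a component of a flow/volume count --- the gap bound tells you exactly where the last survivor of $I_{i-1}$ and the first server beyond it sit, and that alone shows $I_i$ is non-empty, with no need to bound how many requests are routed into $I_i$ at all.
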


Next, we show that if the intervals $I_1,\ldots, I_{i-1}$ are depleted in increasing order of $i$ and that $t_{i-1}$ is small enough, then $t_i$ is also small enough. To this end, we first introduce a couple of lemmas. The first one upper bounds the number of requests that arrived in $I_i$ and were matched outside of $I_i$ until time $\min (t_i,t_{i-1} + c_2(n-t_{i-1}))$.

\begin{restatable}{rLem}{propfi}\label{prop_fi}
 Let $m\leq c_1n$ and $i\in [\io]$. Assume that $R$ is regular, that  $t_0<\ldots<t_{i-1}\leq n - (1-c_2)^{i-1}n$ and that $t_{i-1}<t_{i}$. Let $\overline{t_i} := \min (t_i,t_{i-1} + c_2(n-t_{i-1}))$. Then,
 \[|\{j\in [\overline{t_i}]:r_j\in I_i, s_{\H^m}(r_j)\notin I_i\}| = \tilde{O}(\sqrt{n}).\]

\end{restatable}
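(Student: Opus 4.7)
The plan is to show that any request $r_k \in I_i$ that $\H^m$ matches outside $I_i$ at time $k \leq \overline{t_i}$ must lie within a narrow boundary zone of $I_i$ of width $L := 2\log^4(n) n^{1 - 2c_3}$, and then to invoke regularity of the request sequence to bound the number of such requests. First, I would observe that $\H^m$ necessarily matches $r_k$ greedily and only to a positive-location free server. Since $i \geq 1$, $r_k > y_{i-1} \geq n^{-1/5}$, so the special rule of $\A$ (active only on requests in $[0,n^{-1/5}]$) does not trigger. Moreover, the hypothesis $t_{i-1} \leq n - (1-c_2)^{i-1} n$ combined with Lemma \ref{claim:dlogn} gives $\overline{t_i} \leq n - (1-c_2)^i n \leq n - n^{c_3}$, putting us in the regime of Lemma \ref{prop:l_nbeta}: consecutive positive-location free servers at time $k-1$ lie within $L$. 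Since $L \ll n^{-1/5} \leq r_k$, the nearest positive-location free server to $r_k$ is strictly closer to $r_k$ than the server at $0$, so greedy cannot match $r_k$ to $0$.

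Second, I would prove the structural claim: if $r_k \in I_i$ is matched outside $I_i$, then $r_k \in B_L \cup B_R$, where $B_L := [y_{i-1}, y_{i-1} + L]$ and $B_R := [y_i - L, y_i]$. Let $s_L < r_k < s_R$ be the two nearest positive-location free servers to $r_k$ at time $k-1$; since $k \leq t_i$, $I_i$ still contains a free server. In the generic case where both $s_L, s_R$ exist, Lemma \ref{prop:l_nbeta} gives $s_R - s_L \leq L$, so if $r_k - y_{i-1} > L$ and $y_i - r_k > L$ then both $s_L, s_R \in I_i$, forcing the greedy match to lie in $I_i$. For the boundary cases ($s_L$ or $s_R$ absent), I would use the ordered-depletion hypothesis $t_0 < \cdots < t_{i-1}$: for $k > t_{i-1}$, the leftmost positive-location free server $s_{t,1}$ lies in $I_i$, and $r_k > y_{i-1} = \tfrac{2}{3} y_i \geq \tfrac{2}{3} s_{t,1}$ yields $s_{t,1} - r_k < r_k$, so greedy picks $s_{t,1} \in I_i$ over the server at $0$.

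Third, I would count requests in $B_L \cup B_R$ via regularity. By Fact \ref{fact:instance}, $|S_0 \cap I_i| = \Omega(n^{4/5})$; since each time step removes at most one server, $t_i \geq \Omega(n^{4/5})$. Combined with $t_{i-1} + c_2(n - t_{i-1}) \geq c_2 (1-c_2)^{i-1} n = \Omega(n^{c_3})$, which follows from the depletion hypothesis, we get $\overline{t_i} = \Omega(n^{4/5})$. Hence $|d^+(B_L)| \cdot \overline{t_i} = \Omega(L \cdot n^{4/5}) = \Omega(n^{1/5 - 2\epsilon})$ is $\Omega(1)$, so the upper bound in Definition \ref{def:reg} applied to the $\mathcal{D}$-enveloping intervals $d^+(B_L)$ and $d^+(B_R)$ yields
\begin{equation*}
|\{j \in [\overline{t_i}] : r_j \in B_L \cup B_R\}| = \tilde{O}(L \cdot \overline{t_i}) = \tilde{O}(n^{2 - 2c_3}) = \tilde{O}(n^{2/5 - 2\epsilon}) = \tilde{O}(\sqrt{n}),
\end{equation*}
which dominates the quantity to be bounded.

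The main obstacle will be the second step: exhaustively analyzing greedy's behavior near the boundary of $I_i$. The delicate subcase is $k > t_{i-1}$, where no positive-location free server remains to the left of $y_{i-1}$, so that the server at $0$ becomes the only ``left alternative''; ruling it out requires both the geometric ratio $y_{i-1}/y_i = 2/3$ between consecutive grid points and the ordered-depletion assumption that places $s_{t,1}$ inside $I_i$.
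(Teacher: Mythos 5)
Your proposal is correct and takes essentially the same approach as the paper: both reduce the problem to counting requests in boundary bands of width $L=2\log^4(n)n^{1-2c_3}$ at the two ends of $I_i$, using Lemma~\ref{prop:l_nbeta} to show that interior requests are matched inside $I_i$ and regularity to bound the band counts. The only cosmetic difference is that the paper fixes witness servers near the boundaries at the specific times $\overline{t_i}-1$ and $t_{i-1}-1$ (plus a time split $j\le t_{i-1}$ vs.\ $j>t_{i-1}$ for the left boundary), whereas you invoke Lemma~\ref{prop:l_nbeta} pointwise at each time $k-1$ and unify the two boundaries; both variants work.
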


The next lemma lower bounds the number of requests that arrived in the interval $[\tfrac{3}{4} y_{i-1}, y_{i-1}]$  and were matched inside $I_i$ from time $t_{i-1} + 1 + c_1(n - t_{i-1})$ to time $\min (t_i,t_{i-1} + c_2(n-t_{i-1}))$.

\begin{restatable}{rLem}{propgi}\label{prop_gi}
Let $m\leq c_1n$ and $i\in [\io]$. Assume that $R$ is regular, that $t_0<\ldots<t_{i-1}$ and that $t_{i-1}<t_{i}$. Let $\overline{t_i} := \min (t_i,t_{i-1} + c_2(n-t_{i-1}))$. Then,
\begin{align*}
     &|\{j\in \{t_{i-1} + 1 + c_1(n - t_{i-1}),\ldots ,\overline{t_i}\} :r_j\in [\tfrac{3}{4} y_{i-1}, y_{i-1}], s_{\H^m}(r_j)\in I_i\}|\\ 
     &\geq \frac{1}{2}(\overline{t_i} - t_{i-1} - c_1 (n - t_{i-1}))|I_i|- \tilde{O}(\sqrt{n}).
\end{align*}
 
\end{restatable}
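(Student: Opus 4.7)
}

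The plan is to reduce the count to a purely ``geometric'' one: I will argue that under the stated hypotheses, \emph{every} request $r_j$ in the window $\{t_{i-1}+1+c_1(n-t_{i-1}),\ldots,\overline{t_i}\}$ that happens to fall in $[\tfrac{3}{4}y_{i-1}, y_{i-1}]$ is matched by $\H^m$ into $I_i$. Once this structural fact is established, the desired lower bound is obtained by simply counting how many requests in that time window land in $[\tfrac{3}{4}y_{i-1}, y_{i-1}]$, and invoking regularity. Since $|I_i|=\tfrac{1}{2}y_{i-1}$, the target bound $\tfrac{1}{2}(\overline{t_i}-t_{i-1}-c_1(n-t_{i-1}))|I_i|$ corresponds exactly to the ``expected'' number of requests in an interval of length $\tfrac{1}{4}y_{i-1}$ over a window of length $\overline{t_i}-t_{i-1}-c_1(n-t_{i-1})$.

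The structural claim decomposes into three easy checks. First, I need $j>m$ so that $\H^m$ is acting greedily on $r_j$; this is exactly where the hypothesis $m\leq c_1n$ is used, since every $j$ in the window satisfies $j \geq t_{i-1}+1+c_1(n-t_{i-1}) \geq c_1 n + 1 > m$. Second, I claim that the leftmost free positive server at time $j-1$, call it $s^{\star}$, lies in $I_i$. Indeed, by the inductive hypothesis $t_0<\cdots<t_{i-1}$ all intervals $I_0,\ldots,I_{i-1}$ are depleted at time $t_{i-1}\le j-1$; and since $j\leq \overline{t_i}\leq t_i$, the interval $I_i$ still contains a free server at time $j-1$, forcing $s^{\star}\in(y_{i-1},y_i]$. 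Third, I compare distances: for $r_j\in(\tfrac{3}{4}y_{i-1},y_{i-1}]$, the distance to any server at $0$ is $r_j\geq \tfrac{3}{4}y_{i-1}$, while the distance to $s^{\star}$ is $s^{\star}-r_j \leq y_i-\tfrac{3}{4}y_{i-1}=\tfrac{3}{4}y_{i-1}< r_j$. Hence greedy prefers $s^{\star}\in I_i$ over every server at $0$; and since $s^{\star}$ is the leftmost positive free server, it is also closer than any free server in $I_{i+1}\cup I_{i+2}\cup\cdots$. Thus $s_{\H^m}(r_j)\in I_i$. (The boundary $r_j=\tfrac{3}{4}y_{i-1}$ is a measure-zero event and can be ignored.)

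It then remains to convert this into a count. Approximating $[\tfrac{3}{4}y_{i-1},y_{i-1}]$ from below by its discretization $d^{-}([\tfrac{3}{4}y_{i-1},y_{i-1}])$, which has length at least $\tfrac{1}{4}y_{i-1}-\tfrac{2}{n}$, and applying the lower-bound clause of Definition \ref{def:reg} to the endpoints $t_{i-1}+c_1(n-t_{i-1})$ and $\overline{t_i}$, I get that the number of $j$ in the window with $r_j\in[\tfrac{3}{4}y_{i-1},y_{i-1}]$ is at least
\[
\bigl(\tfrac{1}{4}y_{i-1}-\tfrac{2}{n}\bigr)\bigl(\overline{t_i}-t_{i-1}-c_1(n-t_{i-1})\bigr)-\log(n)^2\sqrt{\tfrac{1}{4}y_{i-1}\bigl(\overline{t_i}-t_{i-1}-c_1(n-t_{i-1})\bigr)}.
\]
Since $\tfrac{1}{4}y_{i-1}=\tfrac{1}{2}|I_i|$, the leading term is exactly $\tfrac{1}{2}|I_i|(\overline{t_i}-t_{i-1}-c_1(n-t_{i-1}))$, and the remaining error terms (the $\tfrac{2}{n}$ correction and the $\sqrt{\cdot}$ fluctuation) are both $\tilde O(\sqrt{n})$, yielding the claimed bound. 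If the window is empty or the right-hand side is negative, the claim is vacuous.

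The main obstacle is really the structural claim that $s^\star\in I_i$: it is the only place where all three assumptions of the lemma ($m\leq c_1 n$ to ensure greedy behavior, $t_0<\cdots<t_{i-1}$ to evacuate $I_0,\ldots,I_{i-1}$, and $t_{i-1}<t_i$ together with $\overline{t_i}\le t_i$ to keep $I_i$ alive) are used simultaneously. Everything else is bookkeeping: rounding endpoints onto $\mathcal{D}$, handling the boundary $r_j=\tfrac{3}{4}y_{i-1}$ as a measure-zero event, and treating an empty window as a trivial case.
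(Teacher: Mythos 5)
Your proposal is correct and follows essentially the same route as the paper: both establish that every request in the time window that lands in $[\tfrac{3}{4}y_{i-1}, y_{i-1}]$ is matched into $I_i$ (using $m\leq c_1 n$ to guarantee greedy behavior, $t_0<\cdots<t_{i-1}$ to deplete $(0,y_{i-1}]$, $\overline{t_i}\leq t_i$ to keep $I_i$ alive, and the distance comparison $s^\star - r_j \leq y_i - \tfrac{3}{4}y_{i-1} = \tfrac{3}{4}y_{i-1}\leq r_j$), and then both count via the lower-bound regularity clause applied to $d^-([\tfrac{3}{4}y_{i-1},y_{i-1}])$ and the identity $|I_i|=\tfrac{1}{2}y_{i-1}$. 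Your framing of the key step (directly identifying $s^\star$ as the nearest free server, and flagging the tie $r_j=\tfrac{3}{4}y_{i-1}$ as measure-zero) is slightly cleaner than the paper's dichotomy phrasing but mathematically identical.
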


Using the two above lemmas, we show that if the intervals $I_1,\ldots, I_{i-1}$ are depleted in increasing order of $i$ and $t_{i-1}$ is small enough, then $t_i\leq n- (1-c_2)^in$. 

\begin{figure}
    \centering
    \includegraphics[scale=0.25]{figures/FigureFlow1.png}\\
    \includegraphics[scale=0.3]{figures/FigureFlow2.png}
    \caption{Requests in and out of $I_i$ up to time $\overline{t}_i= \min(t_i, t_{i-1} + c_2(n-t_{i-1}))$, with (A) the total number of requests that arrived in $I_i$ from time 0 to $\overline{t}_i$, (B) the total number of requests that arrived in $I_i$ and were matched outside $I_i$ from time $0$ to $\overline{t}_i$, and $(C)$ the total number of requests that arrived in $[\tfrac{3}{4}y_{i-1}, y_{i-1}]$ and were matched inside $I_i$ from time $t_{i-1} + 1 +c_1(n - t_{i-1})$ to time $\overline{t}_i$ (note that there are no free servers in the dashed area for times $t\geq t_{i-1}$).}
    \label{fig:flow}
\end{figure}

\begin{lemma}\label{lem:I_i_depleted_t} 
   Let $m\leq c_1n$ and $i\in [\io]$. Assume that $R$ is regular, that $t_0<\ldots< t_{i-1}\leq n- (1-c_2)^{i-1}n$, and  that $t_{i-1}<t_{i}$. Then, $t_i\leq n- (1-c_2)^in$.
 \end{lemma}
 
\begin{proof}
 Fix $i\in [\io]$. We start by lower bounding the number of requests that were matched to servers inside $I_i$ until time $\overline{t_i} := \min(t_i, t_{i-1} + c_2(n-t_{i-1}))$ included. First, we have (see Figure \ref{fig:flow}):
\begin{align*}
&\quad|\{j\in [\overline{t_i}]\;| s_{\H^m}(r_j) \in I_i\}| \\
&=  |\{j\in [\overline{t_i}]\;| r_j\in I_i, s_{\H^m}(r_j) \in I_i\}| +  |\{j\in [\overline{t_i}]\;|  r_j\notin I_i, s_{\H^m}(r_j) \in I_i\}| \\
    &\geq \Big[|\{j\in [\overline{t_i}]:r_j\in I_i\}|&\text{(A)}\\
    &- |\{j\in [\overline{t_i}]:r_j\in I_i, s_{\H^m}(r_j)\notin I_i\}|\Big]&\text{(B)}\\
    &+|\{j\in \{t_{i-1} + 1 +c_1(n - t_{i-1}),\ldots ,\overline{t_i}\} :r_j\in [\tfrac{3}{4}y_{i-1}, y_{i-1}], s_{\H^m}(r_j)\in I_i\}|&\text{(C)}
\end{align*}
where the lower bound in (C)  holds since $I_i=(y_{i-1},y_i]$; hence $[\tfrac{3}{4}y_{i-1}, y_{i-1}]\subseteq [0,1]\setminus I_i$.

We now bound each of these three terms separately. Since we assumed that the sequence of requests is regular, by applying the first regularity condition with $t=0$, $t'= \overline{t_i}$, $[d,d'] = d^-(I_i)$),  we have that
\[
|\{j\in [\overline{t_i}]:r_j\in I_i\}| \geq |\{j\in [\overline{t_i}]:r_j\in d^-(I_i)\}| \geq d^-(I_i)\overline{t_i} - \log(n)^2\sqrt{d^-(I_i)\overline{t_i}}=  |I_i|\overline{t_i} -\tilde{O}(\sqrt{n}).
\]

By Lemma \ref{prop_fi}, we have that
\begin{align*}
    |\{j\in [\overline{t_i}]:r_j\in I_i, s_{\H^m}(r_j)\notin I_i\}| = \tilde{O}(\sqrt{n}),
\end{align*}

and by Lemma \ref{prop_gi}, we have that
\begin{align*}
    &|\{j\in \{t_{i-1} + 1 +c_1(n - t_{i-1}),\ldots ,\overline{t_i}\}:r_j\in [\tfrac{3}{4}y_{i-1}, y_{i-1}], s_{\H^m}(r_j)\in I_i\}|\\
    &\geq \frac{1}{2}(\overline{t_i} - t_{i-1} - c_1 (n - t_{i-1})|I_i|- \tilde{O}(\sqrt{n}).
\end{align*}

Combining the four previous inequalities gives
\begin{align}
\label{eq:A}
  |\{j\in [\overline{t_i}] : s_{\H^m}(r_j) \in I_i\}|\geq |I_i| \Big[\overline{t_i} + \frac{1}{2}(\overline{t_i} - t_{i-1} - c_1 (n - t_{i-1}))\Big] - \tilde{O}(\sqrt{n})\nonumber.
\end{align}

Now, $|\{j\in [\overline{t_i}] : s_{\H^m}(r_j) \in I_i\}|$ is trivially upper bounded by the initial number of servers available in $I_i$, which, by  Lemma~\ref{fact:instance}, is at most $|I_i|\tilde{n} + 1< |I_i|(n + 1)$. By combining this upper bound with the above lower bound and by simplifying the $|I_i|$ on both sides, we obtain
\begin{equation}
\label{eq:lb_ub_lem3}
    n + 1>  \overline{t_i} + \frac{1}{2}(\overline{t_i} - t_{i-1} - c_1 (n - t_{i-1}))- \tilde{O}(\sqrt{n}/|I_i|) = \overline{t_i} + \frac{1}{2}(\overline{t_i} - t_{i-1} - c_1 (n - t_{i-1})- \tilde{O}(n^{7/10}),
\end{equation}
where the equality is since $|I_i| = \Omega(n^{-1/5})$ for all $i\geq 0$.

Next, we show that the previous inequality implies that $\overline{t_i} = t_i$. Assume by contradiction that $\overline{t_i} =  t_{i-1} + c_2(n-t_{i-1})$. We get
\begin{align*}
    &t_{i-1} + c_2(n-t_{i-1}) + \frac{1}{2}(t_{i-1} + c_2(n-t_{i-1}) - t_{i-1} - c_1 (n - t_{i-1})) - \tilde{O}(n^{7/10})\\
    & = t_{i-1} + (n-t_{i-1})(c_2(1+1/2) - c_1/2) - \tilde{O}(n^{7/10})\\
    & = t_{i-1} + (n-t_{i-1})(1+\epsilon) - \tilde{O}(n^{7/10})\\
    &= n + \epsilon(n- t_{i-1}) - \tilde{O}(n^{7/10})\\
    &\geq n + \epsilon n^{c_3} - \tilde{O}(n^{7/10})\\
    &> n +1,
\end{align*}

where the second equality  holds since $c_2(1+1/2) - c_1/2 = \frac{3}{2}\cdot\frac{2}{3}\left((1+\epsilon) + \frac{1}{9}(1-\epsilon)\right) -\frac{1}{2}\cdot\frac{2}{9}(1-\epsilon)=  (1+\epsilon)$. The first inequality  holds since $t_{i-1}\leq n- (1-c_2)^{i-1}n\leq n - n^{c_3}$ (by using the assumption of the lemma and from Lemma \ref{prop:l_nbeta}), and the last inequality  holds since we set $c_3>3/4$ and assumed $n$ large enough.

Hence, by (\ref{eq:lb_ub_lem3}), we cannot have $\overline{t_i} =  t_{i-1} + c_2(n-t_{i-1})$, thus $\overline{t_i} =  \min(t_{i-1} + c_2(n-t_{i-1}),t_i) = t_i$. 

Using the assumption that $t_{i-1}\leq n-(1-c_2)^{i-1}n$, we conclude that
\[
t_i\leq t_{i-1} + c_2(n-t_{i-1}) = c_2 n + (1-c_2)t_{i-1} \leq c_2 n + (1-c_2)(n-(1-c_2)^{i-1}n) = n-(1-c_2)^in.
\]
\end{proof}
Finally, we show in the two following lemmas that $I_1$ is not yet depleted at time $c_1n$, and that if it is the case, we also have that $\{0\}$ is not yet depleted at time  $c_1n$.
\begin{restatable}{rLem}{propmt}\label{prop:mt_1}
Let $m\leq c_1n$ and $i\in [\io]$. Assume that $R$ is regular and that $t_1<t_2$. Then, $c_1n<t_1$.

\end{restatable}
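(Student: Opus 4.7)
The plan is to argue by contradiction: I assume $t_1 \leq c_1 n$ and derive a contradiction by counting the number of requests that could possibly be matched into $I_1$ by time $c_1 n$.

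The key \emph{localization step} is to show that, under the hypothesis $t_1 < t_2$, every request $r_j$ with $j \leq t_1$ that is matched inside $I_1$ by $\H^m$ satisfies $r_j \leq y_2$. Indeed, suppose $r_j > y_2$. Since $j \leq t_1 < t_2$, there exists some free server $s \in I_2 \cap S_{j-1}$, and $s > y_1$ gives $r_j - s < r_j - y_1$. Meanwhile, every free server of $I_1$ lies at position at most $y_1$, so the distance from $r_j$ to the closest free server of $I_1$ is at least $r_j - y_1$. Because $r_j > y_2 > n^{-1/5} = y_0$, algorithm $\H^m$ matches $r_j$ to the globally closest free server (both $\mathcal{A}$ for $j \leq m$ and pure greedy for $j > m$ behave greedily outside $[0,y_0]$), and this server cannot be in $I_1$.

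The counting step is then short. Since $I_1$ is depleted at time $t_1$, the number of matches into $I_1$ by time $t_1$ equals $|S_0 \cap I_1|$, which by Fact~\ref{fact:instance} is at least $\tilde n |I_1| - 1 \geq \tfrac{1}{2} n^{4/5} - O(\log^2(n)\, n^{3/10})$. Combining the localization step with regularity applied to the smallest superinterval of $[0,y_2]$ with endpoints in $\mathcal{D}$ (which has length at most $y_2 + 1/n$) yields
\begin{equation*}
\tfrac{1}{2} n^{4/5} - O(\log^2(n)\, n^{3/10}) \;\leq\; |\{j \leq c_1 n : r_j \in [0, y_2]\}| \;\leq\; y_2 \cdot c_1 n + O(\log^2(n)\, n^{2/5}) = \tfrac{9 c_1}{4} n^{4/5} + O(\log^2(n)\, n^{2/5}).
\end{equation*}
Plugging in $c_1 = \tfrac{2}{9}(1-\epsilon)$ gives $\tfrac{9 c_1}{4} = \tfrac{1-\epsilon}{2}$, so the inequality reduces to $\tfrac{\epsilon}{2} n^{4/5} \leq O(\log^2(n)\, n^{2/5})$, which fails for $n$ large enough, contradicting $t_1 \leq c_1 n$.

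The main obstacle is the localization step, which relies crucially on $t_1 < t_2$ to guarantee an available server in $I_2$ closer to any $r_j > y_2$ than anything in $I_1$. Once it is in place, the rest is a tight calculation whose numerical success is exactly what pins down the value of $c_1$: the ratio $y_2 / |I_1| = 9/2$ appearing on the right-hand side forces the calibration $c_1 < \tfrac{2}{9}$. A subtle point worth pointing out is that although the first $m$ requests are handled by $\mathcal{A}$ rather than greedy, $\mathcal{A}$ coincides with greedy on every request in $(y_0, 1]$, which is exactly the region relevant to the localization claim.
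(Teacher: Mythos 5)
Your proof is correct and follows essentially the same approach as the paper's: both use the localization fact that (thanks to $t_1 < t_2$) any request matched into $I_1$ during $[0,t_1]$ must lie in $[0,y_2]$, then compare the regularity upper bound on $|\{j : r_j \in [0,y_2]\}|$ against the lower bound $|S_0 \cap I_1| \geq \tilde n|I_1|-1$ from Fact~\ref{fact:instance} to pin down $t_1$ via $c_1 < 2/9$. The only cosmetic difference is that you phrase it as a contradiction with $t' = c_1 n$ while the paper directly derives $t_1 \geq \tfrac{2n}{9} - \tilde O(n^{7/10})$; the underlying counting argument and calibration of $c_1$ are identical.
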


\begin{restatable}{rLem}{lemmto}
\label{lem:mt0}
Let $m\leq c_1n$ and $i\in [\io]$. Assume that $R$ is regular and that  $c_1n<t_1$. Then, $c_1n<t_{\{0\}}$.

\end{restatable}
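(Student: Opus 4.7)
The plan is to upper-bound the number $N$ of requests that $\H^m$ matches to a server at location $0$ during the first $c_1 n$ steps, and to show $N < n^{4/5} + \excess$, the initial supply at $0$. This will guarantee that some server at $0$ remains free at time $c_1 n$, and hence $c_1 n < t_{\{0\}}$.

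The hypothesis $t_1 > c_1 n$ says that for every $t \leq c_1 n$, the interval $I_1 = (y_0, y_1]$ still contains a free server of $S_{\H^m, t-1}$. Since no initial server lies in $(0, y_0]$, this forces the smallest strictly positive free server $s_\star$ in $S_{\H^m, t-1}$ to satisfy $s_\star \leq y_1 = (3/2)\, n^{-1/5}$. I would then split the count by phase. In Phase 1 ($1 \leq t \leq m$, under $\A$), the algorithm $\A$ routes $r_t$ to $0$ only when $r_t \in [0, n^{-1/5}]$; for $r_t > n^{-1/5}$, $\A$ reverts to greedy, and greedy matching to $0$ would require $r_t \leq s_\star/2 \leq (3/4)\, n^{-1/5}$, contradicting $r_t > n^{-1/5}$. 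In Phase 2 ($m < t \leq c_1 n$, under greedy), the same distance comparison forces $r_t \leq (3/4)\, n^{-1/5}$.

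Applying the upper-bound clause of Definition \ref{def:reg} with endpoints in $\mathcal{D}$ chosen to contain $[0, n^{-1/5}]$ and $[0, (3/4)\, n^{-1/5}]$ from above (which perturbs lengths by at most $1/n$), we obtain
\[
N \leq m\, n^{-1/5} + (c_1 n - m)\,(3/4)\, n^{-1/5} + \tilde{O}(n^{2/5}) \leq c_1 n^{4/5} + \tilde{O}(n^{2/5}),
\]
where the last inequality uses $m \leq c_1 n$ and is maximized at $m = c_1 n$. Since $c_1 = \tfrac{2}{9}(1-\epsilon) < 1$, this bound is strictly less than $n^{4/5} + \excess$ for $n$ large enough, which completes the argument. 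The only delicate step is the Phase 1 exclusion of greedy-fallback matches to $0$ for requests $r_t > n^{-1/5}$; this is precisely where the hypothesis $t_1 > c_1 n$ is used, via the bound $s_\star \leq y_1$.
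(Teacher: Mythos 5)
Your proof is correct and follows essentially the same strategy as the paper's: use the hypothesis $t_1 > c_1 n$ to guarantee that $S_{\H^m,t}$ always has a free server in $I_1 = (y_0, y_1]$, conclude that requests far from the origin cannot be matched to $0$, and then apply the regularity upper bound and compare to the initial supply $n^{4/5} + \excess$ at location $0$. The paper takes a slightly cruder version of the filter: it observes that any $r_j > y_1$ is matched above $y_0$, so it counts all requests in $[0, y_1]$ as potential matches to $0$, giving $(3/2)\,c_1\, n^{4/5} + \tilde{O}(\sqrt{n})$ and invoking $c_1 < 2/3$. You instead exploit the distance-halving rule for greedy (match to $0$ only if $r_t \leq s_\star/2 \leq (3/4)\,n^{-1/5}$) together with the $\A$-phase rule ($r_t \leq n^{-1/5}$), getting the sharper $c_1\, n^{4/5} + \tilde{O}(n^{2/5})$ and needing only $c_1 < 1$. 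Since the paper fixes $c_1 = \tfrac{2}{9}(1-\epsilon)$, both constraints are comfortably met, so your refinement is not needed here, but it is valid and slightly tighter.
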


We are now ready to present the proof of Lemma \ref{lem:lb1}, that we restate below for convenience. 
\lemlb*

\begin{proof}
Fix $m\in [c_1n]$ and assume that the  sequence of requests is regular. We first show by induction on $i$ that, for $n$ sufficiently large, we have 
    $t_0<\ldots< t_i\leq n - (1-c_2)^in$ for all $i\in \{0,\ldots, \io\}$.
    
    The base case is immediate since by construction of the instance, $I_0\cap S_0 = (0,n^{-1/5}]\cap S_0 =\emptyset$, which implies that $t_0=0 = n-(1-c_2)^0n$. 
    
    Now, for $n$ large enough, let $i\in [\io]$ and assume that $t_0<\ldots< t_{i-1}\leq n - (1-c_2)^{i-1}n$. Then, in particular, we have that $t_{i-1}\leq n - (1-c_2)^{i-1}n$, hence $t_{i-1}< t_i$ by Lemma \ref{lem:Ii_Ii}. By combining this with the assumption that $t_0<\ldots< t_{i-1}\leq n - (1-c_2)^{i-1}n$, we obtain that $t_i\leq n - (1-c_2)^in$ by Lemma \ref{lem:I_i_depleted_t}. Hence, we get $t_0<\ldots< t_i\leq n - (1-c_2)^in$, which concludes the inductive case.
    
    By applying the previous inequalities with $i = \io$, and by Lemma \ref{claim:dlogn}, we thus have $t_{1}<\ldots< t_{\io}\leq n - (1-c_2)^{\io}n\leq n- n^{c_3}$. In addition, since $t_1<t_2$ and since we assumed $m\leq c_1n$, we have that $m\leq c_1n<t_1$ by Lemma \ref{prop:mt_1}, which also implies that $m\leq c_1n<t_{\{0\}}$ by Lemma \ref{lem:mt0}. 
    We conclude that
    $m< t_{1}<\ldots< t_{\io}\leq n-n^{c_3}$ and that $m<t_{\{0\}}$. \end{proof}

\subsection{Lower bound on $\mathbb{E}[cost(\H^{m-1}) - cost(\H^m)]$.}
\label{subsec:lower_bounc_cost}

\label{sec:proof_lemma_mcn2}

The objective of this section is to use the characterization of the remaining servers $(S_1, \ldots, S_n)$ from Lemma~\ref{lem:lb1} in Section~\ref{subsec:analysis_of_S} to prove the following lemma, in which we lower bound the total difference of cost between algorithms $\H^{m-1}$ and $\H^m$ conditioned on the location of request $r_m$. The proof is given at the end of the section.

\lemmnc*

\paragraph{Structural properties.} In order to prove Lemma \ref{lem:m_cn2}, we first introduce a few structural properties about the sets $(S_0,\ldots, S_t)$ and $(S_0', \ldots, S_t')$ of free servers for $\H^{m-1}$ and $\H^m$, respectively. We first show that at every time step $t$, there are at most two servers in the  symmetric difference between $S_t$ and  $S_t'$, and that the potential extra free server in $S_t'$ is always located at $0$ whereas the potential extra free server in $S_t$ is the leftmost free server in $S_t$ that is not at location $0$ (see Figure~\ref{fig:config_LB}).

\lemconfigSS*

Armed with the previous lemma, we define the gap $\delta_t := \min\{s \in S_t : s > 0\}$ between the unique available server in $S'_t \setminus S_t = \{0\} $ and the unique available server in $S_t \setminus S_t' = \{\min\{s \in S_t : s > 0\}\}$. In the following, we let $s_{t,1} = \min\{s>0: s\in S_t\}$ and $s_{t,2}= \min\{s>s_{t,1}: s\in S_t\}$ denote the first two servers with positive location for $\H^m$ just after matching $r_t$.

\begin{definition}
\label{def:delta}
For all $t\in [n]$, we let  $\delta_t:= \begin{cases} 0 &\text{ if } S_t = S_t'. \\
    s_{t,1}&\text{ otherwise. }
    \end{cases}$
\end{definition}

We now present a few properties satisfied by $\{(\delta_t,S_t)\}_{t\geq m}$. We start by a partial characterization of the value of $(\delta_t, S_t)$ and of the difference of cost $\Delta \text{cost}_{t+1}:= \text{cost}_{t+1}(\H^{m-1}) - \text{cost}_{t+1}(\H^m)$  between the costs incurred by $\H^{m-1}$ and $\H^{m}$ at time step $t$ as a function of $\delta_t$ and  $S_t$.

\lemMCdef*


\begin{center}
\begin{table}[]
    \centering
    \begin{tabular}{c||c| c| c| c| c} 

$r_{t+1}\in \ldots$ & $[0,\tfrac{\delta_t}{2}]$ &  $[\tfrac{\delta_t}{2}, \tfrac{\delta_t+w_t}{2}]$ &   $[\tfrac{\delta_t+w_t}{2}, \delta_t +\tfrac{w_t}{2}]$ &$[\delta_t +\tfrac{w_t}{2}, \delta_t+w_t]$ & $[\delta_t+w_t,1]$\\
 \hline
 $S_{t+1}$ & $S_t\setminus \{0\}$  &   $S_t\setminus \{\delta_t\}$ &$S_t\setminus \{\delta_t\}$ & $S_t\setminus \{\delta_t+w_t\}$ & $\exists s\in [\delta_t+w_t,$ \\
  &   &  & & &$1]\cap S_t: S_t\setminus \{s\}$\\ 
 \hline
 $\delta_{t+1}$ &  $\delta_{t}$ & $0$  &  $\delta_t+w_t$ & $\delta_t$ & $\delta_t$\\
 \hline
 $\mathbb{E}[\Delta\text{cost}_{t+1}|\ldots]$ &  $\geq 0$ & $\geq 0$ & $\geq
\begin{cases}
\frac{w_t}{2} &\text{ if }  w_t\leq \delta_t\\
 0 &\text{otherwise.}
\end{cases}
$ & $\geq 0$ & $\geq 0$\\
\end{tabular}
    \caption{Values of $(\delta_{t+1},S_{t+1})$ and expected value of  $\Delta\text{cost}_{t+1}$ conditioning on $(\delta_t, S_t)$ and on $r_{t+1}$, assuming that $S_t\cap \{0\}\neq \emptyset$, $\delta_t\neq 0$ and $|S_t\cap (\delta_t,1]|\geq 1$, and where $w_t:=s_{t,2}-s_{t,1}$.}
    \label{tab:MC_gamma_S}
\end{table}
\end{center}

In Lemma \ref{lem:simple_version_process}, we use the properties given in Lemma \ref{lem:MC_def} to  lower bound the probability that the gap $\delta$ has not yet disappeared at the time all servers in $(0,y]$ have been depleted, or that all the servers at location $0$ are depleted before either of these events occurs. We first recall that for any interval $I\subseteq [0,1]$, $t_{I}:= \min \{t\geq m|\;S_{t}\cap I = \emptyset\}$ is the time at which $I$ is depleted. We also define a couple additional stopping times for $\{(\delta_t, S_t)\}$.

\begin{definition}
\label{def:stopping_times_delta}
\hfill
\begin{itemize} 
    \item \textbf{Distance between $s_{t,2}$ and $s_{t,1}$ becomes large or $s_{t,2} = \emptyset$.} Let $t^w := \min\{t\geq m: s_{t,2} - s_{t,1}> s_{t,1}, \text{ or } s_{t,2}= \emptyset\}$.
    \item \textbf{$\delta$ disappears.} Let $t^d := \min\{t\geq m: \delta_t = 0\}$.
\end{itemize}
\end{definition}

\lemsimpleversionprocess*


We conclude this part by two simple properties. The first is about the initial gap $\delta_{m}$ just after matching request $r_m$.

\begin{restatable}{rLem}{claimEdeltam}
\label{claim:E_delta_m} 
The following properties hold:
\begin{enumerate}
    \item If $\delta_m>0$, then $r_m\in [0,y_0]$.
    \item For all $m\in [n]$, $\delta_m\in [0, 2n^{-1/5}]$.
    \item For all $m\in [c_1 n]$, $ \mathbb{E}[\delta_m|r_m \in [0,y_0]]\geq \frac{n^{-1/5}}{4}- \hp$.
\end{enumerate}
\end{restatable}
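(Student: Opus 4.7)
The plan is to observe that $\delta_m > 0$ arises precisely when $\A$ (which $\H^m$ runs at step $m$) and greedy (which $\H^{m-1}$ runs at step $m$) match $r_m$ to different servers. By Lemma~\ref{lem:configSS'}, $S_{m-1} = S_{m-1}'$, so the two executions enter step $m$ with identical state and $\delta_m$ is fully determined by this one-step discrepancy.

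\textbf{Parts 1 and 2.} By definition, $\A$ matches $r_m$ to a server at $0$ exactly when $r_m\in[0,y_0]$ and $S_{m-1}\cap\{0\}\neq\emptyset$; otherwise $\A$ coincides with greedy. Hence $S_m\neq S_m'$ forces $r_m\in[0,y_0]$, giving Part 1. For Part 2, the case $\delta_m=0$ is trivial; otherwise Part 1 gives $r_m\in[0,y_0]$, $\A$ matches to $0$, and greedy differs by matching to its other candidate $s_{m-1,1}$, which requires $r_m>s_{m-1,1}/2$. Thus $s_{m-1,1}<2r_m\leq 2y_0=2n^{-1/5}$, and Lemma~\ref{lem:configSS'} identifies $\delta_m=s_{m,1}=s_{m-1,1}$, yielding $\delta_m\in[0,2n^{-1/5}]$.

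\textbf{Part 3.} Fix $m\leq c_1n$. Let $\reg$ be the regularity event of Lemma~\ref{lem:reg}, which fails with probability at most $\hp$. On $\reg$, Lemma~\ref{lem:lb1} gives $t_{\{0\}}\geq c_1n\geq m$ and $t_1>c_1n\geq m$, so $S_{m-1}\cap\{0\}\neq\emptyset$ and $I_1\cap S_{m-1}\neq\emptyset$. Since $I_0\cap S_0=\emptyset$ by construction and no server is ever added, every positive free server lies in $(y_0,1]$, so $s_{m-1,1}\in (y_0,y_1]$. Because $s_{m-1,1}$ is a deterministic function of $r_1,\ldots,r_{m-1}$, conditional on $s_{m-1,1}$ and on $r_m\in[0,y_0]$, $r_m$ remains uniform on $[0,y_0]$. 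Combining this with Part 2,
\begin{equation*}
\E[\delta_m\mid r_m\in[0,y_0],\reg,s_{m-1,1}] \;=\; s_{m-1,1}\cdot\P\bigl(r_m>s_{m-1,1}/2\,\big|\,r_m\in[0,y_0]\bigr) \;=\; s_{m-1,1}\Bigl(1-\tfrac{s_{m-1,1}}{2y_0}\Bigr).
\end{equation*}
The map $u\mapsto u(1-u/(2y_0))$ is decreasing on $[y_0,(3/2)y_0]$ and attains its minimum $(3/8)y_0$ at $u=y_1=(3/2)y_0$. Averaging over $s_{m-1,1}$ preserves the bound, and using $\delta_m\geq 0$ together with $\P(\neg\reg\mid r_m\in[0,y_0])\leq\hp$ gives $\E[\delta_m\mid r_m\in[0,y_0]]\geq (3/8)n^{-1/5}-\hp\geq n^{-1/5}/4-\hp$.

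The main subtlety is the interplay of conditionings in Part 3, since $\reg$ depends on all $n$ requests (including $r_m$) whereas I want to treat $r_m$ as uniform on $[0,y_0]$ after conditioning on $\reg$. The resolution is that fixing $r_m=r$ changes each interval count in the regularity conditions by at most $1$, so Chernoff applied to the remaining $n-1$ requests still yields $\P(\reg\mid r_m=r)\geq 1-\hp$; this, combined with $\P(r_m\in[0,y_0])=n^{-1/5}$, gives $\P(\neg\reg\mid r_m\in[0,y_0])\leq\hp\cdot n^{1/5}=\hp$. Everything else in the argument is a one-line computation.
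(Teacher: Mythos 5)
Parts~1 and~2 of your proof track the paper's argument exactly. For Part~3 you take a genuinely different and tighter route: instead of the paper's device of restricting to the sub-interval $r_m\in[\tfrac34 y_0,y_0]$ on which $\delta_m\geq y_0$ holds deterministically under regularity (which is where their factor $\tfrac14$ comes from), you compute the one-step conditional expectation exactly as $s_{m-1,1}\bigl(1-\tfrac{s_{m-1,1}}{2y_0}\bigr)$ and minimize over $s_{m-1,1}\in(y_0,y_1]$, which gives the sharper constant $\tfrac38$.

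There is, however, a genuine (if localized) gap in your conditioning. In the display you evaluate $\E\bigl[\delta_m\mid r_m\in[0,y_0],\reg,s_{m-1,1}\bigr]$ as if $r_m$ were uniform on $[0,y_0]$ under this conditioning, which is not so: the event $\reg$ depends on $r_m$, and conditioning jointly on $\reg$ and on $s_{m-1,1}$ (where $s_{m-1,1}$ can take values that make $\reg$ unlikely) tilts the conditional law of $r_m$. Your patch, the estimate that $R$ fails to be regular with probability at most $\hp$ even after conditioning on $r_m\in[0,y_0]$, controls how rarely $\reg$ fails but says nothing about the conditional distribution of $r_m$ \emph{given} $\reg$ and $s_{m-1,1}$, which is what the displayed computation actually requires. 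The standard repair is to replace $\reg$ with the $\sigma(r_1,\ldots,r_{m-1})$-measurable event $G:=\{S_{m-1}\cap\{0\}\neq\emptyset\}\cap\{s_{m-1,1}\in(y_0,y_1]\}$. By Lemma~\ref{lem:lb1} (together with $S_0\cap I_0=\emptyset$) one has $\reg\Rightarrow G$, hence $\P(G)\geq 1-\hp$; and since $G$ and $s_{m-1,1}$ are functions of $r_1,\ldots,r_{m-1}$ only, $r_m$ is \emph{exactly} uniform on $[0,y_0]$ given $G$, $s_{m-1,1}$ and $r_m\in[0,y_0]$. Then, using $\delta_m\geq 0$ and that on $G$ one has $\delta_m=s_{m-1,1}\mathbbm{1}_{r_m>s_{m-1,1}/2}$,
\begin{equation*}
\E\bigl[\delta_m\mid r_m\in[0,y_0]\bigr]\;\geq\;\E\bigl[\delta_m\mathbbm{1}_G\mid r_m\in[0,y_0]\bigr]\;=\;\E\Bigl[\mathbbm{1}_G\,s_{m-1,1}\Bigl(1-\tfrac{s_{m-1,1}}{2y_0}\Bigr)\Bigr]\;\geq\;\tfrac38\,y_0\,\P(G)\;\geq\;\tfrac38\,n^{-1/5}-\hp,
\end{equation*}
which recovers (and slightly improves) the stated bound. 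The paper avoids this issue altogether because its argument only uses the deterministic inclusion $\{\reg\}\cap\{r_m\in[\tfrac34 y_0,y_0]\}\subseteq\{\delta_m\geq y_0\}$ followed by a union bound, never a conditional law of $r_m$ given $\reg$.
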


Finally, we show that if $R$ is regular, then for all $i\in [\io]$, the interval $(0,y_i]$ is depleted before all servers at location $0$ are depleted, and we upper bound the probability that all servers at location $0$ are depleted before $\delta$ disappears.

\begin{restatable}{rLem}{lemdepletedbefore}
\label{lem:depleted_before}
For all $m\in [n]$ and $i \in [d_1\log{n}]$, 
\begin{enumerate}
    \item if $R$ is regular, then $t_{(0,y_i]}\leq t_{\{0\}}$.
    \item $\P(t^d> t_{\{0\}}|r_m\in [0,y_0])= O(n^{-1/5})$.
\end{enumerate}

\end{restatable}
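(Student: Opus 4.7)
Both parts rely on the global picture established in Section~\ref{subsec:analysis_of_S} (giving the depletion order of the intervals $I_j$ under regularity) together with the Markov-chain description of $(\delta_t, S_t)$ provided by Lemmas~\ref{lem:MC_def} and~\ref{lem:simple_version_process}.

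\emph{Part 1.} The plan is to show that strictly fewer than $n^{4/5}+4\log^2(n)\sqrt{n} = |S_0 \cap \{0\}|$ requests are matched to servers at $0$ by time $t_{(0,y_i]}$, from which $t_{(0,y_i]} \leq t_{\{0\}}$ follows. For $m \leq c_1 n$, Lemma~\ref{lem:lb1} under regularity gives $t_1<t_2<\cdots<t_{d_1\log n}$, hence $t_{(0,y_i]}=t_i$. A greedy match of $r_j$ to a $0$-server under $\H^m$ (with $j>m$) forces $r_j \leq s_{j-1,1}/2$, where $s_{j-1,1}$ is the leftmost free positive server. For $j\in(t_{k-1},t_k]$, Lemma~\ref{lem:lb1} places $s_{j-1,1}\in I_k=(y_{k-1},y_k]$, so $r_j\leq y_k/2$. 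Applying regularity on each subinterval $(t_{k-1},t_k]$ bounds the total count of $0$-matches by
\begin{equation*}
\frac{1}{2}\sum_{k=1}^{i}y_k(t_k-t_{k-1}) \;+\; O\!\left(\log^2(n)\sqrt{n}\right).
\end{equation*}
I would then estimate this sum by Abel summation, combined with the bounds $t_k\leq n(1-(1-c_2)^k)$ (an intermediate step in the proof of Lemma~\ref{lem:I_i_depleted_t}) and $t_{k-1}\geq c_1 n$ (Lemma~\ref{prop:mt_1}). The constants $c_1,c_2,d_1$ are precisely those that make the geometric growth of $y_k$ (ratio $3/2$) compensated by the geometric decay of $n-t_k$ (ratio $1-c_2$), so the sum ends up bounded by a constant strictly less than $n^{4/5}$. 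The remaining range $m>c_1 n$ reduces to the same count after noting that $\A$ consumes $0$-servers only when matching requests in $[0,y_0]$, whose cumulative count by step $m$ is at most $(m/n)n^{4/5}+O(\log^2(n)\sqrt{n})$ under regularity.

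\emph{Part 2.} On the event $\{\delta_m=0\}$ we have $t^d=m$, and Lemma~\ref{lem:lb1} implies $t_{\{0\}}\geq c_1 n>m$, so this event contributes $0$ to $\P(t^d>t_{\{0\}}\mid r_m\in[0,y_0])$. It remains to bound $\P(\delta_m>0,\,t^d>t_{\{0\}}\mid r_m\in[0,y_0])$. By Lemma~\ref{claim:E_delta_m}, $\delta_m\leq 2n^{-1/5}$ whenever $\delta_m>0$, and $\mathbb{E}[\delta_m\mid r_m\in[0,y_0]]=O(n^{-1/5})$. My plan is to establish an upper-bound counterpart to Lemma~\ref{lem:simple_version_process}: using the transition table (Lemma~\ref{lem:MC_def}, Table~\ref{tab:MC_gamma_S}), the stopped process $\delta_{t\wedge t^d\wedge t_{\{0\}}\wedge t_{(0,1]}}$ behaves as a nonnegative supermartingale (its expected change is non-positive at every step while all three stopping events are inactive). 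Combined with Part~1 (which ensures $t_{(0,1]}$ does not interfere before $t_{\{0\}}$) and optional stopping, this gives $\P(t^d>t_{\{0\}}\mid\delta_m,S_m)\leq\delta_m(1+o(1))$; integrating over $\delta_m$ conditional on $r_m\in[0,y_0]$ then yields the desired $O(n^{-1/5})$ bound.

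\emph{Main obstacle.} The delicate point is the Abel summation of Part~1: a naive bound $\sum_k y_k(t_k-t_{k-1})\leq y_i t_i$ produces $\Theta((3/2)^i n^{4/5})$, which already exceeds the $n^{4/5}$ budget of $0$-servers once $i$ becomes a handful of units. The argument succeeds only because the constants $c_1,c_2,c_3,d_1$ are specifically tuned so that the telescoping in Abel's identity creates a cancellation keeping the sum $O(n^{4/5})$ uniformly for $i\leq d_1\log n$. Carrying this cancellation out cleanly — and absorbing the $O(\log^2(n)\sqrt{n})$ regularity error terms while keeping the strict inequality below $n^{4/5}+4\log^2(n)\sqrt{n}$ — is the main technical effort.
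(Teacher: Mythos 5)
Your proposal takes a genuinely different route for Part~1, and that route has a real gap.

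For Part~1, the paper proves a single, clean structural claim: under regularity there is \emph{no} $t\in[n]$ with $s_{t,1}<1/2$ and $S_t\cap\{0\}=\emptyset$. The argument is a conservation/``flow'' argument across the point $s_{t,1}$: since $s_{t,1}$ remains free up to $t$ and $\H^m$ only sends requests $r\leq y_0<s_{t,1}$ to zero, requests and matched servers cannot cross the threshold $s_{t,1}$, so $|\{i\leq t: r_i\leq s_{t,1}\}| = |S_0\cap[0,s_{t,1})|$. The right-hand side is inflated by the $4\log^2(n)\sqrt{n}$ excess zeros, which contradicts the upper regularity bound for the left-hand side. This argument is uniform in $m$, which is essential because Lemma~\ref{lem:depleted_before} is asserted for all $m\in[n]$. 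Your plan instead tries to count $0$-matches layer-by-layer over the intervals $(t_{k-1},t_k]$ and sum with Abel's formula. Besides the arithmetic delicacy you yourself flag, this relies on Lemma~\ref{lem:lb1}, which only applies for $m\leq c_1 n$. For $m>c_1 n$ the depletion times $t_j$ need not be ordered, so the layer decomposition you use for the greedy phase has no justification, and the crude estimate ``cumulative $[0,y_0]$-requests by step $m$ is at most $(m/n)n^{4/5}+O(\log^2(n)\sqrt n)$'' comes within the error-term distance of the full $0$-budget once $m$ approaches $n$, leaving nothing to absorb the subsequent greedy-phase $0$-matches. You would need an entirely separate argument for that range; the paper's direct approach avoids the case split altogether.

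For Part~2, your plan is essentially equivalent to the paper's. You are right that the stopped process is a (super)martingale while zero servers remain — indeed, reading Table~\ref{tab:MC_gamma_S}, the increment $+w_t$ has probability $\delta_t/2$ and the drop to $0$ has probability $w_t/2$, giving zero drift — and optional stopping recovers exactly the $x/y$ bound of Lemma~\ref{lem:worst_case}, which the paper already proves (by an equivalent backward induction). The paper then splits on $\{\delta_{t_{\{0\}}}\geq 1/2\}$ versus $\{\delta_{t_{\{0\}}}<1/2\}$: the first event costs at most $2\delta_m\leq 4n^{-1/5}$ by Lemma~\ref{lem:worst_case} and Lemma~\ref{claim:E_delta_m}, the second is ruled out (up to $n^{-\Omega(\log n)}$) by the Part~1 structural fact. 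Your $\P(t^d>t_{\{0\}}\mid\delta_m,S_m)\leq\delta_m(1+o(1))$ is the same estimate in disguise, but note it still needs Part~1 (the $\delta_{t_{\{0\}}}\geq 1/2$ lower bound at the stopping time), so the gap in Part~1 propagates. You should also drop $t_{(0,1]}$ from the stopping time: the first part gives $t_{(0,y_i]}\leq t_{\{0\}}$ only for $i\leq d_1\log n$, which says nothing about $t_{(0,1]}$; in the paper's bookkeeping it is never needed.
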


\paragraph{Lower bound on $\mathbb{E}[cost(\H^{m-1}) - cost(\H^m)]$ as a function of the gap $\delta$.} Using the structural properties stated above, we lower bound the expected difference of cost for matching requests $r_{m+1},\ldots, r_n$. 

\lemcostdeltagamma*

The full proof is in Appendix \ref{app_lower} and we only present here the main steps: by the second  property of Lemma \ref{lem:MC_def}, we have that while there still are some free servers at location $0$, the difference of cost $\Delta \text{cost}_{t+1}$ is always nonnegative. Moreover, we also have, by the third  property of Lemma \ref{lem:MC_def} (and the values given in Table 1) that as long as $\delta_t\neq 0$, $|S_t\cap (\delta_t, 1]|\geq 1$, $|S_t\cap \{0\}|\neq \emptyset$  and $w_t\geq \delta_t$, the expected value of $\Delta \text{cost}_{t+1}$ is at least the increase in $\delta$. A telescoping sum over all time steps yields the result.

We also give a simple lower bound on the expected difference of cost for matching requests $r_1,\ldots, r_m$. 

\begin{restatable}{rLem}{lemcostrm}
\label{lem:costrm} For all $m\in [n]$,
$\E\Big[ \sum_{t= 1}^m (cost_t(\H^{m-1}) - cost_t(\H^m))|r_m\in [0,y_0]\Big]\geq -n^{-1/5}.$
\end{restatable}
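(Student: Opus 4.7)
The plan is to first observe that $\H^{m-1}$ and $\H^{m}$ are perfectly coupled on the first $m-1$ time steps: both algorithms execute $\mathcal{A}$ on identical histories, starting from the common initial configuration $S_0$. By a straightforward induction on $t$, their sets of free servers and their matching choices coincide for every $t \in \{1,\ldots,m-1\}$, and in particular $\text{cost}_t(\H^{m-1}) = \text{cost}_t(\H^{m})$ for such $t$. The telescoping sum therefore collapses to a single term,
\[
\sum_{t=1}^m \bigl(\text{cost}_t(\H^{m-1}) - \text{cost}_t(\H^{m})\bigr) = \text{cost}_m(\H^{m-1}) - \text{cost}_m(\H^{m}),
\]
and the task reduces to lower-bounding this one-step difference.

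Next I would analyze time step $m$, where both algorithms still see the same free-server set $S_{m-1} = S_{m-1}'$, but now $\H^{m-1}$ has entered its greedy phase while $\H^{m}$ is still executing $\mathcal{A}$. The argument splits on whether $0 \in S_{m-1}$. In the case $0 \in S_{m-1}$, since $r_m \in [0, y_0] = [0, n^{-1/5}]$, algorithm $\mathcal{A}$ matches $r_m$ to a server at location $0$ by definition, paying exactly $r_m \leq n^{-1/5}$; the greedy rule used by $\H^{m-1}$ picks the closest free server and, because the server at $0$ is among its candidates, also pays at most $r_m$. Hence
\[
\text{cost}_m(\H^{m-1}) - \text{cost}_m(\H^{m}) \geq 0 - r_m \geq -n^{-1/5}.
\]
In the remaining case $0 \notin S_{m-1}$, algorithm $\mathcal{A}$ falls through to its greedy rule by definition, so both algorithms produce the identical match and the one-step difference is exactly $0$.

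Combining the two cases gives $\text{cost}_m(\H^{m-1}) - \text{cost}_m(\H^{m}) \geq -n^{-1/5}$ pointwise on the event $\{r_m \in [0, y_0]\}$, and taking conditional expectation preserves the inequality. I do not foresee any real obstacle here: the entire argument rests on the coupling of the two hybrid algorithms throughout their common $\mathcal{A}$-phase, together with the elementary fact that greedy cannot pay more than $\mathcal{A}$ on a single request when both algorithms see the same free-server set, since greedy by definition chooses the nearest available server.
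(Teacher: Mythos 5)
Your proof is correct and follows essentially the same approach as the paper: both note that the two hybrid algorithms coincide on the first $m-1$ steps (so the sum collapses to the single $t=m$ term), then split on whether a server at $0$ is still free at time $m$, bounding the one-step difference by $-r_m \geq -n^{-1/5}$ in the interesting case.
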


\paragraph{Main technical lemma.}
We are now ready to present the main technical lemma of this part,  which is a lower bound on the probability that the gap $\delta$ ever exceeds $y_{i-1}$ for all $i$ sufficiently small. 

\lemPdeltamax*

\begin{proof} Fix $m\in \{1,\ldots, c_1n\}$ and $i\in [\io]$.
For simplicity, we write $t_i$ to denote $t_{I_i}$, the time at which $I_i$ is depleted during the execution of $\H^m$, and we write $t_{y_i}$ to denote $t_{(0,y_i]}$, the time at which $(0, y_i]$ is depleted. 

In the remainder of the proof, we condition on the fact that the sequence of requests is regular. In particular, by Lemma \ref{lem:lb1}, we have that
\begin{equation}
\label{eq:reg}
    m< t_{1}<\ldots< t_{\io}\leq n-n^{c_3}.
\end{equation}

We start by lower bounding the probability that $\delta_{t_{y_i} -1}>0$ conditioning on the variables $\delta_m,S_m$. First, note that if $m< t_{y_i} \leq t^{d}$, then by definition of $t^d$, we have that $\delta_{t_{y_i} -1}>0$. In addition, by definition of $t_i, t_{y_i}$, and since $I_i =(y_{i-1}, y_i] \subseteq (0,y_i]$, we have $t_i\leq t_{y_i}$. Since by (\ref{eq:reg}), we have $m<t_i$, we get that $m<t_{y_i}$. Finally, since $R$ is regular, we also have, by Lemma \ref{lem:depleted_before}, that $\min(t_{y_i}, t_{\{0\}}) = t_{y_i}$. Hence, if $\min(t_{y_i}, t_{\{0\}}) \leq \min(t^d, t_{\{0\}})$, then $t_{y_i} = \min(t_{y_i}, t_{\{0\}})\leq \min(t^d, t_{\{0\}})\leq t^d$. Therefore, we have
\begin{align}
    &\mathbb{P}(\delta_{t_{y_i}  -1}>0|\reg, \delta_m, S_m) \nonumber\\
    &\geq  \mathbb{P}( m< t_{y_i} \leq t^{d}|\reg, \delta_m, S_m)\nonumber\\
    &= \mathbb{P}( t_{y_i} \leq t^{d}|\reg, \delta_m, S_m) \nonumber\\
    &\geq \mathbb{P}\Big(\min(t_{y_i}, t_{\{0\}}) \leq \min(t^d, t_{\{0\}})|\reg, \delta_m, S_m\Big)\nonumber\\
    &\geq \mathbb{P}\Big(\min(t_{y_i}, t_{\{0\}}) \leq \min(t^d, t_{\{0\}})| \delta_m, S_m\Big) - \hp &\text{(R is regular w.h.p.~by Lemma \ref{lem:reg})}\nonumber\\
    &\geq \frac{\delta_m}{y_i} - \hp. &\text{(Lemma \ref{lem:simple_version_process})}
    &\label{eq:lb_1_tyi}
\end{align}

Next, we assume that $\delta_{t_{y_i} -1}>0$ and we lower bound $\max_{t\in \{m,\ldots,  \min(n-n^{c_3}, t_{\{0\}})\}} \delta_t$. By definition of $\delta$, we have that for all $t\geq 0$, either $\delta_t = 0$ or $\delta_t = \min\{x>0|x\in S_{t}\}$. Since we assumed $\delta_{t_{y_i} -1}>0$, we thus have 
\begin{equation}
\label{eq:delta_yidef}
  \delta_{t_{y_i}-1} = \min\{x>0|x\in S_{t_{y_i}-1}\}.  
\end{equation}

Now, by (\ref{eq:reg}), we have that for all $j\leq i-1$, $t_j<t_i$ (i.e., $(y_{j-1}, y_j]$ is depleted before $(y_{i-1}, y_i]$). Recalling that $t_i = \min \{t\geq m : S_t\cap (y_{i-1}, y_i] = \emptyset\}$ and that $t_{y_i} = \min \{t\geq m : S_t\cap (0, y_i] = \emptyset\}$, we get that $t_i = t_{y_i}$ and that $(0,y_{i-1}]  \cap S_{t_{y_i}-1} = \left(\bigsqcup_{j=0}^{i-1} (y_{j-1},y_{j}] \right)\cap S_{t_{y_i}-1} = \emptyset$. Hence, 
\begin{equation}
\label{eq:min_Syi}
    \min\{x>0|x\in S_{t_{y_i}-1}\} \geq y_{i-1}.
\end{equation}

Combining (\ref{eq:min_Syi}) and (\ref{eq:delta_yidef}), we get that $\delta_{t_{y_i}-1} \geq y_{i-1}$. In addition, since $m<t_i\leq n-n^{c_3}$ by (\ref{eq:reg}) and $t_{y_i} =t_i$ as argued above, we have that $m<t_{y_i}\leq n-n^{c_3}$. Since $R$ is regular, we also have, by Lemma \ref{lem:depleted_before}, that $t_{y_i}\leq t_{\{0\}}$. We deduce that $\max_{t\in \{m,\ldots,  \min(n-n^{c_3}, t_{\{0\}})\}} \delta_t\geq \delta_{t_{y_i}-1}\geq y_{i-1}$.
As a result,
\begin{align*}
    \P\Big(\max_{t\in \{m,\ldots,  \min(n-n^{c_3}, t_{\{0\}})\}} \delta_t \geq y_{i-1}|\reg, \delta_m, S_m\Big)\geq \P(\delta_{t_{y_i} -1}>0|\reg,\delta_m, S_m).
\end{align*}

Combining this with (\ref{eq:lb_1_tyi}), we finally obtain
\[\P\Big(\max_{t\in \{m,\ldots,  \min(n-n^{c_3}, t_{\{0\}})\}} \delta_{t} \geq y_{i-1}|\reg, \delta_m, S_m\Big) \geq\frac{\delta_m}{y_i}-n^{-\Omega(\log(n))} .\]
\end{proof}

\paragraph{Concluding the proof.} We now present the proof of Lemma \ref{lem:m_cn2}, that we restate below for convenience.
\lemmnc*

\begin{proof}
Let $m\in  [n]$. Since $\H^m$ and $\H^{m-1}$ make the same decisions at all time steps when $r_m\in (y_0,1]$, it is immediate that $
\mathbb{E}[cost(\H^{m-1})- cost(\H^m)|r_m \in (y_0,1]]=0$, which shows the third point of the lemma. 

We now show the first two points. By Lemma \ref{cor:costLdelta_gamma}, we have that

\begin{align}
\label{eq:firsteq_LBcost}
    \mathbb{E}\Big[ \sum_{t= m+1}^n (cost_t(\H^{m-1}) - cost_t(\H^m))|\delta_m,S_m\Big]\geq \frac{1}{2}\mathbb{E}\Big[&\max_{t \in \{0,\ldots, \min(t_{\{0\}}, t_w)-m\}}\delta_{t+m}- \delta_{m} |\delta_m,S_m\Big]\nonumber\\
    & \qquad\qquad\qquad - \P(t^d>t_{\{0\}}|\delta_m,S_m).
\end{align}

Thus, we first get
\begingroup
\allowdisplaybreaks
\begin{align*}
    &\mathbb{E}[cost(\H^{m-1}) - cost(\H^m)|r_m \in [0,y_0]] \\
    &=\mathbb{E}\Big[ \sum_{t= m+1}^n (cost_t(\H^{m-1}) - cost_t(\H^m))|r_m \in [0,y_0]\Big] \\
    &\qquad\qquad+ \mathbb{E}\Big[ \sum_{t= 1}^m (cost_t(\H^{m-1}) - cost_t(\H^m))|r_m \in [0,y_0]\Big]\\
    &\geq\mathbb{E}\Big[ \sum_{t= m+1}^n (cost_t(\H^{m-1}) - cost_t(\H^m))|r_m \in [0,y_0]\Big]- n^{-1/5}\\
    &= \int_{(x,S)\in \mathcal{X}}\mathbb{E}\Big[ \sum_{t= m+1}^n (cost_t(\H^{m-1}) - cost_t(\H^m))|(\delta_m,S_m) = (x,S), r_m \in [0,y_0]\Big]\\
    &\qquad\qquad\qquad\qquad\qquad\qquad\qquad\qquad\qquad\qquad\qquad\qquad \cdot\text{d}\P((x,S)|r_m \in [0,y_0] ) - n^{-1/5}\\
    &= \int_{(x,S)\in \mathcal{X}} \mathbb{E}\Big[ \sum_{t= m+1}^n (cost_t(\H^{m-1}) - cost_t(\H^m))|(\delta_m,S_m) = (x,S)\Big]\cdot\text{d}\P((x,S)|r_m \in [0,y_0] ) - n^{-1/5}\\
    &\geq  \int_{(x,S)\in \mathcal{X}}  [0 -\P(t^d>t_{\{0\}}|(\delta_m,S_m)= (x,S))]\cdot \text{d}\P((x,S)|r_m \in [0,y_0] ) - n^{-1/5}\\
    &=  \int_{(x,S)\in \mathcal{X}}  [0 -\P(t^d>t_{\{0\}}|(\delta_m,S_m)= (x,S)),r_m \in [0,y_0]  ]\cdot \text{d}\P((x,S)|r_m \in [0,y_0] ) - n^{-1/5}\\
    & =  -\P(t^d>t_{\{0\}}|r_m \in [0,y_0])-n^{-1/5}\\
    &= -O(n^{-1/5}),
\end{align*}
\endgroup

where the first inequality is by Lemma \ref{lem:costrm}, the second and fourth equalities are since conditioned on $(\delta_m,S_m)$, $\{(cost_t(\H^m),  cost_t(\H^{m-1}))\}_{t\geq m+1}$ is independent on $r_m$, the second inequality is  by (\ref{eq:firsteq_LBcost}) and the last equality by Lemma \ref{lem:depleted_before}.
This completes the proof of the first part  of Lemma \ref{lem:m_cn2}.

Next, we prove the second point of the lemma by providing a tighter lower bound on (\ref{eq:firsteq_LBcost}) when $m\leq c_1n$. In the remainder of the proof, we consider a fixed $m\in \{1,\ldots, c_1n\}$.

First, we show that $t^w\geq n-n^{c_3}$. Note that if $R$ is regular, then by Lemma \ref{prop:l_nbeta}, we have that for all $t\in [n-n^{c_3}]$, $s_{t,2}-s_{t,1}\leq2\log(n)^4n^{1-2c_3}$. Thus, for $n$ large enough (and since we chose $c_3>4/5$), we have that $s_{t,2}-s_{t,1}< n^{-1/5}$. Since by definition of the instance, it is always the case that $s_{t,1}>y_0 = n^{-1/5}$, we thus have $s_{t,2}-s_{t,1}< s_{t,1}$. In addition, since $t\leq n-n^{c_3}$ and $c_3>4/5$, we have that for $n$ large enough, $S_t\cap (0,1]\geq S_0\cap (0,1] - (n-n^{c_3}) = n-(n^{4/5}+\excess) - (n-n^{c_3}) = n^{c_3} - n^{4/5} - \excess >2$, thus $s_{t,2}\neq \emptyset$. Since $t^w = \min\{t\geq m: s_{t,2} - s_{t,1}> s_{t,1}, \text{ or } s_{t,2}= \emptyset\}$, we thus have $t\leq t^w$. Hence $t^w\geq n-n^{c_3}$.

Therefore,
\begin{align}
\label{eq:n_tw}
    \mathbb{E}\Big(\max_{t \in \{0,\ldots, \min( t_{\{0\}}, t^w)-m\}} &\delta_{t+m} |\delta_m,S_m,\text{R is regular}\Big)\nonumber \\
    &\geq \mathbb{E}\Big(\max_{t \in \{0,\ldots, \min(t_{\{0\}}, n-n^{c_3})-m\}} \delta_{t+m}|\delta_m,S_m,\text{R is regular}\Big).
\end{align}

Next, 
\begin{align}
    \mathbb{E}\Big(&\max_{t \in \{0,\ldots, \min(t_{\{0\}}, n-n^{c_3})-m\}} \delta_{t+m}|\delta_m,S_m,\text{R is regular}\Big) \nonumber\\
    &= \int_0^1 \mathbb{P}\Big(\max_{t \in \{0,\ldots, \min(t_{\{0\}}, n-n^{c_3})-m\}} \delta_{t+m}\geq x|\delta_m,S_m, \text{R is regular}\Big)dx\nonumber\nonumber\\
    &\geq \sum_{i=0}^{\io} \int_{x\in I_i} \mathbb{P}\Big(\max_{t \in \{0,\ldots, \min(t_{\{0\}}, n-n^{c_3})-m\}} \delta_{t+m}\geq x|\delta_m,S_m, \text{R is regular}\Big)dx\nonumber\nonumber\\
    &\geq \sum_{i=0}^{\io} (y_i-y_{i-1})\cdot \mathbb{P}\Big(\max_{t \in \{0,\ldots, \min(t_{\{0\}}, n-n^{c_3})-m\}} \delta_{t+m}\geq y_i|\delta_m,S_m, \text{R is regular}\Big)\nonumber\nonumber\\
     &\geq \sum_{i=0}^{\io} (y_i-y_{i-1})\cdot\frac{\delta_m}{y_{i+1}} -\hp \nonumber\\
    &=n^{-1/5}\cdot\frac{\delta_m}{(3/2)n^{-1/5}} + \sum_{i=1}^{\io} \frac{(3/2)^{i-1}n^{-1/5}}{2}\cdot \frac{\delta_m}{(3/2)^{i+1}n^{-1/5}} -n^{-\Omega(\log(n))}\nonumber\nonumber\\
    &= C \delta_m \log(n) -n^{-\Omega(\log(n))},\label{eq:last_integral}
\end{align}
for some constant $C>0$.
The first inequality holds  since $\bigsqcup_{i=1}^{\io} I_i\subseteq[0,1]$, the second inequality is since $I_i=(y_{i-1}, y_i]$ and the third inequality results from Lemma \ref{lem:P_delta_max}. Finally, the second equality is since $y_i = (3/2)^in^{-1/5}$ for all $i\in \{0,\ldots, \io\}$ and since $y_{-1} = 0$.

Thus, we get
\begin{align*}
    &\mathbb{E}\Big(\max_{t \in \{0,\ldots, \min( t_{\{0\}}, t^w)-m\}} \delta_{t+m} |\delta_m, S_m\Big) \\
    &\geq \mathbb{E}\Big(\max_{t \in \{0,\ldots, \min( t_{\{0\}}, t^w)-m\}} \delta_{t+m} |\delta_m, S_m, \reg\Big)\P(\reg) \\
    &\geq \mathbb{E}\Big(\max_{t \in \{0,\ldots, \min(t_{\{0\}}, n-n^{c_3})-m\}} \delta_{t+m}|\delta_m,S_m,\text{R is regular}\Big)\P(\reg) \\
    &\geq (C \delta_m \log(n) -n^{-\Omega(\log(n))})(1-\hp)\nonumber\\
    &= C \delta_m \log(n) -n^{-\Omega(\log(n))},
    \label{eq:d_max}
\end{align*}
where the second inequality is by (\ref{eq:n_tw}) and the third one by (\ref{eq:last_integral}) and the fact that $R$ is regular with high probability by Lemma \ref{lem:reg}.

Combining this with (\ref{eq:firsteq_LBcost}) gives:
\begin{align*}
    \mathbb{E}\Big[ \sum_{t= m+1}^n (cost_t(\H^{m-1}) - cost_t(\H^m))|\delta_m,S_m\Big]&\geq \frac{1}{2}[C \delta_m \log(n) -n^{-\Omega(\log(n))}-\delta_m] \\
    &\qquad\qquad- \P(t^d>t_{\{0\}}|\delta_m,S_m).
\end{align*}

Finally, similarly as  for the first point, we get
\begin{align*}
    &\mathbb{E}[cost(\H^{m-1}) - cost(\H^m)|r_m \in [0,y_0]] \\
    &\geq \int_{(x,S)\in \mathcal{X}} \mathbb{E}\Big[ \sum_{t= m+1}^n (cost_t(\H^{m-1}) - cost_t(\H^m))|\delta_m,S_m\Big]\cdot\text{d}\P((x,S)|r_m \in [0,y_0] )  - n^{-1/5}\\
    &\geq \int_{(x,S)\in \mathcal{X}} \left(\frac{1}{2}[C x \log(n) -n^{-\Omega(\log(n))}-x]- \P(t^d>t_{\{0\}}|(\delta_m,S_m)= (x,S)))  \right)\\
    &\qquad\qquad\qquad\qquad\qquad\qquad\qquad\qquad\qquad\qquad\qquad\qquad\cdot\text{d}\P((x,S)|r_m \in [0,y_0] )-n^{-1/5}\\
    &\geq \E[\delta_m|r_m \in [0,y_0]]\cdot(\tfrac{1}{2}C\log(n)-1) - \hp - \P(t^d>t_{\{0\}}|r_m \in [0,y_0])-n^{-1/5}\\
    &\geq \E[\delta_m|r_m \in [0,y_0]]\cdot(\tfrac{1}{2}C\log(n)-1) - \hp- O(n^{-1/5})\\
    &\geq \Big(\frac{n^{-1/5}}{4}- \hp\Big)\cdot(\tfrac{1}{2}C\log(n)-1) - \hp- O(n^{-1/5})\\
    &= \Omega(\log(n)n^{-1/5}),
\end{align*}
where the fourth inequality is by Lemma \ref{lem:depleted_before} and the fifth one by Lemma \ref{claim:E_delta_m}. This concludes the proof of the second point and the proof of the lemma. \end{proof}






\section{Missing Analysis from Section \ref{sec:full_proof}}
\label{appendix:full_LB}

In this section, we re-state and prove all statements that were claimed, but not proved in Section~\ref{sec:full_proof}, as well as provide some auxiliary facts and definitions.

\subsection{Missing analysis from Section \ref{subsec:preliminaries}}
\factnumber*

\begin{proof}
Let $x,y\in [0,1]$ such that $x\leq y$ and let $k,j\in \mathbb{N}$ be such that $\frac{k}{\tilde{n}}\leq x\leq \frac{k+1}{\tilde{n}}$ and $\frac{j-1}{\tilde{n}}\leq y\leq \frac{j}{\tilde{n}}$. Then by construction of $\mathcal{I}_n$, the number of servers in the interval $[x,y]$ is in $\{j-k -1, j-k, j-k+1 \}$, and by definition of $k,j$, we have $j-k-2\leq \tilde{n}(y-x)\leq j-k$. Hence, for all $z> \tilde{n}(y-x)+3$ or $z< \tilde{n}(y-x)-1$, we have $z\notin \{j-k -1, j-k, j-k+1 \}$; hence $|S_0\cap [x,y]|\in [\tilde{n}(y-x)-1, \tilde{n}(y-x)+3]$.
\end{proof}

\lemlnbeta*

\begin{proof} Note that if the statement of the lemma holds for $t = n-n^{c_3}$, then it holds for all $t\in [n- n^{c_3}]$ since $S_t\supseteq S_{n-n^{c_3}}$ when $t\leq n-n^{c_3}$. Hence it suffices to consider the case $t = n-n^{c_3}$.

Now, consider $s, s'\in S_{n-n^{c_3}}\cap (0,1]$ such that $s'-s>2\log(n)^4n^{1-2c_3}$. Note that if there exists $s''\in S_{n-n^{c_3}}$ such that $s<s''<s'$, we are done. In the remainder of the proof, we show that there is such an $s''$.  By definition of $\H^m$, each request is either matched greedily, or it is matched to $0$. Hence, for all $j\in [n-n^{c_3}]$, if $r_j\notin (s, s')$, then $s_{\H^m}(r_j)\notin (s, s')$ (since $r_j$ is closer to either $s$ or $s'$ than any point in $(s, s')$, and both $s$ and $s'$ are available when $r_j$ arrives). Similarly, by the greediness of $\H^m$ for requests $r>y_0$ and since $s> y_0$ (by definition of $S_0$ and since $S_{n-n^{c_3}}\subseteq S_0$), if $r_j\in (s, s')$, then $s_{\H^m}(r_j)\in (s, s')$ for all $j \leq n - n^{c_3}$. Therefore,
\begin{equation*}
|\{ j\in [n-n^{c_3}] : s_{\H^m}(r_j)\in (s, s')\}| = |\{ j\in [n-n^{c_3}] : r_j\in (s, s')\}|.
\end{equation*}

Note that $d^+((s, s'))\cdot n^{1-2c_3}\geq  (s'-s)n^{1-2c_3}\geq 2\log(n)^4n^{1-2c_3} (n-n^{c_3}) = \Omega(1)$. Hence, by applying the first regularity condition with $t=0$, $t' = n-n^{c_3}$, and $(d,d') = d^+((s, s'))$, we have

\begin{align*}
    |\{ j\in [n-n^{c_3}] :\; r_j \in (s, s')\}|&\leq |\{ j\in [n-n^{c_3}] :\; r_j \in d^+((s, s'))\}| \\
    &\leq (n-n^{c_3})\cdot|d^+((s, s'))| + \log(n)^2\sqrt{(n-n^{c_3})\cdot|d^+((s, s'))|}\\
    &\leq (n-n^{c_3})(s'-s + \tfrac{2}{n}) + \log(n)^2\sqrt{(n-n^{c_3})(s'-s + \tfrac{2}{n})}\\
    &\leq (n-n^{c_3})(s'-s) + \log(n)^2\sqrt{(n-n^{c_3})(s'-s)} + 4\log(n)^2.
\end{align*}

By combining the previous inequality with the previous equality, we obtain that
\begin{equation}
    \label{eq:prop_lnbeta}
    |\{ j\in [n-n^{c_3}] : s_{\H^m}(r_j)\in (s, s')\}| \leq (n-n^{c_3})(s'-s) + \log(n)^2\sqrt{(n-n^{c_3})(s'-s)} + 4\log(n)^2.
\end{equation}

Now, since we assumed $s'-s>2\log(n)^4n^{1-2c_3}$, we have, for $n$ large enough and since $c_3\in (4/5,1)$, that $\left(1-4\log(n)^2n^{1/2-c_3}/(1-n^{-1/5})- (4\log(n)^2 +1)/((s'-s)n^{c_3})\right)^{-1}\leq 2$, thus we also have
\begin{align*}
    (s'-s)&>2\log(n)^4n^{1-2c_3}\\
    & \geq 2\log(n)^4\frac{(n-n^{c_3})}{n^{2c_3}}\\&\geq \log(n)^4\frac{(n-n^{c_3})}{n^{2c_3}(1-4\log(n)^2n^{1/2-c_3}/(1-n^{-1/5})- (4\log(n)^2 +1)/((s'-s')n^{c_3})}\\
    &= \log(n)^4\frac{(n-n^{c_3})}{(n^{c_3}-4\log(n)^2n^{1/2}/(1-n^{-1/5}) - (4\log(n)^2 +1)/(s'-s)) ^2}\\
    &= \log(n)^4\frac{(n-n^{c_3})}{((n-4\log(n)^2n^{1/2}/(1-n^{-1/5}) - (4\log(n)^2 +1)/(s'-s)) - (n-n^{c_3}))^2}\\
    &=\log(n)^4\frac{(n-n^{c_3})}{(\tilde{n} - (4\log(n)^2 +1)/(s'-s)-(n-n^{c_3}))^2},
\end{align*}
 
which implies that $(s'-s)^2(\tilde{n} - (4\log(n)^2 +1)/(s'-s) - (n-n^{c_3}))^2>\log(n)^4(n-n^{c_3})(s'-s)$. By taking the square root on both sides and reorganizing the terms, this gives
\[
(s'-s)\tilde{n} -1 > (n-n^{c_3})(s'-s) + \log(n)^2\sqrt{(n-n^{c_3})(s'-s)} + 4\log(n)^2.
\]

Combining this with (\ref{eq:prop_lnbeta}) and using that $(s'-s)\tilde{n}-1 \leq |S_0\cap (s,s')|$ by  Lemma\ref{fact:instance}, we obtain
\[
|\{ j\in [n-n^{c_3}] | s_{\H^m}(r_j)\in (s, s')\}|< (s'-s)\tilde{n}-1 \leq |S_0\cap (s,s')|,
\]
Hence, 
\begin{align*}
    |S_{n-n^{c_3}}\cap (s,s') | = |S_0\cap (s,s') \setminus \{ j\in [n-n^{c_3}] | s_{\H^m}(r_j)\in (s, s')\}| > 0.
\end{align*}
Thus, there exists $s'' \in S_{n-n^{c_3}}$ such that $s< s'' < s'$, which concludes the proof.
\end{proof}

\subsection{Missing analysis from Section \ref{subsec:analysis_of_S}}

\IiIi*

\begin{proof} We treat separately the cases $i=1$ and $i>1$. The case $i=1$ is immediate since by construction of the instance, $I_0\cap S_0 = (0, n^{-1/5}] \cap S_0 =  \emptyset$ whereas $I_1\cap S_0 = (n^{-1/5}, \frac{3}{2} n^{-1/5}] \cap S_0  \neq \emptyset$, which implies $t_0 = 0  <t_1$.

Next, assume $i\in \{2,\ldots, d_1\log(n)\}$. By definition of $t_{i-1}$, we have that $(y_{i-2},y_{i-1}]\cap S_{t_{i-1} - 1} \neq \emptyset$, which implies that $s_{t_{i-1} - 1,1} \leq y_{i-1}.$ In addition, because of the assumptions and by Lemma~\ref{claim:dlogn}, we have $ t_{i-1} - 1 < n - (1-c_2)^{i-1}n \leq n - n^{c_3}$. Hence, by applying Lemma~\ref{prop:l_nbeta} with $t=t_{i-1}$, we deduce $(y_{i-1},y_{i-1}  + 2\log(n)^4n^{1-2c_3}]\cap S_{t_{i-1} - 1}\neq \emptyset$. Then, since $c_3>3/4$ and since we assumed $n$ sufficiently large, we have that $y_{i-1}  + 2\log(n)^4n^{1-2c_3} = (3/2)^{i-1}n^{-1/5} + 2\log(n)^4n^{1-2c_3} \leq (3/2)^{i}n^{-1/5} = y_{i}$. Thus, we get that $(y_{i-1},y_{i}]\cap S_{t_{i-1} - 1} \neq \emptyset$. By definition of $t_{i},$ this implies that $t_{i} > t_{i-1} -1$. In addition, since $I_{i-1}$ and $I_{i}$ are disjoint, at most one of $I_{i-1}$ and $I_{i}$ can be depleted at each time step and we have that $t_{i-1} \neq t_{i}$. We conclude that $t_{i} > t_{i-1}$.
\end{proof}

\propfi*

\begin{proof}
Let $i\in [d_1\log(n)]$. We first upper bound $|\{j\in [\overline{t_i}]:r_j\in I_i, s_{\H^m}(r_j)>y_i\}|$. 

Since $\overline{t_i} = \min (t_i,t_{i-1} + c_2(n-t_{i-1}))\leq t_i$, we have that $(y_{i-1},y_{i}]\cap S_{\overline{t_i} - 1} \neq \emptyset$ by definition of $t_i$, which implies that $s_{\overline{t_i} - 1,1} \leq y_i$. Now, using the definition of $\overline{t_i}$, the assumption that $t_{i-1}\leq n - (1-c_2)^{i-1}n$,  and  Lemma~\ref{claim:dlogn}, we have $\overline{t_i} - 1 \leq t_{i-1} + c_2(n-t_{i-1}) -1 = c_2 n + t_{i-1}(1-c_2)-1< n - (1-c_2)^in \leq n - n^{c_3}$. Hence, by Lemma \ref{prop:l_nbeta}, applied with $t = \overline{t_i} - 1$, we obtain that there is $ s\in (y_{i} - 2\log(n)^4n^{1-2c_3},y_{i}]\cap S_{\overline{t_i}- 1}$. We let $s$ be such a server. Then, by the greediness of $\H^m$ for all requests $r> y_0$, we have that for all $j\in [\overline{t_i}]$, if $r_j\leq y_i - 2\log(n)^4n^{1-2c_3}$, then $s_{\H^m}(r_j)\leq  s\leq y_i$ (since $r_j$ is closer to $s$ than any point in $(s, y_i]$, and $s$ is available when $r_j$ arrives). 

Now, note that since we assumed $c_3>3/4$ and $n$ large enough, we have $y_i - 2\log(n)^4n^{1-2c_3} \geq  (3/2)^in^{-1/5}- 2\log(n)^4n^{-1/4}>(3/2)^{i-1}n^{-1/5} = y_{i-1}$. 

Hence, we can write $I_i = (y_{i-1}, y_i - 2\log(n)^4n^{1-2c_3}]$ $\cup (y_i - 2\log(n)^4n^{1-2c_3}, y_i]$. Since we have shown that $s_{\H^m}(r_j)\leq y_i$ for all $j$ such that $r_j\leq y_i - 2\log(n)^4n^{1-2c_3}$, we thus obtain 
\begin{align}
  |\{j\in [\overline{t_i}]&:r_j\in I_i, s_{\H^m}(r_j)>y_i\}| \nonumber\\
  & =|\{j\in [\overline{t_i}]:r_j\in (y_i - 2\log(n)^4n^{1-2c_3}, y_i], s_{\H^m}(r_j)>y_i\}| \nonumber \\
  & \leq|\{j\in [\overline{t_i}]:r_j\in (y_i - 2\log(n)^4n^{1-2c_3}, y_i]\}| \nonumber \\
  &\leq |\{j\in [n]:r_j\in d^+([y_i - 2\log(n)^4n^{1-2c_3}, y_i])\}| \nonumber \\
    &\leq |d^+([y_i - 2\log(n)^4n^{1-2c_3}, y_i])| \cdot n + \log^2(n) \sqrt{|d^+([y_i - 2\log(n)^4n^{1-2c_3}, y_i])| \cdot n} \nonumber \\
    &\leq (2\log(n)^4n^{1-2c_3} + 2/n) \cdot n + \log^2(n) \sqrt{(2\log(n)^4n^{1-2c_3} + 2/n)  \cdot n} \nonumber \\
  &= \tilde{O}(\sqrt{n})\label{eq:upbound_R_fi}.
\end{align}
where the third inequality is by the second regularity condition, applied with  $t=0$, $t'=n$, $[d,d'] = d^+([y_i - 2\log(n)^4n^{1-2c_3}, y_i])$, which satisfy the condition $(t-t')(d-d') = n\cdot 2\log(n)^4n^{1-2c_3}= \Omega(1)$ since $c_3<1$. The fourth inequality is by definition of $d^+(\cdot)$, and the fifth is since $c_3>3/4$.

Next, we upper bound $|\{j\in [\overline{t_i}]:r_j\in I_i, s_{\H^m}(r_j)\leq y_{i-1}\}|$.  We treat separately the cases where $j\in [t_{i-1}]$ and $j\in \{t_{i-1}+1, \overline{t_i}\}$.

First, consider the case $j\in [t_{i-1}]$. If $i=1$, then by construction of the instance, $I_0\cap S_{0} = \emptyset$, hence $t_{0} = 0$ and we have the trivial identity $|\{j\in [t_{0}]:r_j\in I_1, s_{\H^m}(r_j)>y_1\}| = 0$. Now, for $i>1$, by definition of $t_{i-1}$, we have that $(y_{i-2},y_{i-1}]\cap S_{t_{i-1} - 1} \neq \emptyset$, which implies that $s_{t_{i-1} - 1,1} \leq y_{i-1}$. Since by assumption and by Lemma~\ref{claim:dlogn}, we have $t_{i-1}-1<n - (1-c_2)^{i-1}n \leq n-n^{c_3}$, we obtain, by applying Lemma \ref{prop:l_nbeta} at time $t = t_{i-1}-1$ and by a similar argument as in  \eqref{eq:upbound_R_fi}:
\begin{align}
  |\{j\in [t_{i-1}]:r_j\in I_i, s_{\H^m}(r_j)\leq y_{i-1}\}| & \leq|\{j\in [n]:r_j\in d^+([y_{i-1}, y_{i-1} + 2\log(n)^4n^{1-2c_3}])\}| = \tilde{O}(\sqrt{n})\label{eq:upbound_R_fi_2}
\end{align}

Now, for all $j\in \{t_{i-1}+1, \overline{t_i}\}$, since $j\geq t_{i-1}+1$ and $t_0<\ldots< t_{i-1}$ by assumption, we have that $(0,y_{i-1}]\cap S_{j-1} = \left(\bigcup_{k\in [i-1]}(y_{k-1}, y_k] \right)\cap S_{j-1} = \emptyset$ by definition of $t_0,\ldots, t_{i-1}$; and since  $j \leq \overline{t_i}\leq t_i$, we have that $(y_{i-1}, y_i]\cap S_{j-1}\neq \emptyset$ by definition of $t_i$. Hence, if $r_j\in I_i$, we either have $s_{\H^m}(r_j) > y_{i-1}$ or $s_{\H^m}(r_j) =0$. By the greediness of $\H^m$ for all $r> n^{-1/5}$ and since $|y_i - r_j|\leq |y_i - y_{i-1}| = |(3/2)^{i}n^{-1/5} - (3/2)^{i-1}n^{-1/5}| = 1/2(3/2)^{i-1}n^{-1/5} \leq |r_j-0|$ for any $r_j\in I_i$, we have that  $s_{\H^m}(r_j)> y_{i-1}$ for any $r_j\in I_i$. 
Hence, we get that
\[|\{j\in \{t_{i-1}+1, \overline{t_i}\}:r_j\in I_i, s_{\H^m}(r_j)\leq y_{i-1}\}| =0.\]
Combining this with (\ref{eq:upbound_R_fi_2}), we obtain:
\begin{equation}
\label{eq:upbound_R_fi_3}
 |\{j\in [\overline{t_i}]:r_j\in I_i, s_{\H^m}(r_j)\leq y_{i-1}\}| =  \tilde{O}(\sqrt{n}).   
\end{equation}
Finally, from (\ref{eq:upbound_R_fi}) and (\ref{eq:upbound_R_fi_3}), we get
\begin{align*}
\label{eq:upbound_R_fi_3}
    &|\{j\in [\overline{t_i}]:r_j\in I_i, s_{\H^m}(r_j)\notin I_i\}|\\
    &=  |\{j\in [\overline{t_i}]:r_j\in I_i, s_{\H^m}(r_j)>y_i\}| + |\{j\in [\overline{t_i}]:r_j\in I_i, s_{\H^m}(r_j)\leq y_{i-1}\}|
    =\tilde{O}(\sqrt{n}).
\end{align*}
\end{proof}

\propgi*

\begin{proof}
Consider $j \in \{t_{i-1} + 1 +c_1(n - t_{i-1}),\ldots ,\overline{t_i}\}$ and assume that $r_j \in [\tfrac{3}{4} y_{i-1}, y_{i-1}]$. Since $j\geq t_{i-1}+1$ and $t_0<\ldots< t_{i-1}$ by assumption, we have $j > t_\ell$ for all $\ell \in [i-1]$. Thus, by definition of $t_0,\ldots, t_{i-1}$, we have that $(0,y_{i-1}]\cap S_{j-1} = \left(\bigcup_{k\in [i-1]}(y_{k-1}, y_k] \right)\cap S_{j-1} = \emptyset$. In addition, since  $j \leq \overline{t_i}\leq t_i$, we have that $(y_{i-1}, y_i]\cap S_{j-1}\neq \emptyset$ by definition of $t_i$. Thus, by the greediness of $\H^m$ for all  $j\geq t_{i-1} + 1 +c_1(n - t_{i-1})\geq c_1n\geq m$, and since $r_j \in [\tfrac{3}{4} y_{i-1}, y_{i-1}]$, we either have $s_{\H^m}(r_j)\in (y_{i-1}, y_i]$ or $s_{\H^m}(r_j) = 0$. 
 Now, since $(y_{i-1}, y_i]\cap S_{j-1}\neq \emptyset$, and since for any $s\in (y_{i-1}, y_i]\cap S_{j-1}$, we have 
 \[
|s-r_j|\leq |y_{i}- \tfrac{3}{4} y_{i-1}| = (3/2)^{i-1}n^{-1/5}[\tfrac{3}{2}-\tfrac{3}{4}] = |\tfrac{3}{4} y_{i-1}|\leq |r_j-0|,
\]
we must have $s_{\H^m}(r_j) \in (y_{i-1}, y_i]$. Hence, 
\begin{align}
&|\{j \in \{t_{i-1} + c_1(n - t_{i-1}),\ldots ,\overline{t_i}\}:r_j\in [\tfrac{3}{4} y_{i-1}, y_{i-1}]\}|\nonumber\\
& = |\{j \in \{t_{i-1} + c_1(n - t_{i-1}),\ldots ,\overline{t_i}\}:r_j\in [\tfrac{3}{4} y_{i-1}, y_{i-1}], s_{\H^m}(r_j)\in (y_{i-1}, y_i]\}|\label{eq:34}.
\end{align}

Now, since we assumed that the sequence of requests is regular, by applying the first regularity condition with $t = t_{i-1} + c_1(n - t_{i-1})$, $t' = \overline{t_i}$ and $[d,d'] = d^-([\tfrac{3}{4} y_{i-1}, y_{i-1}])$), we have that
\begin{align}
|\{j \in &\{t_{i-1}+1 + c_1(n - t_{i-1}),\ldots ,\overline{t_i}\}:r_j\in [\tfrac{3}{4} y_{i-1}, y_{i-1}]\}\nonumber\\
&\geq|\{j \in \{t_{i-1}+1 + c_1(n - t_{i-1}),\ldots ,\overline{t_i}\}:r_j\in d^-([\tfrac{3}{4} y_{i-1}, y_{i-1}])\}|\nonumber\\
&\geq d^-([\tfrac{3}{4} y_{i-1}, y_{i-1}])\cdot (\overline{t_i} - t_{i-1} - c_1 (n - t_{i-1})-1) \\&\qquad\qquad\qquad\qquad\qquad\qquad -\log(n)^2\sqrt{d^-([\tfrac{3}{4} y_{i-1}, y_{i-1}])\cdot (\overline{t_i} - t_{i-1} - c_1 (n - t_{i-1})-1)}.\nonumber\\
&\geq (y_{i-1}-\tfrac{3}{4} y_{i-1}-2/n) (\overline{t_i} - t_{i-1} - c_1 (n - t_{i-1})-1) \nonumber\\
&\qquad\qquad\qquad\qquad\qquad\qquad-\log(n)^2\sqrt{(y_{i-1}-\tfrac{3}{4} y_{i-1}-2/n) (\overline{t_i} - t_{i-1} - c_1 (n - t_{i-1})-1)}.\nonumber\\
&\geq (\overline{t_i} - t_{i-1} - c_1 (n - t_{i-1})) \frac{y_{i-1}}{4}- \tilde{O}(\sqrt{n})\label{eq:LBti}.
\end{align}
By combining (\ref{eq:LBti}) and (\ref{eq:34}), and noting that $|I_i| = (3/2)^{i}n^{-1/5} - (3/2)^{i-1}n^{-1/5} = (3/2)^{i-1}n^{-1/5}\cdot \frac{1}{2} =  \frac{y_{i-1}}{2}$, we finally obtain that
\begin{align*}
&|\{j \in \{t_{i-1} + c_1(n - t_{i-1}),\ldots ,\overline{t_i}\}:r_j\in [\tfrac{3}{4} y_{i-1}, y_{i-1}], s_{\H^m}(r_j)\in (y_{i-1}, y_i]\}|\\
&\geq \frac{1}{2}(\overline{t_i} - t_{i-1} - c_1 (n - t_{i-1}))|I_i|- \tilde{O}(\sqrt{n}).
\end{align*}
\end{proof}

\propmt*

\begin{proof} First, note that, since $t_1$ is the time at which $I_1$ is depleted, we have, using  Lemma \ref{fact:instance}, that
\begin{equation}
\label{eq:upt1}
    |\{j\in [t_1]: s_{\H^m}(r_j) \in I_1\}| = |S_0\cap I_1|\geq |I_1|\tilde{n} -1.
\end{equation}
 In particular, $t_1 \geq |\{j\in [t_1]: s_{\H^m}(r_j) \in I_1\}|\geq |I_1|\tilde{n} -1$; thus for all $i\geq 0$, we have 
\begin{equation}
\label{eq:t1_lbtrivial}
    t_1|I_i|= \Omega(|I_1||I_i| \tilde{n}) = \Omega(1).
\end{equation}

Next, we upper bound $|\{j\in [t_1] : s_{\H^m}(r_j)\in I_1\}|$. Since $t_1<t_2$ by assumption, for all $j\in [t_1]$, we have that $(y_1,y_2]\cap S_{j-1}\neq \emptyset$ by definition of $t_2$. Hence, by the greediness of $\H^m$ for all requests $r> y_0$, if $r_j> y_2$, we have $s_{\H^m}(r_j)> y_1$ (since $r_j$ is closer to any $s\in (y_1,y_2]\cap S_{j-1}$ than any point in $[0,y_1]$). We thus get
\[
|\{j\in [t_1]\;| s_{\H^m}(r_j)\in I_1\}|\leq |\{j\in [t_1]: r_j \leq y_2\}| 
\]
Now, note that by (\ref{eq:t1_lbtrivial}), we have $t_1|d^+(I_i)| = \Omega(1)$ for all $i$. Hence, by applying the second regularity condition with $t=0$, $t'= t_1$, and $[d,d'] = d^+(I_0), d^+(I_1), d^+(I_2)$, respectively, we get
\begin{align*}
|\{j\in [t_1]&: r_j \leq y_2\}| \\
&\leq |\{j\in [t_1]: r_j \in d^+(I_0)\}| + |\{j\in [t_1]: r_j \in d^+(I_1)\}| + |\{j\in [t_1]: r_j \in d^+(I_2)\}|\\
&\leq (d^+(I_0) + d^+(I_1)+ d^+(I_2)) t_1 + \log(n)^2(\sqrt{d^+(I_0)t_1}+ \sqrt{d^+(I_1)t_1} + \sqrt{d^+(I_2)t_1})\\
&\leq (|I_0| + |I_1| + |I_2| + 6/n) t_1 + \log(n)^2(\sqrt{|I_0|t_1 + 2/n}+ \sqrt{|I_1|t_1 + 2/n} + \sqrt{|I_2|t_1 + 2/n})\\
&\leq (|I_0| + |I_1| + |I_2|)t_1 + \tilde{O}(\sqrt{n})\\
& = \frac{9}{2}|I_1| t_1 + \tilde{O}(\sqrt{n}),
\end{align*}
where the first inequality is since $(0,y_2] = I_0\cup I_1\cup I_2$, and the equality is since $|I_1| = \frac{n^{-1/5}}{2}$ and $|I_0\cup I_1\cup I_2| = (3/2)^2n^{-1/5}$.
Hence, by combining the two previous inequalities, we obtain
\begin{equation}
\label{eq:lbt1}
    |\{j\in [t_1]\;| s_{\H^m}(r_j)\in I_1\}| \leq \frac{9}{2}|I_1| t_1 + \tilde{O}(\sqrt{n}).
\end{equation}

Combining (\ref{eq:upt1}) and (\ref{eq:lbt1}), and reorganizing the terms, we get
\[
t_1\geq \frac{2\tilde{n}}{9} - \tilde{O}(\sqrt{n}/|I_1|) =\frac{2(n - \excess/(1-n^{-1/5}))}{9} - \tilde{O}(\sqrt{n}/n^{-1/5})=  \frac{2n}{9} - \tilde{O}(n^{7/10}). 
\]
Hence, since we chose $c_1< 2/9$, and since we assumed $n$ sufficiently large, we have $t_1> c_1n$.
\end{proof}

\lemmto*
\begin{proof}
Note that by definition of $t_1$ and since we assumed $c_1n<t_1$, we have that for all $j\in [c_1n]$, $S_{j}\cap I_1\neq \emptyset$. Hence, by the greediness of $\H^m$ for all requests $r> y_0$, if $r_j> y_1$, we have $s_{\H^m}(r_j)> y_0$ (since $r_j$ is closer to any $s\in (y_0,y_1]\cap S_{j-1}$ than to the servers at location $0$). We thus get
\begin{equation}
\label{eq:c*m}
   |\{j\in [c_1n]\;| s_{\H^m}(r_j)=0\}|\leq |\{j\in [c_1n]\;| s_{\H^m}(r_j)\in [0,y_0)\}|\leq |\{j\in [c_1n]: r_j \leq y_1\}| . 
\end{equation}

Now, note that we have $c_1n|d^+(I_j)| = \Omega(1)$ for $j=0,1$. Hence, by applying the second regularity condition with $t=0$, $t'= c_1n$, and $[d,d'] = d^+(I_0), d^+(I_1)$, respectively, we get
\begin{align*}
|\{j\in [c_1n]: r_j \leq y_1\}|&\leq |\{j\in [c_1n]: r_j \in d^+(I_0)\}| + |\{j\in [c_1n]: r_j \in d^+(I_1)\}|\\
&\leq (d^+(I_0) + d^+(I_1)) c_1n + \log(n)^2(\sqrt{d^+(I_0)c_1n}+ \sqrt{d^+(I_1)c_1n} )\\
&\leq (|I_0| + |I_1|)c_1n + \tilde{O}(\sqrt{n})\\
& = c_1(3/2)n^{-1/5}\cdot n + \tilde{O}(\sqrt{n}),\\
&< n^{4/5}\\
&\leq |S_0\cap \{0\}|,
\end{align*}
where the fourth inequality is since we set $c_1<2/3$ and since we assumed $n$ large enough and the last one by definition of the instance. Hence, combining this with (\ref{eq:c*m}), we get 
$|S_{c_1n}\cap\{0\}| = |S_0\cap \{0\}| - |\{j\in [c_1n]\;| s_{\H^m}(r_j)=0\}|>0$.
By definition of $t_{\{0\}}$, we deduce that $t_{\{0\}}> c_1n$.
\end{proof}

\subsection{Missing analysis from Section \ref{subsec:lower_bounc_cost}}

In the following, we write  $\mathcal{N}(r_t) = \{\max\{z\in S_{t-1}: z\leq r_t\}, \min\{z\in S_{t-1}: z\geq r_t\}\}$ and $\mathcal{N}(r_t)' = \{\max\{z\in S_{t-1}': z\leq r_t\}, \min\{z\in S_{t-1}': z\geq r_t\}\}$  to denote the servers in $S_t$ and $S_t'$ which
are either closest on the left or closest on the right to $r_t$. We also write $s(r_t)$ and $s'(r_t)$ to denote the servers to which $r_t$ is matched by $\H^m$ and $\H^{m-1}$, respectively.

\lemconfigSS*

\begin{proof}
It is immediate that $S_t = S_t'$ for all $t\in \{0,\ldots, m-1\}$ since $\H^m$ and $\H^{m-1}$ make the same matching decisions until time $m-1$. 

Next, we show that either $S_m =  S_m'$, or $\{s>0|s\in S_m\} \neq \emptyset$ and $ S_m' = S_m\cup \{0\}\setminus\{s_{m,1}\}$. We consider different cases depending on the location of request $r_m$.
\begin{itemize}
    \item \textbf{Case 1: $r_m\in (n^{-1/5},1]$ or ($r_m\in [0, n^{-1/5}]$ and $S_{m-1}\cap \{0\} = \emptyset$).} In this case, both $\H^m$ and $\H^{m-1}$ match $r_m$ greedily. Since we also have $S_{m-1}= S_{m-1}'$, we get $S_m = S_m'$.
    \item \textbf{Case 2: $r_m\in [0, n^{-1/5}]$ and $S_{m-1}\cap \{0\} \neq \emptyset$.} In this case, $\H^{m}$ matches $r_m$ to $0$, i.e. $s(r_m) = 0$, while $\H^{m-1}$ matches $r_m$ greedily. Note that $S_{m-1}'\cap (0,n^{-1/5}]\subseteq S_0'\cap (0,n^{-1/5}]=\emptyset$, hence $s_{m-1,1}' = \min\{s>0: s\in S_{m-1'}\}\geq n^{-1/5}$ and we thus have $\mathcal{N}'(r_m) \subseteq \{0, s_{m-1,1}'\}$. Since $\H^{m-1}$ matches $r_m$ greedily, we get that $s(r_m)' \in \{0, s_{m-1,1}'\}$. 
    
    We now consider two cases: \\
    (1) $s'(r_m) = 0$. In this case, we have $s'(r_m) = s(r_m)= 0$, hence $S_m = S_m'$. 
    
    (2) $s'(r_m) = s_{m-1,1}'$. In this case, we have $s(r_m)= 0$ and $ s'(r_m) = s_{m-1,1}' =  \min\{s>0: s\in S_{m-1}'\} = \min\{s>0: s\in S_{m-1}\} = \min\{s>0: s\in S_{m}\cup\{0\}\} = \min\{s>0: s\in S_{m}\} = s_{m,1}$.
    Hence $S_{m}' = S_{m-1}'\setminus\{s'(r_m)\}= S_{m-1}'\setminus\{s_{m,1}\} =  S_{m-1}\setminus\{s_{m,1}\} = S_{m}\cup\{0\}\setminus\{s_{m,1}\}$.
\end{itemize}

Hence we either have that $S_m = S_m'$ or that $S_m' = S_{m}\cup\{0\}\setminus\{s_{m,1}\}$. Now, we show by induction on $t$ that for all $t\in \{m, \ldots, n\}$,  we either have that $S_t = S_t'$ or that $S_t' = S_{t}\cup\{0\}\setminus\{s_{t,1}\}$.

Fix $t\in \{m, \ldots, n-1\}$. If $S_t = S_t'$, it is immediate that $S_{t+1} = S_{t+1}'$ and we are done. We now assume that  $S_t' = S_{t}\cup\{0\}\setminus\{s_{t,1}\}$.
We thus have that $S_t' = S_{t}\cup\{g_t^L\}\setminus\{g_t^R\}$ with $g_t^L = 0, g_t^R = s_{t,1}$. To get the values of $S_{t+1}, S_{t+1}'$, we apply the third part of Lemma \ref{lem:structural_hybrid}, noting that we have here $g_t^L = 0, g_t^R = s_{t,1},s_t^L = 0,  s_t^R = s_{t,2}, d_t^L = |g_t^L-s_t^L| = 0, d_t^R = |s_t^R- g_t^R| = s_{t_2} - s_{t_1}$. We enumerate below all possible values of $S_{t+1},S_{t+1}'$ by reporting the values given in Tables \ref{tab:delta_general}, \ref{tab:delta_special} and \ref{tab:delta_special2} (note that the roles of $S_t$ and $S_t'$ are reversed since $S_t' = S_{t}\cup\{g_t^L\}\setminus\{g_t^R\}$ here instead of $S_t = S_{t}'\cup\{g_t^L\}\setminus\{g_t^R\}$ as in the statement of Lemma \ref{lem:structural_hybrid}).
\begin{itemize}
    \item \textbf{Case 1: $s_{t,2}\neq \emptyset$ and $S_t\cap \{0\}\neq \emptyset$.} In this case, the values of $S_{t+1},S_{t+1}'$ are obtained by using Table \ref{tab:delta_general}. There are three possible cases: (1) $S_{t+1} = S_{t+1}'$ (Column 4 of Table \ref{tab:delta_general}) (2) $S_{t+1}' = S_{t+1}\cup\{0\}\setminus\{s_{t,1}\}$  and $s_{t,1}\in S_{t+1}$ (Column 2,3,6,7) (3) $S_{t+1}' = S_{t+1}\cup\{0\}\setminus\{s_{t,2}\}$ and $s_{t,1}\notin S_{t+1}$ (Column 5). 
    \item \textbf{Case 2: $S_t\cap \{0\}= \emptyset$ and $s_{t,2}\neq \emptyset$.} In this case, the values of $S_{t+1},S_{t+1}'$ are obtained by using Table \ref{tab:delta_special}. There are three possible cases: (1) $S_{t+1} = S_{t+1}'$ (Column 2) (2) $S_{t+1}' = S_{t+1}\cup\{0\}\setminus\{s_{t,1}\}$  and $s_{t,1}\in S_{t+1}$ (Column 4,5) (3) $S_{t+1}' = S_{t+1}\cup\{0\}\setminus\{s_{t,2}\}$ and $s_{t,1}\notin S_{t+1}$ (Column 3).
     \item \textbf{Case 3: $s_{t,2}=\emptyset$ and $S_t\cap \{0\}\neq\emptyset$.} In this case, the values of $S_{t+1},S_{t+1}'$ are obtained by using Table \ref{tab:delta_special2}. There are two possible cases: (1) $S_{t+1} = S_{t+1}'$ (Column 5) (2) $S_{t+1}' = S_{t+1}\cup\{0\}\setminus\{s_{t,1}\}$  and $s_{t,1}\in S_{t+1}$ (Column 2,3,4).
       \item \textbf{Case 4: $s_{t,2}= \emptyset$ and $S_t\cap \{0\}= \emptyset$.} From Lemma \ref{lem:structural_hybrid}, we get $S_{t+1} = S_{t+1}'$.
\end{itemize}

In all cases, we get that either (1) $S_{t+1} = S_{t+1}'$, (2) $S_{t+1}' = S_{t+1}\cup\{0\}\setminus\{s_{t,1}\}$ and $s_{t,1}\in S_{t+1}$ or (3) $S_{t+1}' = S_{t+1}\cup\{0\}\setminus\{s_{t,2}\}$ and $s_{t,1}\notin S_{t+1}$.  If case (2) holds, and since $s_{t,1}\in S_{t+1}$, note that $s_{t+1,1} = \min\{s\in S_{t+1}:s>0\} = s_{t,1}$, and if case (3) holds, since $s_{t,1}\notin S_{t+1}$, note that $s_{t+1,1} = \min\{s\in S_{t+1}:s>0\} = \min\{s\in S_{t}\setminus\{s_{t,1}\}:s>0\} = s_{t,2}$. In all cases, we have that either $S_{t+1} = S_{t+1}'$, or $S_{t+1}' = S_{t+1}\cup\{0\}\setminus\{s_{t+1,1}\}$, which concludes the inductive case and the proof. \end{proof}

\lemMCdef*

\begin{proof} In the following, we consider a fixed $t\in \{m,\ldots, n-1\}$. We start by the proof of Point 1.

\noindent\textbf{Proof of Point 1.}
By definition of $\delta$, if $\delta_t = 0$, then $S_t=S_t'$. Since both $\H^m$ and $\H^{m-1}$ match all requests $r_{t+1},\ldots, r_n$ greedily, it is immediate that $S_j = S_j'$ for all $j\geq t$, which also implies that $\delta_j = 0$ and $\Delta\text{cost}_{j+1} = 0$ for all $j\geq t$.

We now show points 2,3,4,5. First, note that by Lemma \ref{lem:configSS'}, we have that either $S_t = S_t'$ or $S_t' = S_{t}\cup\{0\}\setminus\{s_{t,1}\}$. Since all properties follow immediately when  $S_t = S_t'$, we assume in the following that $S_t' = S_{t}\cup\{0\}\setminus\{s_{t,1}\}$.

\noindent\textbf{Proof of Point 2.}
Assume that $S_{t}\cap \{0\}\neq \emptyset$. Then, $S_t' \cap [0,1] = (S_t\cup\{0\}\setminus\{s_{t,1}\})\cap [0,1]\subseteq S_t\cap [0,1]$, hence, for any value of $r_{t+1}\in [0,1]$, we have, by definition of the process:
\begin{align*}
    \text{cost}_{t+1}(\H^m) = |r_{t+1}- s(r_{t+1})| &=  \min_{s\in S_t\cap [0,1]} |r_{t+1}-s| \\
    &\leq \min_{s\in S_t'\cap [0,1]} |r_{t+1}-s| 
    = |r_{t+1}- s'(r_{t+1})| = \text{cost}_{t+1}(\H^{m-1}).
\end{align*}
Thus, $\Delta\text{cost}_{t+1} = \text{cost}_t(\H^{m-1}) - \text{cost}_t(\H^{m})\geq 0$.

\begin{figure}
    \centering
    \includegraphics[scale=0.38]{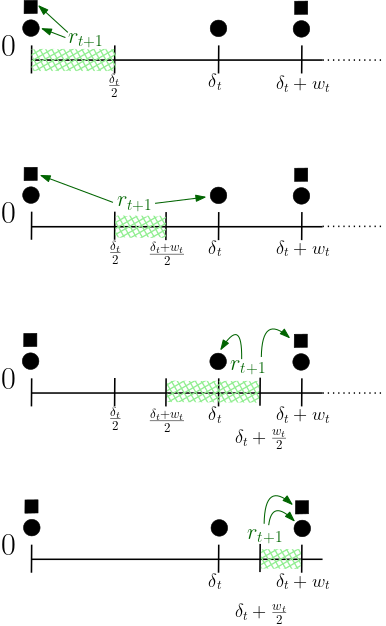}
    \caption{Illustration of the different cases in Point 3 of Lemma \ref{lem:MC_def}.}
    \label{fig:process_lb}
\end{figure}

\noindent\textbf{Proof of point 3.} In the remainder of this paragraph, we condition on the variables $(\delta_t, S_t)$ and we assume that $S_{t}\cap \{0\}\neq \emptyset$, $\delta_t\neq 0$ and $|S_t\cap (\delta_t,1]|\geq 1$.

To get the values of $(\delta_{t+1}, S_{t+1})$ depending on the location of $r_{t+1}$, we apply the third point of Lemma \ref{lem:structural_hybrid}, by noting that we have in this case $g_t^L = 0, g_t^R = s_{t,1},s_t^L = 0,  s_t^R = s_{t,2}, d_t^L = |g_t^L-s_t^L| = 0, d_t^R = |s_t^R- g_t^R| = s_{t_2} - s_{t_1} = w_t$. The values given in Table \ref{tab:MC_gamma_S} are thus directly reported from Table \ref{tab:delta_general} (see Figure \ref{fig:process_lb} for an illustration of the different cases).

Next, we give a lower bound on the expected value of  $\Delta\text{cost}_{t+1}$ depending on $r_{t+1}$. Note that, since $S_t\cap \{0\}\neq \emptyset$, we already have that $\Delta\text{cost}_{t+1}\geq 0$ from Point 2 and we can fill the corresponding values in Table \ref{tab:MC_gamma_S}. We thus only need to refine the lower bound on $\Delta\text{cost}_{t+1}$ in the case $r_{t+1} \in [\tfrac{\delta_t+w_t}{2}, \delta_t +\tfrac{w_t}{2}]$ and $w_t\leq \delta_t$. To ease the exposition, we let $\mathcal{E}_t$ be the set of all $(\delta, S)$ that  satisfy the assumptions of the third point of the lemma (i.e., $\mathcal{E}_t = \{(\delta, S)\in [0,1]\times [0,1]^{n-t}: S\cap \{0\}\neq \emptyset$, $\delta\neq 0$ and $|S\cap (\delta,1]|\geq 1\}$). Since $s'(r_{t+1}) = \delta_t + w_t$, we have
        \begin{align*}
             &\E[\text{cost}_{t+1}' \;|(\delta_t, S_t),(\delta_t, S_t)\in \mathcal{E}_t, w_t\leq \delta_t, r_{t+1} \in [\tfrac{\delta_t+w_t}{2}, \delta_t +\tfrac{w_t}{2}]] \\
             &= \E[|(\delta_t+w_t) - r_{t+1}|\;|(\delta_t, S_t),(\delta_t, S_t)\in \mathcal{E}_t, w_t\leq \delta_t, r_{t+1} \in [\tfrac{\delta_t+w_t}{2}, \delta_t +\tfrac{w_t}{2}]]\\
             &=  \E[(\delta_t+w_t) - r_{t+1}|(\delta_t, S_t),(\delta_t, S_t)\in \mathcal{E}_t,w_t\leq \delta_t, r_{t+1} \in [\tfrac{\delta_t+w_t}{2}, \delta_t +\tfrac{w_t}{2}]]\\
             &= \frac{w_t}{2} +  \E[(\delta_t+\tfrac{w_t}{2}) - r_{t+1}|(\delta_t, S_t),(\delta_t, S_t)\in \mathcal{E}_t,w_t\leq \delta_t, r_{t+1} \in [\tfrac{\delta_t+w_t}{2}, \delta_t +\tfrac{w_t}{2}]]\\
            &= \frac{w_t}{2} + \frac{(\delta_t +\tfrac{w_t}{2} - \tfrac{\delta_t+w_t}{2})}{2}\\
            &= \frac{w_t}{2} + \frac{\delta_t}{4},
        \end{align*}

        and since since $s(r_{t+1}) = \delta_t$, we have
        \begin{align*}
             &\E[\text{cost}_{t+1} \;|(\delta_t, S_t),(\delta_t, S_t)\in \mathcal{E}_t, w_t\leq \delta_t, \;r_{t+1} \in [\tfrac{\delta_t+w_t}{2}, \delta_t +\tfrac{w_t}{2}]]\\
             &=  \E[|\delta_t- r_{t+1}|\;|(\delta_t, S_t),(\delta_t, S_t)\in \mathcal{E}_t,w_t\leq \delta_t,  r_{t+1} \in [\tfrac{\delta_t+w_t}{2}, \delta_t +\tfrac{w_t}{2}]]\\
            &\leq \frac{(\delta_t +\tfrac{w_t}{2}) -\tfrac{\delta_t+w_t}{2}}{2}\\
             &= \frac{\delta_t}{4}, 
        \end{align*}
        where the inequality is since $\delta_t \in [\tfrac{\delta_t+w_t}{2}, \delta_t +\tfrac{w_t}{2}]$ when $w_t\leq \delta_t$.
        
        Hence,
        \[\E[\Delta\text{cost}_{t+1}| (\delta_t, S_t),(\delta_t, S_t)\in \mathcal{E}_t, w_t\leq \delta_t, r_{t+1} \in [\tfrac{\delta_t+w_t}{2}, \delta_t +\tfrac{w_t}{2}]]\geq \left((\frac{w_t}{2} + \frac{\delta_t}{4}) - \frac{\delta_t}{4}\right) \geq \frac{w_t}{2}.\]
        
\noindent\textbf{Proof of point 4.} By assumption, we have $\delta_t = \min\{s>0: s\in S_t\}\in S_t$. Now, assume that $s(r_{t+1}) \neq \delta_t$. Then, whatever the value of  $s'(r_{t+1})$, we have that $\delta_t \notin (S_t\cup \{0\}\setminus\{\delta_t\})\setminus \{s'(r_{t+1})\}$, whereas $\delta_t \in S_t\setminus \{s(r_{t+1})\}$. Thus, $S_{t+1}' = (S_t\cup \{0\}\setminus\{\delta_t\})\setminus \{s'(r_{t+1})\}\neq S_t\setminus \{s(r_{t+1})\} = S_{t+1}$, and by definition of $\delta$, we get $\delta_{t+1} = \min \{s>0: s\in S_{t+1}\} = \min\{s>0: s\in S_t\setminus\{s(r_{t+1})\}\} = \delta_t$. By contraposition, if $\delta_{t+1}\neq \delta_t$, then we must have $s(r_{t+1}) = \delta_t$, and we thus get $S_{t+1} = S_t\setminus\{s(r_{t+1})\} = S_t\setminus\{\delta_t\}$.

\noindent\textbf{Proof of point 5.}  We condition on the variables $(\delta_t, S_t)$ and assume that $S_t\cap \{0\}=\emptyset$ and $\delta_t\neq 0$. We first show that $\delta_{t+1} \neq 0$ if and only if $s'(r_{t+1})\neq 0$. 
\begin{itemize}
    \item $\Leftarrow$: Assume that $s'(r_{t+1})\neq 0$. Since $0\in (S_t\cup\{0\}\setminus\{s_{t,1}\})\setminus\{s'(r_{t+1})\}$, and $0\notin S_t$ by assumption, we have that whatever the value of  $s(r_{t+1})$, $S_{t+1}'= (S_t\cup\{0\}\setminus\{s_{t,1}\})\setminus\{s'(r_{t+1})\}\neq S_t\setminus\{s(r_{t+1})\} = S_{t+1}$. Thus, by construction of $\delta$, we get $\delta_{t+1} \neq 0$. 
    \item $\Rightarrow$: Assume, by contrapositive, that $s'(r_{t+1}) = 0$. Since by assumption, $S_t \cap\{0\} = \emptyset$, we have that $S_t = \{s_{t,1}, s_{t,2}\}\cup (S_t\cap (s_{t,2}, 1])$. Now, since $s'(r_{t+1}) = \argmin_{s\in S_t\cup\{0\}\setminus\{s_{t,1}\}}|s-r_{t+1}|$, we have that for all $s\in S_t\cap [s_{t,2},1]$, $|r_{t+1}-0|\leq |r_{t+1}-s|$; thus, if $r_{t+1}\geq s_{t,1}$, we have that $|r_{t+1}-s_{t,1}|\leq |r_{t+1}-0|\leq |r_{t+1}-s|$, and if $r_{t+1}\leq s_{t,1}$, it is immediate that for all $s\in S_t\cap [s_{t,2},1]$, $|s_{t,1} - r_{t+1}|\leq |s- r_{t+1}|$. Hence, we get that $s(r_{t+1}) = \argmin_{s\in S_t} |r_{t+1} - s| = s_{t,1}$, which immediately implies that $S_{t+1} = S_{t+1}'$, from which we deduce $\delta_{t+1}=0$.
\end{itemize}

We now show that $\Delta\text{cost}_{t+1}\geq 0$ when $s'(r_{t+1})\neq 0$. Since $s'(r_{t+1})\in S_t\cup\{0\}\setminus\{s_{t,1}\}$ and $s'(r_{t+1})\neq 0$, we have $s'(r_{t+1})\in S_t$. Thus,  $|s(r_{t+1})-r_{t+1}| = \min_{s\in S_t}|s-r_{t+1}|\leq |s'(r_{t+1})-r_{t+1}|$, and we deduce $\Delta\text{cost}_{t+1} = |s'(r_{t+1})-r_{t+1}| - |s(r_{t+1})-r_{t+1}|\geq 0$.

Hence, we have shown that $\delta_{t+1} \neq 0$ if and only if $s'(r_{t+1})\neq 0$, and that if $s'(r_{t+1})\neq 0$, then $\Delta\text{cost}_{t+1}\geq 0$. Using that it is always the case that $\Delta\text{cost}_{t+1}\in [-1,1]$, we get:
\begin{align*}
   & \E[\mathbbm{1}_{S_t\cap \{0\}=\emptyset, \delta_t\neq 0}\cdot\Delta\text{cost}_{t+1}|(\delta_t, S_t)]\\ &= \mathbbm{1}_{S_t\cap \{0\}=\emptyset, \delta_t\neq 0}\cdot\Big(\E[\Delta\text{cost}_{t+1}|(\delta_t, S_t), \delta_{t+1}\neq 0]\P(\delta_{t+1}\neq 0|(\delta_t, S_t)) \\
   &\qquad\qquad\qquad\qquad\qquad\qquad\qquad\qquad+ \E[\Delta\text{cost}_{t+1}|(\delta_t, S_t), \delta_{t+1}= 0]\P(\delta_{t+1}= 0|(\delta_t, S_t))\Big)\\
   &\geq 0  - \mathbbm{1}_{S_t\cap \{0\}=\emptyset, \delta_t\neq 0}\cdot\P(\delta_{t+1}=0|(\delta_t, S_t)),
\end{align*}
which concludes the proof of the fifth point and the proof of Lemma \ref{lem:MC_def}.
\end{proof}

\lemsimpleversionprocess*

\begin{proof} For all $j\in \{m,\ldots, n-1\}$, we  define the auxiliary stopping times: $t^{j,d} = \min \{t\geq j: \delta_t = 0\}$, $t^j_{\{0\}} = \min \{t\geq j: S_t\cap \{0\} = \emptyset\}$ and $t^j_{(0,y]} = \min \{t\geq j: S_t\cap \{(0,y]\} = \emptyset\}$. To ease the presentation, we write $\underline{t^j_y}$ and $\underline{t^{j,d}}$ instead of $\min(t^j_{\{0\}}, t^j_{[0,y)})$ and $\min(t^j_{\{0\}}, t^{j,d})$.

We now show by downward induction on $j$ that for any $j\in \{m,\ldots, n\}$, any pair $(x,S)$ with $x\in [0,1]$ and with $S$ a set of $n-j$ arbitrary servers in $[0,1]$ such that either $x= 0$ or  $x = \min\{s\in S: s>0\}$, and any $y\in (x,1]$, we have: 
\[
\P_{R}\Big(\underline{t^j_y}\leq \underline{t^{j,d}}|\delta_j=x,S_j =S\Big)\geq \frac{x}{y}.
\]

We first show the base case, which is for $j=n$. The only valid pair of $(x,S)$ is $(0, \emptyset)$, and it is immediate that for any $y\in (0,1]$, we have
\[
\P_{R}\Big(\underline{t^{n}_y}\leq \underline{t^{n,d}}|\delta_{n}=0,S_{n} =\emptyset\Big) = 1\geq  \frac{x}{y}.
\]

Next, let $j\in \{m,\ldots, n-1\}$, and assume that for any pair $(x,S)$ with $x\in [0,1]$ and with $S$ a set of $n-(j+1)$ arbitrary servers in $[0,1]$ such that either $x= 0$ or  $x = \min\{s\in S: s>0\}$, and any $y\in (x,1]$, we have 
\[
\P_{R}\Big(\underline{t^{j+1}_y}\leq \underline{t^{j+1,d}}|\delta_{j+1}=x,S_{j+1} =S\Big)\geq \frac{x}{y}.
\]

Now, consider some arbitrary pair $(x,S)$ with $x\in [0,1]$ and with $S$ a set of $n-j$ arbitrary servers in $[0,1]$ such that either $x= 0$ or  $x = \min\{s\in S: s>0\}$, and some arbitrary $y\in (x,1]$, and assume that $\delta_{j} = x$ and $S_{j} = S$. 

We first consider the case where $x>0$, $|(x,y]\cap S|\geq 1$ and $S\cap \{0\}\neq \emptyset$.

First, note that since $S_j \cap \{0\}\neq \emptyset$, $\delta_j =x >0$ and $S_j\cap (0,y]\supseteq \{x\} \neq \emptyset$, we have that $\underline{t^{j}_y}, \underline{t^{j,d}}\geq j+1$. We deduce the following proposition: for any $r\in [0,1]$, letting $\chi(x,S,r)$ and $T(x,S,r)$ be the value of $\delta_{j+1}$ and $S_{j+1}$ assuming that $\delta_j = x, S_j = S$ and $r_{j+1} = r$, we have
\begin{align}
    &\P_{R}\Big(\underline{t^{j}_y}\leq \underline{t^{j,d}}|\delta_j=x,S_j =S, r_{j+1} = r\Big)\nonumber\\
    & = \P_{R}\Big(\underline{t^{j}_y}\leq \underline{t^{j,d}}|\delta_j=x,S_j =S, \delta_{j+1} = \chi(x,S,r), S_{j+1} = T(x,S,r), r_{j+1} = r\Big)\nonumber\\
    & = \P_{R}\Big(\underline{t^{j+1}_y}\leq \underline{t^{j+1,d}}| \delta_j=x,S_j =S, \delta_{j+1} = \chi(x,S,r), S_{j+1} = T(x,S,r), r_{j+1} = r\Big)\nonumber\\
     & = \P_{R}\Big(\underline{t^{j+1}_y}\leq \underline{t^{j+1,d}}| \delta_{j+1} = \chi(x,S,r), S_{j+1} = T(x,S,r)\Big)\nonumber\\
    &\geq \frac{\chi(x,S,r)}{y},\label{eq:markov_2}
\end{align}

where the second equality is since $\underline{t^{j}_y}, \underline{t^{j,d}}\geq j+1$, the third equality is since conditioned on $S_{j+1}, \delta_{j+1}$, we have that $\{(\delta_t, S_t)\}_{t\in \{j+1, \ldots, n\}}$ is independent on $r_{j+1}, S_j, \delta_j$, and the  inequality is by the induction hypothesis.

We now enumerate five different cases depending on request $r_{j+1}$. Since we assumed that $|(x,y]\cap S|\geq 1$, $S\cap \{0\}\neq \emptyset$ and $\delta_j =x\neq 0$, we have by Lemma \ref{lem:MC_def} that the values of $\chi(x,S,r_{j+1})$  are the one given in Table \ref{tab:MC_gamma_S}, with $s^R = \min\{s\in S: s> x\}$ and $w = s^R-x$.

\begin{itemize}
    \item Case 1: $r_{j+1}\in [0,\tfrac{x}{2}]$. Then,  from Table \ref{tab:MC_gamma_S}, we  have  $\chi(x, S, r_{j+1}) = \delta_j = x$. Thus, by using (\ref{eq:markov_2}), we get
    \[ \P_{R}\Big(\underline{t^{j}_y}\leq \underline{t^{j,d}}|\delta_j=x,S_j =S, r_{j+1}  \in [0,\tfrac{x}{2}]\Big)  \geq \frac{x}{y}.\]

    \item Case 2: $r_{j+1}\in [\tfrac{x}{2}, \tfrac{x+w}{2}]$. Trivially, we have
    \[ \P_{R}\Big(\underline{t^{j}_y}\leq \underline{t^{j,d}}|\delta_j=x,S_j =S, r_{j+1} \in [\tfrac{x}{2}, \tfrac{x+w}{2}]\Big)  \geq 0.\]

    \item Case 3: $r_{j+1}\in  [\tfrac{x+w}{2},x+\tfrac{w}{2}]$. Then,  from Table \ref{tab:MC_gamma_S}, we  have  $\chi(x, S, r_{j+1}) = s^R$. 
    Thus, by using (\ref{eq:markov_2}), we get
    \[ \P_{R}\Big(\underline{t^{j}_y}\leq \underline{t^{j,d}}|\delta_j=x,S_j =S, r_{j+1} \in [\tfrac{x+w}{2},x+\tfrac{w}{2}]\Big)  \geq \frac{s^R}{y} = \frac{x+w}{2}.\]

    \item Case 4: $r_{j+1}\in  [x+\tfrac{w}{2},x+w]$. Then, from Table \ref{tab:MC_gamma_S}, we have  $\chi(x, S, r_{j+1}) = x$. 
    
    Thus, by using (\ref{eq:markov_2}), we get
    \[ \P_{R}\Big(\underline{t^{j}_y}\leq \underline{t^{j,d}}|\delta_j=x,S_j =S, r_{j+1} \in [x+\tfrac{w}{2},x+w]\Big)  \geq \frac{x}{y}.\]

    \item Case 5: $r_{j+1}\in  [x+w,1]$. Then, from Table \ref{tab:MC_gamma_S}, we have that $\chi(x, S, r_{j+1}) = x$. Thus, by using (\ref{eq:markov_2}), we get
    \[ \P_{R}\Big(\underline{t^{j}_y}\leq \underline{t^{j,d}}|\delta_j=x,S_j =S, r_{j+1}\in  [x+w,1]\Big)  \geq \frac{x}{y}.\]
    
\end{itemize}

By combining the five cases above, we get
\begin{align*}
    &\P_{R}\Big(\underline{t^{j}_y}\leq \underline{t^{j,d}}|\delta_j=x,S_j =S\Big) \\
    \hspace{0.1cm} &\geq \P(r_{j+1}\in [0,\tfrac{x}{2}])\cdot\frac{x}{y} 
    + 0 
    + \P(r_{j+1}\in  [\tfrac{x+w}{2},x+\tfrac{w}{2}])\cdot\frac{x+w}{y}\\
    \hspace{0.1cm} &+ \P(r_{j+1}\in  [x+\tfrac{w}{2},x+w])\cdot\frac{x}{y}
    + \P(r_{j+1}\in  [x+w,1])\cdot\frac{x}{y}\\
    \hspace{0.1cm} &= \frac{x}{2}\cdot\frac{x}{y} +  \frac{x}{2}\cdot\frac{x+w}{y} + \frac{w}{2}\cdot\frac{x}{y} + (1- (x+w))\cdot\frac{x}{y}\\
    \hspace{0.1cm} &=  \frac{x}{y}\cdot\Big(\frac{x}{2} +  \frac{x+w}{2} + \frac{w}{2}+ (1- (x+w))\Big)\\
    \hspace{0.1cm} & = \frac{x}{y}.
\end{align*}

It remains to show the inductive case when either $x=0$, $|(x,y]\cap S|=0$, or $S\cap \{0\}= \emptyset$. Note that if $x=0$, it is immediate that 
\[
\P_{R}\Big(\underline{t^{j}_y}\leq \underline{t^{j,d}}|\delta_j=x,S_j =S\Big)\geq 0 = \frac{x}{y},
\]
and if $S\cap \{0\}=\emptyset$, then $\underline{t^{j}_y} = \underline{t^{j,d} }= j$ and we have 
\[
\P_{R}\Big(\underline{t^{j}_y}\leq \underline{t^{j,d}}|\delta_j=x,S_j =S\Big)=1 \geq \frac{x}{y}.
\]

Finally, if $|(x,y]\cap S|=0$ and $x>0$, then 
\[S_j\cap (0,y] = (S\cap (0,x])  \cup (S\cap (x,y]) = \{x\} \cup \emptyset  = \{\delta_j\},
\]
where the second equality is since $\min\{s>0|s\in S\} = x$ and the assumption that $|S\cap (x,y]| = 0$.  Now, we have
\begin{align*}
   \underline{t^{j,d}} &:=\min\{t\geq j: \delta_{t} = 0 \text{ or }  S_t\cap \{0\} = \emptyset\} \\
  &\geq \min\{t\geq j: \delta_{t} \neq \delta_j \text{ or }  S_t\cap \{0\} = \emptyset\}\\
    &= \min \{t\geq j: S_t = S_{t-1}\setminus\{\delta_j\} \text{ or }  S_t\cap \{0\} = \emptyset\}\\
   &= \min \{t\geq j:  S_t\cap (0,y] = \emptyset \text{ or }  S_t\cap \{0\} = \emptyset\}\\
   &=: \underline{t^{j}_y}, 
\end{align*}
where the inequality is since $ \delta_j  = x >0$, the second equality is from the fourth point of Lemma \ref{lem:MC_def}, and the third equality is since $S_j\cap (0,y] = \{\delta _j\}$. Hence we also get
\[
\P_{R}\Big(\underline{t^{j}_y}\leq \underline{t^{j,d}}|\delta_j=x,S_j =S\Big)=1 \geq \frac{x}{y}.
\]

Hence, in all possible cases, we have shown that 
\[
\P_{R}\Big(\underline{t^{j}_y}\leq \underline{t^{j,d}}|\delta_j=x,S_j =S\Big)\geq \frac{x}{y},
\]
which concludes the inductive case. We conclude the proof by applying the above inequality with $j = m$.
\end{proof}

\claimEdeltam*

\begin{proof}
\textbf{Point 1.} By Lemma \ref{lem:configSS'}, we have that $S_{m-1} = S_{m-1}'$. Now, if $r_m\in (y_0, 1]$, then both $\H^m$ and $\H^{m-1}$ match $r_m$ greedily. Hence, $S_m = S_m'$, which, by definition of $\delta$, implies that $\delta_m =0$. By contraposition, if $\delta_m>0$, we must have $r_m\in [0,y_0]$. \\

\textbf{Point 2.}
It is immediate that $\delta_m = 0$ when $r_m\in (y_0,1]$.
Now, if $r_m\in [0,y_0]$, then either $S_m\cap \{0\} = \emptyset$ or $S_m\cap \{0\} \neq \emptyset$. In the first case, $r_m$ is matched greedily by both $\H^m$ and $\H^{m-1}$ and we get $S_m = S_m'$ and $\delta_m= 0$. In the second case, we first have, by definition of $\H^m$, that $s(r_m) = 0$. Then, by the greediness of $\H^{m-1}$ for $r_m$, we get $|s(r_m)' - r_m|\leq |r_m-0|$, which implies $s(r_m)'\leq 2r_m$. Hence, $\delta_m = s(r_m') - 0 \leq 2r_m\leq 2y_0 = 2n^{-1/5}$. We conclude that for all $m\in [n]$, $\delta_m\in [0,2n^{-1/5}]$.

\textbf{Point 3.}
Fix $m\leq c_1n$. We first show that if $R$ is regular and $r_m\in [\frac{3}{4}y_0, y_0]$, we have that $\delta_m= s_{m-1,1}\geq n^{-1/5}$.

Assume that the sequence of requests is regular and consider $r_m\in [\frac{3}{4}y_0, y_0]$. We start by showing that $s(r_m)' = s_{m-1,1}$ and $s(r_m) = 0$. Since $m\leq c_1n$, we have  by Lemma \ref{lem:lb1} that $m<t_1= \min\{t\geq 0: S_t\cap I_1 = \emptyset\}$. Hence, $S_{m-1}\cap [n^{-1/5}, (3/2)n^{-1/5}]\neq \emptyset$. Since $S_{m-1}\cap [0,n^{-1/5}] \subseteq S_{0}\cap [0,n^{-1/5}]= \emptyset$, we thus have $s_{m-1,1}' = s_{m-1,1}\in [n^{-1/5}, (3/2)n^{-1/5}]$. Since $r_m\in [\frac{3}{4}y_0, y_0] = [\frac{3}{4}n^{-1/5}, n^{-1/5}]$, we deduce that $\mathcal{N}'(r_m)\subseteq\{0,s_{m-1,1}\}$. Now, note that
\[|s_{m-1,1}-r_m|\leq |(3/2)n^{-1/5} - r_m| \leq |(3/2)n^{-1/5} - (3/4)n^{-1/5}| = (3/4)n^{-1/5} \leq  |r_m-0|.\]
 Since $\H^{m-1}$ matches $r_m$ greedily, we get that $s'(r_m) = s_{m-1,1}$.

 On the other hand, $\H^{m}$ follows $\mathcal{A}$ for matching $r_m$. Note that by Lemma \ref{lem:lb1}, we have that $m<t_{\{0\}}$, hence $S_{m-1}'\cap \{0\} = S_{m-1}\cap \{0\}\neq \emptyset$. Since $r_m\in [0,n^{-1/5}]$, we get by definition of $\A$ that $s(r_m) = 0$.

Since $s(r_m) = 0$ and $s(r_m') = s_{m-1,1}$, we deduce that $S_m = S_{m-1}\setminus\{0\} \neq S_{m-1}\setminus\{s_{m-1,1}\} =S_{m-1}'\setminus\{s_{m-1,1}\} = S_m'$, hence, by definition of $\delta$, we get: $\delta_m = s_{m,1} = s_{m-1,1}\geq n^{-1/5}$. We have thus shown that if $R$ is regular and $r_m\in [\frac{3}{4}y_0, y_0]$, then $\delta_m \geq n^{-1/5}$. As a result,
\begin{align*}
    &\mathbb{E}[\delta_m|r_m \in [0,y_0]]\geq n^{-1/5}\P(\delta_m\geq n^{-1/5}|r_m \in [0,y_0])\\
    &\geq n^{-1/5}\P(r_m\in [\tfrac{3}{4}y_0, y_0], \reg|r_m \in [0,y_0])\\
    &= n^{-1/5}(\P(r_m\in [\tfrac{3}{4}y_0, y_0]|r_m \in [0,y_0]) - \P(r_m\in [\tfrac{3}{4}y_0, y_0], \text{R is not regular}|r_m \in [0,y_0]))\\
    &\geq n^{-1/5}(\P(r_m\in [\tfrac{3}{4}y_0, y_0]|r_m \in [0,y_0]) - \P(\text{R is not regular}|r_m \in [0,y_0]))\\
    &= n^{-1/5}(\P(r_m\in [\tfrac{3}{4}y_0, y_0]|r_m \in [0,y_0]) - \P(r_m \in [0,y_0], \text{R is not regular})/\P(r_m \in [0,y_0]))\\
    &\geq n^{-1/5}(\P(r_m\in [\tfrac{3}{4}y_0, y_0]|r_m \in [0,y_0]) - \P(\text{R is not regular})/\P(r_m \in [0,y_0]))\\
    &= \frac{1}{4}n^{-1/5}-\hp,
\end{align*}
where the last equality holds since $R$ is regular with high probability by Lemma \ref{lem:reg}.
\end{proof}

\lemdepletedbefore*

\begin{proof}
We first assume that the requests sequence is regular, and we show that there is no $t\in [n]$ such that $s_{t,1}<1/2$ and $S_t\cap \{0\} = \emptyset$. 

Assume by contradiction that there is such a $t$. Since $s_{t,1}$ is available at time $t$, we have that for all $i\in [t]$, $s_{t,1}$ is available when request $r_i$ arrives. In addition, recall that $\H^m$ either matches each request to $0$, or matches it greedily, and it matches a request $r$ to $0$ only if $r\leq y_0$. Since by definition of the instance, $s_{t,1}> y_0$, we get that there is no $i\in [t]$ such that $r_i< s_{t,1}$ and $s(r_i)\geq s_{t,1}$ (since $r_i$ is closer to $s_{t,1}$ than any other server $s>s_{t,1}$ and $s_{t,1}$ is available when $r_i$ arrives) and there is no $i\in [t]$ such that $r_i\geq s_{t,1}$ and $s(r_i)<s_{t,1}$. Hence, 
\[|\{i\in [t]: r_i\in [0, s_{t,1}]| = |i\in [t]: s(r_i)\in [0, s_{t,1}]\}|.\]

In addition, since $s_{t,1} = \min\{s>0:s\in S_t\}$ and since we assumed that $S_t\cap \{0\} = \emptyset$, we have $[0,s_{t,1})\cap S_t =S_t\cap \{0\}= \emptyset$, hence all servers in $[0,s_{t,1})$ have been matched to some request before time $t$ and we have $|i\in [t]: s(r_i)\in [0, s_{t,1}]| = |S_0\cap [0,s_{t,1})|$. Let $d^+ = \min\{j/n:j\in [n], j/n>s_{t,1}\}$. We get
\begin{align*}
&\quad \;|\{i\in [n]: r_i\in [0,d^+]\}|\\
&\geq 
|\{i\in [t]: r_i\in [0,d^+]\}|\\
&\geq|\{i\in [t]: r_i\in [0, s_{t,1}]\}|\\
&=|S_0\cap [0,s_{t,1})|\\
&\geq|S_0\cap [0,n^{-1/5}]| + |S_0\cap [n^{-1/5},d^+-1/n)|\\
&\geq [n^{4/5} + \excess] + [(d^+-1/n-n^{-1/5})\tilde{n} - 1]\\
&= [n^{4/5} + \excess] + [(d^+-1/n-n^{-1/5})(n - \excess/(1-n^{-1/5})) - 1]\\
&= d^+n +  \excess(1 - (d^+-1/n-n^{-1/5})/(1-n^{-1/5})) - 2\\
&= d^+n +  \excess(1 - (1/2-n^{-1/5})/(1-n^{-1/5})) - 2\\
&> d^+n +  \log{n}^2\sqrt{d^+n},
\end{align*}
where the fourth inequality is by definition of the instance and by  Lemma~\ref{fact:instance}, and the fourth equality since $d^+ \leq s_{t,1}+1/n\leq 1/2+1/n$ and the last inequality since $d^+\leq 1/2$.
Hence, the second regularity condition from Definition \ref{def:reg} is not satisfied for $t=0,t'=n$ and $d =0, d'=d^+$. Thus, if $R$ is regular, then there is no $t\in [n]$ such that $s_{t,1}<1/2$ and $S_t\cap \{0\} = \emptyset$.

On the way, we deduce the following equation, that will be used in the proof of the second part of the lemma.
\begin{align}\label{eq:12S0}
    &\P(\exists t\in [n]: s_{t,1}<1/2 \text{ and } S_t\cap \{0\} = \emptyset|r_m\in [0,y_0])\nonumber\\
    &\qquad\qquad\leq \P(\text{R is not regular}|r_m\in [0,y_0])
     \leq \P(\text{R is not regular})/\P(r_m\in [0,y_0])
    = \hp,
\end{align}
where the last inequality holds since $R$ is regular with high probability by Lemma \ref{lem:reg}.

Now, we assume that $R$ is regular and we show that for all $i\in [d_1\log(n)]$, $t_{(0,y_i]}\leq t_{\{0\}}$. Note that if $t_{(0,y_i]}> t_{\{0\}}$, then, by definition of $t_{(0,y_i]}$, we have that $S_{t_{\{0\}}}\cap (0,y_i]\neq \emptyset$, which implies that $s_{t_{\{0\}},1}\leq y_i< 1/2$. By definition of $t_{\{0\}}$, we also have  $S_{t_{\{0\}}}\cap\{0\} = \emptyset$. This contradicts the fact that there is no $t\in [n]$ such that $s_{t,1}<1/2$ and $S_t\cap \{0\} = \emptyset$. Hence, we have $t_{(0,y_i]}\leq t_{\{0\}}$, which concludes the proof of the first part of the lemma.

Next, we show that $t^d\leq t_{\{0\}}$ with high probability. First, note that by Lemma \ref{lem:worst_case}, we have $
\P(\max_{t\in [n]}\delta_t\geq 1/2\;|\;\delta_m)\leq 2\delta_m.$
Since by Lemma \ref{claim:E_delta_m}, we have that for all $m\in [n]$, $\delta_m\leq 2n^{-1/5}$, we get
\begin{equation}
\label{eq:maxdelta}
    \P(\max_{t\in [n]}\delta_t\geq 1/2|r_m\in [0,y_0])\leq 4n^{-1/5}.
\end{equation}

Hence, we have
\begin{align*}
    &\P(t^d>t_{\{0\}}|r_m\in [0,y_0]) \\
    &= \P(t^d>t_{\{0\}},\; \delta_{t_{\{0\}}}\geq 1/2|r_m\in [0,y_0]) + \P(t^d>t_{\{0\}},\;
    \delta_{t_{\{0\}}}<1/2|r_m\in [0,y_0])\\
    &\leq\P(\delta_{t_{\{0\}}}\geq 1/2|r_m\in [0,y_0]) +  \P(\delta_{t_{\{0\}}} < 1/2, \delta_{t_{\{0\}}} >0, S_{t_{\{0\}}} \cap \{0\}= \emptyset|r_m\in [0,y_0]) \\
    &\leq\P(\delta_{t_{\{0\}}}\geq 1/2|r_m\in [0,y_0]) +  \P(s_{t_{\{0\}},1}<1/2, S_{t_{\{0\}}} \cap \{0\}= \emptyset|r_m\in [0,y_0]) \\
    &\leq\P(\max_{t\in [n]}\delta_t\geq 1/2|r_m\in [0,y_0]) +  \P(\exists t\in [n]: s_{t,1}<1/2 \text{ and } S_t\cap \{0\} = \emptyset|r_m\in [0,y_0])\\
    &\leq  4n^{-1/5}+ \hp\\
    & = O(n^{-1/5}),
\end{align*}
where the first inequality holds since by definition of $t^d$, if $t^d>t_{\{0\}}$, then $\delta_{t_{\{0\}}} >0$, and since we always have $S_{t_{\{0\}}} \cap \{0\}= \emptyset$ by definition of $t_{\{0\}}$. The second inequality holds since by definition of $\delta$, if $\delta_{t_{\{0\}}}>0$, then $\delta_{t_{\{0\}}} = s_{t_{\{0\}},1}$. The last inequality is by (\ref{eq:maxdelta}) and (\ref{eq:12S0}).

\end{proof}

\lemcostdeltagamma*

\begin{proof} We  analyse the difference of cost between $\H^m$ and $\H^{m-1}$ for all requests $r_{m+1}, \ldots, r_{n}$. We consider in the following paragraphs some time steps $t\geq m$ and we omit to mention this condition throughout the proof.

We start by some preliminary notational considerations. We first recall that $w_t = s_{t,2}-s_{t,1}$. Also, note that for all $t\in \mathbb{N}$, we have $S_{t+1}\subseteq S_t$; thus if $S_t\cap \{0\}= \emptyset$, then $S_{t+1}\cap \{0\}= \emptyset$. Hence, by definition of $t_{\{0\}}$, we first have that $S_t\cap \{0\}\neq \emptyset$ if and only if $t\leq t_{\{0\}}-1$. Hence, $\{t\leq t_{\{0\}}-1\}$ is entirely determined by the value of $S_t$ and

\[
\quad \mathbbm{1}_{S_t\cap \{0\}\neq \emptyset} = \mathbbm{1}_{t\leq t_{\{0\}}-1}.
\]
Now, assume that we also have $t\leq t^w-1$ and  $t\leq t^d-1$. Since $t\leq t^d-1$, we first have $\delta_t \neq 0$, which also implies, by definition of $\delta_t$, that $\delta_t = s_{t,1}$. Since $t\leq t^w-1$, we deduce that $\delta_t = s_{t,1}\geq s_{t,2} - s_{t,1} = w_t$. Finally, we trivially have $|S_t\cap (\delta_t,1]|= |S_t\cap (s_{t,1},1]|\geq 1$. Hence, if $t\leq t^w-1$ and  $t\leq t^d-1$, then we have that $\delta_t\neq 0$, $|S_t\cap (\delta_t,1]| \geq 1$ and  $ w_t\leq \delta_t$. Therefore, we get

\begin{equation}
\label{eq:indicators}
     \mathbbm{1}_{S_t\cap \{0\}\neq \emptyset, \delta_t\neq 0, |S_t\cap (\delta_t,1]|\geq 1, w_t\leq \delta_t} \geq \mathbbm{1}_{t\leq \min(t^w, t^d, t_{\{0\}})-1}.
\end{equation}

\noindent\textbf{Lower bound on  $\E[\Delta\text{cost}_{t+1}]$ in the case where $t\leq t_{\{0\}}-1$.}
For all $t\leq t_{\{0\}}-1$ we have $S_t\cap \{0\}\neq \emptyset$; thus, by the  second point of Lemma \ref{lem:MC_def}, we have that $\Delta\text{cost}_{t+1}\geq 0$. Hence,
\[\E[\mathbbm{1}_{t\leq t_{\{0\}}-1}\cdot\Delta \text{cost}_{t+1}|(\delta_t, S_t)] \geq 0.\]
Assume that we further have that $\delta_t\neq 0$ and $|S_t\cap (\delta_t,1]|\geq 1$ (i.e., the assumptions of the third point of Lemma \ref{lem:MC_def} are satisfied) and that $w_t\leq \delta_t$. Then, we obtain:
\begin{align*}
    &\E[\mathbbm{1}_{t\leq t_{\{0\}}-1, \delta_t\neq 0, |S_t\cap (\delta_t,1]|\geq 1, w_t\leq \delta_t}\cdot\Delta \text{cost}_{t+1}|(\delta_t, S_t)]\\
    &=\E[\mathbbm{1}_{t\leq t_{\{0\}}-1, \delta_t\neq 0, |S_t\cap (\delta_t,1]|\geq 1, w_t\leq \delta_t}\cdot\Delta \text{cost}_{t+1}|(\delta_t, S_t), S_t\cap \{0\}\neq \emptyset, \delta_t\neq 0, |S_t\cap (\delta_t,1]|\geq 1, w_t\leq \delta_t]\\
    &\qquad \cdot \P(S_t\cap \{0\}\neq \emptyset, \delta_t\neq 0, |S_t\cap (\delta_t,1]|\geq 1, w_t\leq \delta_t|(\delta_t, S_t)) + 0\\
    &=\mathbbm{1}_{t\leq t_{\{0\}}-1, \delta_t\neq 0, |S_t\cap (\delta_t,1]|\geq 1, w_t\leq \delta_t}\cdot\E[\Delta \text{cost}_{t+1}|(\delta_t, S_t),S_t\cap \{0\}\neq \emptyset, \delta_t\neq 0, |S_t\cap (\delta_t,1]|\geq 1, w_t\leq \delta_t]\\
    &\geq\mathbbm{1}_{t\leq t_{\{0\}}-1, \delta_t\neq 0, |S_t\cap (\delta_t,1]|\geq 1, w_t\leq \delta_t}\\
    &\qquad\cdot \Big( 0 + \E[ \Delta \text{cost}_{t+1}|(\delta_t, S_t), S_t\cap \{0\}\neq \emptyset, \delta_t\neq 0, |S_t\cap (\delta_t,1]|\geq 1, w_t\leq \delta_t, r_{t+1}\in [\tfrac{\delta_t+w_t}{2}, \delta_t +\tfrac{w_t}{2}]]\\
    &\qquad\qquad\cdot\P(r_{t+1}\in [\tfrac{\delta_t+w_t}{2}, \delta_t +\tfrac{w_t}{2}])\Big)\\
    &= \mathbbm{1}_{t\leq t_{\{0\}}-1, \delta_t\neq 0, |S_t\cap (\delta_t,1]|\geq 1, w_t\leq \delta_t}\cdot\frac{w_t}{2}\cdot\P(r_{t+1}\in [\tfrac{\delta_t+w_t}{2}, \delta_t +\tfrac{w_t}{2}]),
\end{align*}
where the inequality is by inspecting all possible cases given in Table \ref{tab:MC_gamma_S} and since $r_{t+1}$ is independent of $(\delta_t, S_t)$, and the last equality is since $w_t\leq \delta_t$ and by inspecting the corresponding case in Table \ref{tab:MC_gamma_S}.

By combining the two previous inequalities, we get
\begin{align*}
    &\E[\mathbbm{1}_{t\leq t_{\{0\}}-1}\cdot\Delta \text{cost}_{t+1}|(\delta_t, S_t)]\\
    &=  \E[\mathbbm{1}_{t\leq t_{\{0\}}-1, \delta_t\neq 0, |S_t\cap (\delta_t,1]|\geq 1, w_t\leq \delta_t}\cdot\Delta \text{cost}_{t+1}|(\delta_t, S_t)] \\
    &\qquad\qquad\qquad\qquad\qquad+ \E[\mathbbm{1}_{t\leq t_{\{0\}}-1, \delta_t\neq 0, |S_t\cap (\delta_t,1]|\geq 1, w_t\leq \delta_t}\cdot\Delta \text{cost}_{t+1}|(\delta_t, S_t)]\\
    &\geq \mathbbm{1}_{t\leq t_{\{0\}}-1, \delta_t\neq 0, |S_t\cap (\delta_t,1]|\geq 1, w_t\leq \delta_t}\cdot\frac{w_t}{2}\cdot\P(r_{t+1}\in [\tfrac{\delta_t+w_t}{2}, \delta_t +\tfrac{w_t}{2}]) + 0.
\end{align*}

Note that conditioning on $(\delta_t, S_t)$, we have that $\Delta \text{cost}_{t+1}$ is independent of $(\delta_m, S_m)$. Thus, first conditioning on $(\delta_m, S_m)$, then applying the tower law, we get:
\begin{align}
    \E[&\mathbbm{1}_{t\leq t_{\{0\}}-1}\cdot\Delta \text{cost}_{t+1}|(\delta_m, S_m)]\nonumber\\
    & = \E[\E[\mathbbm{1}_{t\leq t_{\{0\}}-1}\cdot\Delta \text{cost}_{t+1}|(\delta_t, S_t), (\delta_m, S_m)]\;|(\delta_m, S_m)]\nonumber\\
    & = \E[\E[\mathbbm{1}_{t\leq t_{\{0\}}-1}\cdot\Delta \text{cost}_{t+1}|(\delta_t, S_t)]\;|(\delta_m, S_m)]\nonumber\\
    &\geq \E[\mathbbm{1}_{t\leq t_{\{0\}}-1, \delta_t\neq 0, |S_t\cap (\delta_t,1]|\geq 1, w_t\leq \delta_t}\cdot\frac{w_t}{2}|(\delta_m, S_m)]\cdot\P(r_{t+1}\in [\tfrac{\delta_t+w_t}{2}, \delta_t +\tfrac{w_t}{2}])\nonumber\\
    &= \E[\mathbbm{1}_{t\leq t_{\{0\}}-1, \delta_t\neq 0, |S_t\cap (\delta_t,1]|\geq 1, w_t\leq \delta_t}\cdot\frac{w_t}{2}|(\delta_m, S_m)]\cdot\E[\mathbbm{1}_{r_{t+1}\in [\tfrac{\delta_t+w_t}{2}, \delta_t +\tfrac{w_t}{2}]}|(\delta_m, S_m)]\nonumber\\
    &= \E[\mathbbm{1}_{t\leq t_{\{0\}}-1, \delta_t\neq 0, |S_t\cap (\delta_t,1]|\geq 1, w_t\leq \delta_t, r_{t+1}\in [\tfrac{\delta_t+w_t}{2}, \delta_t +\tfrac{w_t}{2}]}\cdot\frac{w_t}{2}|(\delta_m, S_m)],\nonumber\\
    &= \E[\mathbbm{1}_{t\leq t_{\{0\}}-1, \delta_t\neq 0, |S_t\cap (\delta_t,1]|\geq 1, w_t\leq \delta_t,  r_{t+1}\in [\tfrac{\delta_t+w_t}{2}, \delta_t +\tfrac{w_t}{2}]}\cdot\frac{\delta_{t+1}-\delta_t}{2}|(\delta_m, S_m)]\nonumber\\
    &\geq \E[\mathbbm{1}_{t\leq \min(t^w, t^d, t_{\{0\}})-1, r_{t+1}\in [\tfrac{\delta_t+w_t}{2}, \delta_t +\tfrac{w_t}{2}]}\cdot\frac{\delta_{t+1}-\delta_t}{2}|(\delta_m, S_m)],\label{eq:LBcost1}
\end{align}
where the third equality uses that $r_{t+1}$ is independent of $(\delta_m, S_m)$, and the fourth equality holds since $r_{t+1}$ and $(\delta_t,S_t)$ are independent, which implies that $r_{t+1}$ and $\{t\leq t_{\{0\}}-1, \delta_t\neq 0, |S_t\cap (\delta_t,1]|\geq 1, w_t\leq \delta_t\}$ are independent. The last equality is by inspecting the case $r_{t+1}\in [\tfrac{\delta_t+w_t}{2}, \delta_t +\tfrac{w_t}{2}]$ in Table  \ref{tab:MC_gamma_S}. Finally, the last inequality is from (\ref{eq:indicators}).

Next, if $t\leq \min(t^w, t^d, t_{\{0\}})-1$ (which,  by (\ref{eq:indicators}), implies in particular that the assumptions of the third point of Lemma \ref{lem:MC_def} are satisfied), we get by inspecting all possible cases given in Table \ref{tab:MC_gamma_S} that $r_{t+1}\in [\tfrac{\delta_t+w_t}{2}, \delta_t +\tfrac{w_t}{2}]$ if and only if $\delta_{t+1}\neq \delta_t$ and $\delta_{t+1}\neq 0$. Thus, 
\begin{equation}
\label{eq:LBcost2}
  \mathbbm{1}_{t\leq \min(t^w, t^d, t_{\{0\}})-1, r_{t+1}\in [\tfrac{\delta_t+w_t}{2}, \delta_t +\tfrac{w_t}{2}]} = \mathbbm{1}_{t\leq \min(t^w, t^d, t_{\{0\}})-1, \delta_{t+1}\neq \delta_t, \delta_{t+1}\neq 0}.  
\end{equation}

In addition, by definition of $t^d$ and by the first point of Lemma \ref{lem:MC_def}, we have that $\delta_{t+1}\neq 0$ if and only if $t\leq t^d - 2$, thus 

\begin{equation}
\label{eq:LBcost3}
    \mathbbm{1}_{t\leq \min(t^w, t^d, t_{\{0\}})-1, \delta_{t+1}\neq \delta_t, \delta_{t+1}\neq 0} = \mathbbm{1}_{t\leq \min(t^w, t^d-1, t_{\{0\}})-1, \delta_{t+1}\neq \delta_t}.
\end{equation}

Hence, by combining (\ref{eq:LBcost1}), (\ref{eq:LBcost2}) and (\ref{eq:LBcost3}), we get that for all $t\in \mathbb{N}$, 

\begin{equation}
\label{eq:LBcost4}
  \E[\mathbbm{1}_{t\leq t_{\{0\}}-1}\cdot\Delta \text{cost}_{t+1}|(\delta_m, S_m)]\geq  \E[\mathbbm{1}_{t\leq \min(t^w, t^d-1, t_{\{0\}})-1, \delta_{t+1}\neq \delta_t}\cdot\frac{\delta_{t+1}-\delta_t}{2}|(\delta_m, S_m)].  
\end{equation}

\noindent\textbf{Lower bound on  $\E[\Delta\text{cost}_{t+1}]$ in the case where $t\geq t_{\{0\}}$.}
 Note that from the first point of Lemma \ref{lem:MC_def}, if $\delta_t = 0$, then $\Delta\text{cost}_{t+1} =0$. In addition, recall that $S_t\cap \{0\}=\emptyset$ when  $t\geq t_{\{0\}}$. Hence, using
the fifth point of Lemma \ref{lem:MC_def}, we get 

\begin{align*}
E[\mathbbm{1}_{t\geq t_{\{0\}} }\cdot\Delta\text{cost}_{t+1}|(\delta_t, S_t)]
    &=\E[\mathbbm{1}_{t\geq t_{\{0\}}, \delta_t\neq 0}\cdot\Delta\text{cost}_{t+1}|(\delta_t, S_t)] + \E[\mathbbm{1}_{t\geq t_{\{0\}}, \delta_t= 0}\cdot\Delta\text{cost}_{t+1}|(\delta_t, S_t)]\\
    &=
    \E[\mathbbm{1}_{t\geq t_{\{0\}}, \delta_t\neq 0}\cdot\Delta\text{cost}_{t+1}|(\delta_t, S_t)] \\
    &\geq-\mathbbm{1}_{t\geq t_{\{0\}}, \delta_t\neq 0}\cdot\P(\delta_{t+1}=0|(\delta_t, S_t))\\
    &=-\mathbbm{1}_{t\geq t_{\{0\}}, \delta_t\neq 0}\cdot E[\mathbbm{1}_{\delta_{t+1}=0}|(\delta_t, S_t)]\\
    &= -\E[\mathbbm{1}_{t\geq t_{\{0\}}, \delta_t\neq 0,\delta_{t+1}=0}|(\delta_t, S_t)].
\end{align*}
Note that by the first point of Lemma \ref{lem:MC_def} and by definition of $t^d$, we have that $\delta_t\neq 0,\delta_{t+1}=0$ if and only if $\delta_{t+1} = t^d$. Thus, 
$\mathbbm{1}_{t\geq t_{\{0\}}, \delta_t\neq 0,\delta_{t+1}=0} = \mathbbm{1}_{ t\geq t_{\{0\}}, \delta_{t+1}=t^d }$. 

By applying the tower law on a similar way as above, we conclude:
\begin{equation}
\label{eq:LBcost5}
  \E[\mathbbm{1}_{t\geq t_{\{0\}}}\cdot\Delta \text{cost}_{t+1}|(\delta_m, S_m)]\geq  -\E[\mathbbm{1}_{t\geq t_{\{0\}}, \delta_{t+1}=t^d}|(\delta_m, S_m)].  
\end{equation}
\noindent\textbf{Concluding the proof of Lemma \ref{cor:costLdelta_gamma}.} We lower bound the difference of costs for matching requests $r_{m+1},\ldots, r_n$:
\begin{align*}
    &\E\Bigg[\sum_{t=m+1}^{n} cost_t(\H^{m-1}) - cost_t(\H^m) |\delta_m,S_m\Bigg]\\
    &=\mathbb{E}\Bigg[ \sum_{t = m}^{n-1} \Delta\text{cost}_{t+1}|(\delta_m, S_m)\Bigg]\\
    &=\mathbb{E}\Bigg[ \sum_{t = m}^{n-1} \mathbbm{1}_{t\leq t_{\{0\}}-1}\cdot \Delta\text{cost}_{t+1}|(\delta_m, S_m)\Bigg]+ \mathbb{E}\Bigg[ \sum_{t = m}^{n-1} \mathbbm{1}_{t\geq t_{\{0\}}}\cdot \Delta\text{cost}_{t+1}|(\delta_m, S_m)\Bigg]\\
    &\geq \E\Bigg[\sum_{t = m}^{n-1} \mathbbm{1}_{t\leq \min(t^w, t^d - 1, t_{\{0\}})-1, \delta_{t+1}\neq \delta_t}\cdot\frac{\delta_{t+1}-\delta_t}{2}\Bigg|(\delta_m, S_m)\Bigg] \\
    &\qquad\qquad\qquad\qquad\qquad\qquad\qquad\qquad\qquad\qquad\qquad -\mathbb{E}\Bigg[ \sum_{t = m}^{n-1} \mathbbm{1}_{t\geq t_{\{0\}},\;\delta_{t+1}= t^d} |(\delta_m, S_m)\Bigg]\\
    &= \E\Bigg[\sum_{t = m}^{ \min( t^w, t^d-1, t_{\{0\}})-1}\mathbbm{1}_{\delta_{t+1} \neq \delta_t} \cdot\frac{\delta_{t+1}-\delta_t}{2}\Bigg|(\delta_m, S_m)\Bigg] - \P(t^d> t_{\{0\}}|(\delta_m, S_m))\\
     &= \frac{1}{2}\mathbb{E}\Bigg[\delta_{\min( t^w, t^d-1, t_{\{0\}})}- \delta_m \Bigg|(\delta_m, S_m)\Bigg]- \P(t^d> t_{\{0\}}|(\delta_m, S_m)),
\end{align*}
where the inequality is by (\ref{eq:LBcost4}) and (\ref{eq:LBcost5}).

In addition, note that by construction of the process, for all $t\in \mathbb{N}$, we have that $S_{t+1}\subseteq S_t$; hence $s_{t,1}\leq s_{t+1,1}$. Then, for all $t\leq t^d-2$, we have that $\delta_t, \delta_{t+1}\neq 0$, hence by construction, we have $\delta_t = s_{t,1}$ and $\delta_{t+1} =  s_{t+1,1}$. Thus, we get that $\delta_t\leq \delta_{t+1}$. Hence, we have that $\delta_{\min(t^w, t^d-1, t_{\{0\}})} = \max_{t \in \{0,\ldots, \min(t^w,t^d-1, t_{\{0\}})\}}\delta_t$. In addition, since by the first point of Lemma \ref{lem:MC_def}, we have that $\delta_t = 0$ for all $t\geq t^d$, we get that $\max_{t \in \{0,\ldots, \min(t^w,t^d-1, t_{\{0\}})\}}\delta_t = \max_{t \in \{0,\ldots, \min(t^w,t_{\{0\}})\}}\delta_t$. Therefore, 

\begin{align*}
    \E\Bigg[\sum_{t=m+1}^{n} cost_t(\H^{m-1}) - cost_t(\H^m) |\delta_m,S_m\Bigg] &\geq \frac{1}{2}\mathbb{E}\Bigg[ \max_{t \in \{0,\ldots, \min(t^w, t_{\{0\}})\}}\delta_t
    - \delta_m \Bigg|(\delta_m, S_m)\Bigg]\\
    & - \P(t^d>t_{\{0\}}|(\delta_m, S_m)).\label{eq:total_cost_geq_m}
\end{align*}

\end{proof}

\lemcostrm*

\begin{proof}
Since $\H^m$ and $\H^{m-1}$ both follow $\mathcal{A}$ for the first $m-1$ requests, it is immediate that
\begin{equation*}
\label{eq:sum_0}
    \E\Bigg[\sum_{t=1}^{m-1} cost_t(\H^{m-1}) - cost_t(\H^m) |\delta_m,S_m\Bigg] = 0.
\end{equation*}
We now lower bound the cost of matching request $r_m$. We consider two cases: \\
(1) If $r_m\in (y_0, 1]$ or ($r_m\in [0,y_0]$ and $S_m\cap \{0\} = \emptyset)$, then both $\H^m$ and $\H^{m-1}$ match $r_m$ greedily. Since $S_{m-1} = S_{m-1}'$, we get $cost_m(\H^{m-1}) = cost_m(\H^m)$.\\
(2) If ($r_m\in [0,y_0]$ and $S_m\cap \{0\} \neq \emptyset)$, then $s(r_m) = 0$, hence $cost_m(\H^{m-1}) - cost_m(\H^m) \geq - cost_m(\H^m) = -|r_m-0| \geq - y_0 = -n^{-1/5}$. 

In both cases, 
\begin{equation*}
\label{eq:delta_m0}
     \E[cost_m(\H^{m-1}) - cost_m(\H^m)|\delta_m,S_m] \geq - n^{-1/5}.
\end{equation*}
Combining the two above equations concludes the proof.
\end{proof}

\end{document}